\documentclass[11pt,letterpaper]{article}
\usepackage[utf8x]{inputenc}
\usepackage{float}
\usepackage{quoting}
\usepackage{mdframed}
\usepackage{dsfont}
\usepackage{thm-restate}
\quotingsetup{vskip=0pt,leftmargin=20pt,rightmargin=20pt}

\def\showauthornotes{0}
 
\def\showkeys{0}
\def\showdraftbox{0}

\def\usemicrotype{1}
\def\showfixme{0}

\usepackage{etex}

\usepackage[l2tabu, orthodox]{nag}

\usepackage{xspace,enumerate}

\usepackage[dvipsnames]{xcolor}

\usepackage[T1]{fontenc}
\usepackage[full]{textcomp}

\usepackage[american]{babel}

\usepackage{mathtools}

\usepackage{amsthm}

\newtheorem{theorem}{Theorem}[section]
\newtheorem*{theorem*}{Theorem}

\newtheorem*{proposition*}{Proposition}
\newtheorem{lemma}[theorem]{Lemma}
\newtheorem*{lemma*}{Lemma}
\newtheorem{corollary}[theorem]{Corollary}
\newtheorem*{conjecture*}{Conjecture}
\newtheorem{fact}[theorem]{Fact}
\newtheorem*{fact*}{Fact}

\newtheorem*{hypothesis*}{Hypothesis}

\newtheorem{claim}[theorem]{Claim}
\newtheorem*{claim*}{Claim}

\theoremstyle{definition}
\newtheorem{definition}[theorem]{Definition}

\newtheorem{notation}[theorem]{Notation}

\theoremstyle{remark}

\newtheorem*{remark*}{Remark}

\newtheorem*{observation*}{Observation}

\usepackage[
letterpaper,
top=0.7in,
bottom=0.9in,
left=1in,
right=1in]{geometry}

\usepackage[varg]{pxfonts}
\usepackage{textcomp}
\usepackage[scr=rsfso]{mathalfa}%
\usepackage{bm} %
\linespread{1.08}%
\let\mathbb\varmathbb

\ifnum\showkeys=1
\usepackage[color]{showkeys}
\fi

\usepackage{hyperref}
\hypersetup{
pagebackref,
colorlinks=true,
urlcolor=blue,
linkcolor=blue,
citecolor=OliveGreen,
}
\usepackage{prettyref}

\newcommand{\savehyperref}[2]{\texorpdfstring{\hyperref[#1]{#2}}{#2}}

\newrefformat{eq}{\savehyperref{#1}{\textup{(\ref*{#1})}}}
\newrefformat{lem}{\savehyperref{#1}{Lemma~\ref*{#1}}}
\newrefformat{def}{\savehyperref{#1}{Definition~\ref*{#1}}}
\newrefformat{thm}{\savehyperref{#1}{Theorem~\ref*{#1}}}
\newrefformat{cor}{\savehyperref{#1}{Corollary~\ref*{#1}}}
\newrefformat{cha}{\savehyperref{#1}{Chapter~\ref*{#1}}}
\newrefformat{sec}{\savehyperref{#1}{Section~\ref*{#1}}}
\newrefformat{app}{\savehyperref{#1}{Appendix~\ref*{#1}}}
\newrefformat{tab}{\savehyperref{#1}{Table~\ref*{#1}}}
\newrefformat{fig}{\savehyperref{#1}{Figure~\ref*{#1}}}
\newrefformat{hyp}{\savehyperref{#1}{Hypothesis~\ref*{#1}}}
\newrefformat{alg}{\savehyperref{#1}{Algorithm~\ref*{#1}}}
\newrefformat{rem}{\savehyperref{#1}{Remark~\ref*{#1}}}
\newrefformat{item}{\savehyperref{#1}{Item~\ref*{#1}}}
\newrefformat{step}{\savehyperref{#1}{step~\ref*{#1}}}
\newrefformat{conj}{\savehyperref{#1}{Conjecture~\ref*{#1}}}
\newrefformat{fact}{\savehyperref{#1}{Fact~\ref*{#1}}}
\newrefformat{prop}{\savehyperref{#1}{Proposition~\ref*{#1}}}
\newrefformat{prob}{\savehyperref{#1}{Problem~\ref*{#1}}}
\newrefformat{claim}{\savehyperref{#1}{Claim~\ref*{#1}}}
\newrefformat{relax}{\savehyperref{#1}{Relaxation~\ref*{#1}}}
\newrefformat{red}{\savehyperref{#1}{Reduction~\ref*{#1}}}
\newrefformat{part}{\savehyperref{#1}{Part~\ref*{#1}}}

\newcommand{\Sref}[1]{\hyperref[#1]{\S\ref*{#1}}}

\usepackage{nicefrac}

\ifnum\usemicrotype=1
\usepackage{microtype}
\fi

\ifnum\showauthornotes=1
\newcommand{\Authornote}[2]{{\sffamily\small\color{red}{[#1: #2]}}}
\newcommand{\Authornotecolored}[3]{{\sffamily\small\color{#1}{[#2: #3]}}}
\newcommand{\Authorcomment}[2]{{\sffamily\small\color{gray}{[#1: #2]}}}
\newcommand{\Authorstartcomment}[1]{\sffamily\small\color{gray}[#1: }

\newcommand{\Authorfnote}[2]{\footnote{\color{red}{#1: #2}}}
\newcommand{\Authorfixme}[1]{\Authornote{#1}{\textbf{??}}}
\newcommand{\Authormarginmark}[1]{\marginpar{\textcolor{red}{\fbox{\Large #1:!}}}}
\else
\newcommand{\Authornote}[2]{}
\newcommand{\Authornotecolored}[3]{}
\newcommand{\Authorcomment}[2]{}
\newcommand{\Authorstartcomment}[1]{}

\newcommand{\Authorfnote}[2]{}
\newcommand{\Authorfixme}[1]{}
\newcommand{\Authormarginmark}[1]{}
\fi

\ifnum\showfixme=0

\fi

\usepackage{boxedminipage}

\newcommand{\Paren}[1]{\left(#1\right)}

\newcommand{\abs}[1]{\lvert#1\rvert}

\newcommand{\norm}[1]{\lVert#1\rVert}

\newcommand{\defeq}{\stackrel{\mbox{\normalfont\tiny def}}=}

\newcommand\bdot\bullet

\ifx\mathds\undefined %

\else

\fi

\DeclareMathOperator{\Tr}{Tr}

\DeclareMathOperator{\poly}{poly}

\DeclareMathOperator{\polylog}{polylog}

\newcommand{\C}{\mathbb C}

\newcommand{\cA}{\mathcal A}
\newcommand{\cB}{\mathcal B}
\newcommand{\cC}{\mathcal C}
\newcommand{\cD}{\mathcal D}
\newcommand{\cE}{\mathcal E}

\newcommand{\cL}{\mathcal L}

\newcommand{\cO}{\mathcal O}
\newcommand{\cP}{\mathcal P}

\newcommand{\cS}{\mathcal S}
\newcommand{\cT}{\mathcal T}

\newcommand{\cV}{\mathcal V}

\renewcommand{\leq}{\leqslant}

\renewcommand{\geq}{\geqslant}

\ifnum\showdraftbox=1

\else

\fi

\let\epsilon=\varepsilon

\numberwithin{equation}{section}

\newcommand\MYcurrentlabel{xxx}

\newcommand{\MYstore}[2]{%
  \global\expandafter \def \csname MYMEMORY #1 \endcsname{#2}%
}

\newcommand{\MYload}[1]{%
  \csname MYMEMORY #1 \endcsname%
}

\newcommand{\MYnewlabel}[1]{%
  \renewcommand\MYcurrentlabel{#1}%
  \MYoldlabel{#1}%
}

\newcommand{\MYdummylabel}[1]{}

\newcommand{\torestate}[1]{%
  \let\MYoldlabel\label%
  \let\label\MYnewlabel%
  #1%
  \MYstore{\MYcurrentlabel}{#1}%
  \let\label\MYoldlabel%
}

\newcommand{\restatetheorem}[1]{%
  \let\MYoldlabel\label
  \let\label\MYdummylabel
  \begin{theorem*}[Restatement of \prettyref{#1}]
    \MYload{#1}
  \end{theorem*}
  \let\label\MYoldlabel
}

\newcommand{\restatelemma}[1]{%
  \let\MYoldlabel\label
  \let\label\MYdummylabel
  \begin{lemma*}[Restatement of \prettyref{#1}]
    \MYload{#1}
  \end{lemma*}
  \let\label\MYoldlabel
}

\newcommand{\restateprop}[1]{%
  \let\MYoldlabel\label
  \let\label\MYdummylabel
  \begin{proposition*}[Restatement of \prettyref{#1}]
    \MYload{#1}
  \end{proposition*}
  \let\label\MYoldlabel
}

\newcommand{\restatefact}[1]{%
  \let\MYoldlabel\label
  \let\label\MYdummylabel
  \begin{fact*}[Restatement of \prettyref{#1}]
    \MYload{#1}
  \end{fact*}
  \let\label\MYoldlabel
}

\newcommand{\restate}[1]{%
  \let\MYoldlabel\label
  \let\label\MYdummylabel
  \MYload{#1}
  \let\label\MYoldlabel
}

\newcommand{\e}{\epsilon}
\newcommand{\eps}{\epsilon}

\let\origparagraph\paragraph
\renewcommand{\paragraph}[1]{\origparagraph{#1.}}
\allowdisplaybreaks
\sloppy

\usepackage{paralist}

\usepackage{comment}

\let\citet\cite
\theoremstyle{definition}

\usepackage{relsize}
\usepackage[font=footnotesize]{caption}
\usepackage{yfonts}

\usepackage{appendix}
\usepackage{amssymb}
\usepackage{amsfonts}
\usepackage{latexsym}
\usepackage{amsmath}
\usepackage{cleveref}

\usepackage{subcaption}

\usepackage{qcircuit}			

\usepackage{tikz}
\usetikzlibrary{shapes.geometric}

\hypersetup{pagebackref, colorlinks=true, urlcolor=blue,
  linkcolor=blue, citecolor=magenta}

\newcommand{\Id}{\mathop{\mathds{1}}\!\mathinner{}}

\newcommand{\wt}[1]{\widetilde{#1}}
\newcommand{\comp}[1]{\overline{#1}}

\newcommand{\ket}[1]{|#1\rangle}
\newcommand{\bra}[1]{\langle#1|}

\newcommand{\ip}[2]{\langle #1 | #2 \rangle}
\newcommand{\ketbra}[2]{|#1\rangle\! \langle #2|}

\newcommand{\sC}{{\mathsf{C}}}

\newcommand{\sF}{{\mathsf{F}}}

\newcommand{\sR}{{\mathsf{R}}}
\newcommand{\sS}{{\mathsf{S}}}
\newcommand{\sT}{{\mathsf{T}}}

\newcommand{\timeconfig}{{\bm{\tau}}}

\begin{document}

\title{Good approximate quantum LDPC codes\\ from spacetime circuit Hamiltonians}%
\makeatletter
\renewcommand\@date{{%
  \vspace{-\baselineskip}%
  \large\centering
  \begin{tabular}{@{}c@{}}
     Thomas C. Bohdanowicz \\
    \footnotesize Caltech \\ \footnotesize \href{mailto:thom@caltech.edu}{thom@caltech.edu}
  \end{tabular}%
  \hspace{1em}
  \begin{tabular}{@{}c@{}}
      Elizabeth Crosson\\
    \footnotesize University of New Mexico \\ \footnotesize \href{mailto:crosson@unm.edu}{crosson@unm.edu}
  \end{tabular}
  \hspace{1em}
  \begin{tabular}{@{}c@{}}
     Chinmay Nirkhe \\
     \footnotesize UC Berkeley \\
    \footnotesize \href{nirkhe@cs.berkeley.edu}{nirkhe@cs.berkeley.edu}
  \end{tabular}
    \hspace{1em}
  \begin{tabular}{@{}c@{}}
     Henry Yuen \\
     \footnotesize University of Toronto \\
    \footnotesize \href{mailto:hyuen@cs.toronto.edu}{hyuen@cs.toronto.edu}
  \end{tabular}

}}
\makeatother
\maketitle \pagenumbering{gobble}
\vspace{-10pt}
\begin{abstract}

We study \emph{approximate} quantum low-density parity-check (QLDPC) codes, which are approximate quantum error-correcting codes specified as the ground space of a frustration-free local Hamiltonian, whose terms do not necessarily commute. 
Such codes generalize stabilizer QLDPC codes, which are exact quantum error-correcting codes with sparse, low-weight stabilizer generators (i.e. each stabilizer generator acts on a few qubits, and each qubit participates in a few stabilizer generators). Our investigation is motivated by an important question in Hamiltonian complexity and quantum coding theory: do stabilizer QLDPC codes with constant rate, linear distance, and constant-weight stabilizers exist?

We show that obtaining such optimal scaling of parameters (modulo polylogarithmic corrections) is possible if we go beyond stabilizer codes: we prove the existence of a family of $[[N,k,d,\varepsilon]]$ approximate QLDPC codes that encode $k = \wt{\Omega}(N)$ logical qubits into $N$ physical qubits with distance $d = \wt{\Omega}(N)$ and approximation infidelity $\varepsilon = \cO (1/\polylog(N))$. The code space is stabilized by a set of $10$-local \emph{noncommuting} projectors, with each physical qubit only participating in $\mathcal{O}(\polylog N)$ projectors.  We prove the existence of an efficient encoding map and show that the spectral gap of the code Hamiltonian scales as $\wt{\Omega}(N^{-3.09})$.  We also show that arbitrary Pauli errors can be locally detected by circuits of polylogarithmic depth.

Our family of approximate QLDPC codes is based on applying a recent connection between circuit Hamiltonians and approximate quantum codes (Nirkhe, et al., ICALP 2018) to a result showing that random Clifford circuits of polylogarithmic depth yield asymptotically good quantum codes (Brown and Fawzi, ISIT 2013). Then, in order to obtain a code with sparse checks and strong detection of local errors, we use a \emph{spacetime} circuit Hamiltonian construction in order to take advantage of the parallelism of the Brown-Fawzi circuits. 

The analysis of the spectral gap of the code Hamiltonian is the main technical contribution of this work.  We show that for any depth $D$ quantum circuit on $n$ qubits there is an associated spacetime circuit-to-Hamiltonian construction with spectral gap $\Omega(n^{-3.09} D^{-2} \log^{-6}(n))$.  To lower bound this gap we use a Markov chain decomposition method to divide the state space of partially completed circuit configurations into overlapping subsets corresponding to uniform circuit segments of depth $\log n$, which are based on bitonic sorting circuits.  We use the combinatorial properties of these circuit configurations to show rapid mixing between the subsets, and within the subsets we develop a novel isomorphism between the local update Markov chain on bitonic circuit configurations and the edge-flip Markov chain on equal-area dyadic tilings, whose mixing time was recently shown to be polynomial (Cannon, Levin, and Stauffer, RANDOM 2017).  Previous lower bounds on the spectral gap of spacetime circuit Hamiltonians have all been based on a connection to exactly solvable quantum spin chains and applied only to 1+1 dimensional nearest-neighbor quantum circuits with at least linear depth.  

\end{abstract}

\tableofcontents

\clearpage
\pagebreak \pagenumbering{arabic} \setcounter{page}{1}

\section{Introduction}

A central result in the theory of classical error correcting codes is that there exist families of \emph{good} linear $[N,k,d]$ codes, which have linear dimension $k = \Omega(N)$, linear distance $d = \Omega(N)$, constant sparsity parity checks, and linear time encoding and decoding algorithms.  These \emph{low-density parity check} (LDPC) codes~\cite{Gallager63low-densityparity-check} have many theoretical as well as practical applications.

A grand challenge in quantum information theory is to construct a \emph{quantum} counterpart to classical LDPC codes with similarly optimal parameters. Traditionally this effort has focused on CSS stabilizer codes\footnote{The CSS construction~\cite{calderbank1996good,steane1996multiple} combines two classical codes, $\mathcal{C}_1 = [N,k_1,d_1]$ and $\mathcal{C}_2 = [N,k_2,d_2]$ to form an $[[N,k_1 + k_2 - N, \min(d_1,d_2)]]$ QECC with commuting check terms that generate a stabilizer subgroup of the Pauli group.}, where the notion of sparse parity checks corresponds to stabilizer generators that each act on $\mathcal{O}(1)$ physical qubits, with each qubit participating in only $\mathcal{O}(1)$ of such checks. The existence of QLDPC codes with good parameters and fast encoding/decoding algorithms would have significant practical impact; for example, Gottesman has shown these would imply schemes for fault tolerant quantum computation with constant overhead~\cite{gottesman2013fault}. 

Despite many years of investigation, we do not yet know of QLDPC codes that simultaneously achieve constant rate and relative distance while maintaining constant locality and sparsity. The QLDPC codes of~\cite{tillich2014quantum, lloyd2017polylog} have a constant rate, but the minimum distance does not exceed $\mathcal{O}(\sqrt{N})$ where $N$ is the number of physical qubits. So far the QLDPC code with the best distance scaling is the construction of Freedman, Meyers and Luo~\cite{freedman2002z2} which achieves minimum distance distance $\mathcal{O}(\sqrt{N \log N})$, but only encodes a single qubit.  Bravyi and Hastings gave a probabilistic construction of a code with constant rate and linear distance, but the stabilizer generators each act on $\sqrt{N}$ physical qubits~\cite{bravyi2014homological}. Hastings proved that, assuming a conjecture about high dimensional geometry, there exist QLDPC codes encoding a constant number of qubits (i.e. have vanishing rate) with distance scaling as $\Omega(N^{1 - \xi})$ for any $\xi > 0$~\cite{hastings2017quantum,hastings2017weight}.

The question of whether good QLDPC codes exist also has importance for Hamiltonian complexity and the construction of exotic models in physics.  This connection arises because any QECC code space that can be enforced by a set of constant-weight check operators can also be identified as the ground space of a local Hamiltonian.   A central goal in these areas is to identify classes of local Hamiltonians with robust entanglement properties, and QLDPC codes provide a fruitful source of candidates.  However, if the local terms are stabilizers then $H$ is always a commuting Hamiltonian, and despite the richness of these systems they only capture a subset of local Hamiltonians and the properties they can exhibit.%

Here we explore the QLDPC Conjecture (which posits that there exist asymptotically good QLDPC codes) through the correspondence between QLDPC codes and local Hamiltonians.  This leads us to relax the requirement of being a CSS stabilizer code in two ways:
\begin{enumerate}
  \item The code satisfies an \emph{approximate} error-correction property: after an error channel is applied the decoding procedure recovers encoded states up to some $1 - \varepsilon$ fidelity, where $\varepsilon = o(1)$.
  
  \item The codespace is specified as the groundspace of a frustration-free local Hamiltonian $H = \Pi_1 + \cdots + \Pi_m$, where the local projectors $\Pi_i$ don't necessarily commute. 
\end{enumerate}
Codes satisfying the approximate reovery condition are known as \emph{approximate quantum error correcting codes} (AQECC), and codes with noncommuting frustration-free local check terms have been considered as a generalization of QLDPC in Hamiltonian complexity, therefore we call codes satisfying satisfying these conditions \emph{approximate QLDPC codes.}%

\subsection{Our results}

Our main result is a construction of approximate QLDPC codes with nearly-optimal parameters. 

\begin{theorem}%
\label{thm:informalMain}
For infinitely many $N$ there exists $N$-qubit subspaces $\{\cC_N\}$ with the following properties:
\begin{enumerate}
  \item $\cC_N$ is an AQECC that encodes $k = \tilde{\Omega}(N)$ logical qubits in $N$ physical qubits, has distance $d = \tilde{\Omega}(N)$, approximation error $\varepsilon = \cO(1/\polylog N)$, and a $\poly(N)$ time encoding algorithm.
  \item $\cC_N$ is the ground space of a frustration-free local Hamiltonian $H^{(N)} = \sum H_i^{(N)}$ such that each term $H_i^{(N)}$ acts on $O(1)$ qubits, and each physical qubit participates in at most $\polylog N$ terms. 
  \item The Hamiltonian $H^{(N)}$ has spectral gap $\tilde{\Omega}(N^{-3.09})$ and it is spatially local in $\polylog(N)$ dimensions (i.e. it can be embedded in $\mathbb{R}^{\polylog N}$ with finite qubit density and geometrically local interactions). 
\end{enumerate}
Here, the notation $\tilde{\Omega}(\cdot)$ suppresses factors of $\polylog N$.
\end{theorem}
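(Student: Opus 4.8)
The plan is to instantiate the circuit-to-approximate-code correspondence of Nirkhe et al.\ with the asymptotically good quantum codes of Brown--Fawzi, but to replace the Feynman--Kitaev clock by a \emph{spacetime} clock so that the resulting code Hamiltonian is sparse and detects local errors. Brown--Fawzi supply, for each $n$, a Clifford circuit $C$ of depth $D = \polylog(n)$ on $n' = O(n)$ qubits (data plus ancillas initialized to $\ket{0}$) whose image is an exact $[[O(n),\Omega(n),\Omega(n)]]$ stabilizer code $\cC_{\mathrm{BF}}$. The history-state encoding of Nirkhe et al.\ turns such a depth-$D$ codeword-preparing circuit into an AQECC inheriting $\Theta$ of $\cC_{\mathrm{BF}}$'s rate and distance, with approximation error $\varepsilon$ driven to $\cO(1/\polylog n)$ by a $\polylog n$-factor padding of the clock. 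Instead of a global clock register I would place local clock degrees of freedom at each spacetime cell $(i,t)\in[n']\times[T]$ with $T=\polylog n$, so that the $N = n'\cdot T = \wt{\Theta}(n)$ physical qubits are the data- and clock-qubits of this lattice and the code Hamiltonian $H = H_{\mathrm{in}} + H_{\mathrm{clock}} + H_{\mathrm{prop}} + H_{\mathrm{out}}$ is frustration-free with $O(1)$-local (indeed $10$-local) terms; asynchronous local clocks are precisely what let the construction exploit the parallelism of the Brown--Fawzi circuit.

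\textbf{The coding parameters.} Since $N = \wt{\Theta}(n)$ we get $k = \Omega(n) = \wt{\Omega}(N)$ logical qubits, and since any weight-$w$ Pauli on the spacetime lattice back-propagates to a weight-$O(w)$ error on the input layer which $\cC_{\mathrm{BF}}$ corrects for $w = \Omega(n)$, combined with the clock-padding accounting this yields the AQECC guarantee with $d = \wt{\Omega}(N)$ and $\varepsilon = \cO(1/\polylog N)$; the encoder runs $C$ with padding and then prepares the uniform clock superposition, in $\poly(N)$ time. Each Hamiltonian term touches $O(1)$ qubits by inspection, and a routine count shows each qubit lies in $\polylog N$ of them---the only super-constantly-shared qubits are those of the boundary layers carrying the weight-$\polylog N$ stabilizer checks of $\cC_{\mathrm{BF}}$. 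For local detection, a Pauli error on cell $(i,t)$ only influences its forward light cone, and the weight-$\polylog N$ output stabilizers are measurable by depth-$\polylog N$ circuits, so every Pauli error is detectable in $\polylog N$ depth. Finally, laying out the bitonic routing networks used below in $\polylog n$ dimensions with geometrically local wires embeds $H$ in $\mathbb{R}^{\polylog N}$ with finite density.

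\textbf{The spectral gap---the main obstacle.} The real work is the bound $\gamma(H) = \wt{\Omega}(N^{-3.09})$. By the standard reduction (Kitaev's geometric lemma together with the nullspace projection lemma), $\gamma(H) \ge \wt{\Omega}(1)\cdot\gamma(P)$ where $P$ is the Laplacian of the \emph{frontier-move} random walk on the set $\mathcal{S}$ of valid spacetime configurations---downward-closed subsets of the circuit's gate-DAG---with moves that apply or un-apply a single gate at the current frontier. Unlike the $1{+}1$-dimensional nearest-neighbor case, $\mathcal{S}$ has no exactly-solvable spin-chain structure, so I would: (i) arrange $C$ so that all of its wire-permutation layers are realized by bitonic sorting networks, giving $C$ a uniform block structure of $O(D/\log n)$ segments each of depth $\Theta(\log n)$; (ii) invoke a Markov-chain decomposition theorem in the style of Madras--Randall and Jerrum--Son--Tetali--Vigoda, so that $\gamma(P)$ is $\Omega$ of the product of the gap of the \emph{projection chain} between configurations agreeing on all segment boundaries and the minimum over segments of the gap of the \emph{restriction chain} confined to one segment; (iii) bound the projection chain by canonical-path/combinatorial arguments exploiting the regularity of bitonic segments, losing only $\poly(D)\polylog(n)$; and (iv)---the crux---exhibit a bijection between the valid configurations of one depth-$\log n$ bitonic segment and the equal-area dyadic tilings of the square under which the frontier-move chain is isomorphic to the edge-flip chain, so that the polynomial mixing time of Cannon--Levin--Stauffer transfers. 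Propagating constants through (ii)--(iv) gives $\gamma(H) = \Omega(n^{-3.09} D^{-2} \log^{-6} n)$, and with $D = \polylog n$ and $n = \wt{\Theta}(N)$ this is $\wt{\Omega}(N^{-3.09})$.

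\textbf{Where the difficulty lies.} I expect essentially all the difficulty to be concentrated in steps (ii)--(iv): verifying the hypotheses of the decomposition theorem (reversibility of the projection and restriction chains with respect to the correct stationary measures, and that the overlap pattern of the segment pieces is benign), and above all establishing the dyadic-tiling isomorphism---matching the recursive halving of bitonic merge to the recursive bisection defining dyadic tilings, checking that ``which comparators have fired'' corresponds bijectively to ``which cuts are horizontal versus vertical'' subject to the equal-area constraint, and confirming that one frontier move is exactly one admissible edge flip. By contrast, the coding-theoretic ingredients (rate, distance, approximate recovery, sparsity, encoding, local detection) are comparatively routine given Brown--Fawzi and Nirkhe et al.
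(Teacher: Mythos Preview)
Your spectral-gap plan is essentially the paper's: uniformize via bitonic sorting so the circuit is a concatenation of rank-$\log n$ bitonic blocks, map $H_{\mathrm{prop}}$ to a random walk on valid time configurations, apply a Madras--Randall decomposition into pieces corresponding to single bitonic blocks, bound the aggregate chain combinatorially (the paper uses Cheeger rather than canonical paths, exploiting the exact count $a_\ell \sim \phi^{-1}\omega^{2^\ell}$ to get transition probabilities $\phi^{-2^j}/\ell$), and reduce each within-block chain to the Cannon--Levin--Stauffer edge-flip chain on dyadic tilings. One detail to watch: the paper's blocks are \emph{sliding windows} $\Omega_r = \{\boldsymbol{\tau}: r \le t_i \le r+\ell\ \forall i\}$, not fixed segments with pinned boundary values. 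This matters because only the free-boundary window is isomorphic (via a qubit-label permutation) to an unconstrained bitonic block, and hence to the full set of dyadic tilings; a fixed-boundary restriction chain would correspond to a conditioned tiling ensemble for which the mixing result is not stated.

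There is, however, a genuine error in your description of the Hamiltonian. You write $H = H_{\mathrm{in}} + H_{\mathrm{clock}} + H_{\mathrm{prop}} + H_{\mathrm{out}}$ and later speak of ``weight-$\polylog N$ output stabilizers'' of $\cC_{\mathrm{BF}}$. But the Brown--Fawzi code is a \emph{random} stabilizer code: its stabilizer generators generically have weight $\Theta(n)$, not $\polylog n$, so any $H_{\mathrm{out}}$ that checks them would destroy $O(1)$-locality. The paper has no output term at all; the ground space is completely determined by $H_{\mathrm{in}}$ (ancillas start in $\ket{0}$), $H_{\mathrm{clock}}$, $H_{\mathrm{prop}}$, and $H_{\mathrm{causal}}$, because once the input and the gate-by-gate propagation are enforced the output is forced to be a codeword. (The paper actually uses circular time---the circuit followed by its inverse---to simplify the overlap-with-input analysis in the geometrical lemma.) Correspondingly, your explanation of the $\polylog N$ sparsity is off: there are no stabilizer checks anywhere in $H$. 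The reason each qubit sits in $\polylog N$ terms is simply that each \emph{data} qubit $\sS_p$ appears in one propagation term $H_t[p,q]$ per layer $t$ of the depth-$D = \polylog n$ circuit, and similarly in $O(D)$ causality terms; clock qubits have constant degree. Finally, your recovery argument (``back-propagate errors to the input layer'') is not how the paper proceeds: instead one shows that a $1-\varepsilon$ fraction of the time configurations $\boldsymbol{\tau}$ lie in the padded identity region, so the history state has fidelity $1-\varepsilon$ with $\ket{\lambda}_{\mathrm{clock}}\otimes\ket{\Gamma}_{\mathrm{data}}$ for a genuine codeword $\ket{\Gamma}$, and the recovery map simply traces out the clocks and applies the (inefficient) ideal Brown--Fawzi decoder to the data.
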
 
The fact that the local check terms do not commute means that it is impossible to measure them all simultaneously.  However, in Section \ref{sec:localDetect} we show that any Pauli error will increase the energy of at least one local check term by at least $1/\polylog(N)$, and we use this to show that this family of codes is capable of \emph{locally} detecting arbitrary Pauli errors with $\polylog(N)$ depth circuits.
\begin{theorem}
\label{thm:local_detection}
  For the family of codes described above, there exists with high probabilty a collection $\cD$ of $\polylog(N)$-local projectors satisfying the following properties:
  \begin{enumerate}
    \item Each projector $\Pi \in \cD$ acts on $10$ physical qubits in the code and $s = \polylog(N)$ ancilla qubits initialized in the $\ket{0}$ state, and $\Pi \ket{\psi} \ket{0^s} = 0$ for all $\Pi \in \cD$ if and only if $\ket{\psi} \in \mathcal{C}^N$.
    \item For all Pauli channels $\cE$, for all codewords $\ket{\psi} \in \cC$, there exists a projector $\Pi \in \cD$ such that
    \begin{equation}
      \Tr \Paren{\Pi \, \Paren{\cE(\psi) \otimes \ketbra{0^s}{0^s}} } \geq (1 - \alpha)(1 - 2^{-\polylog(N)})
    \end{equation}
    where $\psi = \ketbra{\psi}{\psi}$ and $\alpha$ is the total weight of the channel $\cE$ on the (nonlocal) Pauli stabilizers in $\cS$. %
    \end{enumerate}
    Furthermore, there exists a measurement $M$, implementable by a circuit of $\polylog(N)$ depth acting on $\mathcal{O}(N \polylog(N))$ qubits, such that for all Pauli channels $\cE$ and for all codewords $\ket{\psi} \in \cC$
    \begin{equation}
      \Tr \Paren{ M \, \Paren{ \cE(\psi) \otimes \ketbra{0^{Ns}}{0^{Ns}} }} \geq (1 - \alpha) (1 - 2^{-\polylog(N)}).
    \end{equation}
\end{theorem}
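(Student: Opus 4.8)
The plan is to bootstrap from the energy‑increase lemma established earlier in this section: for every Pauli operator $P$ that does not lie in the nonlocal stabilizer group $\cS$, there is a $10$-local check term $H_j$ of the code Hamiltonian with $\bra{P\psi}H_j\ket{P\psi}\geq\delta$ for $\delta=1/\polylog(N)$, whereas $H_j\ket{\psi}=0$ for every codeword $\ket\psi\in\cC$. Given this, there are four things to do: (i) promote each check $H_j$ to a $\polylog(N)$-local \emph{projector} $\Pi_j$ acting on the $10$ physical qubits of $H_j$ together with $s=\polylog(N)$ fresh $\ket0$-ancillas, so that $\Pi_j\ket\psi\ket{0^s}=0$ on codewords while a $\delta$-sized violation of $H_j$ is amplified to rejection probability $1-2^{-\polylog(N)}$; (ii) verify the first property, namely that $\cC$ is exactly the set of $\ket\psi$ with $\Pi_j\ket\psi\ket{0^s}=0$ for all $\Pi_j\in\cD=\{\Pi_j\}$; (iii) average the detection guarantee over the Pauli decomposition of an arbitrary channel $\cE$, isolating the stabilizer contribution as the benign weight $\alpha$; and (iv) parallelize the $\Pi_j$ into a single measurement $M$ of depth $\polylog(N)$.

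Step (i) is what I expect to be the main obstacle. The point is that a single copy of an unknown corrupted state $\ket{P\psi}$ together with a single projector $H_j$ \emph{cannot} have its acceptance probability amplified in a black‑box way — this would violate no‑cloning — so the amplification has to exploit the redundancy built into the spacetime circuit Hamiltonian. Because the Brown--Fawzi encoder has only $\polylog(N)$ depth, a Pauli error does not merely violate one check by $\delta$: it corrupts an entire causal diamond comprising $\Theta(\polylog(N))$ propagation and consistency terms. Within such a diamond the terms can be $\cO(\polylog(N))$-colored so that each color class consists of checks on pairwise‑disjoint qubit sets, hence mutually commuting and simultaneously measurable. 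The gadget $\Pi_j$ coherently records, on its $s$ ancillas, the outcomes of one color class of diamond checks and flags if any of them rejects. The quantitative claim to establish is that, conditioned on a codeword having been corrupted at $\Pi_j$'s location (equivalently, $H_j\ket{P\psi}\neq 0$), the number of diamond checks that each reject with probability bounded below by an absolute constant is on the order of $s$, so that the union over the color class rejects with probability $1-2^{-\Omega(s)}=1-2^{-\polylog(N)}$. It is the clock‑consistency structure of the spacetime construction that lets one decouple this per‑check constant from the overall $1/\polylog(N)$ normalization of an individual energy term.

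Granting (i), the rest is bookkeeping. For (ii): $\Pi_j\ket\psi\ket{0^s}=0$ on codewords by construction of the gadget from $H_j$, while if $\ket\phi\notin\cC$ then by frustration‑freeness some $H_j\ket\phi\neq0$, and the gadget's "flag on any nonzero violation" property forces $\Pi_j\ket\phi\ket{0^s}\neq0$; hence the common kernel (with ancillas fixed to $\ket{0^s}$) is exactly $\bigcap_j\ker H_j=\cC$, the frustration‑free ground space. For (iii): write $\cE=\sum_a p_a\,P_a(\cdot)P_a$ and split the index set into $A_{\cS}=\{a:P_a\in\cS\}$, which acts trivially on codewords and carries total weight $\alpha=\sum_{a\in A_\cS}p_a$, and its complement. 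Expanding, $\Tr\!\big(\Pi_j(\cE(\psi)\otimes\ketbra{0^s}{0^s})\big)=\sum_a p_a\,\big\|\Pi_j\ket{P_a\psi}\ket{0^s}\big\|^2$; for the single‑$\Pi$ statement one takes $\Pi_j$ adapted to a check that the dominant non‑stabilizer weight of $\cE$ violates, so each such $P_a$ contributes $\geq 1-2^{-\polylog(N)}$ and the $\cS$-terms contribute $0$, losing only the factor $1-\alpha$. For the global statement one lets $M$ run a $\polylog(N)$-depth, color‑by‑color execution of all the $\Pi_j$ on $\cO(N\polylog(N))$ ancillas and flag if any fires; since every $P_a$ with $a\notin A_\cS$ triggers some $\Pi_j$ with probability $\geq1-2^{-\polylog(N)}$, we obtain $\Tr\!\big(M(\cE(\psi)\otimes\ketbra{0^{Ns}}{0^{Ns}})\big)\geq(1-\alpha)(1-2^{-\polylog(N)})$. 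The depth bound on $M$ follows from the $\cO(1)$-depth of each $\Pi_j$ and the $\polylog(N)$ number of colors, which is bounded by the $\polylog(N)$ degree of the interaction graph guaranteed in Theorem~\ref{thm:informalMain}.
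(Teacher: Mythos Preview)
Your plan has a genuine gap in step (i), the amplification. You correctly identify that a single Hamiltonian term $H_j$ only rejects the corrupted state with probability $\delta = 1/\polylog(N)$ and that this must be boosted to $1 - 2^{-\polylog(N)}$. But your dismissal of single-copy amplification is mistaken: Marriott--Watrous in-place amplification does exactly this. Given the $\cO(1)$-size circuit implementing the two-outcome measurement $\{H_j, I-H_j\}$, one alternates this measurement with the measurement ``are the ancillas in $\ket{0^s}$?'' for $\cO(\delta^{-3}) = \polylog(N)$ rounds on a \emph{single} copy of the state, then does a majority vote on the recorded outcomes. This is white-box (it uses the measurement circuit, not black-box access to the state), so no-cloning is irrelevant. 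The paper takes precisely this route: each weak detector $H_j$ is wrapped in a Marriott--Watrous gadget acting on the same $10$ code qubits plus $s = \cO(\delta^{-3}) = \polylog(N)$ ancillas, yielding the projector $\Pi_j \in \cD$.

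Your proposed workaround --- arguing that a Pauli error corrupts an entire causal diamond of $\Theta(\polylog(N))$ checks, each rejecting with constant probability --- is not supported by the energy-increase analysis you are bootstrapping from. That analysis produces, for each $P \notin \cS$, \emph{one} term $H_j$ with expectation $\geq 1/\polylog(N)$; the $1/\polylog(N)$ arises from the probability mass of specific time configurations (e.g.\ $\timeconfig[p,q] = (t,t)$), not from averaging over many terms. There is no claim, and it is not obviously true, that $\polylog(N)$ distinct checks each reject with probability bounded away from zero by an absolute constant. So your union-over-color-class argument has no quantitative input to feed on. The remaining steps (ii)--(iv) are fine in outline and match the paper, but they all rest on having the amplified $\Pi_j$ in hand.
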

Our construction of this family of codes is based on a recently discovered connection between AQECC and Feynman-Kitaev (FK) Hamiltonians~\cite{nirkhe_et_al:LIPIcs:2018:9095}.  FK Hamiltonians have ground states of the form $\frac{1}{\sqrt{T+1}} \sum_{t = 0}^T |t\rangle |\psi_t\rangle$, where $|\psi_t\rangle = U_t ... U_1 |0^n\rangle$\footnote{We use $n$ for the number of input qubits in a circuit Hamiltonian, and $N$ for the number of physical qubits in our code construction. $N = n\polylog(n)$ in our construction because of the overhead used to represent the clock.} is the state of a quantum circuit at time $t$, and are used to prove the quantum version of the Cook-Levin theorem.   The connection to AQECC is based on mapping the encoding circuit of a QECC to the ground space of a local Hamiltonian.  To construct the family of codes in Theorem \ref{thm:informalMain} we apply the connection formed in~\cite{nirkhe_et_al:LIPIcs:2018:9095} to a randomized construction of good quantum codes with polylogarithmic depth encoding circuits~\cite{brown2013short}.   The polylogarithmic factors in our construction arises from the additional ``clock'' qubits that are used in this mapping from circuits to  ground states.  However, the standard FK construction uses a single global clock variable and does not allow for gates to be applied in parallel; to take full advantage of these parallel encoding circuits we present a substantial new technical analysis of the many-clock ``spacetime''~\cite{mizel2007simple, breuckmann2014space} version of the FK construction that assigns an independent clock variable $t_i$ to each qubit $i$ in the circuit\footnote{The term ``spacetime'' comes from relativistic physics, in which time is necessarily measured by local clocks.}.  

The spacetime circuit Hamiltonian enforces a ground state that is a uniform superposition over all valid configurations of these clocks (where validity is determined by the pattern of gates in the circuit), and it is unitarily equivalent to the normalized Laplacian of a random walk on the high-dimensional space of partially completed circuit configurations.  Spacetime circuit Hamiltonians have been used previously for universal adiabatic computation and QMA-completeness constructions that are spatially local on a square lattice and do not require perturbative gadgets~\cite{breuckmann2014space,PhysRevLett.114.140501,lloyd2016adiabatic}.  The analysis of the spectral gap in these previous works has always relied on the exact solutions to certain 1 + 1 dimensional quantum spin chains~\cite{koma1997spectral}.  Here we develop a nearly tight lower bound on the spectral gap of the spacetime circuit Hamiltonian for a particular uniform class of circuits based on bitonic sorting networks.   These sorting networks are used to transform a $D$ depth circuit with arbitrary connectivity and $n$ qubits into a depth $D \log(n)^2$ circuit\footnote{All logarithms in this work are base 2.} with spatially local connectivity in $\log(n)$ dimensions.    By analyzing these sorting networks we prove the following general theorem in Section \ref{sec:analysis}.
\begin{theorem}\label{thm:spacetimeGeneralResult}
For any depth $D$ quantum circuit of 2-local gates on $n$ qubits, where $n$ is a power of 2, there is an associated spacetime circuit-to-Hamiltonian construction which is spatially local in $\textrm{polylog}(n)$ dimensions and has a spectral gap that is $\Omega(n^{-3.09} D^{-2} \log^{-6}(n))$.
\end{theorem}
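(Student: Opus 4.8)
The plan is to identify the spectral gap of the spacetime circuit Hamiltonian with the spectral gap of an explicit reversible Markov chain on circuit configurations, and then lower bound that gap by a decomposition argument.

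\textbf{Setup and reduction to a Markov chain.} First I would fix the construction: given a depth-$D$ circuit on $n$ qubits with arbitrary two-qubit connectivity, route each of its layers through a bitonic sorting network, obtaining an equivalent circuit of depth $D' = O(D \log^2 n)$ in which every gate acts on a pair of qubits adjacent in the $\log n$-dimensional Boolean hypercube. The spacetime Feynman--Kitaev Hamiltonian then attaches a unary clock register to each qubit, together with local propagation terms and clock-legality terms; it is frustration-free and its ground space is spanned by history states $\sum_{\timeconfig}\ket{\timeconfig}\ket{\psi_{\timeconfig}}$, the sum ranging over legal clock configurations $\timeconfig$ (equivalently, the order ideals of the causal partial order of the circuit). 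Since the qubit interaction graph together with the attached clock registers embeds with bounded density in $\polylog(n)$ dimensions, the Hamiltonian is spatially local in $\polylog(n)$ dimensions, which gives that half of the statement. Next, conjugating by the standard history unitary $W = \sum_{\timeconfig}\ketbra{\timeconfig}{\timeconfig}\otimes U_{\timeconfig}$ (``uncomputing'' the gates) shows that the restriction of $H$ to the relevant invariant subspace is unitarily equivalent to the normalized Laplacian of the natural random walk $P$ on the set $\Omega$ of legal configurations, where two configurations are adjacent when they differ by firing or unfiring a single gate. Hence the spectral gap of $H$ equals $\mathrm{gap}(P)$, and it remains to prove $\mathrm{gap}(P) = \Omega(n^{-3.09}D^{-2}\log^{-6}n)$.

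\textbf{Decomposition of the configuration chain.} This is the technical core. I would apply a Markov-chain decomposition (in the style of Madras and Randall, and of Jerrum, Son, Tetali, and Vigoda): slice the routed circuit into $O(D\log n)$ consecutive segments, each of depth $\log n$ and each equal to one stage of a bitonic sorting network, and take the covering subsets $\{\Omega_j\}$ to be the legal configurations whose ``front'' lies within the $j$-th or $(j{+}1)$-th segment, so consecutive subsets overlap on a full segment's worth of configurations. The decomposition theorem then reduces $\mathrm{gap}(P)$, up to a $1/(\text{number of segments})$-type factor, to (i) the gap of the macroscopic ``projection'' chain that moves the front between segments, and (ii) the minimum over $j$ of the gap of the ``restriction'' chain, i.e.\ the local-update walk on the configurations of a single bitonic segment with its boundary frozen. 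For (i), the uniformity and combinatorial structure of the bitonic segments give a coupling (or canonical-paths) bound on the projection chain that loses only $\poly(D,\log n)$ factors. For (ii), I would construct a measure-preserving bijection between the legal configurations of a depth-$\log n$ bitonic segment on $n$ qubits and the equal-area dyadic tilings of the unit square into $n$ rectangles, under which the single-gate firing move corresponds exactly to the edge-flip move on tilings; the polynomial mixing time of the dyadic-tiling edge-flip chain established by Cannon, Levin, and Stauffer then yields $\mathrm{gap} = \Omega(n^{-3.09})$ for each restriction chain, the exponent $3.09$ being inherited directly from their bound. Multiplying the projection gap, the worst restriction gap, and the decomposition loss, and collecting the $D$ and $\polylog(n)$ factors (including the $\log^2 n$ depth blow-up from sorting and the clock overhead), produces the stated bound.

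\textbf{Main obstacle.} The hard part is making step (ii)'s isomorphism precise and the decomposition airtight simultaneously. One must (a) define the bitonic segments and their overlapping configuration subsets so that every legal configuration is covered and the overlaps carry enough stationary mass for the projection chain to mix rapidly; (b) verify that the local-update move is ergodic and reversible on each frozen-boundary segment and that it maps bijectively onto the dyadic edge-flip chain, matching stationary weights exactly; and (c) carefully track how the clock-overhead qubits and the $\log^2 n$ routing blow-up propagate into the final polylogarithmic exponents. The cross-segment mixing estimate and this polylog bookkeeping are where most of the effort goes; once the bijection of (b) is in hand, the within-segment bound is essentially a black-box invocation of Cannon, Levin, and Stauffer.
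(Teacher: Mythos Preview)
Your plan is essentially the paper's: bitonic uniformization to depth $O(D\log^2 n)$, rotation by $W$ to a graph Laplacian on configurations, Madras--Randall decomposition into $\log n$-deep windows, a conductance/coupling bound on the aggregate chain, and the dyadic-tiling bijection plus Cannon--Levin--Stauffer for the restricted chains.

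There is, however, a genuine accounting error that you should fix rather than discover later. The Cannon--Levin--Stauffer bound on the edge-flip chain for rank-$\log n$ dyadic tilings gives relaxation time $O(n^{\log 17})$, i.e.\ $\Delta_{P_i} = \Omega(n^{-4.09})$, \emph{not} $n^{-3.09}$; the exponent $3.09$ is not ``inherited directly from their bound.'' The missing factor of $n$ that rescues the final answer comes from a step you misstated: after conjugating by $W$ you get the combinatorial Laplacian $\mathcal{L}_{\textrm{prop}}$, and to obtain a stochastic matrix you must divide by $\|\mathcal{L}_{\textrm{prop}}\| = \Theta(n)$ (each valid configuration has $\Theta(n)$ firable gates). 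Thus $\Delta_{H_{\textrm{prop}}} = \Theta(n)\cdot\Delta_P$, not $\Delta_P$. Your two errors cancel, but as written the argument is wrong at both places.

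A smaller point: for the aggregate chain the paper does not use coupling or canonical paths but an explicit Cheeger bound, and the crucial input is the exact asymptotic $|\Omega_r\cap\Omega_{r+1}|/|\Omega_r| \to \phi^{-2}$ (golden ratio), obtained from the recursion $a_\ell = 2a_{\ell-1}^2 - a_{\ell-2}^4$ counting valid configurations of a bitonic block. This is what guarantees the overlaps carry constant relative mass; your ``uniformity and combinatorial structure'' is doing real work here and deserves more than a sentence.
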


The spectral gap of a code Hamiltonian lower bounds the soundness of the code, since it determines the minimum energy of states outside of the code space.  In our code construction we take $D = \polylog(n)$, and since the circuit Hamiltonian acts on a total of $N = n \polylog(n)$ qubits this accounts for the bound on the spectral gap in Theorem \ref{thm:informalMain}.  Since our proof holds for any circuit with arbitrary connectivity we state the general result here for future potential applications to QMA and universal adiabatic computation.  
\subsection{Discussion}

We believe that our approximate QLDPC codes, beyond being an attempt to address the QLDPC Conjecture via a different perspective, also illustrate a compelling synthesis of various intriguing concepts of quantum information theory, and furthermore, highlight several connections that deserve closer investigation.

\paragraph{Approximate quantum error correction} AQECCs generalize QECCs by only requiring that the quantum information stored in the code, after the action of an error channel, be recoverable with fidelity at least $1 -\eps$.  AQECCs have long been known to be capable of achieving better parameters than standard QECCs~\cite{leung1997approximate, crepeau2005approximate}, though the necessary and sufficient conditions for approximate recovery were only established within the last decade~\cite{beny2010general}.   AQECC have found applications to fault-tolerant quantum computation~\cite{bravyi2010topological,lee2017topological} through the analysis of realistic perturbations to exact QECC, and have recently experienced a resurgence in popularity in physics due to connections made with the holographic correspondence in quantum gravity~\cite{almheiri2015bulk}.  Recently~\cite{flammia2017limits} have considered a version of local AQECC which also includes the possibility of locally approximate correction of errors in order to investigate the ultimate limits of the storage of quantum information in space.   One can interpret our approximate QLDPC codes as providing another demonstration that the AQECC condition is a useful relaxation that facilitates the construction of codes with superior parameters than what is (known to be) achievable in the standard QECC framework.
\paragraph{Codes from local Hamiltonians} As previously mentioned, QLDPC codes have been a fruitful source of local Hamiltonians with robust entanglement properties, which are central objects of study in quantum Hamiltonian complexity and condensed matter theory.  The first example of a QLDPC code was Kitaev's toric code, which is also a canonical example of a topologically ordered phase of matter~\cite{kitaev2003fault}.  Most research on QECC has been focused on stabilizer codes, like the toric code, for which the associated code Hamiltonians are commuting and frustration-free. In this paper we proceed in the opposite direction by asking: what kinds of quantum codes can we construct from local Hamiltonians whose terms don't necessarily commute? With this perspective, the extensive toolbox of techniques for constructing and analyzing Hamiltonians in quantum computing and quantum physics becomes immediately useful. This approach is inspired by several recent papers:
\begin{enumerate}
  \item In~\cite{eldar2016need}, Eldar, et al. defined general QLDPC codes to be subspaces $S$ that are stabilized by a collection of local projectors $\{ \Pi_i \}$; in other words, $\Pi_i \ket{\psi} = 0$ for all $i$ if and only if $\ket{\psi} \in S$. They call the $\Pi_i$ projectors ``parity checks'' in analogy to the parity check terms of CSS codes; however, the projectors $\{\Pi_i \}$ need not be parity checks in the traditional sense.
  
  \item In~\cite{flammia2017limits}, Flammia, et al. formalized a notion of local AQECCs that includes an additional condition of approximate local correctability.  This notion was applied to derive bounds on the ultimate limits of the storage of quantum information in spatially local codes. %

\item In~\cite{brandao2017quantum}, Brandao et al. show that qutrit systems on a line with nearest-neighbor interactions can form approximate QLDPC that encode $\log(N)$ qubits with distance $\log(N)$, and also show that AQECC can appear generically in energy subspaces of local Hamiltonians.

  \item In~\cite{nirkhe_et_al:LIPIcs:2018:9095}, Nirkhe, et al. shows that by using the Feynman-Kitaev circuit-to-Hamiltonian construction and a \emph{non-local} CSS code, one can obtain a \emph{local} approximate QECC where the corresponding Hamiltonian's ground space is approximately the original CSS code. 
\end{enumerate}

Although there are still many hurdles to climb before codes with noncommuting checks can be realistically applied to fault-tolerance protocols, these recent developments form an exciting frontier 
in the study of local Hamiltonians.  Another example of this connection is that the approximate codes developed in~\cite{nirkhe_et_al:LIPIcs:2018:9095} and extended here can be seen as an instance of the recently formalized notion of Hamiltonian sparsification~\cite{aharonov2018gap}.

\paragraph{Comparison with the sparse subsystem codes of~\cite{bacon2017sparse}}

In~\cite{bacon2017sparse} Bacon et al. construct subsystem codes with distance $\Omega(N^{1-\xi})$ for $\xi = \mathcal{O}(1/\sqrt{\log N})$ and constant weight gauge generators, and these were termed ``sparse subsystem codes.''  These are the best parameters achieved to date for any exact QECC in the ground space of a local Hamiltonian.  Even more remarkable, in relation to the present work, is the fact that the codes of Bacon et al. have local checks that arise in a completely different way from quantum circuits.  The difference is that~\cite{bacon2017sparse} considers fault-tolerant circuit gadgets (instead of encoding circuits as in~\cite{nirkhe_et_al:LIPIcs:2018:9095} and this work) and enforces the correct operation of these Clifford circuits according to the Gottesman-Knill theorem (rather than FK circuit Hamiltonians).  

Another difference between these code constructions is that the code Hamiltonians of Bacon et al. are necessarily frustrated due to the fact that the noncommuting gauge generators are all Pauli operators, which therefore anticommute and share no simultaneous eigenstates.  Although frustration does not always preclude the possibility of local error correction~\cite{flammia2017limits}, there is no lower bound established on the spectral gap of the codes in~\cite{bacon2017sparse} (and so there may be states outside the codespace with exponentially small energy), and detecting an error on a single qubit requires measuring $\poly(N)$ gauge generators in order to ascertain the syndromes of nonlocal stabilizers.   With this understanding we summarize past results on QECC with strong parameters: %

\begin{figure}[H]
\begin{center}
    \begin{tabular}{ | l | l | l | l | p{5cm} | }
    \hline
    \textbf{Reference} & \textbf{\# of logical qubits} & \textbf{Distance} & \textbf{Locality} & \textbf{Notes} \\ \hline
    \cite{tillich2014quantum} & $\Theta(N)$ & $\Theta(\sqrt{N})$ & $O(1)$ & CSS Stabilizer code \\ \hline
    \cite{freedman2002z2} & $O(1)$ & $O(\sqrt{N \log N})$ & $O(1)$ & CSS Stabilizer code \\ \hline
    \cite{bravyi2014homological} & $\Theta(N)$ & $\Theta(N)$ & $\Omega(\sqrt{N})$ & CSS Stabilizer code \\ \hline
  \cite{hastings2017quantum,hastings2017weight} & $O(1)$ & $\Omega(N^{1 - \xi})$ for all $\xi > 0$ & $O(1)$ & CSS code, assumes conjecture in high dimensional geometry \\ \hline
  \cite{bacon2017sparse} & $O(N)$ & $\Omega(N^{1 - \xi})$ for all $\xi > 0$ & $O(1)$ & Subsystem Stabilizer code, frustrated Hamiltonian \\ \hline
  This paper & $\Omega(N/\polylog N)$ & $\Omega(N/\polylog N)$ & $O(1)$ & approximate QLDPC code \\ \hline
    \end{tabular}
\end{center}
\end{figure}

\paragraph{Connections with QPCP} 

On of the most significant open problems in Hamiltonian complexity is to resolve the quantum PCP conjecture~\cite{aharonov2013guest}, which posits that quantum proofs can be made probabilistically checkable.  Since local Hamiltonians and the complexity class QMA are the respective quantum generalizations of constraint satisfaction problems and NP, the QPCP conjecture is equivalent to the statement that it is QMA-complete to decide whether the ground state energy of a Hamiltonian $H = \sum_{i =1}^m H_i$ is less than $a$ or greater than $b$ (under the promise that one of these is the case), where $b - a > \frac{c}{(m\cdot \max_i \|H_i\|)}$ for some $c = \Omega(1)$ corresponds to constant relative precision.  One reason this question is difficult is any trivial state which is output by a constant-depth quantum circuit acting on a product state can be given as an NP witness, and many of the commonly studied classes of local Hamiltonians necessarily have low-energy trivial states.  Therefore in order for QPCP to hold there must be some Hamiltonian with no low-energy trivial states, and even this weaker NLTS conjecture~\cite{hastings2013trivial} remains an open problem.   

One approach to resolving the NLTS and QPCP conjectures is to develop the quantum analogue of locally testable codes, which are defined in~\cite{aharonov2015quantum} as codes with frustration-free but not necessarily commuting local checks, good parameters, and a \emph{soundness} property which states that the energy of a state with respect to the constraints grows linearly with its distance from the code space.  Therefore constructing good QLDPC is necessary for constructing QLTC, but it is not sufficient since in general QLDPC may have low energy states outside the code space.  This collection of open challenges that are stimulating innovations in Hamiltonian complexity is known as the robust entanglement zoo~\cite{eldar2017local}, since they all involve generalizing known properties of quantum ground states to states with constant relative distance above the code space.

Just as the classical PCP Theorem indirectly transforms a Cook-Levin computational tableau into a probabilistically checkable CSP, a QPCP construction could be seen as transforming the FK circuit-to-Hamiltonian construction into a local Hamiltonian with robust entanglement.   While known limitations on generalized FK constructions make such a direct approach unlikey~\cite{ganti2013gap,Bausch2018analysislimitations,gonzalez2018history}, our Theorem \ref{thm:local_detection} on local error detection in $\polylog(N)$ depth is the first result to quantitatively substantiate the belief that the spacetime Hamiltonian construction is more robust than the standard global-clock FK Hamiltonian.  Specifically, we show that the energy of a state after the application of a Pauli error channel is inversely proportional to the \emph{depth} of the circuit in the spacetime construction, whereas it is proportional to the \emph{size} of the circuit in the standard FK construction.  In fact in Section \ref{sec:globalFK} we describe an alternate version of our approximate QLDPC construction that is based on global-clock FK and a modified distribution over time steps of the quantum circuit, and this version can achieve any scaling of the approximation error $\varepsilon(N) > 0$ at the expense of decreasing the spectral gap to $\tilde{\Omega}(\varepsilon N^{-3})$, but this substantially weakens the corresponding version of Theorem \ref{thm:local_detection} and forces the local error detection circuits to have superlinear depth.  This results suggest that continued investigation into alternative circuit-to-Hamiltonian constructions might be a fruitful direction of research, and might possibly make headway towards the mystery of the QPCP conjecture.

\paragraph{Overview of the remaining sections}  Section \ref{sec:codeHamSketch} overviews the spacetime circuit Hamiltonian used in our construction, and Section \ref{sec:sketch} sketches the proof techniques we use to lower bound the spectral gap of the code Hamiltonian, which is the main technical contribution of this work.   Section \ref{sec:prelim} formally defines approximate QLDPC codes and develops the machinery needed to describe our construction, including the good codes with polylogarithmic depth encoding circuits due to Brown and Fawzi~\cite{brown2013short} in Section \ref{subsec:parallel}, spacetime circuit Hamiltonians in Section \ref{subsec:spacetimeconstruction}, and bitonic sorting networks in Section \ref{subsection:bitonicblock}.  Our code construction and the efficient encoding circuit are given in Section \ref{sec:construction}, and the analysis of the spectral gap result in Theorems \ref{thm:informalMain} and \ref{thm:spacetimeGeneralResult} is given in Section \ref{sec:analysis}.   The local error detection analysis underlying Theorem \ref{thm:local_detection} is given in Section \ref{sec:localDetect}, and finally we discuss alternate versions of the construction in Section \ref{sec:globalFK} and a spatially local embedding in Section \ref{sec:spatialLocality}.  Appendix \label{appendix:bitonic} contains many detailed results on combinatorial properties of partially completed circuit configurations of bitonic sorting networks, as well as the connection between these circuit configurations and dyadic tilings.

\subsection{Description of the code Hamiltonian}\label{sec:codeHamSketch}
In~\cite{nirkhe_et_al:LIPIcs:2018:9095} it was recognized that the FK Hamiltonian which maps circuits to ground states could be used to develop a set of local checks for AQECC for which only an efficient encoding circuit was previously been found.  For a circuit with local gates $U_1,...,U_T$ the FK ground states are
\begin{equation}
  \ket{\Psi} = \frac{1}{\sqrt{T+1}} \sum_{t = 0}^T \ket{t}_{\sC} \otimes (U_t U_{t-1} \cdots U_1) \ket{\psi, 0 \ldots 0 }_{\sS}.\label{eq:historystate}
\end{equation}
Such states are called \emph{history states}.  The register $\sC$, called the \emph{clock register}, indicates how many gates have been applied to the all zeroes state, which is stored in register $\sS$ (called the \emph{state register}) containing an initial state $\ket{\psi}$ and ancillas. 

Although this state has only a $1/(T+1)$ fidelity with the output of the circuit, the standard technique for increasing the overlap to be inverse polynomially close to 1 is to pad the end of the circuit with identity gates (for recent work on more efficient methods for biasing the history state towards its endpoints, see~\cite{Bausch2018analysislimitations, caha2018clocks}).  This technique allows history states to capture approximate versions of QECC that have efficient encoding circuits.  The approximation error of the code is directly related to history state overlap with the output of the encoding circuit.  

\begin{figure}[h!]
\begin{center}
\includegraphics[scale=.3]{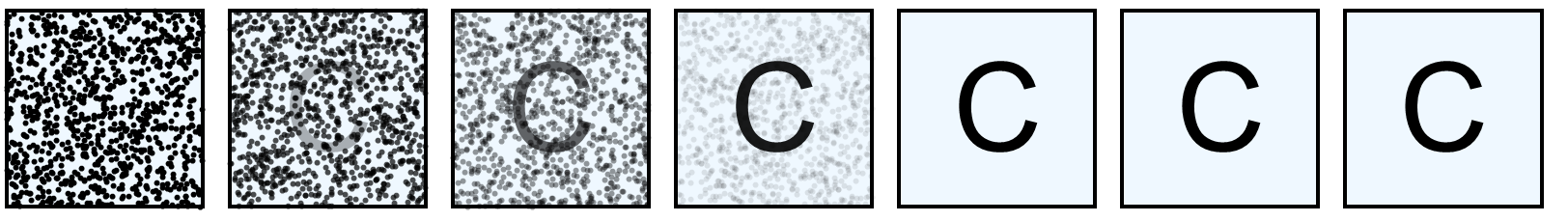}
\end{center}
\caption{The approximate nature of the codes introduced in~\cite{nirkhe_et_al:LIPIcs:2018:9095} arises from the fact that part of the history state superposition corresponding to early time steps, which do not match the output of the encoding circuit and are treated as noise in our analysis.  Once a sufficient depth to form a codeword is reached, the computation can be padded with identity gates in order to increase the overlap of this approximate codeword with the original codeword it is approximating.}
\end{figure}

The Hamiltonian which enforces the ground space spanned by states of the form \eqref{eq:historystate} is formed by projectors that check the input state of the computation, as well as \emph{propagation terms} that check that the branch of the superposition corresponding to time $t$ and the branch corresponding to time $t+1$ differ by the application of the gate $U_{t+1}$ to the state register. The linear ordering of the computation $U_1,\ldots,U_T$ is enforced via the sum of these propagation terms.  The propagation Hamiltonian is unitarily equivalent to a normalized Laplacian on the path graph with vertices $\{0,...,T\}$ and therefore has a spectral gap that is $\Theta(T^{-2})$.  For the purpose of lower bounding the energy of excitations that leave the code space, it is important to check the spectral gap of the full Hamiltonian including the input check terms, see Section \ref{sec:sketch} for further discussion.  

In this work we use the spacetime version of the FK circuit Hamiltonian~\cite{breuckmann2014space}, which assigns a clock register to each computational qubit, and has a ground space spanned by uniform superposition over all valid time configurations $\boldsymbol{\tau} = (t_1,...,t_n)$ of the state of the computation after the gates prior to $\boldsymbol{\tau}$ have been performed, 
\begin{equation}
  \ket{\psi} = \frac{1}{|\mathcal{T}|^{1/2}} \sum_{\boldsymbol{\tau} \in \mathcal{C}} \ket{\boldsymbol{\tau}}_{\sC} \otimes U(\boldsymbol{\tau} \leftarrow 0)\ket{0 \cdots 0}_{\sS}.
\end{equation}
Here $\mathcal{T}$ is the set of all valid time configurations $\boldsymbol{\tau}$, which is any vector $(t_1, \ldots, t_n)$ that the clock registers could hold if a subset of gates that respected causal dependence (see Definition \ref{def:partial-config}) were applied.  To avoid boundary effects at the beginning and end of the computation we use circular (periodic) time, which involves reversing the gates in the second half of the circuit so that the computation returns to its initial state.  In Section \ref{subsec:spacetimeconstruction} implement these periodic clocks using qubits.  

The necessity of including these causal constraints is one of the complications introduced by the use of spacetime circuit Hamiltonians, but a far more significant challenge is lower bounding the spectral gap of the spacetime propagation Hamiltonian.  In contrast with single-clock circuit Hamiltonians, the geometric arrangement of the gates in the circuit now has a significant effect on the spectrum of the spacetime circuit Hamiltonian due to the causal constraints.  All lower bounds in previous works apply to spacetime Hamiltonians in 2 spatial dimensions, which represent 1 (space) + 1 (time) dimensional quantum circuits.  This is not only due to the importance of planar connectivity for practical applications, but it is also a symptom of the general fact that exactly solvable models in mathematical physics are hardly known beyond 1 + 1 dimensions.   The 1 + 1 dimensional circuit propagation Hamiltonian is unitarily equivalent to a stochastic model describing the evolution of a string in the plane.  For higher dimensional circuits it corresponds to the dynamics of membranes or crystal surface growth, where no known solutions are available.  To overcome this in the present work we use sorting networks to turn arbitrary random circuits into circuits with uniform connectivity, and then we apply powerful techniques and past results from the theory of Markov chains to analyze the resulting high-dimensional spacetime circuit Hamiltonians.  
 
\subsection{Proof sketch for the spectral gap analysis}\label{sec:sketch}
Our analysis of the spectral gap $\Delta_\textrm{prop}$ of the spacetime circuit propagation Hamiltonian begins with the standard mapping from $H_\textrm{prop}$ to a a Markov chain transition matrix $P$. \footnote{The re-scaled Hamiltonian $H_{\textrm{prop}}/\|H_{\textrm{prop}}\|$ is unitarily equivalent to a normalized graph Laplacian $\mathcal{L}$ for the graph with vertices corresponding to valid time configurations and edges corresponding to local gate updates on those time configurations.  $P$ is the transition matrix for the random walk on this graph, which is obtained from $I-\mathcal{L}$ by a similarity transformation.  The point is that these mappings provide an algebraic relation between $\Delta_{\textrm{prop}}$ and $\Delta_P$.}  To analyze the latter, we apply a Markov chain decomposition method due to Madras and Randall~\cite{madras2002}, which is used to split the Markov chain and its state space into pieces that are easier to analyze individually.  For our decomposition of choice these pieces come in several closely related variants, which all essentially correspond to the set of time configurations contained within the final phase of a bitonic sorting circuit (as shown in Figure \ref{fig:bitonicBlockChain} for 8 lanes) which we call a bitonic block.  As described in Appendix \ref{appendix:bitonic}, an arbitrary circuit consisting of 2-local gates can be transformed into a sequence of consecutive bitonic blocks, with at most a polylogarithmic factor of blow up in the depth. 

\begin{figure}[h!]
\begin{center}
\includegraphics[scale=.5]{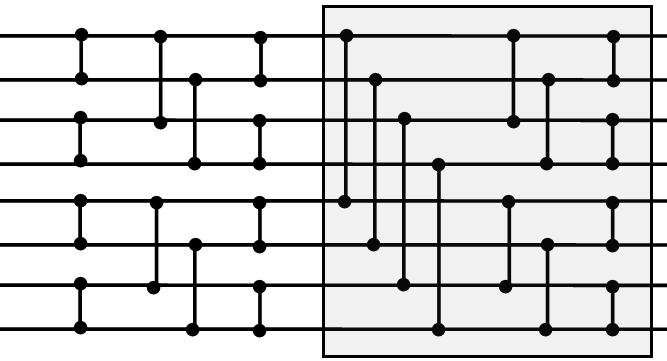}
\end{center}
\caption{A bitonic sorting architecture on $n = 8$ bits. We refer to the final phase of the architecture, corresponding to the last $\log(n) = 3$ layers enclosed in a gray box, as a bitonic block.  Note that the gates in each layer are executed simultaneously, but are drawn as non-overlapping for visual clarity.  An arbitrary circuit consisting of 2-local gates can be transformed to have the architecture of consecutive repetitions of bitonic blocks \iffalse (Lemma \ref{lem:permforspaciallylocal})\fi at the cost of increasing the depth by a factor of $\log(N)^2$.  \label{fig:bb3}}
\end{figure}
\begin{figure}[h!]

\begin{center}
\includegraphics[width=0.7\textwidth]{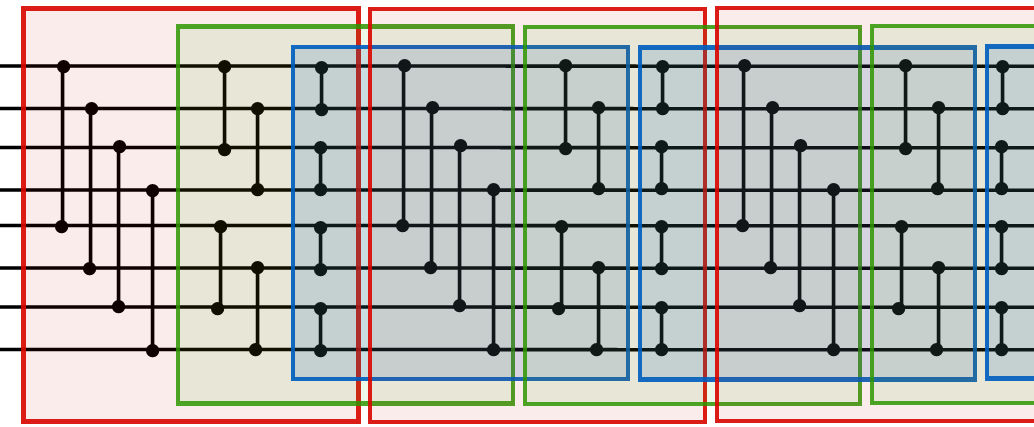}
\end{center}
\caption{The Markov chain block decomposition for a sequence of padded bitonic sorting architecture on 8 bits. The set of valid time configurations contained entirely within the $i$-th colored rectangle constitutes the block $\Omega_i$.  The set of time configurations in two rectangles of different colors are related by a permutation of the qubit wires.  The aggregate chain $\overline{P}$ has a nonzero transition probability $\overline{P}(i,j)$ iff the rectangles corresponding to the blocks $\Omega_i$ and $\Omega_j$ are overlapping.  Each block $\Omega_i$ has a nonzero transition probability to $\log N$ other blocks $\Omega_j$.   Every valid time configuration is contained in at least one of the blocks, and no time configuration is contained in more than $\log N$ blocks. \label{fig:bitonicBlockChain}}
\end{figure}
After dividing the set of valid time configurations $\Omega$ (the state space of the Markov chain) into subsets $\Omega_i$ of configurations confined to bitonic blocks of the form illustrated in Figure \ref{fig:bitonicBlockChain}, the subsets will form a quasi-linear chain in the sense that $\Omega_i$ and $\Omega_j$ have nonempty intersections when $|i - j| \leq \log n$.  To apply the decomposition method we need to analyze (1) the spectral gap of the restricted Markov chains $P_i$ that are confined to stay within each of the subsets $\Omega_i$, and (2) the spectral gap of an aggregate Markov chain $\overline{P}$ that moves between the blocks based on transition probabilities related to the size of the intersections of the blocks.  

As suggested by its quasi-linear connectivity, the spectral gap of the aggregate chain can be lower bounded using Cheeger's inequality in similar manner as is done for the path graph Laplacian.  The main technical challenge is to accurately compute the transition probabilities $\overline{P}(i,j) = \pi(\Omega_i \cap \Omega_j) / \left(\Theta \pi(\Omega_i)\right)$, which involve the ratio of the number of configurations within each of the blocks to the number within the pairwise intersections, $|\Omega_i \cap \Omega_j| / |\Omega_i|$, as well as the maximum number of blocks $\Theta$ that can contain any particular time configuration.  In Appendix \ref{appendix:bitonic}, we develop a recurrence relation to exactly count these configurations and show that the former is constant for consecutive blocks (and decays doubly exponentially with $|i - j|$ for longer distance transitions), and the latter is logarithmic in $n$.  Using asymptotic properties of the recurrence relation we show that the transition probabilities between $i,i+1$ are equal to $(\phi\log n)^{-1}$, where $\phi = (1 + \sqrt{5})/2$ is the golden ratio.  If there are $m$ blocks in total so that the length of the path is $m$, we use Cheeger's inequality to show that the spectral gap $\Delta_{\overline{P}}$ of the aggregate chain satisfies
\begin{equation}
\Delta_{\overline{P}} \geq \left(\phi m \log n\right)^{-2} . \label{eq:gapOverlineP}
\end{equation}

\begin{figure}[h!]
\begin{center}
\includegraphics[width=0.8\textwidth]{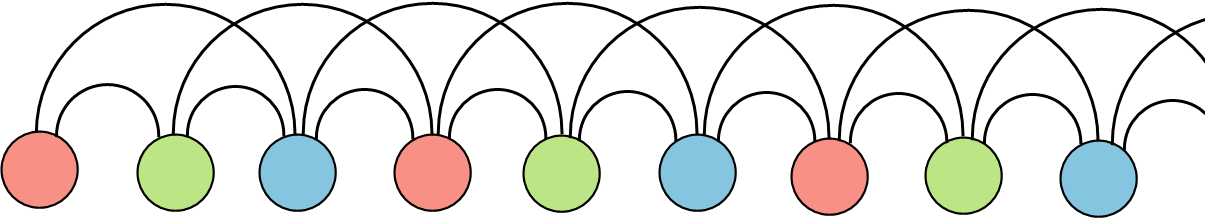}
\end{center}
\caption{An illustration of the states and transitions in the aggregate chain corresponding to the subsets of time configurations contained with the blocks in fig \ref{fig:bitonicBlockChain}. \label{fig:aggregateChain}}
\end{figure}

Turning to the analysis of the restricted chains $P_i$, we present the discovery of a surprising and beautiful connection between valid time configurations of architectures of the form shown in Figure \ref{fig:bb3} with combinatorial structures known as dyadic tilings~\cite{randomdyadictilingsoftheunitsquare}.  Dyadic tilings are tilings of the unit square by equal-area dyadic rectangles, which are rectangles of the form $[a 2^{-s}, (a+1)2^{-s}]\times [b 2^{-t},(b+1) 2^{-t}]$, where $a,b,s,t$ are nonnegative integers.  These tilings have a natural recursive characterization: beginning from the unit square, draw a line that is either a horizontal or vertical bisector.  This divides the square into two rectangles, and in each of these one chooses a horizontal or vertical bisector, and so on.  After $\ell = \log(n)$ such recursive steps one obtains a dyadic tiling of rank $\ell$ with a total of $n$ dyadic rectangles, each with area $1/n$. Some examples are given in Figure \ref{fig:dyadicex_intro}.
\begin{figure}[h!]
\begin{center}
\includegraphics[scale =.3]{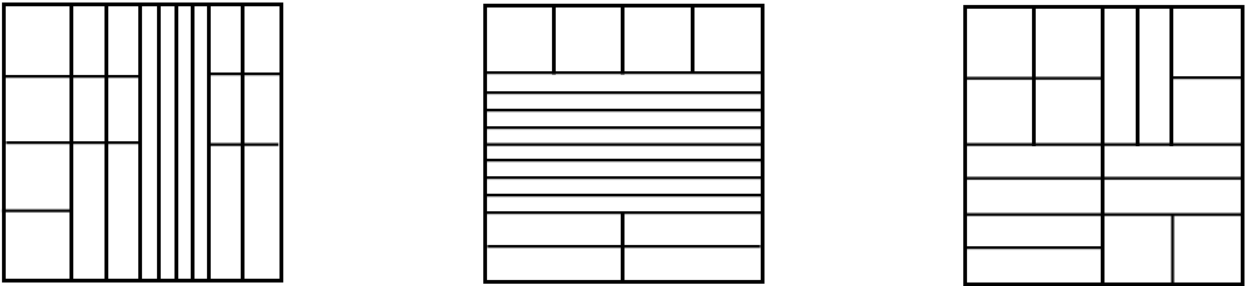}
\caption{Examples of dyadic tilings of rank 4. \label{fig:dyadicex_intro}}
\end{center}
\end{figure}

For a spacetime circuit with $n$ qubits, we choose the blocks $\Omega_i$ in the decomposition so that for each block there is an exact bijection between the time configurations within the block and the set of equal-area dyadic tilings of rank $\ell = \log n$.  Moreover, it turns out that the natural Markov chain on time configurations can also be mapped onto a previously defined Markov chain for dyadic tilings called the edge-flip chain.  This Markov chain selects a rectangle of area $1/n$ in the current dyadic tiling and one of its four edges at random, and flips this edge if the result would be another dyadic tiling. The correspondence is described in Figure \ref{fig:vdi2}.
\begin{figure}[h!]
\begin{center}
\includegraphics[scale=.3]{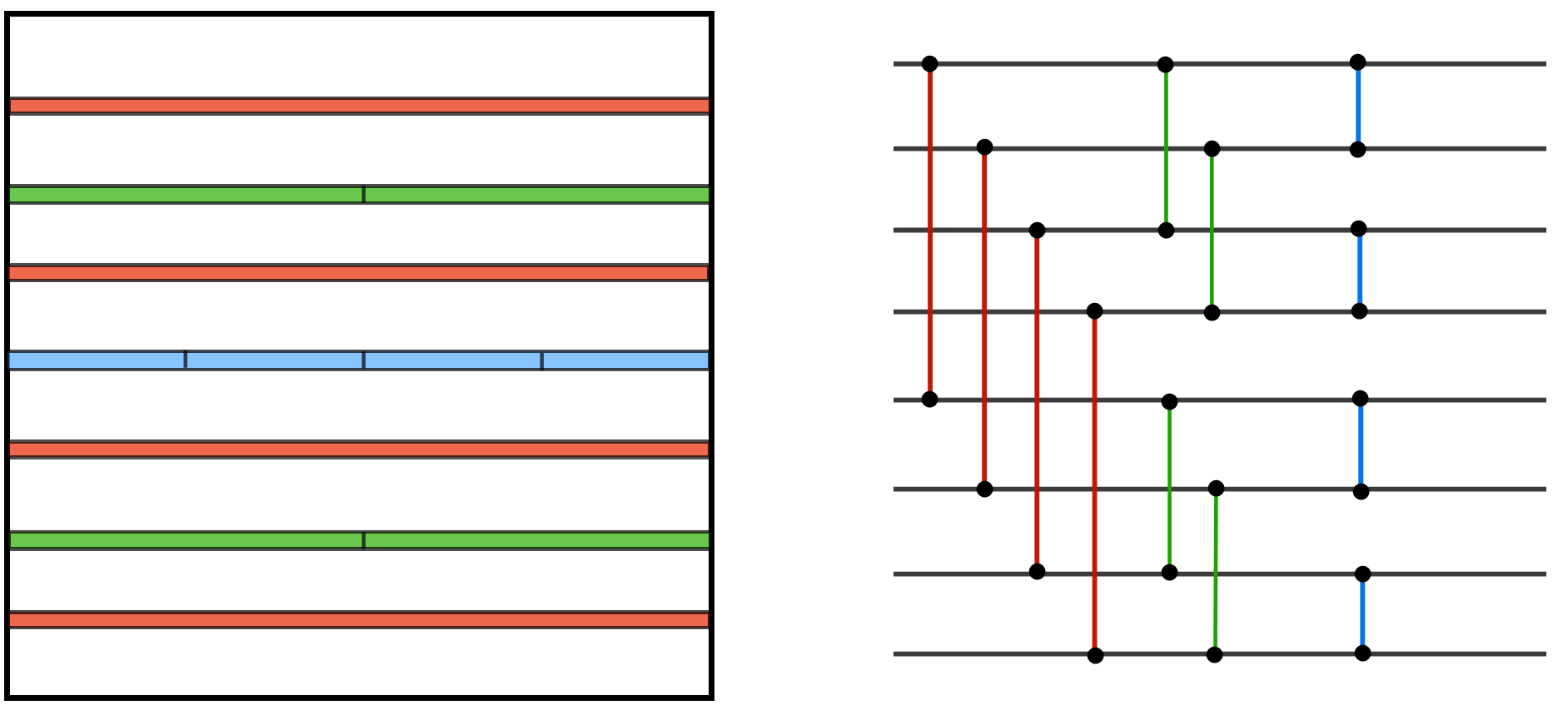}
\end{center}
\caption{A color-coding of the correspondence between dyadic tilings and valid time configurations of a bitonic sorting circuit.  The colored line segments in (a) correspond to sub-edges which when rotated by $\pi/2$ about their midpoint will be sub-edges of a vertical edge in some dyadic tiling.  These edges are placed in correspondence with the gates of the bitonic sorting circuit in (b), with the convention that colored line segments in (a) are ordered from left to right and from top to bottom, and the gates in a given commuting layer in (b) are enumerated from top to bottom.  Given an arbitrary dyadic tiling, one checks which of the colored line segments in (a) correspond to vertical sub-edges in the tiling, and these correspond to gates that are in the past causal cone of the bitonic time configuration associated with that tiling. \label{fig:vdi2}}
\end{figure}

The mixing time of this edge flip chain was an open problem for over a decade, but has recently been the subject of a tour de force analysis that establishes an upper bound on the mixing time that is polynomial in $n$.   Adapting these results using our bijection between these Markov chains yields
\begin{equation}
\Delta_{P_i} = \Omega\left(n^{-4.09} \right) \quad ,\quad \textrm{for all } \; \; i=1,...,m, \label{eq:gapPi}
\end{equation}
where the value of the exponent can be taken to be $\log (17) = 4.087\ldots$.
Once \eqref{eq:gapOverlineP} and \eqref{eq:gapPi} are established, we combine them according to the decomposition result,
$$
\Delta_P \geq \frac{1}{2} \Delta_{\overline{P}} \min_{i = 1,...,m} \Delta_{P_i} = \Omega\left(n^{-4.09} m^{-2} \textrm{polylog}(n)^{-1}\right),
$$
which is an inverse polynomial lower bound on the gap.  The circuit propagation Hamiltonian is equivalent to the Markov chain $P$ scaled by a factor of $n$, and so we obtain $\Delta_{\textrm{prop}} = \wt{\Omega}(n^{-3.09})$.  Finally, using the version of the spacetime Hamiltonian with circular time we show that every state in the code space has overlap $1/\polylog(n)$ with the input terms and so the geometrical lemma yields a gap of $\wt{\Omega}(n^{-3.09})$ for the full code Hamiltonian.

\section{Preliminaries}
\label{sec:prelim}

In what follows, we present the definitions of the main ingredients of our code construction and analysis.

\subsection{Approximate QLDPC codes}

\newcommand{\Enc}{\mathrm{Enc}}
\newcommand{\Rec}{\mathrm{Rec}}

Here we present the formal definition of an approximate QLDPC code.

\begin{definition}[Approximate QLDPC code]
  A $2^k$-dimensional subspace $C$ of $(\C^2)^{\otimes N}$ is a $[[N,k,d,\eps,\ell,s]]$ \emph{approximate QLDPC code} iff there exists a (not necessarily commuting) set of projectors $\{H_1,\ldots,H_m\}$ acting on $N$ qubits such that
  \begin{enumerate}
    \item Each term $H_i$ acts on at most $\ell$ qubits (i.e. \emph{locality}) and each qubit participates in at most $s$ terms (i.e. \emph{sparsity}).
    \item For all $\ket{\psi}$, we have that $\ket{\psi} \in C$ if and only if $\bra{\psi} H \ket{\psi} = 0$, where $H = H_1 + \cdots + H_m$. 
    \item There exist encoding and recovery maps $\Enc,\Rec$ such that for all $\ket{\phi} \in (\C^2)^{\otimes k} \otimes \mathcal{R}$ where $\mathcal{R}$ is some purifying register, for all completely positive trace preserving maps $\mathcal{E}$ acting on at most $(d-1)/2$ qubits, we have that the image of $\Enc$ is exactly the code $C$ and 
  \begin{equation}
    F \Paren{ \Rec \circ \cE \circ \Enc(\ketbra{\phi}{\phi}), \ketbra{\phi}{\phi} } \geq 1 - \eps
  \end{equation}
  where $F(\cdot,\cdot)$ denotes the fidelity function. Here, the maps $\Enc$, $\mathcal{E}$, and $\Rec$ do not act on register $\mathcal{R}$.
  \end{enumerate}
  \end{definition}
The first condition of the above definition enforces the locality and sparsity conditions of the approximate QLDPC code. The second condition enforces that the code is the ground space of a frustration-free local Hamiltonian. The third condition corresponds to the approximate error-correcting condition, where we only require that the decoded state is \emph{close} to the original state (i.e., we no longer insist that $\Rec \circ \cE \circ \Enc$ is exactly the identity channel). Although there are few results on approximate quantum error-correcting codes, we do know that relaxing the exact decoding condition yields codes with properties that cannot be achieved using exact codes~\cite{leung1997approximate,beny2010general}.

\subsection{Parallel quantum circuits}
\label{subsec:parallel}

We establish some notational conventions for parallel quantum circuits. 

Consider the following model of depth $D$ circuits on $n$ qubits. The circuit $C$ consists of $D$ layers $L_1,\ldots,L_D$. In each layer $L_t$ for $1 \leq t \leq D$, the $n$ qubits are partitioned into $n/2$ disjoint pairs $\{ (p,q) \}$, and a two-qubit gate $U_t(p,q)$ acts on the qubit pair $(p,q)$. Layer $L_1$ is applied first, then layer $L_2$, and so on. The unitary corresponding to circuit $C$ is
\begin{equation}
  \prod_{t=1}^D \bigotimes_{(p,q) \in L_t} U_t[p,q]
\end{equation}
where the product is written from right to left. In other words, the unitary $\bigotimes_{(p_1,q_1) \in L_1} U_1[p_1,q_1]$ is the rightmost factor, followed by $\bigotimes_{(p_2,q_2) \in L_2} U_2[p_2,q_2]$, and so on.

\paragraph{Model for random low-depth Clifford circuits} Our model for random depth $D$ Clifford circuits is to choose, for each layer $L_t$, a random partition $\{ (p,q) \}$ of the $n$ qubits, and then for each pair $(p,q)$, and let $U_t[p,q]$ be a uniformly chosen from the two-qubit Clifford group (i.e., the set of all unitaries that preserve the Pauli group under conjugation). 

Brown and Fawzi showed that for $D = \mathcal{O}(\log^3 n)$, the circuit $C$ is an encoding circuit for a good error-correcting code with high probability~\cite{brown2013short}:

\begin{theorem}[\cite{brown2013short}]
\label{thm:bf}
  For all $\delta > 0$, for all integers $n,k,d > 0$ satisfying
  \begin{equation}
    \frac{k}{m} \leq 1 - h(d/n) - \log(3)d/n - 4\delta,
  \end{equation}
  with $h(\cdot)$ as the binary entropy function, the circuit $C$ described in the paragraph above is an encoding circuit for a $[[m,k,d]]$ stabilizer code with probability at least $1 - \Omega(n^{-8})$. In other words, with high probability the subspace $\mathcal{C} = \{ C \ket{\psi} \ket{0}^{\otimes (n-k)} : \text{$\ket{\psi}$ is a $k$-qubit state} \}$ is a $[[n,k,d]]$ stabilizer code.
\end{theorem}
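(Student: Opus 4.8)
The plan, following Brown and Fawzi, is to check that $\mathcal{C}$ is a stabilizer code of the claimed dimension and then to lower bound its distance by a first‑moment argument over low‑weight Pauli errors, exploiting the fact that a random depth‑$O(\log^{3}n)$ Clifford circuit scrambles a single Pauli operator almost completely. Throughout, let $|E|$ denote the weight (number of non‑identity tensor factors) of a Pauli $E$.

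\textbf{Reduction to a single‑Pauli estimate.} Since $C$ is unitary, $\mathcal{C}=C\bigl((\C^{2})^{\otimes k}\otimes\ket{0}^{\otimes(n-k)}\bigr)$ always has dimension $2^{k}$; it is the stabilizer code with stabilizer group $\mathcal{S}=\langle CZ_{j}C^{\dagger}:k<j\le n\rangle$ (these $n-k$ generators are independent because the $Z_{j}$ are) and logicals $\overline X_{i}=CX_{i}C^{\dagger}$, $\overline Z_{i}=CZ_{i}C^{\dagger}$ for $i\le k$, none of which lie in $\mathcal{S}$. By the standard characterization of stabilizer distance, $\mathcal{C}$ has distance $\ge d$ unless there is a Pauli $E$ with $1\le|E|\le d-1$ that commutes with every stabilizer but is not in $\mathcal{S}$ (a short nontrivial logical). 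Conjugating by $C$, this bad event $\mathcal{B}_{E}$ reads: $F:=C^{\dagger}EC$ has no $X$‑ or $Y$‑component on any of the qubits $k+1,\dots,n$, yet $F$ is non‑identity on $\{1,\dots,k\}$. A union bound then gives
\[
\Pr\!\left[\dist(\mathcal{C})<d\right]\ \le\ \sum_{w=1}^{d-1}\binom{n}{w}3^{w}\max_{|E|=w}\Pr[\mathcal{B}_{E}].
\]
Since $\binom{n}{w}3^{w}\le 2^{\,n(h(w/n)+(\log 3)\,w/n)}$, and the hypothesis places $k/n$ below $1-h(d/n)-(\log 3)d/n$ by at least $4\delta$, it suffices to prove $\Pr[\mathcal{B}_{E}]\le 2^{-(n-k)+o(n)}$ for every fixed $E$ of weight $<d$, the slack being soaked up by $\delta$.

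\textbf{Scrambling a single Pauli.} Push $E$ backward through the layers: $F_{0}=E$, and $F_{t}$ is $F_{t-1}$ conjugated by the $t$‑th layer counted from the output, so that $F_{D}=C^{\dagger}EC$ up to phase. The elementary fact that drives everything is that conjugating a \emph{non‑identity} two‑qubit Pauli by a uniformly random two‑qubit Clifford yields a uniformly random non‑identity two‑qubit Pauli. Consequently: (i) once the support $S_{t}=\supp(F_{t})$ is nonempty it stays nonempty; (ii) in each layer the fresh random partition pairs a support qubit with a non‑support qubit with probability about $1-|S_{t}|/n$, and then the partner joins the support with probability $\tfrac{9}{15}$, which yields the expected‑support‑density recursion $\rho\mapsto\tfrac{4}{5}\rho(2-\rho)$ with nonzero fixed point $\rho=\tfrac34$ — exactly the support density of a uniformly random Pauli; and (iii) the Pauli type of each support qubit, jointly with the support, drifts toward the uniform distribution on $\{I,X,Y,Z\}^{\otimes n}$. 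The task — and the step I expect to be the main obstacle — is to make this quantitative: with failure probability $n^{-\Omega(1)}$ (and $2^{-\Omega(n)}$ for the events relevant to each fixed $E$ of weight $<d$), after $D=O(\log^{3}n)$ layers, for any fixed set of $n-k$ coordinates, the probability that $F_{D}$ has no $X$ or $Y$ on all of them is at most $2^{-(n-k)+o(n)}$. I would organize this in two phases: a branching/doubling phase showing $|S_{t}|$ rises from $1$ to $\Omega(n)$ within $O(\log^{2}n)$ layers with overwhelming probability — here the polylog cost and the need for sharp tail bounds come from controlling, uniformly over the exponentially many candidate errors, the unlikely event that the walk stalls near $|S_{t}|=1$ through a run of unlucky partitions and gates — followed by a mixing phase of $O(\log^{3}n)$ further layers, bounded either by a polylogarithmic spectral‑gap estimate for the induced chain on support‑and‑type profiles or by an explicit coupling, after which the conditional probability of the "no $X,Y$" event has the near‑product form claimed.

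\textbf{Conclusion.} Conditioned on this good event, the chance that $F_{D}$ carries no $X$ or $Y$ on any of the qubits $k+1,\dots,n$ is at most $2^{-(n-k)+o(n)}$, since each of those qubits lies in $\{I,Z\}$ with probability about $\tfrac12$ and these are nearly independent; the further requirement that $F_{D}$ be non‑identity on $\{1,\dots,k\}$ only decreases this. Adding the $2^{-\Omega(n)}$ probability that the good event fails gives $\Pr[\mathcal{B}_{E}]\le 2^{-(n-k)+o(n)}$ for every $E$ of weight $<d$. Feeding this into the union bound above and invoking $k/n\le 1-h(d/n)-(\log 3)d/n-4\delta$ makes every summand $2^{-\Omega(n)}$, hence the whole sum exponentially small; together with the $O(n^{-8})$ from union‑bounding the circuit‑regularity conditions used in the walk analysis over the $O(n)$ qubits and $O(\polylog n)$ layers, this shows that with probability $1-O(n^{-8})$ the random circuit $C$ encodes a genuine $[[n,k,d]]$ stabilizer code.
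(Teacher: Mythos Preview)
The paper does not prove this theorem; it is stated as a black-box citation of Brown and Fawzi~\cite{brown2013short} and used without proof. So there is no ``paper's own proof'' to compare against.

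On its own merits, your outline follows the Brown--Fawzi strategy correctly: reduce via the stabilizer formalism to a union bound over low-weight Paulis, conjugate back through the circuit, and show that after $O(\log^3 n)$ random Clifford layers any fixed nontrivial Pauli is scrambled so that the probability it lands in the centralizer-minus-stabilizer is roughly $2^{-(n-k)}$. The reduction in your first paragraph is correct, as is the identification of the fixed-point support density $3/4$ and the uniform-Pauli heuristic.

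However, you explicitly flag the central step --- making the scrambling quantitative with the right tail bounds, uniformly over all $E$ --- as incomplete, and you are right that this is where the work lies. Two specific points where your sketch would need tightening: first, the $O(\log^2 n)$ you budget for the branching phase and the $O(\log^3 n)$ for mixing need to produce failure probabilities that beat the $\sum_w \binom{n}{w}3^w$ union bound, which is exponential in $n$, not merely polynomial; your current phrasing mixes an $n^{-\Omega(1)}$ failure for ``circuit regularity'' with a $2^{-\Omega(n)}$ failure for the walk, and you should be precise about which events are union-bounded over which index set. Second, the near-independence claim in your conclusion (``each of those qubits lies in $\{I,Z\}$ with probability about $\tfrac12$ and these are nearly independent'') is exactly the content of the mixing bound and cannot be asserted separately --- it must come out of the coupling or spectral argument you gesture at. As written, this is a correct plan but not a proof.
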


\begin{notation}
  To avoid confusion with the blocklength of our approximate QLDPC code that we construct in our paper (which is denoted by $N$), we will use $n$ to denote the blocklength of the Brown-Fawzi random circuit code.
\end{notation}

Since the circuits are Clifford circuits, the resulting code is a stabilizer code.

\subsection{The spacetime circuit Hamiltonian construction}
\label{subsec:spacetimeconstruction}

As mentioned in the introduction, we use a small variant of the spacetime circuit Hamiltonian of Brueckmann and Terhal~\cite{breuckmann2014space} to create our code Hamiltonian. In this section, we present the spacetime construction for general depth $D$ circuits. In Section~\ref{sec:construction}, we will describe the specific circuit that we will use for our code Hamiltonian.

Let $D$ be an even integer and let $C$ be an $n$-qubit circuit of depth $D$ where $L_1,\ldots,L_{D}$ be the $D$ layers of $C$, where each $L_t$ is a set of $n/2$ two-qubit gates\footnote{By padding with identity gates, we can assume without loss of generality that every layer has exactly $n/2$ two-qubit gates.} $U_t[p,q]$ acting on disjoint pairs of qubits $\{ (p,q) \}$. We assume that $C$ is a ``circular'' circuit; in other words, that it is equivalent to the identity circuit. %

We let $H_{circuit}[C]$ denote the circular spacetime circuit Hamiltonian corresponding to the circular circuit $C$. Let $X \defeq \frac{D-2}{2}$. The Hamiltonian is defined on $n (1+ X+1)=n(X+2)$ qubits, which is divided into three classes of registers: (1) data registers $\sS_1,\ldots,\sS_n$, (2) clock registers $\sC_1,\ldots,\sC_n$, and (3) flag registers $\sF_1,\ldots,\sF_n$. 

The data register $\sS_i$ is a qubit register that corresponds to the $i$-th qubit that the circuit $C$ acts on. The flag register $\sF_i$ is a qubit register that indicates whether the $i$-th qubit's local clock is in the ``forward phase'' or the ``backward phase''; this denotes which half (first or second) of clock states the clock is in. The clock register $\sC_i$ consists of $X$ qubits and indicates the local time of the $i$-th data qubit (within the forward phase or the backward phase). The valid clock states for register $\sC_i$ are $\{\ket{1^j 0^{X - j}}\}$ for $0 \leq j \leq X$ (i.e. a domain wall clock).

Following Brueckmann and Terhal~\cite{breuckmann2014space}, the flag register combined with the clock register allows us to put our qubit clocks ``on a circle'': %
we index time from $0$ to $2X+1 = D-1$, and we identify time $t = D$ with $t = 0$. %
We encode time steps $t$ according to the following convention. For notational convenience, we let the register $\sT_i$ (for ``time register'') denote the union of $\sF_i$ and $\sC_i$. 

\begin{equation}
\label{eq:time_register}
  \ket{t}_{\sT_i} \defeq \begin{cases}
  \ket{0}_{\sF_i} \otimes \ket{1^t \, 0^{X - t}}_{\sC_i} & \qquad \text{ if } t \in \{0,1,\ldots,X\} \\
  \ket{1}_{\sF_i} \otimes \ket{1^X}_{\sC_i} & \qquad \text{ if } t = X+1 \\
  \ket{1}_{\sF_i} \otimes \ket{1^{2X+1-t}\, 0^{t - X-1}}_{\sC_i} & \qquad \text{ if } t \in \{X+2,\ldots,2X+1 \}. 
  \end{cases}
\end{equation}

In other words, the time register evolves in the following way:
\begin{align}
  &\ket{0}_{\sF_i} \otimes \ket{00 \cdots 0}_{\sC_i} \to \cdots \to \ket{0}_{\sF_i} \otimes \ket{11 \cdots 1}_{\sC_i} & \qquad (0 \leq t \leq X) \\
  \to \quad &\ket{1}_{\sF_i} \otimes \ket{11 \cdots 1}_{\sC_i} & \qquad (t = X+1)\\
  \to \quad &\ket{1}_{\sF_i} \otimes \ket{1 \cdots 10}_{\sC_i} \to \cdots \to \ket{1}_{\sF_i} \otimes \ket{00 \cdots 0}_{\sC_i} & \qquad (X+2 \leq t \leq 2X+1).
\end{align}
Notice that in any transition from $\ket{t}_{\sT_i}$ to $\ket{t+1}_{\sT_i}$, there is at most one qubit being flipped. 

For the remainder of this section we fix a circuit $C$ and assume it fixed. The spacetime Hamiltonian $H_{circuit}[C]$ is defined as
\begin{equation}
\label{eq:code_hamiltonian}
  H_{circuit} = H_{clock} + H_{init} + H_{prop} + H_{causal}.
\end{equation}
\begin{notation}
In what follows, subscripts of operators such as ``$\sF_i$'' in ``$\ketbra{0}{0}_{\sF_i}$'' indicates which registers the operators act on. Let $\Pi_{\sR}^{(\alpha)}$ be the projector $\ketbra{\alpha}{\alpha}_\sR$ for any register $\sR$.
\end{notation}

The terms $H_{clock}$, $H_{init}$, $H_{prop}$, and $H_{causal}$ are defined as follows:
\begin{description}
  \item[(1) $H_{clock}$: ] The term $H_{clock}$ enforces that all the clock registers are encoded as described above. We write $H_{clock} = \sum_{i = 1}^n H_{clock}[i]$ where
  \begin{equation}
  \label{eq:clock}
      H_{clock}[i] = \sum_{j = 1}^{D-1} \Pi_{\sC_{i,j} \sC_{i,j+1}}^{(01)}.
  \end{equation}
  This enforces that the register $\sC_i$ encodes a domain wall.
  \item[(2) $H_{init}$: ] The initialization term is defined as $H_{init} = \sum_{i = k+1}^n H_{init}[i]$ for some integer $1\leq k \leq n$, \footnote{In our case, $k$ will eventually be the number of logical qubits.} where
\begin{equation}
\label{eq:init_term}
  H_{init}[i] = \Pi_{\sC_{i,0}}^{(1)} \otimes \Pi_{\sS_i}^{(1)}.
\end{equation}
This term checks that the last $n-k$ qubits are in the state $\ket{0}$ when their corresponding time registers are in state $\ket{0}_{\sT_i}$ or $\ket{2X+1}_{\sT_i}$. We only need to check one bit of the time register $\sT_i$ because of the previous set of terms enforcing that the clock is a domain wall.%

\item[(3) $H_{prop}$: ] The propagation term $H_{prop}$ is defined to be $H_{prop} =\sum_{t = 0}^{D-1} \sum_{(p,q) \in L_t} H_t[p,q]$, where 
\begin{equation}
\begin{aligned}
\label{eq:prop_term2}
H_t[p,q] &= \frac{1}{2} \Big[ \left( A_{t,t}[p,q] + A_{t+1,t+1}[p,q] \right) \otimes \Id \\
& \qquad  \qquad - A_{t+1,t}[p,q] \otimes U_t[p,q] - A_{t,t+1}[p,q] \otimes (U_t[p,q])^\dagger \Big ]
\end{aligned}
\end{equation}
\begin{align}
\text{and } A_{t,t'}[p,q] &= \ketbra{u_t[p]}{u_{t'}[p]} \otimes \ketbra{u_t[q]}{u_{t'}[q]}, \\
\ket{u_t[p]} &= 
\begin{cases}
\Id \otimes \ket{0}_{\sF_p} \ket{1}_{\sC_{p,t}} \ket{0}_{\sC_{p,t+1}} & \text{if } 0 \leq t < X \\
\Id \otimes \ket{0}_{\sF_p} \ket{1}_{\sC_{p,X}} & \text{if } t = X \\
\Id \otimes \ket{1}_{\sF_p} \ket{1}_{\sC_{p,X}} & \text{if } t = X + 1 \\
\Id \otimes \ket{1}_{\sF_p} \ket{1}_{\sC_{p,2X+1-t}} \ket{0}_{\sC_{p,2X+2-t}} & \text{if } X + 1 \leq t \leq 2X+1.
\end{cases}
\label{eq:cases_for_h_prop}
\end{align}
Here, $\ket{u_t[p]}$ is the tensor product of a state on the specified qubits and the identity operator on all unspecified qubits. This term enforces the agreement of slices of the superposition corresponding to two time configurations differing by a gate with respect to the unitary $U_t[p,q]$. Because of the $H_{clock}$ terms, the checks only require looking at a few qubits of the time registers\footnote{In effect, $\ket{u_t[p]}$ is the minimal description of $\ket{t}_{\sT_p}$ given that the state is a ground-state of $H_{clock}$.}. 

\item[(4) $H_{causal}$: ] The term $H_{causal}$ is used to \emph{enforce causality} meaning that the superposition is only over \emph{valid} time configurations (see Definition \ref{def:partial-config}). At a high level, a time configuration $\timeconfig = (t_1, \ldots, t_n)$ is valid if and only if for all pairs of qubits $(p,q)$ sharing a gate in layer $L_t$, both clocks $t_p$ and $t_q$ are $\leq t$ or $> t$. This is, however, complicated by the circularity of time imposed in this particular construction as ``all clocks are both ahead and behind any particular $t$''. In reality, we require the more complicated definition: for all pairs of qubits $(p,q)$ sharing gates in layers $L_{t_a}$ and $L_{t_b}$ for $t_a < t_b$, either $t_p, t_q$ are both $\in [t_a, t_b)$ or are both $\notin [t_a, t_b)$. 

Let $C_p$ be the set of qubits $q$ which interact with qubit $p$.

\begin{equation}
H_{causal} = \sum_{p = 1}^n \sum_{q \in C_p} H_{causal}[p,q]
\label{eq:causality_term}
\end{equation}
where $H_{causal}[p,q]$ is defined as follows. Let $t^{(1)} < t^{(2)} < \ldots < t^{(f)}$ be the times at which $p$ and $q$ share a gate. Then,
\begin{equation}
H_{causal}[p,q] = \sum_{j = 1}^f \sum_{t_p = t^{(j)}}^{t^{(j+1)} - 1} A_{t_p,t_p}[p] \otimes B_{t^{(j)}, t^{(j+1)}}[q]
\end{equation}
where $B_{t, t'}[q]$ is a projector ensuring that qubit $t_q$ is between $t$ and $t'$ (respecting circularity)\footnote{By this we mean that if $t < t'$, the projector is onto the set $\{t, \ldots, t'-1\}$. If $t' < t$, then the projector is onto the set $\{t, \ldots, X\} \cup \{0, \ldots, t' - 1\}$.}. Therefore, we verify that qubit $q$ is valid with respect to qubit $p$. The definition of $B_{t,t'}[q]$ is case dependent.

\begin{description}
\item[Case 1] If $0 \leq t, t' \leq X$. In this case, the flag qubit must be $\ket{0}_{\sF_q}$. Furthermore, $\sC_{q,t}$ must be $\ket{1}$ and $\sC_{q,t'}$ must be $\ket{0}$. Therefore,
\begin{equation}
B_{t,t'}[q] = \Pi_{\sF_q}^{(0)} \Pi_{\sC_{q,t}}^{(1)} \Pi_{\sC_{q,t'}}^{(0)}.
\end{equation}

\item[Case 2] If $X + 1 \leq t, t' \leq 2X + 1$. This is the similar except the flag is flipped. Hence,
\begin{equation}
B_{t,t'}[q] = \Pi_{\sF_q}^{(1)} \Pi_{\sC_{q,2X+2-t'}}^{(1)} \Pi_{\sC_{q,2X+2-t}}^{(0)}.
\end{equation}

\item[Case 3] If $0 \leq t \leq X$ and $X + 1 \leq t' \leq 2X + 1$. In this case, the flag qubit may be different. However, we can write the projector as the sum of the two projectors for the different flags.

\begin{equation}
B_{t,t'}[q] = \Pi_{\sF_q}^{(0)} \Pi_{\sC_{q,t}}^{(1)} + \Pi_{\sF_q}^{(1)} \Pi_{\sC_{q,2X + 2 - t'}}^{(1)} .
\end{equation}

\item[Case 4] If $0 \leq t' \leq X$ and $X + 1 \leq t \leq 2X + 1$. This is similar except again the flag is flipped. Hence,

\begin{equation}
B_{t,t'}[q] = \Pi_{\sF_q}^{(0)} \Pi_{\sC_{q,t'}}^{(0)} + \Pi_{\sF_q}^{(1)} \Pi_{\sC_{q,2X + 2 - t}}^{(0)} .
\end{equation}

\end{description}

\end{description}

\subsection{Bitonic sorting networks}
\label{subsection:bitonicblock}

In this section, we describe a class of circuits called \emph{bitonic sorting networks}. These are parallel circuits, devised by Batcher~\cite{Batcher:1968:SNA:1468075.1468121}, that are used to efficiently sort data arrays. Specifically, these are circuits acting on $n$ elements, with depth $\mathcal{O}(\log^2 n)$. In each layer of the circuit, pairs of elements are compared and swapped. Equivalently, for every permutation $\pi$ on $n$ elements, there is a bitonic sorting network consisting of SWAP and identity gates that implements $\pi$.

Bitonic sorting networks will be a crucial component of our code construction, as we use them to ``uniformize'' the random Brown-Fawzi encoding circuits before applying the spacetime circuit Hamiltonian construction. The uniformity of the resulting circuits will be the key ingredient that allows us to analyze the spectral gap of the Hamiltonian.

\begin{notation}
We will assume that the number of qubits $n$, is a power of $2$, with $n = 2^\ell$ for some integer $\ell$.
\end{notation}

For this paper, we will be interested in the architecture (i.e. the wiring and gate structure) of the bitonic sorting circuit. A bitonic sorting architecture consists of smaller sub-architectures, called \emph{bitonic blocks}.

\begin{definition}An \emph{architecture} is a directed acyclic graph where each vertex $v$ has $\mathrm{deg}_{in}(v) = \mathrm{deg}_{out}(v) \in \{1,2\}$ except for specific vertices $s$ and $t$ which have $\mathrm{deg}_{out}(s) = \mathrm{deg}_{in}(t) = n$ and $\mathrm{deg}_{out}(t) = \mathrm{deg}_{in}(s) = 0$. A circuit $C$ (acting on $n$ qubits) over an architecture is instantiated by specifying a gate for each vertex $v \notin \{s,t\}$ in the graph that acts on the qubits labelled by the edges adjacent to the vertex. The vertices $s$ and $t$ represent the state prior to and after the application of the circuit. That is, we can think of an architecture as an outline of a quantum circuit and one needs to fill in the blanks (specify each gate) to instantiate a circuit.
\label{def:architecture}
\end{definition}

\begin{definition}[Bitonic block \cite{Batcher:1968:SNA:1468075.1468121}]
\label{def:bitonicblock}
For a positive integer $\ell$, the bitonic block of \emph{rank} $\ell$, $\cB_\ell$, is a circuit architecture acting on $2^\ell$ qubits. $\cB_\ell$ is recurisvely defined with the architecture $\cB_1$ being an architecture consisting of a single layer, $\cL_1$, with a gate between qubits 1 and 2 (see part (a) of Figure~\ref{fig:b1b2b3}).

For $\ell > 1$, the bitonic block $\cB_\ell$ is a $\ell$-depth architecture with the first layer, $\cL_1$ being $2^{\ell-1}$ gates connecting qubit $i$ to $i + 2^{\ell-1}$ for $i = 1, 2, \ldots, 2^{\ell-1}$. The following $\ell-1$ layers, $\cL_\ell, \cL_{\ell-1}, \ldots, \cL_2$ are defined recursively as $\cB_{\ell-1}^{\otimes 2}$ where one of the two blocks acts on the qubits $\{1, 2, \ldots, 2^{\ell-1}\}$ and the other on the qubits $\{2^{\ell-1} + 1, 2^{\ell-1} + 2, \ldots, 2^\ell\}$.
\end{definition}

\noindent See Figure~\ref{fig:b1b2b3} for illustrations of blocks $\cB_2$, and $\cB_3$.

\begin{figure}[ht]
\centering
\includegraphics[width = 0.7\linewidth]{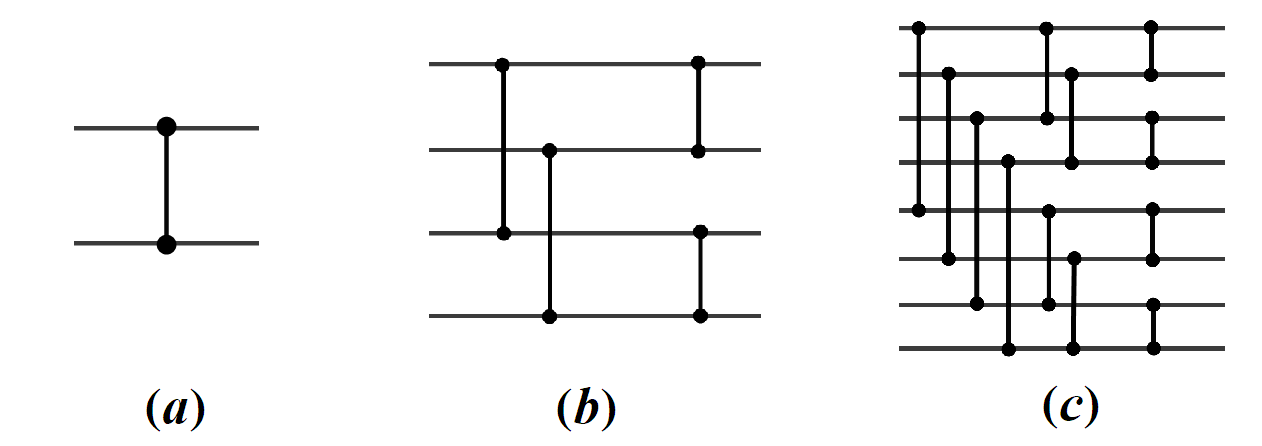}
\caption{(a) Bitonic block $\mathcal{B}_1$.  (b) Bitonic block $\mathcal{B}_2$.  (c) Bitonic block $\mathcal{B}_3$.}
\label{fig:b1b2b3}
\end{figure}

\begin{theorem}[\cite{Batcher:1968:SNA:1468075.1468121}]
Let $B_\ell$ be an instantiation of a bitonic block architecture $\cB_\ell$ with generalized comparator gates for some well-ordering -- i.e. given two input wires, it either swaps them or performs the identity such that the larger element is on the lower wire. Then, given two monotonically decreasing sequences of length $2^{\ell - 1}$ as inputs, the output of the circuit $B$ is the merged monotonically decreasing sequence. 
\end{theorem}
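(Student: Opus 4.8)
The plan is to prove this by the classical \emph{zero--one principle} of Knuth together with induction on the rank $\ell$. Recall that a comparator network (a circuit all of whose gates either swap their two inputs or act as the identity, so as to place the larger element on a fixed one of the two wires) commutes with every monotone map applied pointwise to the wires: if $f$ is monotone nondecreasing, then running the network on $(f(x_1),\dots,f(x_n))$ produces the pointwise image under $f$ of the output on $(x_1,\dots,x_n)$. Since the output of $B_\ell$ is always a permutation of its input, ``correctly merging two decreasing sequences'' is equivalent to ``the output is monotonically decreasing.'' If $B_\ell$ failed to produce a decreasing output on some pair of real decreasing sequences, there would be a threshold $t$ for which the map $z\mapsto \Ind[z\ge t]$ exhibits the same failure on the corresponding pair of $0$--$1$ decreasing sequences. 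Hence it suffices to prove the claim when each input is a $0$--$1$ sequence, i.e.\ of the form $1^{p}0^{m-p}$ with $m=2^{\ell-1}$.

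First I would handle the base case $\ell=1$: here $\cB_1$ is a single comparator acting on two wires, which by definition outputs the larger of its two inputs on the lower wire, and this is exactly the merge of two length-$1$ (hence trivially monotone) sequences. For the inductive step, unroll the recursive definition of $\cB_\ell$: it consists of a first layer $\cL_1$ of $m$ comparators joining wire $i$ to wire $i+m$ for $1\le i\le m$, followed by two disjoint copies of $\cB_{\ell-1}$, one on the top $m$ wires and one on the bottom $m$ wires. By the induction hypothesis each copy of $\cB_{\ell-1}$ correctly merges any pair of decreasing $0$--$1$ sequences of length $2^{\ell-2}$ presented to it in the standard way. So the step reduces to proving two facts about the action of the ``half-cleaner'' layer $\cL_1$ on the concatenation of the two input sequences: (i) after $\cL_1$, each of the top $m$ wires and the bottom $m$ wires again holds a sequence of the form handled by $\cB_{\ell-1}$, namely a concatenation of two monotone $0$--$1$ halves of length $2^{\ell-2}$ oriented as the wiring requires; and (ii) every entry remaining in the top half is $\ge$ every entry placed in the bottom half, so that concatenating the two recursively produced monotone blocks yields a globally monotone output.

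The combinatorial core, and the step I expect to be the main obstacle, is establishing (i) and (ii) by a direct analysis of $\cL_1$ on $0$--$1$ data. Writing the two inputs as $1^{p}0^{m-p}$ and $1^{q}0^{m-q}$ (placed on the wires according to the input convention of the architecture), the half-cleaner compares wire $i$ with wire $i+m$ and routes the larger bit up. One then splits into the cases $p+q\le m$ and $p+q\ge m$ and tracks the positions of the $1$s: in each case the top half collapses to a form $1^{p'}0^{m-p'}$ and the bottom half to $1^{q'}0^{m-q'}$ for suitable $p',q'$ determined by $p$, $q$, and $m$, with $p'$ large enough that $q'\le p'$ forces the domination in (ii), and with both halves again monotone of power-of-two length. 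This bookkeeping, which is where the specific bitonic wiring $i\leftrightarrow i+m$ and the orientation of the second input are actually used, is delicate but elementary; once it is in place, applying the induction hypothesis to each of the two $\cB_{\ell-1}$ blocks finishes the proof. Alternatively, one may simply cite Batcher~\cite{Batcher:1968:SNA:1468075.1468121} for this classical correctness statement.
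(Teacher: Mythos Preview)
The paper does not give its own proof of this statement; it is quoted as a classical result of Batcher and simply cited. So there is no paper proof to compare your proposal against, and your closing suggestion to ``simply cite Batcher'' is exactly what the paper does.

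Your framework (zero--one principle plus induction via the half-cleaner layer $\cL_1$) is the standard textbook route and is the right idea. However, there is a genuine gap in your step (ii), and it is not merely ``delicate bookkeeping'': as the theorem is phrased in the paper --- two monotonically \emph{decreasing} sequences placed on the two halves --- the domination claim is false. Take $\ell=2$ with $0$--$1$ inputs $a=(1,0)$ and $b=(1,0)$ on wires $(1,2)$ and $(3,4)$. The half-cleaner compares equal pairs and leaves the state $(1,0,1,0)$ unchanged; applying $\cB_1$ to each half then yields $(1,0,1,0)$ (or $(0,1,0,1)$ with the opposite comparator orientation), which is not monotone. So the network as defined does not merge two same-direction monotone sequences, and your attempt to verify (ii) by casework on $p+q$ would not close.

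The classical Batcher hypothesis is that the concatenated input be \emph{bitonic} --- equivalently, one half increasing and the other decreasing (the full bitonic sort in the paper's Corollary arranges this by alternating sort directions at each stage). Under that corrected hypothesis your outline works cleanly: for a $0$--$1$ bitonic input the half-cleaner produces one \emph{clean} half (all $0$s or all $1$s) and one half that is again bitonic, which gives both (i) and (ii) at once and feeds the induction. The paper's phrasing elides this orientation convention because it only uses the architecture, not the sorting correctness, downstream.
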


\begin{corollary}[\cite{Batcher:1968:SNA:1468075.1468121}]
The following $\frac{\ell(\ell + 1)}{2}$ depth circuit is a sorting circuit:
\begin{equation}
B_\ell B_{\ell - 1}^{\otimes 2} B_{\ell - 2}^{\otimes 4} \ldots B_1^{\otimes 2^{\ell - 1}}.
\end{equation}
\end{corollary}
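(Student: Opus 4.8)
The plan is to prove the statement by induction on $\ell$, with the preceding merging theorem of Batcher supplying the inductive step. Write $S_\ell$ for the circuit $B_\ell\, B_{\ell-1}^{\otimes 2}\, B_{\ell-2}^{\otimes 4}\cdots B_1^{\otimes 2^{\ell-1}}$ of the statement, composed right to left so that the $B_1$-layer is applied first. The depth bound is immediate: by \prettyref{def:bitonicblock} the architecture $\cB_j$ has depth $j$, and the $2^{\ell-j}$ parallel copies in $B_j^{\otimes 2^{\ell-j}}$ act on disjoint blocks of $2^j$ wires and hence still contribute depth only $j$; therefore $\mathrm{depth}(S_\ell)=\sum_{j=1}^{\ell} j=\frac{\ell(\ell+1)}{2}$.

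For correctness I would first invoke the $0$--$1$ principle: a network built solely from comparator gates sorts every input over a totally ordered set if and only if it sorts every input in $\{0,1\}^n$. This reduces the problem to Boolean inputs, where ``sorted'' just means monotone and a ``bitonic'' sequence is a run of $0$'s and a run of $1$'s placed consecutively (allowing a cyclic wrap). Now induct on $\ell$. For $\ell=1$, $S_1=B_1$ is a single comparator on two wires and outputs its two inputs in sorted order. For $\ell>1$, regroup using $B_j^{\otimes 2^{\ell-j}}=\bigl(B_j^{\otimes 2^{(\ell-1)-j}}\bigr)^{\otimes 2}$ to obtain
\[
S_\ell \;=\; B_\ell\cdot\Bigl(B_{\ell-1}\,B_{\ell-2}^{\otimes 2}\cdots B_1^{\otimes 2^{\ell-2}}\Bigr)^{\otimes 2}\;=\;B_\ell\cdot\bigl(S_{\ell-1}^{\uparrow}\otimes S_{\ell-1}^{\downarrow}\bigr),
\]
where the first tensor factor acts on wires $1,\dots,2^{\ell-1}$ with comparator orientations chosen so that it sorts \emph{ascending}, and the second acts on wires $2^{\ell-1}+1,\dots,2^{\ell}$ with the opposite orientations, so that it sorts \emph{descending}. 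By the induction hypothesis, after $S_{\ell-1}^{\uparrow}\otimes S_{\ell-1}^{\downarrow}$ the contents of the first $2^{\ell-1}$ wires are nondecreasing and those of the last $2^{\ell-1}$ wires are nonincreasing; read across all $2^{\ell}$ wires this is a bitonic sequence. The preceding theorem (the bitonic merging property of $\cB_\ell$) then guarantees that $B_\ell$ turns any such sequence into a fully sorted one, so $S_\ell$ sorts every $2^\ell$-element input and the induction closes.

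The only subtle point — and the step I expect to have to nail down carefully — is the orientation bookkeeping in the inductive step: the two recursive half-sorters must be oriented oppositely. If both halves were sorted in the same direction, their concatenation would be monotone followed by monotone in the same sense, which is \emph{not} bitonic, and $B_\ell$ would fail to sort it; thus ``$B_{\ell-1}^{\otimes 2}$'' in the statement must be read as one ascending and one descending copy, which is exactly the standard alternating pattern of comparator directions in Batcher's network. Making this rigorous amounts to carrying, throughout the induction, an explicit assignment of a comparator orientation to every gate of $S_\ell$, and checking that the recursive structure of $\cB_\ell$ in \prettyref{def:bitonicblock} is consistent with giving its two $\cB_{\ell-1}$ sub-blocks opposite orientations. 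After the $0$--$1$ reduction, this reduces to a short combinatorial check of how runs of $0$'s and $1$'s are rearranged by the first comparator layer of $\cB_\ell$ (the layer connecting wire $i$ to wire $i+2^{\ell-1}$), confirming that a bitonic input is split into two bitonic halves, one entirely $\le$ the other.
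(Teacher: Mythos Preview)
The paper does not prove this corollary; it is simply cited from Batcher, so there is no argument to compare against. Your proof is the standard inductive one and is correct: the depth count is immediate, and the $0$--$1$ principle together with the bitonic-merge property of $\cB_\ell$ closes the induction. One point worth flagging: you invoke the preceding theorem as ``$B_\ell$ sorts any bitonic input'' and then feed it an ascending-then-descending sequence, whereas the paper states that theorem for two monotonically \emph{decreasing} inputs. Taken literally with the architecture of \prettyref{def:bitonicblock} and uniformly oriented comparators, the paper's version does not hold (e.g.\ $\cB_2$ sends the two decreasing halves $3,1$ and $4,2$ to $1,3,2,4$, which is unsorted); what is true is exactly the bitonic form you use. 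So your orientation concern---that the two recursive copies of $S_{\ell-1}$ must sort in opposite directions to produce a bitonic concatenation---is essential rather than cosmetic, and your reading of ``$B_{\ell-1}^{\otimes 2}$'' as one ascending and one descending copy is the correct interpretation of Batcher's network. The only tightening needed is to state the bitonic-merge lemma in the form you actually require rather than leaning on the paper's phrasing of it.
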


We will use the notation $\cB_\ell^{\times r}$ for the product (i.e. concatenation) of $r$ bitonic blocks $\cB_\ell$. To simplify the analysis, we can insert additional layers of identity gates so that the circuit architecture is $\cB_\ell^{\times \ell}$; this at most doubles the size of the circuit. Therefore, we can make the following statement:

\begin{lemma}
For any permutation $\pi \in S_n$, there exists a circuit $B_\pi$ of the architecture $\cB_\ell^{\times \ell}$ applying $\pi$ on the input wires.
\label{lem:permforspaciallylocal}
\end{lemma}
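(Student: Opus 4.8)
The plan is to combine two observations. First, the Batcher sorting circuit from the corollary above is a comparator network that correctly sorts every input, and any comparator network with this property can be instantiated with SWAP and identity gates so as to realize an arbitrary permutation of its wires. Second, the circuit architecture $\cB_\ell \cB_{\ell-1}^{\otimes 2} \cdots \cB_1^{\otimes 2^{\ell-1}}$ of that sorting circuit embeds, after inserting some layers of identity gates, into $\cB_\ell^{\times \ell}$; this is precisely the padding claim asserted just before the lemma, and I would justify it from the recursive structure of the bitonic blocks.

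For the first observation, I would fix $\pi \in S_n$ and put the label $\pi(i)$ on input wire $i$, so that the input array is the permutation $\pi$ of $\{1,\dots,n\}$. Tracing the sorting circuit symbolically, each comparator acts on two wires and either leaves them alone or exchanges their labels so that the smaller label ends up on the lower wire; I instantiate that comparator as $\Id$ in the first case and SWAP in the second, obtaining a fixed circuit of SWAP/identity gates. Since the network sorts every input, after the last layer label $j$ occupies wire $j$ for all $j$, so the routing performed by these gates carries the contents of input wire $i$ to output wire $\pi(i)$ --- i.e.\ the circuit implements the wire permutation $\pi$ (one uses $\pi^{-1}$ as the input labelling if the opposite convention is wanted). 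Crucially, its architecture is exactly that of the Batcher sorting circuit.

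For the second observation, I would unwind Definition~\ref{def:bitonicblock}: in the $d$-th layer of $\cB_\ell$ the gates fall into $2^{d-1}$ groups of $2^{\ell-d+1}$ consecutive qubits, the layer-$d$ gates of a group connecting its lower half to its upper half. An induction on $\ell-j$ (base case $j=\ell$ being the definition itself) then shows that, restricted to any block of $2^j$ consecutive qubits, the last $j$ layers of $\cB_\ell$ reproduce the architecture $\cB_j$; hence a single $\cB_\ell$ block with its first $\ell-j$ layers set to identity is exactly $\cB_j^{\otimes 2^{\ell-j}}$ on the $2^{\ell-j}$ consecutive blocks of size $2^j$. The Batcher sorting circuit is the concatenation over $k=1,\dots,\ell$ of the super-blocks $\cB_{\ell-k+1}^{\otimes 2^{k-1}}$, so I realize the $k$-th super-block with the $k$-th copy of $\cB_\ell$ in $\cB_\ell^{\times\ell}$, padding its first $k-1$ layers with identity gates. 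The outcome is a circuit over $\cB_\ell^{\times\ell}$ whose only non-identity content is the permutation circuit from the first step, so it applies $\pi$; this is the desired $B_\pi$. As a sanity check, the inserted identity layers number $\sum_{k=1}^{\ell}(k-1)=\ell(\ell-1)/2$, taking the depth from $\ell(\ell+1)/2$ to $\ell^2$, at most doubling the circuit as claimed.

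I expect the only step requiring genuine care to be the bookkeeping in the second observation: matching the recursive layer pattern of $\cB_\ell$ against $\cB_j^{\otimes 2^{\ell-j}}$ with the correct grouping of wires into consecutive blocks. Everything else consists of standard facts about comparator/sorting networks specialized to Batcher's architecture from Definition~\ref{def:bitonicblock} and the corollary above.
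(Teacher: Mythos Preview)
Your proposal is correct and follows essentially the same approach as the paper's proof, which is just the two-sentence sketch ``note which comparators become SWAPs when sorting $\pi$, then pad to the architecture $\cB_\ell^{\times\ell}$.'' You have simply filled in the details the paper leaves implicit: the standard fact that a sorting network instantiated on a fixed input permutation yields a routing circuit for that permutation, and the observation (immediate from Definition~\ref{def:bitonicblock} by induction) that the last $j$ layers of $\cB_\ell$ are exactly $\cB_j^{\otimes 2^{\ell-j}}$ on consecutive blocks, so each stage $\cB_j^{\otimes 2^{\ell-j}}$ of the Batcher circuit embeds in one copy of $\cB_\ell$ by setting the first $\ell-j$ layers to identity.
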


\begin{proof}
Note which comparator gates of the bitonic sorting circuit would be SWAP gates if sorting according to the permutation $\pi$. Pad with identity gates as previously stated till the circuit conforms to the architecture.
\end{proof}

\subsection{Uniformizing circuits for spacetime Hamiltonians}
\label{subsec:uniformizing}

We now present a general method for encoding depth $D$ circuits $C$ into a spacetime circuit Hamiltonian, in a way that allows us to give a good lower bound on the spectral gap. Let $C$ denote a circuit of depth $D$ consisting of layers $L_1,\ldots,L_{D}$, where each $L_t$ is a set of $n/2$ two-qubit gates.

We preprocess the circuit $C$ in multiple steps to obtain a slightly larger-depth circuit $C'$. We ``uniformize'' the circuit using bitonic sorting networks described in the previous section. The circuit $C$ will not, in general, correspond to nearest-neighbor interactions in small dimension. We add bitonic sorting networks in between each layer $L_t$ of $C$ to ensure that all the Clifford gates act on adjacent qubits. Because of the regular structure of the sorting networks, the resulting circuit will consist of nearest-neighbor interactions on a hypercube of dimension $\ell = \log n$. 

More formally, we do the following: label the qubits using $\{1,\ldots,n\}$. In a layer $L_t$ of $C$ for $1 \leq t \leq D$, a qubit $q$ is generally not paired with a neighboring qubit $q - 1$ or $q + 1$. Instead, there is some permutation $\pi_t$ on $n$ qubits that maps the pairs $\{ (1,2), (2,3), \ldots, (n-1,n) \}$ to the pairs $L_t = \{ (p,q) \}$. Let $\pi_t(L_t)$ denote the layer where all the qubits are permuted by $\pi_t$, and all the gates in $L_t$ now act on consecutive qubits. 

By Lemma~\ref{lem:permforspaciallylocal}, there exists a circuit $B_{\pi_t}$ with the architecture  $\cB_\ell^{\times \ell}$ for $\ell = \log n$ that implements the permutation $\pi_t$. Replace each layer $L_t$ in $C$ by the following subcircuit $K_t$: first apply $B_{\pi_t {\pi_t}^{-1}}$ and then apply the layer $\pi_t(L_t)$. Since the last layer of $B_{\pi_t {\pi_t}^{-1}}$ and $\pi_t(L_t)$ have the same architecture, we can merge the gates into a single layer. Here we assume $\pi_0 = \Id$.%

The final $C'$ is the composition of the subcircuits $K_1,K_2,\ldots,K_{D_1}$, yielding a depth $\mathcal{O}(D \log^2 n)$ circuit. Note that by induction, circuit $C'$ is exactly equivalent to the original circuit $C$. Notice that each subcircuit $K_t$ can be implemented as nearest-neighbor gates on a hypercube of dimension $\log n$, and thus the same holds for $C'$ as well.

Let $D'$ denote the depth of circuit $C'$. We consider spacetime Hamiltonians $H_{circuit}[C']$ of the circuit $C'$, as described in Section~\ref{subsec:spacetimeconstruction}. We first note that it has the following properties: it is a $9$-local Hamiltonian, the terms act locally on a $\mathcal{O}(D' + \log n)$-dimensional lattice, and each qubit participates in at most $\mathcal{O}(D')$ terms. 

%
\section{Construction of the code Hamiltonian}
\label{sec:construction}

Here we describe our code construction in detail. Let $\eps > 0$ be the desired target approximation error. Let $n,k,d$ be integers satisfying Theorem~\ref{thm:bf} where $k,d = \Omega(n)$. Let $C_0$ denote a Clifford circuit of depth $D_0 = \mathcal{O}(\log^3 n)$ that is an encoding circuit of an $[[n,k,d]]$ code $\cC_{BF}$, as promised by Theorem~\ref{thm:bf}. Let $L_1,\ldots,L_{D_0}$ be the $D_0$ layers of $C_0$, where each $L_t$ is a set of at $n/2$ two-qubit Clifford gates. 

The first preprocessing step is to replace all the Clifford gates by gates from the set
\begin{equation}
  I = \begin{pmatrix} 1 & 0 \\ 0 & 1 \end{pmatrix}, \qquad H = \frac{1}{\sqrt{2}} \begin{pmatrix} 1 & 1 \\ 1 & -1 \end{pmatrix}, \qquad S = \begin{pmatrix} 1 & 0 \\ 0 & i \end{pmatrix}, \qquad CNOT = \begin{pmatrix} 1 & 0 & 0 & 0 \\ 0 & 1 & 0 & 0 \\ 0 & 0 & 0 & 1 \\ 0 & 0 & 1 & 0\end{pmatrix}.
\end{equation}
This is possible because the gate set $\{ I, H, S, CNOT \}$ generates the Clifford group; thus every two-qubit Clifford gate can be written as a $\mathcal{O}(1)$-length product of $I$, $H$, $S$, and $CNOT$ gates. The depth of this circuit is $D_1 = \mathcal{O}(D_0)$. Let $C_1$ denote this circuit.

Next, we \emph{pad} the circuit to have depth $3 D_1/\eps$ where the last $1 - (\eps/3)$ fraction of the layers are simply applications of the identity gate on consecutive pairs of qubits. Call this padded circuit $C_1'$; its depth is $D_1' = 3 D_1/\eps$.

Now, let $C_2$ be the circuit obtained by preprocessing $C_1'$ as described in Section~\ref{subsec:uniformizing}. 
This has depth $D = \mathcal{O}(\log^5 n)$. Let $H_{circuit}[C_2]$ denote the corresponding spacetime circuit Hamiltonian, acting on $N = \mathcal{O}(nD)$ qubits. For what follows, we will abbreviate $H_{circuit}[C_2]$ as $H$. 

Let $\cC$ denote the ground space of $H$. This will be our code. We now show that $\cC$ is an approximate QLDPC code, and we establish its parameters.

\begin{theorem}
  For all $\eps > 0$, the subspace $\cC$ is a $[[N,k,d,\eps,\ell,s]]$ approximate QLDPC code, for $k = \Omega(N/\log^5 N)$, $d = \Omega(N/\log^5 N)$, $\ell = 9$, and $s = \polylog(N)$.
\end{theorem}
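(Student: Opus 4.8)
The plan is to verify each of the six parameters in the claimed $[[N,k,d,\eps,\ell,s]]$ tuple by combining the ingredients assembled in the preliminaries. The overall structure: the code $\cC$ is the ground space of $H = H_{circuit}[C_2]$, where $C_2$ is obtained from a Brown--Fawzi encoding circuit $C_0$ by (i) rewriting Clifford gates in the fixed gate set $\{I,H,S,CNOT\}$, (ii) padding with identity layers to blow up the depth by a $3/\eps$ factor, and (iii) uniformizing with bitonic sorting networks. The point is that each preprocessing step preserves the functional behavior of the circuit while controlling the structural parameters.

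First I would handle \textbf{locality and sparsity} ($\ell = 9$, $s = \polylog N$). This is essentially immediate from the observation at the end of Section~\ref{subsec:uniformizing}: the spacetime Hamiltonian $H_{circuit}[C']$ of any uniformized circuit is $9$-local, and each qubit participates in $\mathcal{O}(D')$ terms where $D'$ is the depth of the uniformized circuit. Since $C_0$ has depth $D_0 = \mathcal{O}(\log^3 n)$, after rewriting in the elementary gate set the depth is still $D_1 = \mathcal{O}(\log^3 n)$, after padding it is $D_1' = 3D_1/\eps$, and after uniformizing (a $\log^2 n$ blowup) it is $D = \mathcal{O}(\log^5 n)$ (treating $\eps$ as constant). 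Then $N = \mathcal{O}(nD) = \mathcal{O}(n \log^5 n)$, so $n = \widetilde{\Omega}(N)$ and $D = \polylog N$, giving $s = \mathcal{O}(D) = \polylog(N)$. (I would also double check whether $\ell$ should be $9$ or $10$ — the introduction says $10$-local, the uniformizing section says $9$-local; presumably the discrepancy is the extra ancilla or a causal term, and I would state whichever the construction in Section~\ref{subsec:spacetimeconstruction} actually gives.)

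Next, \textbf{the frustration-free Hamiltonian condition} (condition 2 of the definition): $H = H_{clock} + H_{init} + H_{prop} + H_{causal}$ is a sum of projectors (after rescaling the $H_{prop}$ terms, which are themselves projectors since each $H_t[p,q]$ is $\frac12$ times a difference that squares to itself on the relevant subspace), and $\bra\psi H\ket\psi = 0$ iff $\ket\psi$ is annihilated by every term, i.e. iff $\ket\psi$ is a valid history state of the form $\frac{1}{\sqrt{|\mathcal T|}}\sum_{\timeconfig\in\mathcal C}\ket{\timeconfig}_\sC \otimes U(\timeconfig\la 0)\ket{0\cdots0}_\sS$ with the first $k$ data qubits free — this is the standard analysis of the spacetime FK construction, and $\cC$ is exactly this ground space, of dimension $2^k$. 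So $\cC$ is a $2^k$-dimensional subspace of $(\C^2)^{\otimes N}$.

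Finally, \textbf{the approximate error-correction condition} (condition 3), which is the main obstacle and the crux of the theorem. The encoding map $\Enc$ sends a $k$-qubit state $\ket\phi$ to the corresponding history state in $\cC$; this is efficiently computable in $\poly(N)$ time because preparing a uniform superposition over valid time configurations and coherently applying the prefix circuits is a polynomial-size quantum computation (this also establishes the $\poly(N)$ encoding claim). The key quantitative point: because the circuit $C_1'$ was padded so that the last $1 - \eps/3$ fraction of layers are identity, a $1-\eps/3$ (roughly) fraction of the branches of the history-state superposition already equal the output of the Brown--Fawzi encoding circuit applied to $\ket\phi$; hence the history state has fidelity $\ge 1 - \mathcal{O}(\eps)$ with $\ket{\text{BF-codeword}}$, where the exact constant requires a careful count of which time configurations $\timeconfig$ have all clocks past the ``computation done'' point (here the spacetime analogue of padding needs the appropriate combinatorial estimate on $|\mathcal{T}|$). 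Then the recovery map $\Rec$ first measures/projects onto the subspace where the circuit is ``complete'' (this is the $\mathcal{O}(\eps)$-fidelity step), obtaining (approximately) the Brown--Fawzi codeword, then runs the Brown--Fawzi recovery procedure, which corrects any CPTP error on $\le (d-1)/2$ qubits exactly by Theorem~\ref{thm:bf} and the definition of code distance. Composing, the infidelity is at most $\mathcal{O}(\eps)$ from the history-state approximation plus $0$ from the exact BF correction; rescaling $\eps$ by the hidden constant at the outset gives fidelity $\ge 1-\eps$, with the parameters $k,d = \Omega(n) = \widetilde\Omega(N)$ inherited from $C_0$ via Theorem~\ref{thm:bf} (which holds with probability $1-\Omega(n^{-8})$, so such a circuit exists for infinitely many $n$). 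The delicate part I expect to spend the most effort on is making the fidelity bound between the spacetime history state and the padded circuit's output rigorous — in the single-clock case padding is transparent, but with independent clocks one must show that the fraction of valid time configurations lying ``entirely in the identity-padding region'' is $\ge 1 - \mathcal{O}(\eps)$, which ties back to the combinatorial structure of valid configurations of $C_2$.
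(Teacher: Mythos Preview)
Your proposal is essentially correct and follows the paper's approach, including correctly identifying the key combinatorial fact (that the fraction of valid time configurations lying entirely in the identity-padding region is at least $1-\eps$) as the delicate step---this is exactly the paper's Lemma~\ref{lem:overlap}.

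Two points where the paper's execution differs from your sketch and is cleaner. First, the recovery map $\Rec$ does \emph{not} project onto a ``complete'' subspace; it simply \emph{traces out} all time registers $\sT_1,\ldots,\sT_n$ and then applies the Brown--Fawzi ideal recovery $\Rec_{BF}$ to the data registers. This sidesteps any concern about the error $\cE$ disturbing clock qubits before your projection. Second, the fidelity bound is not obtained by additively ``composing infidelities'' as you suggest. Instead, the paper writes the encoded (purified) history state $\ket{\Psi}$ as $(1-2\eps)$-close in fidelity to a subnormalized \emph{tensor product} $\ket{\wt{\Psi}} = \ket{\lambda}\otimes\ket{\Gamma}$, where $\ket{\lambda}$ is the uniform superposition over $\cT_{comp}$ and $\ket{\Gamma}$ is the BF codeword; observes that $\Rec\circ\cE(\ketbra{\wt\Psi}{\wt\Psi}) = \ketbra{\phi}{\phi}\cdot\ip{\lambda}{\lambda}$ exactly (since after tracing out clocks the residual error on data qubits is supported on at most $(d-1)/2$ qubits and BF corrects it); and then invokes monotonicity of fidelity under the CPTP map $\Rec\circ\cE$ to transfer the closeness from $\ket{\Psi},\ket{\wt{\Psi}}$ to the outputs. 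Your ``project, then BF-recover, then add errors'' sketch would need precisely this monotonicity-plus-tensor-product structure to be made rigorous, so you should adopt the trace-out recovery and the monotonicity argument rather than the projection-based one.
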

\begin{proof}
First we have to show that $\cC$ is the image of an encoding map, $\Enc$. We present methods for efficiently generating a codeword of the code $\cC$ in Section~\ref{subsec:encoding_ckt}. 

Next, we present a recovery map for the code (i.e. a map that approximately corrects errors and decodes). An important point is that the Brown-Fawzi stabilizer code underlying our construction was probabilistically chosen and there is no known efficient correction algorithm for their code. However, since the stabilizer code encoded by the circuit $C$ satisfies the Knill-Laflamme error correction conditions \cite{quant-ph/9604034}, there exists an ideal recovery map, $\Rec_{BF}$, that can correct any error on $(d-1)/2$ qubits or less. In other words, for all errors $\cE$ acting on at most $(d-1)/2$ qubits, the following is equivalent to the identity channel on $k$ qubits:
\begin{equation}
  \Rec_{BF} \circ \cE \circ \Enc_{BF}
\end{equation}
where $\Enc_{BF}$ is the encoding map for the Brown-Fawzi code. This is all that we will need. %

Our recovery map $\Rec$ for our code works as follows: given an input state on registers $\sS_1 \cdots \sS_n$ and $\sT_1 \cdots \sT_n$ (i.e. the data and time registers), it
\begin{enumerate}
  \item Traces out the registers $\sT_1 \cdots \sT_n$.
  \item Applies the Brown-Fawzi ideal recovery map $\Rec_{BF}$ to $\sS_1 \cdots \sS_n$.
\end{enumerate}

We now prove the approximate error correction condition. We rely on the following Lemma, which we prove in Appendix \ref{subsec:countingconfigsofproducts}.

\begin{lemma}
\label{lem:overlap}
  Let $\cT$ denote the set of all time configurations of a spacetime history state. There exists a subset $\cT_{comp} \subset \cT$ such that for all spacetime history states
  \begin{equation}
    \ket{\psi} = \frac{1}{\sqrt{|\cT|}} \sum_{\timeconfig \in \cT} \ket{\timeconfig} \otimes \ket{\psi_\timeconfig}
  \end{equation}
  there exists a codeword $\ket{\Gamma} \in \cC_{BF}$ such that if $\timeconfig \in \cT_{comp}$, then $\ket{\psi_\timeconfig} = \ket{\Gamma}$. Furthermore, we have that{}
  \begin{equation}
    \frac{|\cT_{comp}|}{|\cT|} \geq 1 - \eps.
  \end{equation}
\end{lemma}

Recall that $\cC_{BF}$ is the $[[n,k,d]]$ stabilizer code\footnote{The $BF$ subscript stands for ``Brown-Fawzi''.} whose encoding map is the circuit $C_0$ described above. 

Let $\ket{\phi} \in (\C^2)^{\otimes k} \otimes \mathcal{R}$ be a $k$-qubit message $\rho$ that has been purified (i.e., $\rho = \Tr_{\mathcal{R}}(\ketbra{\phi}{\phi})$). Let a Schmidt decomposition of $\ket{\phi}$ be $\sum_i \sqrt{p_i} \ket{\xi_i} \ket{i}$, where the $\{ \ket{\xi_i} \}$ correspond to the Hilbert space $(\C^2)^{\otimes k}$ and the $\{ \ket{i} \}$ are orthonormal vectors in $\mathcal{R}$. Let $\ketbra{\Psi}{\Psi} = \Enc(\ketbra{\phi}{\phi})$, so that $\ket{\Psi} = \sum_i \sqrt{p_i} \ket{\Psi_i} \ket{i}$ where 
$\ket{\Psi_i} = \frac{1}{\sqrt{T}} \sum_{\timeconfig} \ket{\timeconfig} \otimes \ket{\psi_{i,\timeconfig}}$ is the spacetime history state for circuit $C$ on input state $\ket{\xi_i}\otimes \ket{0^{(n-k)}}$. 
By Lemma~\ref{lem:overlap} we can write
\begin{equation}
  \ket{\Psi_i} = \frac{1}{\sqrt{|\cT|}} \sum_{\timeconfig \notin \cT_{comp}} \ket{\timeconfig} \otimes \ket{\psi_{i,\timeconfig}} + \frac{1}{\sqrt{|\cT|}} \sum_{\timeconfig \in \cT_{comp}} \ket{\timeconfig}\otimes \ket{\Gamma_i}.
\end{equation}
Define the following (subnormalized) states:
\begin{equation}
  \ket{\lambda} = \frac{1}{\sqrt{|\cT|}} \sum_{\timeconfig \in \cT_{comp}} \ket{\timeconfig}, \qquad \ket{\wt{\Psi_i}} = \ket{\lambda} \otimes \ket{\Gamma_i}.
\end{equation}

Note that $\ket{\wt{\Psi_i}}$ has norm equal to $\| \ket{\lambda} \|^2 \geq 1 - \eps$ because of Lemma~\ref{lem:overlap}. Furthermore, $| \ip{\wt{\Psi}_i}{\Psi_i}|^2 = \| \ket{\lambda} \|^2$. If we define $\ket{\wt{\Psi}} = \sum_i \sqrt{p_i} \ket{\wt{\Psi_i}} \ket{i}$, then we have that
\begin{equation}
  F(\ketbra{\Psi}{\Psi}, \ketbra{\wt{\Psi}}{\wt{\Psi}}) \geq (1 - \eps)^2 \geq 1 - 2\eps.
\end{equation}
Let $\cE$ be a completely positive, trace preserving map acting on at most $(d - 1)/2$ qubits. Since $\cC_{BF}$ is a code that can correct up to $(d-1)/2$ errors, and $\ket{\wt{\Psi}}$ is a (sub-normalized) superposition of codewords of $\cC_{BF}$ (along with a state $\ket{\lambda}$ that gets traced out by $\Rec$), we have that $\Rec \circ \cE(\ketbra{\wt{\Psi}}{\wt{\Psi}}) = \ketbra{\phi}{\phi} \cdot \ip{\lambda}{\lambda}$. Since the fidelity metric is non-decreasing under quantum operations, we have that
\begin{align}
F \Paren{ \Rec \circ \cE \circ \Enc(\ketbra{\phi}{\phi}), \ketbra{\phi}{\phi} } 
 &\geq F \Paren{ \Rec \circ \cE(\ketbra{\Psi}{\Psi}) , \Rec \circ \cE(\ketbra{\wt{\Psi}}{\wt{\Psi}}) } \\
 &\geq F \Paren{ \ketbra{\Psi}{\Psi} , \ketbra{\wt{\Psi}}{\wt{\Psi}} }\\
&\geq 1 - 2\eps.
\end{align}

As discussed in Section~\ref{subsec:uniformizing}, the geometry underlying the Hamiltonian $H$ is a lattice with dimension $\mathcal{O}(D \polylog n)$; each $9$-local term acts in a spatially-local manner on this lattice, and each qubit participates in $\polylog n$ terms. 
This establishes the Theorem.
\end{proof}

\subsection{Encoding circuit}
\label{subsec:encoding_ckt}

We demonstrate that there is an efficient circuit generating a ground-state of the Hamiltonian.

\begin{theorem}
There exists an encoding circuit of polynomial size in $n$ which on input $\ket{\psi}$ generates the state $\Enc(\psi)$. In particular, the polynomial size circuit generating the state is $\log(n) + 2$ spatially local.
\label{thm:encodingckt}
\end{theorem}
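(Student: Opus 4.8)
The plan is to prepare the spacetime history state $\ket{\Psi} = \tfrac{1}{\sqrt{|\cT|}}\sum_{\timeconfig \in \cT}\ket{\timeconfig}_{\sC}\otimes U(\timeconfig\la 0)\ket{\psi,0^{n-k}}_{\sS}$ (which is exactly $\Enc(\psi)$) directly, in three stages: (A) initialize the all-clocks-at-time-$0$ branch; (B) spread the clock registers into the uniform superposition $\tfrac{1}{\sqrt{|\cT|}}\sum_{\timeconfig\in\cT}\ket{\timeconfig}$ over valid time configurations; and (C) conditionally execute $C_2$ on the data registers. Stage A is immediate: set every $\sC_i,\sF_i$ to encode time $0$, load $\ket{\psi}$ into $\sS_1\cdots\sS_k$ and $\ket{0}$ into $\sS_{k+1}\cdots\sS_n$, giving the $\timeconfig=(0,\dots,0)$ branch of $\ket{\Psi}$; being a unitary circuit it automatically acts as the identity on any purifying register $\mathcal R$, as the definition of $\Enc$ requires.

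Stage C is the standard history-state trick adapted to local clocks. Process the gates $U_t[p,q]$ of $C_2$ in an order consistent with the circuit's causal (topological) order, and for each one apply $U_t[p,q]$ to $\sS_p\sS_q$ \emph{controlled} on the event that, in the current clock configuration, the layer-$t$ gate on $(p,q)$ has already been executed. Because the domain-wall encoding is enforced by $H_{clock}$, this event is detected by reading $O(1)$ qubits of $\sT_p$ and $\sT_q$ --- precisely the ones appearing in the four cases of $B_{t,t'}$ in Section~\ref{subsec:spacetimeconstruction}, which already handle the flag register and the circular time. Since the controlled gates are applied in causal order, the branch labelled $\timeconfig$ accumulates exactly the product $U(\timeconfig\la 0)$, so at the end of Stage C the global state is precisely $\ket{\Psi}$.

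The crux is Stage B. I would use that $C_2$ is a concatenation of bitonic blocks (interleaved with permutation layers that are themselves built from bitonic blocks), so $\cT$ inherits the hierarchical recursive structure analyzed in Appendix~\ref{appendix:bitonic}: the number of valid completions of a partially-specified clock configuration satisfies the explicit recurrence developed there (the one behind the golden-ratio asymptotics). Self-reducibility then yields the uniform superposition in the usual way: fix the clock values one at a time in a fixed order, and when fixing the $j$-th relevant clock qubit apply a single-qubit rotation, controlled on the qubits fixed so far, through an angle $\theta_j$ with $\cos^2\theta_j = N_0/N$, where $N$ is the number of valid completions consistent with the choices made so far and $N_0$ the number of those that set this qubit to $0$; the counts are precomputed classically from the recurrence and hard-wired into the angles. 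Because the bitonic architecture is local and hierarchical, ``which configurations extend a given partial one'' depends only on a bounded neighbourhood of clock qubits, so each rotation has $O(1)$-size control. This is what lets the whole encoding circuit --- Stages A, B, and C together --- be laid out with $O(1)$-range gates on the $\log n$-dimensional hypercube carrying the data qubits, with the clock qubits $\sC_{i,j}$ and flag $\sF_i$ stacked in up to two further coordinates above each hypercube site; hence the $\log(n)+2$ spatial locality. Efficiency is then clear: $C_2$ has $\poly(n)$ gates, there are $\poly(n)$ clock and flag qubits, and each controlled gate or rotation has $O(1)$-size control, so the circuit has size $\poly(n)$.

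I expect Stage B --- and within it the spatial-locality claim rather than the mere polynomial-time claim --- to be the main obstacle, since it is the one place that genuinely uses the fine combinatorial structure of the bitonic architecture: the recurrence of Appendix~\ref{appendix:bitonic} and, crucially, the fact that the set of valid extensions of a partial clock configuration is determined locally. Stages A and C are routine.
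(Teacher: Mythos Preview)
Your three-stage plan matches the paper's structure: it too separates building the uniform clock superposition $\ket{\Xi}=|\cT|^{-1/2}\sum_{\timeconfig}\ket{\timeconfig}$ (its Lemma~\ref{lem:buildtau}) from entangling it with the data to obtain $\ket{\Psi}$ (its Lemma~\ref{lem:buildpsi}). The implementations differ in both stages.

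For Stage~B the paper does not use self-reducibility. It constructs an explicit, efficiently computable bijection $f:[|\cT|]\to\cT$ from the recursive counting of bitonic-block configurations (Theorem~\ref{thm:efficientindex} in the appendix), prepares the uniform superposition over integer indices, and applies $f$ followed by $f^{-1}$ to move the superposition into the clock registers and uncompute the index. Both routes give polynomial size, but your assertion that each conditional rotation has $O(1)$-size control is the weak point: the number of valid extensions of a partially fixed clock assignment depends on global data --- in particular on which $\ell$-layer ``window'' of $\cB_\ell^{\leftrightarrow m}$ the configuration occupies --- and there is no evident ordering of the clock bits under which the conditional marginal is determined by a bounded neighbourhood. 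The paper makes no spatial-locality claim for this stage; the $\log(n)+2$ figure comes only from Lemma~\ref{lem:buildpsi}.

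For Stage~C your direct-control idea is simpler than the paper's Method~1, which introduces an auxiliary copy $\widetilde{\sT}$ of the time register, initializes a running clock $\sT$ at $0$, applies each gate and increments $\sT$ controlled on $t_i<T_i$, and finally uncomputes $\widetilde{\sT}$ via XOR. (The paper also sketches phase-estimation and adiabatic alternatives.) Your approach and theirs coincide for linear time, but the construction uses \emph{circular} time: configurations that straddle $t=0$ (some clocks near $D-1$, others near $0$) are not correctly handled by a single forward pass with the predicate ``$t_p>t$'', and invoking the $B_{t,t'}$ projectors does not fix this --- those test membership in a circular interval, not ``gate already executed along the forward path from $\mathbf{0}$''. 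The paper handles this explicitly by first shifting all target clocks into $[D/2,3D/2)$ and then running through the circuit twice; you would need an analogous adjustment.
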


Proving the generability of the ground-state is done in two parts. We first show that once one can generate a particular superposition over the time registers, one can generate the ground-state. Next, we provide an efficient algorithm for generating the particular superposition. This is encapsulated formally in Lemmas \ref{lem:buildtau} and \ref{lem:buildpsi}.

The superposition over the time registers (the union of all clock and flag registers) of interest is the uniform superpositions over \emph{valid partially applied configurations} -- or \emph{valid configurations}, for brevity - of an architecture\footnote{A formal definition of an architecture is given as Definition \ref{def:architecture}.}. Imagine progressively applying a circuit from an architecture, gate by gate. Non-commutativity of gates in different layers demands that the gates of the circuit cannot be applied in any order, but must be applied in a way that respects causality.

\begin{definition}[Partial configuration of an architecture]
\label{def:partial-config}
A partial configuration of an architecture on $n$ qubits of depth $D$, is a vector of integers describing how many layers of gates have been applied per qubit: $\tau \in Z_{D+1}^n$. A partial configuration is \emph{valid} if it respects the causal dependence of the gates in the circuit.

Formally, consider a gate $g$ at depth $d_g$ acting on qubits $i$ and $j$ as \emph{applied} if $d_g \leq \tau_i$ and $d_g \leq \tau_j$. Then an architecture is valid if for every marked gate $g$, any gate $g'$ such that $g' \rightarrow g$ in the DAG represented the architecture (see Definition \ref{def:architecture}) is also marked. 

Notationally, we refer to a qubit $i$ being at time $t = \tau_i$.
\end{definition}

Specifically, we are interested in generating the uniform superposition over valid configurations of the architecture behind the circuit $C_2$ from Section. We note that the architecture is similar to the product of bitonic block architectures (see Definition \ref{def:linearproduct}), $\cB_\ell^{\times m}$ for $m = D_0',\ell = O(\log^4 n / \eps)$, and $\ell = \log n$. However, on closer inspection, since the code Hamiltonian includes terms that check the consistency between clocks at the final time state and the initial time state, this does not exactly correspond to the spacetime Hamiltonian construction from a linear product of bitonic blocks. Rather, it corresponds to the spacetime Hamiltonian construction from a ``circular'' product of bitonic blocks. The set of valid configurations for this Hamiltonian includes configurations which ``wrap around'' the final time state and back to the initial time state. Formally,

\begin{definition}[Valid configurations of a circular architecture]
Let $\cA$ be an architecture on $n$ qubits of depth $D$. Let $\cA^{\times \infty}$ be the infinite circuit defined by taking infinite consecutive copies of $\cA$:
\begin{equation}
\cA^{\times \infty} = \prod_{j = -\infty}^\infty \cA.
\end{equation}
Let $\cV_\infty \subset \mathbb{Z}^n$ be the set of valid configurations for $\cA^{\times \infty}$. Define $\cV \in \mathbb{Z}_D^n$ by $\cV = \cV_\infty / D \mathbb{Z}^n$, i.e. identity identical time configurations in the infinite copies. The set of valid configurations for the circular architecture is $\cV$.
\end{definition}

We call this a \emph{circular architecture}, $\cB_\ell^{\leftrightarrow m}$ and describe it with more formality in the Appendix (Definition \ref{def:circularwrapping}).

\begin{lemma}
Let $\ket \Xi$ be the uniform superposition over valid configurations of the architecture $\cB_\ell^{\leftrightarrow m}$. Formally, let $\cV \subseteq Z_{m\ell}^n$ be the set of valid configurations. Then,
\begin{equation}
\ket{\Xi} = \frac{1}{\sqrt{\abs{\cV}}} \sum_{(t_1, \ldots, t_n) \in \cV} \ket{t_1}_{\sT_1} \ket{t_2}_{\sT_2} \ldots \ket{t_n}_{\sT_n}.
\label{eq:superposition}
\end{equation}
The state $\ket{\Xi}$ can be generated efficiently.
\label{lem:buildtau}
\end{lemma}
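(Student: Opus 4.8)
The plan is to prepare $\ket{\Xi}$ by the standard \emph{counting-to-superposition} technique (downward self-reducibility), applied register by register to $\sT_1,\dots,\sT_n$. Write $\cV\subseteq Z_{m\ell}^n$ for the set of valid configurations of $\cB_\ell^{\leftrightarrow m}$ and, for a partial assignment $(\tau_1,\dots,\tau_i)$ of clock values to the first $i$ qubits, set
\[
  N(\tau_1,\dots,\tau_i)\ :=\ \bigl|\{(\tau_{i+1},\dots,\tau_n): (\tau_1,\dots,\tau_n)\in\cV\}\bigr| ,
\]
so that $N(\varnothing)=\abs{\cV}$, $\sum_{\tau_i}N(\tau_1,\dots,\tau_i)=N(\tau_1,\dots,\tau_{i-1})$, and $N$ of a complete valid configuration equals $1$. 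I would first reduce the lemma to the purely classical task of evaluating these counts: \emph{if} there is a $\poly(n,m,\ell)$-time algorithm computing $N(\tau_1,\dots,\tau_i)$ on any input, then on register $\sT_i$, controlled on the earlier registers holding $\ket{\tau_1}_{\sT_1}\cdots\ket{\tau_{i-1}}_{\sT_{i-1}}$, I would apply the isometry
\[
  \ket{0}_{\sT_i}\ \longmapsto\ \sum_{\tau_i}\sqrt{\tfrac{N(\tau_1,\dots,\tau_i)}{N(\tau_1,\dots,\tau_{i-1})}}\ \ket{\tau_i}_{\sT_i},
\]
where $\ket{\tau_i}_{\sT_i}$ is the flag-plus-domain-wall encoding of \eqref{eq:time_register}. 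Because that encoding is of unary (domain-wall) type, the target state on the $X+1$ qubits of $\sT_i$ is a superposition over $m\ell$ explicitly known computational-basis states and can be built by a short cascade of single-qubit rotations and nearest-neighbour controlled rotations that grow the flag and the two domain-wall segments one qubit at a time, with angles read off from the ratios above; the control on $\sT_1,\dots,\sT_{i-1}$ is implemented by computing the angles reversibly into an ancilla via the counting algorithm, rotating, and uncomputing. Telescoping over $i=1,\dots,n$, the amplitude of a configuration $(\tau_1,\dots,\tau_n)$ becomes $\sqrt{N(\tau_1,\dots,\tau_n)/N(\varnothing)}$, which is $\abs{\cV}^{-1/2}$ on valid configurations and $0$ on all others --- exactly $\ket{\Xi}$, prepared by a circuit of size $\poly(n)$.

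It remains to exhibit the counting algorithm, which is where the real work lies. The key is the recursive structure of the bitonic block (Definition~\ref{def:bitonicblock}): $\cB_\ell$ is a coupling layer pairing qubit $j$ with qubit $j+2^{\ell-1}$ followed by $\cB_{\ell-1}^{\otimes 2}$ on the two halves of the wires, so $\cB_\ell^{\times m}$ and its circular closure $\cB_\ell^{\leftrightarrow m}$ decompose along the binary tree of qubit indices. The causal-validity conditions of Definition~\ref{def:partial-config}, specialized to this architecture, couple any sub-block to the rest of the circuit only through a bounded amount of ``interface'' data --- a constant number of clock/flag values at the block and layer boundaries --- so the counts $N(\cdot)$ satisfy a $\poly(n)$-size recurrence. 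Concretely, this is the recurrence developed in Appendix~\ref{appendix:bitonic} to enumerate valid configurations (the same machinery that computes $\abs{\cT_{comp}}$ in Lemma~\ref{lem:overlap} and the block-overlap ratios in the spectral-gap analysis); I would run it with the clocks of qubits $1,\dots,i$ pinned to $(\tau_1,\dots,\tau_i)$ to obtain the conditional counts, and sum over the $O(1)$ possible boundary configurations at the seam where the last block meets the first to account for circular time, at the cost of a further $\poly(n)$ factor.

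The hard part will therefore be the combinatorial content rather than the quantum part: making precise the ``bounded interface'' claim for \emph{circular} bitonic architectures and proving that the conditional configuration counts obey a polynomial-time recurrence, since the appendix establishes the bulk of this for the unconditioned counts and it is the extension to arbitrary prefixes and to the circular seam that needs care. Once that is in hand, the preparation above is routine. As a final step I would also observe that the qubit ordering and the schedule of controlled rotations can be chosen to respect the $\log n$-dimensional hypercube layout of the clock registers, so that the whole preparation is $(\log n + 2)$-spatially local --- the form in which this lemma feeds into Theorem~\ref{thm:encodingckt}.
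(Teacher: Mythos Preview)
Your approach is correct in principle but takes a different route from the paper. The paper does not build $\ket{\Xi}$ register by register via conditional counts. Instead it observes (Theorem~\ref{thm:efficientindex}) that there is an efficiently computable \emph{bijection} $f:[a_\ell^{\leftrightarrow m}]\to\cV$ with efficient inverse, then prepares the uniform superposition $\frac{1}{\sqrt{a_\ell^{\leftrightarrow m}}}\sum_j\ket{j}\ket{0}$ over indices (Hadamards plus a single measurement to fix the range), applies the reversible version of $f$ to write the configuration into the second register, and applies the reversible version of $f^{-1}$ to uncompute the index. That is the entire proof: one global indexing map, no per-qubit conditioning.

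The practical difference is in what you must extract from Appendix~\ref{appendix:bitonic}. The indexing recursion there splits along the bitonic block structure (is the first layer all activated? is the last layer all deactivated?), and this directly yields the bijection the paper needs. Your plan instead requires the conditional counts $N(\tau_1,\dots,\tau_i)$ for \emph{qubit-index prefixes}, which is not what that recursion produces; pinning the clocks of qubits $1,\dots,i$ does not respect the $\cB_{\ell-1}^{\otimes 2}$ splitting, so ``running the recurrence with those clocks pinned'' is not literally possible without a new argument (e.g.\ a tree DP on the $HV$-tree representation with pinned leaves). This is doable, but it is additional combinatorics that the paper sidesteps entirely. If you want your method to line up with the existing appendix, the clean fix is to reorder the registers so that you expand along the recursive block decomposition rather than along the linear qubit order; then your conditional counts coincide with the case splits already used in the indexing proof, and your Grover--Rudolph preparation becomes essentially an unrolled version of the paper's $f$.

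Finally, the spatial-locality remark at the end belongs to Lemma~\ref{lem:buildpsi}, not to this lemma; the paper only claims efficiency here.
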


\newcommand{\enum}{f}

\begin{proof}
By Theorem \ref{thm:countcircular}, the number of valid configurations, $a_\ell^{\leftrightarrow m} \defeq \abs{\cV}$ has a recursive definition and is at most doubly exponential in $\ell$. Therefore, it can be calculated in time $\poly(n)$. There exists an enumeration bijection $\enum : [a_\ell^{\leftrightarrow m}] \rightarrow \mathcal{V}$ which is (classically) efficient such that $\enum^{-1}$ is also (classically) efficient. We extend this to reversible operations 
\begin{equation}
\enum(\ket{j}\ket{c}) = \ket{j}\ket{c \oplus \enum(j)} \quad \text{ and } \quad \enum^{-1}(\ket{j}\ket{c}) = \ket{j \oplus \enum^{-1}(c)} \ket{c}.
\end{equation}
As a consequence, we can create $\ket{\Xi}$ by starting\footnote{The state $\frac{1}{\sqrt{b_k}} \sum_{j = 1}^{a_\ell^{\leftrightarrow m}} \ket{j}$ can be generated efficiently by the following. Let $W$ be the largest power of 2 greater than $a_\ell^{\leftrightarrow m}$. Using Hadamard gates, we can generate the superposition $\frac{1}{\sqrt{W}} \sum_{j = 1}^{W} \ket{j}\ket{0}$. Then, we apply the reversible operation $\ket{j}\ket{z} \mapsto \ket{j}\ket{z \oplus 1_{j \leq b_k}}$ and measure the ancilla in the standard basis. If the measurement is 1, we achieve the desired state. The measurement 0  will occur with probability $\leq 1/2$.} with 
\begin{equation}
\frac{1}{\sqrt{a_\ell^{\leftrightarrow m}}} \sum_{j = 1}^{a_\ell^{\leftrightarrow m}} \ket{j} \ket{0}
\end{equation}
and applying $\enum$ followed by $\enum^{-1}$. A construction of $\enum$ and $\enum^{-1}$ is given as Theorem \ref{thm:efficientindex} in the Appendix.

\end{proof}

\begin{lemma}
Given the state $\ket{\Xi}$ (\ref{eq:superposition}) and an initial state $\ket{\psi} \in (\mathbb{C}^2)^{\otimes k}$, one can efficiently generate the state
\begin{equation}
\ket{\Psi} = \frac{1}{\sqrt{\cV}} \sum_{(t_1, \ldots, t_n) \in \cV} \ket{t_1}_{\sT_1} \ket{t_2}_{\sT_2} \ldots \ket{t_n}_{\sT_n} \otimes U_{t_1,\ldots,t_n} \ket{\psi}_{\sS_1 \sS_2 \ldots \sS_k} \ket{0^{n-k}}_{\sS_{k+1} \sS_{k+2} \ldots \sS_n}
\label{eq:gstate}
\end{equation}
where $U_{t_1, \ldots, t_n}$ is the unitary acting on $(\mathbb{C}^2)^{\otimes n}$ defined by the action of the valid configuration $(t_1, \ldots, t_n)$. The efficient generating circuit is spatially local in $2 + \log_2(n)$ dimensions.
\label{lem:buildpsi}
\end{lemma}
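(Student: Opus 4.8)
The plan is to build the state $\ket{\Psi}$ in \eqref{eq:gstate} from $\ket{\Xi}$ by appending the data registers in state $\ket{\psi}\ket{0^{n-k}}$ and then applying the gates of the circuit $C_2$ one at a time, in an order dictated by the valid configuration stored in the time registers, so that conditioned on the time registers holding a configuration $\timeconfig = (t_1,\ldots,t_n)$ the data registers end up in the state $U_{\timeconfig}\ket{\psi}\ket{0^{n-k}}$. Concretely, I would first prepare $\ket{\Xi}_{\sT_1\cdots\sT_n}\otimes\ket{\psi}_{\sS_1\cdots\sS_k}\ket{0^{n-k}}_{\sS_{k+1}\cdots\sS_n}$, which is efficient by Lemma~\ref{lem:buildtau}. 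Then I would sweep through the layers $\mathcal{L}_1,\mathcal{L}_2,\ldots,\mathcal{L}_{D}$ of the circuit $C_2$ in order, and for each gate $g$ at depth $d_g$ acting on a qubit pair $(p,q)$, apply the controlled gate that applies $U_g[p,q]$ to the data registers $\sS_p,\sS_q$ controlled on the event ``$t_p \ge d_g$ and $t_q \ge d_g$'' (read off from the clock and flag registers $\sT_p,\sT_q$ — and, by the domain-wall structure enforced by $H_{clock}$, this event is detectable by looking at only a couple of clock qubits, exactly as in the definition of $\ket{u_t[p]}$ in \eqref{eq:cases_for_h_prop}).

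The correctness argument proceeds by induction on the number of gates processed, working branch-by-branch in the superposition over $\timeconfig\in\cV$. Fix a valid configuration $\timeconfig$. By Definition~\ref{def:partial-config}, the set of gates $g$ with $d_g\le t_p$ and $d_g\le t_q$ is exactly the set of ``applied'' gates of $\timeconfig$, and validity guarantees this set is closed under the causal-predecessor relation, so processing the gates of $C_2$ in layer order and applying precisely those whose control condition holds computes a unitary that equals $U_{\timeconfig}$ (any unapplied gate either is skipped or commutes past everything applied after it on the same wires; more carefully, the applied set forms a causally-downward-closed sub-DAG, and the product of its gates in any linear extension — in particular the layer order — is a well-defined unitary, which is what $U_{\timeconfig}$ denotes). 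Since the control registers are never modified by these controlled gates, the time registers remain in $\ket{\Xi}$ and we obtain exactly \eqref{eq:gstate}. Efficiency is immediate: $C_2$ has $\poly(n)$ gates, each controlled gate acts on $O(1)$ data qubits plus $O(1)$ clock qubits, so the whole circuit has $\poly(n)$ size.

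For the spatial-locality claim, note that by construction (Section~\ref{subsec:uniformizing}) the circuit $C_2$ is nearest-neighbor on a hypercube of dimension $\ell=\log_2 n$, and the spacetime Hamiltonian lays out each data register $\sS_i$ alongside its time register $\sT_i$; the controlled gate for a gate on adjacent qubits $(p,q)$ touches only $\sS_p,\sS_q$ and a few qubits of $\sT_p,\sT_q$, so it is geometrically local once we embed the $\log_2 n$-dimensional hypercube of qubit-indices and use $2$ extra dimensions to arrange the $\sT_i$-register internal to each site relative to its $\sS_i$; hence the generating circuit is spatially local in $2+\log_2(n)$ dimensions. The main obstacle I anticipate is not the encoding circuit itself but pinning down precisely the claim that applying the layer-ordered product of the ``applied'' gates of $\timeconfig$ yields the operator the Hamiltonian's propagation terms intend $U_{\timeconfig}$ to be — i.e. checking that ``$U_{\timeconfig}$ is well-defined'' (independent of which causally-consistent linear extension one uses) and that the layer-sweep realizes it; this is a bookkeeping argument about causally-downward-closed subsets of the circuit DAG, and it is where one must be careful with the circular (wrap-around) configurations coming from $\cB_\ell^{\leftrightarrow m}$, since there the backward phase reverses gates and one must verify the controlled-gate schedule handles the flag register correctly.
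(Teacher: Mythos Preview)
Your core approach matches the paper's Method~1 but with a simplification: you place $\ket{\Xi}$ directly in $\sT$ and use it as a read-only control, applying each gate at depth $d_g$ on $(p,q)$ controlled on ``$t_p\ge d_g$ and $t_q\ge d_g$''. The paper instead loads $\ket{\Xi}$ into an \emph{auxiliary} copy $\widetilde{\sT}$, initializes a counter $\sT$ at zero, and for each gate applies it and increments the counter controlled on ``$t_i<T_i$ and $t_j<T_j$'', finally erasing $\widetilde{\sT}$ by XOR once $\sT=\widetilde{\sT}$. For non-wrapping configurations these are equivalent (on a valid configuration the counter equals $\min(d_g-1,T_i)$ after layer $d_g-1$, so ``counter $<$ target'' reduces to your ``target $\ge d_g$''), and your version is the more economical one.

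You correctly flag the circular wrap-around case as the remaining obstacle; the paper does resolve it, and this is where its auxiliary register pays off. The fix is to preprocess $\widetilde{\sT}$ by adding $D$ to every $T_i<D/2$ (so that, by the width bound, all shifted targets lie consecutively in $[D/2,3D/2]$), sweep through the doubled circuit $C_2C_2$, and then postprocess to return clocks to $[0,D)$. Because the shifted values live in $\widetilde{\sT}$, the actual $\sT$ register is built up from zero in the Hamiltonian's native encoding automatically; patching your direct scheme the same way would require temporarily enlarging the control register to hold values up to $3D/2$. The paper also supplies two alternative methods you omit: phase estimation applied to the product state $\ket{\Xi}\otimes C_0\ket{\psi,0^{n-k}}$ (which has $1-o(1)$ overlap with $\ket{\Psi}$ by the padding), and adiabatic evolution along the path $H(U_1^s,\ldots,U_L^s)$ for $s\in[0,1]$, whose spectral gap is uniform.
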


\begin{proof}
We describe three methods for efficiently constructing the state $\ket{\Xi}$ (\ref{eq:superposition}). The first describes a quantum circuit, the second is approximate and relies on phase estimation, and the third is based on adiabatic computation. We describe the first in detail and provide sketches for the other two.

Notationally, we will let $\sT$ be the union of registers $\{\sT_i\}$ and similarly define the registers $\sC, \sF$ and $\sS$.

\paragraph{Method 1 (quantum circuit)}
Let $C_2$ be the circuit from which the spacetime circuit Hamiltonian is built. We modify the circuit $C_2$ into a new circuit $C_2'$ which acts in one additional dimension such that
\begin{equation}
C_2' \left( \ket{\Xi}_{\widetilde{\sT}} \otimes \ket{\psi}_{\sS_1 \sS_2 \ldots \sS_k} \ket{0^{n-k}}_{\sS_{k+1} \sS_{k+2} \ldots \sS_n} \otimes \ket{0}_{\sT} \right) = \ket{0}_{\widetilde{\sT}} \otimes \ket{\Psi}_{\sS\sT}.
\end{equation}
Here the register $\widetilde{\sT}$ is a copy of the register $\sT$. The additional dimension of $C_2'$ over $C_2$ is the additional interaction with the register $\widetilde{\sT}$. Each sub-register $\widetilde{\sT}_i$ will interact with register $\sT_i$ as well as any register $\widetilde{\sT}_j$ for register $\sT_j$ that $\sT_i$ interacts with.

For every gate $g$ in $C_2$ at depth $t$ acting on qubit registers $\sS_i$ and $\sS_j$, we replace $g$ with a constant depth circuit acting on $\sS_i, \sS_j, \sT_i, \sT_j, \widetilde{\sT}_i, \widetilde{\sT}_j$. The constant depth circuit applies the following map: On input $\ket{\psi}_{\sS_i \sS_j} \ket{t_i}_{\sT_i} \ket{t_j}_{\sT_j} \ket{T_i}_{\widetilde{\sT}_j} \ket{T_j}_{\widetilde{\sT}_j}$, if $t_i < T_i$ and $t_j < T_j$, then the gate $g$ is applied to $\ket{\psi}_{\sS_i \sS_j}$ and the registers $\sT_i$ and $\sT_j$ are incremented. Otherwise, the identity map is applied. Equivalently,
\begin{equation}
\ket{\psi}_{\sS_i \sS_j} \ket{t_i}_{\sT_i} \ket{t_j}_{\sT_j} \ket{T_i}_{\widetilde{\sT}_j} \ket{T_j}_{\widetilde{\sT}_j} \mapsto 
\begin{cases}
g \ket{\psi}_{\sS_i \sS_j} \ket{t_i + 1}_{\sT_i} \ket{t_j + 1}_{\sT_j} \ket{T_i}_{\widetilde{\sT}_j} \ket{T_j}_{\widetilde{\sT}_j} & \text{if } t_i < T_i \land t_j < T_j \\
\ket{\psi}_{\sS_i \sS_j} \ket{t_i}_{\sT_i} \ket{t_j}_{\sT_j} \ket{T_i}_{\widetilde{\sT}_j} \ket{T_j}_{\widetilde{\sT}_j} & \text{otherwise}.
\end{cases}
\end{equation}
Notice that making this adjustment to every gate $g$ in $C_2$ will on input $\ket{T_1}_{\widetilde{\sT}_1}\ket{T_n}_{\widetilde{\sT}_n}\ldots \ket{T_n}_{\widetilde{\sT}_n}$, generate the partial computation of $C_2$ up to $(T_1, \ldots, T_n)$. Furthermore, in the end, the registers $\sT$ and $\widetilde{\sT}$ will both contain $(T_1, \ldots, T_n)$. Then, we can apply the map $\ket{a}_{\widetilde{\sT}} \ket{b}_\sT \mapsto \ket{a \oplus b}_{\widetilde{\sT}} \ket{b}_{\sT}$ to erase the $\widetilde{\sT}$ register.

By linearity, when ran on the input $\ket{\Xi}_{\widetilde{\sT}}$, this will yield the state $\ket{\Psi}$.

There is one complication to consider. We must consider valid configurations which cross time $t = 0$ (i.e. some clocks are near the end while others are just starting). To fix this, we first preprocess register $\widetilde{\sC}$ such that any register with $\ket{T_i}_{\widetilde{\sC}_i}$ for $T_i < D/2$ is replaced with $T_i + D$. This ensures that all clock registers are in the range\footnote{This ensure that all valid configurations are ``consecutive'' because the width of a configuration (Lemma \ref{lem:width}) is much smaller than $D$.} $[D/2, 3D/2]$. We now perform the same adjustment to each gate $g$ except we do it for gates of the circuit $C' C'$ (circuit repeated twice). We follow it with postprocessing to return all clock registers to between $0$ and $D$.

\paragraph{Method 2 (phase estimation)}  Apply the original random Clifford circuit $C_0$ of depth $D_0$ to the computational qubits, to the form the a state with no entanglement between the clocks and the data qubits,
\begin{equation}
\ket{\Psi'} = \ket{\Xi} \otimes C_0 \ket{\psi}_{\sS_1 \sS_2 \ldots \sS_k} \ket{0^{n-k}}_{\sS_{k+1} \sS_{k+2} \ldots \sS_n}.
\end{equation}
Since the spacetime history state $|\Psi\rangle$ is padded to length $T = \poly(D_0)$ with identity gates, the overlap between these states is
\begin{equation}
| \langle {\Psi}|{\Psi'}\rangle|^2 \geq 1 -  \left(\frac{D_0}{T} \right)^2= 1 - \mathcal{O}\left(\frac{1}{\polylog(n)}\right)
\end{equation}
Since the spectral gap of the code Hamiltonian scales as $1/\poly(n)$, we can apply a phase estimation circuit $U_\textrm{PE}$ to $\ket{\Psi'}$ that estimates the first $\mathcal{O}(\log n)$ digits of the energy of this state with respect to the code Hamiltonian.  By the overlap calculation above this phase estimation yields an eigenvalue of 0 with probability $1 - o(1)$ and projects $|\Psi'\rangle$ into the ground space of the code Hamiltonian.  Using the fact that distinct code words are orthogonal, this produces a state of the form
\begin{equation}
\sqrt{1 -\epsilon^2} |\Psi\rangle + \epsilon | \Psi''\rangle
\end{equation}
where $\ket{\Psi''}$ is contained in the code space and $\epsilon \leq D_0 / T \leq 1/\polylog(n)$.  

\paragraph{Method 3 (adiabatic computation)}  As described in Section 3.4 of \cite{breuckmann2014space}, a standard way to turn a circuit Hamiltonian $H(U_1,...,U_L)$ into a procedure for adiabatically preparing the ground state is to use a continuous family of circuit Hamiltonians $H(s) = H(U_1(s), ... , U_L(s))$ to define the adiabatic path.   Here $s \in [0,1]$ is a parameter such that $U_i(0) = I$ and $U_i(1) = U_i$ for each $i$.  Since we use 2-qubit gates and $SU(4)$ is simply connected we may define $U_i(s) = U_i^s$ for each $i$.  Every Hamiltonian in this continuous family has the same spectral gap, which we have shown is $1/\poly(n)$ in Section \ref{sec:analysis}, and so any rigorous version of the adiabatic theorem suffices to turn the initial ground state of $H(I,...,I)$ into the ground state of $H(U_1,...,U_L)$ in polynomial time.  Alternatively, instead of the adiabatic theorem, one can discretize the adiabatic path into polynomially many steps and use phase estimation to move between consecutive steps, which suffices to prepare the ground state of $H(U_1,...,U_L)$ with exponentially small error.
\end{proof}

\section{Spectral gap analysis}
\label{sec:analysis}

Our analysis of the spectral gap of the spacetime circuit Hamiltonian begins with several standard steps that are applied to Feynman-Kitaev Hamiltonians~\cite{aharonov2008adiabatic} and their spacetime variants~\cite{breuckmann2014space}.  First one defines a global unitary rotation,
\begin{equation}
W = \sum_{\boldsymbol{\tau} \in \mathcal{T}} U(\boldsymbol{\tau} \leftarrow 0) |\boldsymbol{\tau}\rangle \langle \boldsymbol{\tau}|
\end{equation}
which when applied to full Hamiltonian yields,
\begin{align}
W^\dagger H W&= H_\textrm{init}  + \mathcal{L}_{\textrm{prop}} \otimes \mathds{1} + H_\textrm{causal} \label{eq:rotatedH}\\
\mathcal{L}_{\textrm{prop}}  &= \sum_{\boldsymbol{\tau},\boldsymbol{\tau'} \in \mathcal{T}} \left( |\boldsymbol{\tau}\rangle \langle \boldsymbol{\tau} | + |\boldsymbol{\tau'}\rangle \langle \boldsymbol{\tau'}| - |\boldsymbol{\tau}\rangle \langle \boldsymbol{\tau'} | - |\boldsymbol{\tau'}\rangle \langle \boldsymbol{\tau} | \right)
\end{align}
where $\mathcal{L}_{\textrm{prop}}$ is the combinatorial Laplacian of a graph with vertices corresponding to valid time configurations and edges connecting time configurations $\boldsymbol{\tau},\boldsymbol{\tau'}$ that differ by the application of a 2-local gate.  Since $[H_{\textrm{causal}}, H_{\textrm{prop}}] = 0$, any state with energy less than 1 will be in the ground space of $H_\textrm{causal}$. 

Applying the argument from Section 3.1.4 of \cite{aharonov2008adiabatic}, the Hamiltonian \eqref{eq:rotatedH} in the rotated frame only acts on the computational qubits through $H_\textrm{in}$; so the Hamiltonian can be written in block diagonal form with each block $B_i$ corresponding to a different input string $i = 0,...,2^n -1$,
\begin{equation}
W^\dagger H W = \left( \begin{array}{ccccc}
B_0 &  &  &  &  \\
 & B_1 &  &  &  \\
 &  & \ddots & &  \\
 &  &  & B_{2^n -1}&  \\
\end{array} \right)
\end{equation}
The ground space of $W^\dagger H W$ is contained in the block $B_0$; therefore, the spectral gap of $H$ will either be the spectral gap of $\Delta_{\textrm{prop}}$ within $B_0$, or it will be the minimum among the ground state energies in the other blocks $B_i$.  To lower bound the ground state energies in the blocks $B_i \neq 0$ we can apply the geometrical lemma. 

\paragraph{Kitaev's Geometrical lemma}  Let $H_1, H_2 \succeq 0$ be positive semi-definite operators with 0 as an eigenvalue, and let the least nonzero eigenvalue of $H_1$ and $H_2$ be lower bounded by $\Lambda$, then
\begin{equation}
H_1 + H_2 \succeq \Lambda \sin^2\left(\frac{\theta}{2}\right) \quad , \quad \cos^2 \theta = \max_{\substack{\ket{\xi}\in \textrm{Ker}(H_1) \\ |\eta \rangle \in \textrm{Ker}(H_2)}}
\end{equation}

The lemma is applied in each of the blocks $B_i$ with $i \neq 0$, taking $H_1 = H_{\textrm{in}}$ and $H_2  = \mathcal{L}_{\textrm{prop}} \otimes \mathds{1}$.  Since the spectral gap of $H_{\textrm{in}}$ is 1 we take $\Lambda = \Delta_{\textrm{prop}}$, where $\Delta_{\textrm{prop}}$ is the spectral gap of $\mathcal{L}_{\textrm{prop}}$.  The sine of the angle between the kernels of $H_{\textrm{in}}$ and $H_{\textrm{prop}}$ is lower bounded by the overlap of the ground state of $H_{\textrm{prop}}$ with any one of the local terms in $H_{\textrm{in}}$, which is $1/T$.  Therefore the spectral gap of the full Hamiltonian satisfies
\begin{equation}
\Delta_H = \frac{\Delta}{T} = \wt{\Omega}(\Delta_{\textrm{prop}}).
\end{equation}
It remains to lower bound the spectral gap $\Delta$ of the graph Laplacian $\mathcal{L}_{\textrm{prop}}$.  This graph Laplacian is a stoquastic frustration-free Hamiltonian with a uniform ground state in the time configuration basis, and so it can be mapped to a Markov chain transition matrix by shift and rescaling,
\begin{equation}
P = I - \mathcal{L}_{\textrm{prop}}/ \|\mathcal{L}_\textrm{prop}\| \label{eq:Pdef}
\end{equation}
The transformation of a spacetime Hamiltonian $H_\textrm{prop}$ into a Markov chain in \cite{breuckmann2014space} first maps the 1 + 1 dimensional spacetime Hamiltonian to the ferromagnetic Heisenberg chain and then relates the Heisenberg model to a Markov chain transition matrix by rescaling its operator norm.  This multistep mapping reveals additional insights about the physics of that model, and is the basis for the gap analysis in \cite{breuckmann2014space}, but the mapping from stoquastic Hamiltonians to Markov chains is entirely general as described in \cite{crosson2017quantum}. 

The operator norm satisfies $\|L_\textrm{prop}\| = \|H_{\textrm{prop}}\| = \Omega(n)$ and so in terms of the spectral gap $\Delta_P$ of the Markov chain \eqref{eq:Pdef} we have
\begin{equation}
\Delta_H = \Omega(n \Delta_P). \label{eq:gapHP}
\end{equation}
\subsection{Preliminaries on Markov chains}
Throughout this section let $(\Omega,\pi,P)$ be an irreducible, ergodic, reversible Markov chain on the state space $\Omega$, with stationary distribution $\pi$ and transition matrix $P$ (see~\cite{levin2017markov} for background on these terms).   %

\paragraph{Block decomposition method~\cite{madras2002}}  The state space is decomposed into subsets $\Omega = \cup_{i=1}^R \Omega_i$ (``blocks''), which in general will have nonempty pairwise intersection, $\Omega_i \cap \Omega_j \neq \emptyset$.  Let $\Theta \defeq \max_{x \in \Omega} |\{i : x \in \Omega_i\}|$ be the maximum numbers of sets that can contain any single element $x \in \Omega$.  For any $S \subseteq \Omega$, define $\pi(S) \defeq \sum_{x \in S} \pi(x)$.  Define the aggregate (``block'') Markov chain $\overline{P}$ on the state space $\{1,...,R\}$,
\begin{equation}
\overline{P}_{ij} \defeq \frac{\pi\left(\Omega_i \cap \Omega_j\right)}{\Theta \pi\left(\Omega_i\right)} \quad , \quad i,j \in \{1,...,R\}. \label{eq:aggP}
\end{equation}
One can easily check that these transition probabilities are reversible with respect to the distribution $\bar{\pi}_i \defeq \pi(\Omega_i)$.  Next define a restricted (``within-block'') chain $P_i$ for each subset $\Omega_i$ as follows: if $x \in \Omega_i$ and $ y \neq x$ then
\begin{equation}
P_{i}(x,y) \defeq  \begin{cases} 
      P(x,y) & y \in \Omega_i \\
      0 & y \notin \Omega_i 
   \end{cases}, \label{eq:prestricted}
\end{equation}
and $P_{i}(x,x) \defeq 1 - \sum_{y \neq x} P_{i}(x,y)$.  The spectral gap of $\Delta_P$ satisfies the lower bound
\begin{equation}
\Delta_P \geq \frac{1}{2} \Delta_{\overline{P}} \min_{i = 1,...,R} \Delta_{P_i}
\end{equation}
\paragraph{Cheeger's inequality}  
For any nonempty subset $S\subseteq \Omega$ define the conductance $\Phi(S)$ by
\begin{equation}
\Phi(S) \defeq \frac{1}{\pi(S)}\sum_{x \in S , y \in S^c} \pi_x P_{xy}
\end{equation}
and define $\Phi_P \defeq \min_{S : 0 < \pi(S) \leq 1/2}\Phi(S)$.  Cheeger's inequality states that
\begin{equation}
\frac{\Phi_P^2}{2} \leq \Delta_P.
\end{equation}
\subsection{Decomposition of the circuit Propagation Markov chain}
The subsets in our decomposition are defined by
\begin{equation}
\Omega_r \defeq \left \{ (t_1,...,t_n) \in \Omega :  r \leq t_i \leq r + \ell \; \textrm{for all} \; i = 1,...,n\right\}  \quad , \quad r = 0, ... , D - \ell
\end{equation}
(recall $\ell \defeq \log(n)$).  Every valid time configuration is contained in at least one $\Omega_r$, so $\Omega = \cup_{r = 0}^{D - r} \Omega_r$ as required.  The maximum number of blocks that can contain any particular time configuration is $\Theta = k$; this maximum is attained by any configuration $(t_1,...,t_n)$ for which $t_1 = ... = t_n = r$, with such configurations being contained in $\Omega_{r - \ell} ,... , \Omega_{r}$. 

Next we compute the aggregate transition probabilities $\overline{P}_{ij}$.  In particular, we will need the transition probability between consecutive blocks $P_{r,r+1}$.  Since the distribution over time configurations is uniform, we have
$$
\frac{\pi(\Omega_r \cap \Omega_{r'}}{\pi(\Omega_r)} = \frac{|\Omega_r \cap \Omega_{r'}|}{|\Omega_r|}
$$
for all $r,r' = 0,\ldots, D - \ell$.  To determine $|\Omega_r|$, we use the recursion relation $a_\ell = 2 a^2_{\ell -1} - a^4_{\ell -2}$ for the number of partially completed circuit configurations $a_\ell$ of a bitonic block of rank $\ell$ which is defined in Definition \ref{def:bitonicblock} and the recursive relation is proved in Theorem \ref{thm:countingBitonicBlock}).  This recursion relation has been studied previously~\cite{lagarias2002counting} and has the asymptotic solution $a_\ell = \phi^{-1} \omega^{2^\ell}$, where $\phi = (1 + \sqrt{5})/2$ is the golden ratio and $\omega = 1.8445...$ does not have a known closed form. Next in Appendix \ref{sec:permutationSplitting} we show that $|\Omega_r| = |\Omega_{r'}|$ for every pair of blocks $r,r'$, which follows from the fact that the valid configurations of any circuit architecture are invariant under permutation of the qubit labels, together with an explicit set of permutations we define that relates the architecture in each block.  Therefore we have $|\Omega_r| = a_{\ell} = \phi^{-1} \omega^{2^\ell}$ for all $r = 0,...,D - \ell$.

To evaluate \eqref{eq:aggP} we also need to count the number of configurations contained in the intersection of two such consecutive blocks, see Figure \ref{fig:overlaps}.  The key insight is that removing either the first or last layer of any block will split it into two independent bitonic blocks on half the number of qubits, which implies that $|\Omega_r \cap \Omega_{r+1}| = a^2_{\ell-1}$ and so

\begin{figure}
\begin{center}
\includegraphics[width = 0.9\linewidth]{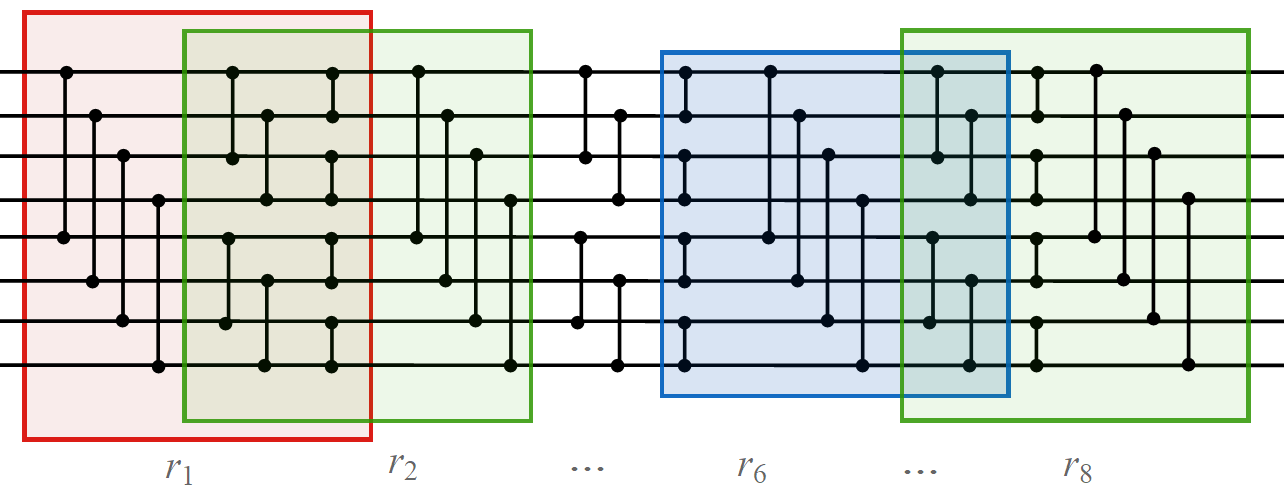}
\end{center}
\caption{The region in the intersection of the red and green bitonic blocks $\mathcal{B_3}$ contains the time configurations that belong to $\Omega_{r_1} \cap \Omega_{r_2}$.  The key insight is that the valid time configurations in the intersection of these blocks can be counted by observing that the intersection corresponds to two independent copies of $\mathcal{B_2}$.  Similarly, the configurations contained in the intersection $\Omega_6 \cap \Omega_8$ correspond to 4 independent copies of $\mathcal{B}_1$.}
\label{fig:overlaps}
\end{figure}

\begin{equation}
\frac{\pi(\Omega_r \cap \Omega_{r+1})}{\pi(\Omega_r)} = \frac{a_{\ell - 1}^2}{a_\ell} = \phi^{-2} \quad , \quad r = 0, ... , T - \ell.
\end{equation}
and similarly,
\begin{equation}
\frac{\pi(\Omega_r \cap \Omega_{r+j})}{\pi(\Omega_r)} = \frac{a_{\ell - j}^{2^j}}{a_\ell} =\phi^{-2^j}\quad , \quad r = 0, ... , T - \ell.
\end{equation}
and so the aggregate transition probabilities decay doubly exponentially with distance,
\begin{equation}
\overline{P}(r, r+ j) = \ell^{-1} \phi^{-2^j}\quad , \quad r = 0, ... , D - \ell.
\end{equation}
Next we lower bound the minimum conductance $\Phi_{\overline{P}}$.  Let $S$ be a nonempty subset of blocks, $S \subset \{0,...,D - \ell\}$.  Define $\pi(S) \defeq \sum_{r \in S} \pi(\Omega_r)$, and assume $\pi(S) \leq 1/2$.  There must be some $r'' < D - \ell$ such that $r'' \notin S$, and so
\begin{equation}
\frac{1}{\pi(S)}\sum_{r \in S , r' \in S^c} \pi(\Omega_r) \overline{P}_{r,r'} \geq  2 \ell^{-1} \pi(\Omega_{r''-1} \cap \Omega_{r''}) = 2 \ell^{-1}\frac{a^2_{\ell -1}}{|\mathcal{C}|} \geq  2 \ell^{-1}(m \ell + 1)^{-1} \phi^{-2}.
\end{equation}
Therefore, Cheeger's inequality yields
\begin{equation}
\Delta_{\overline{P}} \geq 2 \ell^{-2} ( m \ell + 1)^{-2} \phi^{-4} = \frac{1}{\polylog(n)}.
\end{equation}
It remains to lower bound the spectral gaps $\Delta_{P_i}$ corresponding to the restricted (``within-block'') chains defined in \eqref{eq:prestricted}.  By our careful choice of the block decomposition, we demonstrate in Section \ref{sec:bitonicIsomorphism} a one-to-one correspondence between the time configurations in any block $\Omega_i$ and the equal area dyadic tilings of a unit square, and crucially this correspondence also exactly maps the edge-flip Markov chain moves considered in \cite{cannon_et_al:LIPIcs:2017:7583} to the updates which describe the application of a local gate to a valid time configuration.  Since the relaxation time of the edge-flip Markov chain is $\mathcal{O}(n^{4.09})$ we have $\Delta_{P_i} = \Omega(n^{-4.09})$  and so
\begin{equation}
\Delta_P = \wt{\Omega}(n^{-4.09}),
\end{equation}
and by \eqref{eq:gapHP} this implies
\begin{equation}
\Delta_H = \wt{\Omega}(n^{-3.09}).
\end{equation}

\section{Local detection of Pauli errors}
\label{sec:localDetect}
In this section we describe the local detection of errors on spacetime codewords with probability $1 - 2^{-\polylog(N)}$ with $\polylog(N)$-depth circuits. The class of errors that we handle is the set of tensor products of Pauli operators on the physical qubits (which includes data and time qubits). Interestingly, we can detect Pauli errors even if the weight of the error (the number of qubits affected) exceeds the distance of the spacetime code!
Here we only describe a single round of error detection while assuming the ability to perform measurements implemented by low-depth circuits perfectly.  %

\begin{definition}[Pauli group]
  The \emph{Pauli group on $N$ qubits}, denoted by $\cP_N$, is the group generated by the $N$-fold tensor product of the Pauli matrices
  \begin{equation}
I=\begin{pmatrix}1&0\\0&1\end{pmatrix}, \qquad \sigma_X=\begin{pmatrix}0&1\\1&0\end{pmatrix}, \qquad \sigma_Y=\begin{pmatrix}0&-i\\i&0\end{pmatrix}, \qquad \sigma_Z=\begin{pmatrix}1&0\\0&-1\end{pmatrix}  
  \end{equation}
  along with multiplication by $\{\pm 1, \pm i\}$.
\end{definition}

\begin{definition}[Pauli channels]
A quantum operator $\cE$ acting on $N$ qubits is a \emph{Pauli channel} if it has a Kraus decomposition 
\begin{equation}
  \cE(\rho) = \sum_{P \in \cP_N} c_P \, P \rho P^\dagger
\end{equation}
where $\{ c_P \}$ is a probability distribution over $\cP_N$. 
\end{definition}

\subsection{Pauli stabilizers of the spacetime code}

There are nonidentity elements $P \neq I$ of the Pauli group $\cP_N$ that stabilize the spacetime code, i.e., for all $\ket{\psi} \in \cC$, we have $P \ket{\psi} = \ket{\psi}$. In this section, we identify three stabilizers; in the next section, we will argue that these are the \emph{only} nonidentity stabilizers, and all other nonidentity Pauli operators can be locally detected with high probability.

Let $C$ denote the circuit such that the code Hamiltonian is $H_{circuit}[C]$, as described in Section~\ref{sec:construction}. Recall that $X = \frac{D - 2}{2}$ where $D$ is the depth of $C$. 

For any $p \in \{1,\ldots,n\}$ and $j \in \{1, \ldots, z\}$ consider the set of 4 qubits $\sC_{p,j}, \sC_{q_1,j}, \sC_{q_2,j}$, and $\sC_{p',j}$ where $q_1$ is the qubit in layer $L_j$ interacting with $p$, $q_2$ is the qubit in layer $L_{2X + 1 - j}$ interacting with $p$, and $p'$ is the qubit in layer $L_{2X+1 - j}$ interacting with $q_1$. Because the layers $L_j$ and $L_{2X+1-j}$ are different layers of the bitonic architecture, we know that these layers together form a product of bitonic blocks of rank 2, $\cB_2$ (see Corollary \ref{cor:subbitonic}). Then, it is easy to also see that $p'$ is the qubit in layer $L_j$ interacting with $q_2$. Define $rect(p,j)$ as the elements $\{p,q_1, q_2, p'\}$. It is not difficult to see that $rect(\cdot,j)$ yields the same set on inputs $p, q_1, q_2, p'$. Let the stabilizer $S_{clock}[rect(p,j)]$ be
\begin{equation}
\label{eq:s_clock}
S_{clock}[rect(p,j)] \defeq \sigma_Z(\sC_{p,j}) \otimes \sigma_Z(\sC_{q_1,j}) \otimes \sigma_Z(\sC_{q_2,j}) \otimes \sigma_Z(\sC_{p',j})
\end{equation}
where $\sigma_Z(\sC_{p,j})$ denotes the $\sigma_Z$ operator acting on the clock qubit $\sC_{p,j})$.  Furthermore, let

\begin{equation}
\label{eq:s_flag}
  S_{flag} \defeq \bigotimes_{p = 1}^n \sigma_Z(\sF_p)
\end{equation}
where $\sigma_Z(\sF_p)$ denotes the $\sigma_Z$ operator acting on the flag qubit corresponding to data qubit $p$. This is the product of $\sigma_Z$'s acting on all the flag qubits.

\begin{claim}
\label{clm:pauli_stabilizers}
  $S_{clock}[rect(p,j)]$ for any qubit $p$ and $1 \leq j \leq X$ and $S_{flag}$ are Pauli stabilizers of the spacetime code.
\end{claim}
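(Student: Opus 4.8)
The plan is to reduce the claim, via the usual passage to history states, to the statement that $S_{flag}$ and $S_{clock}[rect(p,j)]$ fix every single clock/flag basis vector $\ket{\timeconfig}_\sT$ indexed by a valid time configuration, and then to prove that statement by a parity‑conservation argument along propagation moves. For the reduction, recall that every pure codeword of $\cC$ is a spacetime history state (possibly entangled with a purifying register): $\ket{\Psi} = \sum_i \sqrt{p_i}\,\ket{\Psi_i}\ket{i}$, where $\ket{\Psi_i}$ is the normalized uniform superposition over valid (circular) time configurations $\timeconfig$ of the vectors $\ket{\timeconfig}_\sT \otimes U(\timeconfig \leftarrow 0)\ket{\xi_i}\ket{0^{n-k}}_\sS$, exactly the decomposition used in the proof that $\cC$ is an approximate QLDPC code. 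Since $S_{clock}[rect(p,j)]$ and $S_{flag}$ are tensor products of $\sigma_Z$'s supported only on clock and flag qubits, they are diagonal in the $\ket{\timeconfig}_\sT$ basis and act as the identity on the $\sS$ and purifying registers. Writing $s_\timeconfig \in \{\pm 1\}$ for the eigenvalue of such an operator $S$ on $\ket{\timeconfig}_\sT$, the operator $S$ simply multiplies the $\timeconfig$‑term of each $\ket{\Psi_i}$ by $s_\timeconfig$; hence $S\ket{\Psi} = \ket{\Psi}$ as soon as $s_\timeconfig = +1$ for every valid $\timeconfig$, which is what remains to be shown.

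The key observation is a conservation law along propagation moves. Every valid configuration is reached from the all‑zeros configuration by a sequence of propagation moves, each of which advances a single gate‑pair $(a,b)\in L_t$ from local time $t$ to $t+1$ — indeed, the set of applied gates of a valid configuration is a down‑set of the causal DAG (Definition~\ref{def:partial-config}) and can be built up in topological order, the circularity only restricting us to a window of width $< D$ (Lemma~\ref{lem:width}). For $S_{flag}$ we have $s_\timeconfig = (-1)^{F(\timeconfig)}$ with $F(\timeconfig)$ the number of backward‑phase clocks, and for $S_{clock}[rect(p,j)]$ we have $s_\timeconfig = (-1)^{G(\timeconfig)}$ with $G(\timeconfig) = c_{p,j}+c_{q_1,j}+c_{q_2,j}+c_{p',j}$, where $c_{i,j}\in\{0,1\}$ is the value of the clock qubit $\sC_{i,j}$ in $\timeconfig$. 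Both $F$ and $G$ vanish at the all‑zeros configuration, so it suffices to prove that each is invariant modulo $2$ under every propagation move.

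For invariance: inspecting the time encoding~\eqref{eq:time_register}, a flag bit $f_i$ is flipped by a propagation move only when that move carries $t_i$ across the forward/backward boundary (from $X$ to $X+1$, or from $2X+1$ to $0$); each of these two boundary layers is a perfect matching on the $n$ qubits, so such a move flips the flags of both endpoints of its gate simultaneously and changes $F$ by $0$ or $\pm 2$. Likewise, \eqref{eq:time_register} shows that $\sC_{i,j}$ records whether $t_i$ lies in an interval of the form $[\,j,\,2X+1-j\,]$ (up to the off‑by‑one conventions used in indexing the clock register), so a move flips $c_{i,j}$ only if it moves $t_i$ across an endpoint of that interval — which happens precisely for moves in the two layers $L_j$ and $L_{2X+1-j}$ appearing in the definition of $rect(p,j)$. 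By Corollary~\ref{cor:subbitonic} and the self‑consistency of $rect(p,j)$ (namely that $p'$ is also $q_2$'s partner in $L_j$), the four qubits $\{p,q_1,q_2,p'\}$ are closed under the gate‑pairings of both layers, splitting as $(p,q_1),(q_2,p')$ in $L_j$ and as $(p,q_2),(q_1,p')$ in $L_{2X+1-j}$. Hence any propagation move in $L_j$ or $L_{2X+1-j}$ acts on exactly one of these pairs and flips either $0$ or $2$ of the four bits, so $G$ is invariant mod $2$. Together with the reduction above, this proves the claim.

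The step I expect to be the main obstacle is carrying out these conservation arguments correctly for the \emph{periodic} clocks rather than a linear circuit: one must verify that the flag‑flipping and clock‑flipping moves are exactly the ones identified above even for configurations that straddle the time origin, which is where the width bound on valid configurations (Lemma~\ref{lem:width}) is essential, and one must keep careful track of the clock‑register indexing so that the two layers appearing in $c_{i,j}$'s toggling are indeed the layers $L_j$ and $L_{2X+1-j}$ that define $rect(p,j)$.
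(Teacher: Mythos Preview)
Your proposal is correct and follows essentially the same approach as the paper: reduce to showing the diagonal eigenvalue is $+1$ on every valid $\ket{\timeconfig}$, then argue by induction along a chain of propagation moves from the all-zeros configuration that each move flips either zero or two of the relevant $\sigma_Z$-supported bits. Your version spells out more carefully which layers flip which clock/flag bits and why $\{p,q_1,q_2,p'\}$ is closed under the pairings of $L_j$ and $L_{2X+1-j}$, but the structure of the argument is the same as the paper's.
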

\begin{proof}
Let $\timeconfig = (t_1,\ldots,t_n) \in \cT$ be a valid time configuration of the spacetime history state.

Recall that every time configuration can be seen as the result of incrementing the clocks by applying gates. Therefore there is a sequence of time configurations $(\timeconfig_0, \timeconfig_1, \ldots, \timeconfig_f = \timeconfig)$ such that $\timeconfig_0$ has all clock and flag registers set to 0, and each $\timeconfig_{i+1}$ differs from $\timeconfig_i$ by the application of a gate. To each time configuration $\timeconfig_i$ we can associate a $b_{\timeconfig_i} \in \{\pm 1\}$ such that 
\begin{equation}
S_{clock}[rect(p,j)] \ket{\timeconfig_i} = b_{\timeconfig_i} \ket{\timeconfig_i}.
\end{equation}
Clearly $b_{\timeconfig_0} = 1$. We argue that $b_{\timeconfig_{i+1}} = b_{\timeconfig_i}$. Consider the gate differentiating these two configurations. Applying it must change the time registers by flipping the values of $\sC_{r,j'}$ and $\sC_{s,j'}$ (and perhaps the corresponding flag registers). This either flips the sign of $b_{\timeconfig_{i}}$ twice (if this gate is one of $rect(p,j)$) or not at all (if it is not). Therefore, $b_{\timeconfig_{i+1}} = b_{\timeconfig_i}$. This proves that $b_{\timeconfig} = 1$ and that $S_{clock}[rect(p,j)]$ is a stabilizer as
\begin{equation}
  S_{clock}[rect(p,j)] \ket{\psi} = \frac{1}{\sqrt{|\cT|}} \sum_{\timeconfig \in \cT} S_{clock}[rect(p,j)] \ket{\timeconfig} \otimes \ket{\psi_\timeconfig} = \ket{\psi}.
\end{equation}

A similar argument can be made showing that $S_{flag}$ is also a stabilizer by arguing that either pair of flag qubits must be flipped or none are flip when transitioning from a valid time configuration to the next. Therefore, $S_{flag}$ is also a stabilizer.

\end{proof}

Let $\cS$ be the closure of the following set under product,
\begin{equation}
\left\{I, S_{flag} \right\} \cup \bigcup_{p = 1}^n \bigcup_{j = 1}^X S_{clock}[rect(p,j)].
\end{equation}
Every element of $\cS$ is a stabilizer.

\subsection{Locally detecting errors}
In this section, we argue that there is a set of local operators that can detect, with high probability, any Pauli error in $\cP_N \setminus \cS$.

Our argument will rely on a structural property of the Brown-Fawzi circuit $C$ that holds with high probability (when the circuit is sampled according to the random Clifford model described in Section~\ref{sec:prelim}). 
\begin{definition}
A depth $D$ circuit on $n$ qubit is \emph{nice} if for every qubit $p \in \{1,\ldots,n\}$, there exists layers $1 \leq t,t' \leq D$ such that: %
  \begin{enumerate}
    \item The two-qubit gate acting on $p$ in layer $L_t$ is $H \otimes I$, where the Hadamard gate $H$ acts on qubit $p$, and
    
    \item The two-qubit gate acting on $p$ in layer $L_{t'}$ is $S \otimes I$, where the phase gate $S = \begin{pmatrix} 1 & 0 \\ 0 & i \end{pmatrix}$ acts on qubit $p$.
  \end{enumerate}
\end{definition}

\begin{fact}[\cite{brown2013short}]
\label{prop:nice}
  The Brown-Fawzi encoding circuit $C$ sampled according to the random Clifford model described in Section~\ref{sec:prelim} is nice with probability at least $1 - 2^{-\Omega(\log^2 n)}$.
\end{fact}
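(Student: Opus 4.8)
The plan is to read the ``nice'' property directly off the structure of the decomposed Brown--Fawzi circuit, using the independence of the circuit layers together with a union bound over qubits. First I would fix, once and for all, a decomposition of each two-qubit Clifford unitary into the generating set $\{I,H,S,\mathrm{CNOT}\}$ in which every elementary move acts on a single qubit (tensored with the identity on its partner) or is a $\mathrm{CNOT}$. Since $\{I,H,S,\mathrm{CNOT}\}$ generates the two-qubit Clifford group, such a decomposition exists, it uses only $O(1)$ elementary moves per two-qubit Clifford, and it is a \emph{fixed} finite function on the (finite) two-qubit Clifford group. With this decomposition in hand, the circuit $C_1$ obtained from $C_0$ by replacing each two-qubit Clifford gate by its decomposition has depth $D_1=\Theta(D_0)=\Theta(\log^3 n)$, and in any layer the gate acting on a pair $(p,q)$ is one of $H\otimes I$, $I\otimes H$, $S\otimes I$, $I\otimes S$, or $\mathrm{CNOT}$ (up to trivial identity factors). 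Thus the ``nice'' condition becomes a literal statement about which of these appear on each qubit, and the padding and bitonic-sorting layers added later only insert identity and $\mathrm{SWAP}$ gates, so they neither create nor destroy the relevant $H\otimes I$ and $S\otimes I$ layers.

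Next I would establish a constant lower bound at the level of a single layer. Fix a qubit $p$ and an original layer index $j\in\{1,\dots,D_0\}$. In layer $j$ of the Brown--Fawzi random Clifford model, $p$ is matched with some partner and a uniformly random two-qubit Clifford $V_j$ is applied to that pair, using randomness independent of all other layers. Let $A^{(p)}_j$ be the event that the fixed decomposition of $V_j$ contains at least one $H\otimes I$ move acting on $p$. I claim $\Pr[A^{(p)}_j]\ge\delta_H$ for an absolute constant $\delta_H>0$, and independently of which side of the pair $p$ occupies. To see this, restrict to the constant-probability event that $V_j$ is a tensor product of single-qubit Cliffords: the ``zero-$\mathrm{CNOT}$'' elements form a subset of size $24^2=576$ out of $11520$, so this has probability $576/11520 = 1/20$. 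Conditioned on it, the single-qubit Clifford applied to $p$ is uniform over the $24$-element single-qubit Clifford group (and independent of the one applied to the partner), and its canonical word in $\{H,S\}$ contains an $H$ with probability at least $1/24$ — for instance, that word equals the one-letter word $H$ with exactly that probability, under any sensible choice of canonical form. Hence $\delta_H\ge 1/480$, and by the identical argument $\Pr[B^{(p)}_j]\ge\delta_S\ge 1/480$, where $B^{(p)}_j$ is the analogous event for an $S\otimes I$ move on $p$. (Any other fixed decomposition works just as well; all that matters is that it is a fixed map and that a positive fraction of two-qubit Cliffords yield the desired elementary move on a prescribed qubit.)

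Finally I would assemble the bound. For a fixed $p$, the events $\{A^{(p)}_j\}_{j=1}^{D_0}$ are independent, so the probability that \emph{no} layer produces an $H\otimes I$ move on $p$ is at most $(1-\delta_H)^{D_0}=2^{-\Omega(D_0)}=2^{-\Omega(\log^3 n)}$, and similarly for $S\otimes I$. A union bound over the $n$ qubits and the two gate types gives total failure probability at most $2n\cdot 2^{-\Omega(\log^3 n)} = 2^{-\Omega(\log^3 n)} \le 2^{-\Omega(\log^2 n)}$ for all large $n$, since $n=2^{\log n}$ is subexponential in $\log^3 n$. Therefore, with probability at least $1-2^{-\Omega(\log^2 n)}$, every qubit sees both an $H\otimes I$ layer and an $S\otimes I$ layer, i.e.\ the circuit is nice, which is the assertion of the Fact (and is essentially the content of the relevant structural claim in~\cite{brown2013short}, repackaged). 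The only genuinely delicate point is bookkeeping rather than mathematics: one must commit to the single-qubit-at-a-time decomposition so that $H\otimes I$ and $S\otimes I$ appear \emph{verbatim} as layer gates (rather than hidden inside a $g_1\otimes g_2$ block), and one must verify that the per-layer constant lower bound is the same regardless of which side of its pair $p$ lies on — both of which are routine once the decomposition is pinned down.
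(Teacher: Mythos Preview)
The paper does not prove this statement at all; it is stated as a \textbf{Fact} with a bare citation to~\cite{brown2013short}, so there is no argument in the paper to compare against. Your proposal supplies a correct direct proof: fix a single-qubit-at-a-time decomposition of two-qubit Cliffords, observe that each original layer independently places an $H\otimes I$ (resp.\ $S\otimes I$) on a prescribed qubit with some absolute constant probability, and conclude by independence over the $D_0=\Theta(\log^3 n)$ layers and a union bound over the $n$ qubits. The per-layer lower bound you give (restrict to tensor-product Cliffords, then to the event that the factor on $p$ is literally $H$) is crude but valid, and your caveat about committing to a decomposition in which single-qubit moves appear verbatim as $H\otimes I$ or $S\otimes I$ layers is exactly the bookkeeping point that makes the ``nice'' definition line up with the decomposed circuit $C_1$. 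In fact your argument yields failure probability $2^{-\Omega(\log^3 n)}$, which is stronger than the $2^{-\Omega(\log^2 n)}$ asserted in the Fact.
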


\noindent The following fact can be verified via a simple computation.
\begin{fact}
\label{prop:HS_real}
Each of the following unitary operators has eigenvalues $i$ and $-i$:
\begin{enumerate}
  \item $H \sigma_X H \sigma_X$
  \item $H \sigma_Z H \sigma_Z$
  \item $S^\dagger \sigma_Y^\dagger S \sigma_Y$
\end{enumerate}
Here, $H$ is the Hadamard gate, and $S$ is the phase gate.
\end{fact}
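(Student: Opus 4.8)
The plan is to reduce each of the three operators to $i$ times a single Pauli matrix by using the standard conjugation rules for the Clifford gates $H$ and $S$, and then to observe that any operator of the form $\zeta\sigma$ with $\zeta\in\{i,-i\}$ and $\sigma$ a Pauli matrix has eigenvalues $i$ and $-i$ (because $\sigma$ has eigenvalues $+1$ and $-1$).

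Concretely, I would first recall the conjugation identities: since $H=H^\dagger$ and $H^2=I$ we have $H\sigma_X H=\sigma_Z$ and $H\sigma_Z H=\sigma_X$, and with the convention $S=\mathrm{diag}(1,i)$ fixed in the statement one has $S\sigma_X S^\dagger=\sigma_Y$, hence $S^\dagger\sigma_Y S=\sigma_X$. Using also $\sigma_Y^\dagger=\sigma_Y$ together with the Pauli multiplication rules $\sigma_Z\sigma_X=i\sigma_Y$, $\sigma_X\sigma_Z=-i\sigma_Y$, $\sigma_X\sigma_Y=i\sigma_Z$, the three operators collapse: for item (1), $H\sigma_X H\sigma_X=(H\sigma_X H)\sigma_X=\sigma_Z\sigma_X=i\sigma_Y$; for item (2), $H\sigma_Z H\sigma_Z=\sigma_X\sigma_Z=-i\sigma_Y$; and for item (3), $S^\dagger\sigma_Y^\dagger S\sigma_Y=(S^\dagger\sigma_Y S)\sigma_Y=\sigma_X\sigma_Y=i\sigma_Z$. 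In each case the eigenvalues are exactly $i$ and $-i$. Equivalently, one may verify directly that each displayed $U$ is unitary, traceless, and of determinant $1$, so its characteristic polynomial is $\lambda^2+1$; or simply multiply the $2\times 2$ matrices, e.g. $H\sigma_X=\tfrac{1}{\sqrt2}\left(\begin{smallmatrix}1&1\\-1&1\end{smallmatrix}\right)$ gives $H\sigma_X H\sigma_X=\left(\begin{smallmatrix}0&1\\-1&0\end{smallmatrix}\right)$, and analogously for the other two.

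This is entirely routine and there is no genuine obstacle; the only point requiring care is bookkeeping of phases and signs in the Pauli product identities and in the orientation convention $S\sigma_X S^\dagger=\sigma_Y$ (rather than $-\sigma_Y$), which is pinned down precisely by the choice $S=\mathrm{diag}(1,i)$ used in the statement. I would present the one-line reduction to $\pm i\sigma_Y$ and $i\sigma_Z$ as the proof, since it makes the conclusion transparent and avoids writing out the matrix products.
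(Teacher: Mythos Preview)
Your argument is correct and is precisely the ``simple computation'' the paper alludes to; the paper does not spell out a proof beyond that remark, so your reduction of each operator to $\pm i$ times a Pauli matrix via the Clifford conjugation rules is exactly the intended verification.
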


We now proceed to prove the local error detection property of the spacetime code.

\begin{theorem}
\label{thm:local_detection}
  Suppose the Brown-Fawzi encoding circuit defining the spacetime code $\cC$ is nice. Then there exists a collection $\cD$ of $\polylog(N)$-local projectors satisfying the following properties:
  \begin{enumerate}
    \item Each projector $\Pi \in \cD$ acts on $10$ qubits of the code space, and acts on $s = \polylog(N)$ ancilla qubits initialized in the $\ket{0}$ state. 
    \item For all $n$-qubit states $\ket{\psi}$, we have that $\Pi \ket{\psi} \ket{0^s} = 0$ for all $\Pi \in \cD$ if and only if $\ket{\psi}$ is a codeword in the spacetime code $\cC$.
    \item For all Pauli channels $\cE$, for all codewords $\ket{\psi} \in \cC$, there exists a projector $\Pi \in \cD$ such that
    \begin{equation}
      \Tr \Paren{\Pi \, \Paren{\cE(\psi) \otimes \ketbra{0^s}{0^s}} } \geq (1 - \alpha)(1 - 2^{-\polylog(N)})
    \end{equation}
    where $\psi = \ketbra{\psi}{\psi}$ and $\alpha = \sum_{P \in \cS} c_P$ is the weights of the channel $\cE$ on the Pauli stabilizers in $\cS$. %
    \end{enumerate}
    Furthermore, there exists a measurement $M$, implementable by a circuit of $\polylog(N)$ depth acting on $\mathcal{O}(N \polylog(N))$ qubits, such that for all Pauli channels $\cE$ and for all codewords $\ket{\psi} \in \cC$
    \begin{equation}
      \Tr \Paren{ M \, \Paren{ \cE(\psi) \otimes \ketbra{0^{Ns}}{0^{Ns}} }} \geq (1 - \alpha) (1 - 2^{-\polylog(N)}).
    \end{equation}
\end{theorem}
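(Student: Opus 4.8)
The plan is to build the collection $\cD$ out of the individual local terms of the code Hamiltonian $H = H_{clock} + H_{init} + H_{prop} + H_{causal}$, and to build the global measurement $M$ by additionally ``uncomputing'' the clock registers. Since $H_{clock}$, $H_{init}$, and $H_{causal}$ are sums of products of computational-basis projectors, each such term becomes a one-bit projective measurement by copying the relevant $O(1)$ clock/flag/data bits onto ancillas. Each propagation term $H_t[p,q]$ is a rank-deficient projector on at most $10$ code qubits; to it I would attach $s = \polylog(N)$ ancillas together with a depth-$\polylog(N)$ gadget that measures $H_t[p,q]$ and then amplifies the ``violated'' outcome, and I would only ever use those $H_t[p,q]$ sitting in layers where the niceness hypothesis supplies a single-qubit $H$ or $S$ gate. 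For $M$: because $C_2$ has depth $\polylog(N)$, the rotation $W = \sum_{\timeconfig} U(\timeconfig \leftarrow 0)\,\ketbra{\timeconfig}{\timeconfig}$ from Section~\ref{sec:analysis} is itself a depth-$\polylog(N)$ circuit controlled on the clock registers, and after applying $W^\dagger$ the data register of any codeword becomes exactly $\ket{\xi}\ket{0^{n-k}}$. Thus $M$ applies $W^\dagger$, measures the ancilla data qubits $\sS_{k+1},\dots,\sS_n$, and runs phase estimation of $\mathcal{L}_{prop} = W^\dagger H_{prop} W$ to additive precision $1/\polylog(N)$, which is a depth-$\polylog(N)$ operation since $e^{i \mathcal{L}_{prop} t} = W^\dagger e^{i H_{prop} t} W$. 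Items~(1) and~(2) of the theorem are then immediate: codewords lie in the common kernel of all $H_i$ so every gadget reports ``no error,'' while the $H_i$ together cut out exactly $\cC$, so any state outside $\cC$ triggers some gadget.

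The core of the proof is to classify which $P \in \cP_N$ can escape detection, and here I would write $P = P_{\sS} \otimes P_{\sC} \otimes P_{\sF}$ and argue case by case. If $P_{\sC}$ acts as $\sigma_X$ or $\sigma_Y$ on some clock qubit, then on every time configuration in which that qubit and its successor sit in the interior of the domain wall -- a $1 - O(1/\polylog(N))$ fraction of configurations -- the flip creates a forbidden $01$ pattern caught by the corresponding $H_{clock}$ term; a symmetric argument with $H_{causal}$ and $H_{clock}$ handles $P_{\sF}$ unless $P_{\sF} \in \{I, \bigotimes_p \sigma_Z(\sF_p)\}$. If $P_{\sC}$ is a pure-$\sigma_Z$ clock Pauli that is not a product of the generators $S_{clock}[rect(p,j)]$, then a parity argument shows it is ``odd'' on some rectangle $rect(p,j)$; choosing, inside the rank-$2$ bitonic sub-block attached to that rectangle, a layer where niceness gives an $H$ or $S$ gate, the conjugated propagation term on the active transitions contains one of $H\sigma_Z H\sigma_Z$, $H\sigma_X H\sigma_X$, $S^\dagger \sigma_Y^\dagger S \sigma_Y$, which by Fact~\ref{prop:HS_real} has eigenvalues $\pm i$ and is therefore never $\pm 1$, forcing energy exactly $1$ on those transitions and triggering the propagation gadget. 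The remaining case is $P = P_{\sS}$ times a clock/flag stabilizer, so $P\ket{\psi} = P_{\sS}\ket{\psi}$ up to a known stabilizer; applying $W^\dagger$ inside $M$ turns $P_{\sS}$ into a configuration-dependent Pauli that, on the $\geq 1-\varepsilon$ fraction $\cT_{comp}$ of Lemma~\ref{lem:overlap}, equals the Brown-Fawzi conjugate $C_0^\dagger P_{\sS} C_0$. If that conjugate has an $X$ or $Y$ component on an ancilla data qubit $i > k$, then $M$'s measurement of $\sS_i$ detects $P$; otherwise $C_0^\dagger P_{\sS} C_0$ is a Brown-Fawzi logical operator or stabilizer, so $P$ stabilizes $\cC$, i.e.\ $P \in \cS$. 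This simultaneously shows that $\cP_N \setminus \cS$ is exactly the set of detectable Paulis and that $\cS$ is the full Pauli stabilizer group of the code.

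To pass from single Paulis to a general Pauli channel $\cE = \sum_P c_P\, P \psi P^\dagger$, I would note that the terms with $P \in \cS$ contribute $P\psi P^\dagger = \psi$, which every $\Pi \in \cD$ and $M$ annihilate; hence those terms contribute $0$ to the detection probability, and the target $(1-\alpha)(1 - 2^{-\polylog(N)})$ only concerns the non-stabilizer weight $1 - \alpha = \sum_{P \notin \cS} c_P$. Grouping the non-stabilizer Paulis according to the gadget that first detects them, and using that no time configuration belongs to more than $\Theta = k$ of the decomposition blocks of Section~\ref{sec:analysis} in order to control overlaps, I would pick the gadget $\Pi$ whose aggregate detection probability meets the bound, or simply invoke $M$, which detects all non-stabilizer components at once; the per-Pauli detection probabilities from the previous paragraph enter through the amplification step.

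The step I expect to be the main obstacle is the amplification gadget and the resulting quantitative $1 - 2^{-\polylog(N)}$ bound. For any fixed local check, only an $\Omega(1/\polylog(N))$ fraction of the history-state branches are ``active,'' so a naively measured violated term fires with probability only $\Omega(1/\polylog(N))$; boosting this to $1 - 2^{-\polylog(N)}$ using only $\polylog(N)$ ancillas and local resources (for $\cD$), or depth $\polylog(N)$ (for $M$), is delicate. In particular one must verify that the Paulis whose overlap with the code-space defect is genuinely tiny -- for instance $\sigma_Z$ on a clock qubit near the endpoints of its domain-wall clock -- are exactly those already absorbed into $\cS$, or absorbed into the $(1-\alpha)$ slack for channels, rather than genuine counterexamples. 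A secondary obstacle is making the classification exhaustive: the parity bookkeeping for the rectangles $rect(p,j)$ and the decomposition of $C_0^\dagger P_{\sS} C_0$ both rely on the niceness hypothesis (valid with probability $1 - 2^{-\Omega(\log^2 n)}$ by Fact~\ref{prop:nice}) and on the Clifford structure of $C_0$.
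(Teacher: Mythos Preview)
Your overall architecture (take $\cD_0$ to be the local Hamiltonian terms, do a Pauli case analysis to get $1/\polylog(N)$ energy, then amplify) matches the paper, but the case analysis has a genuine gap that breaks Item~3.

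\textbf{The role of niceness is misplaced.} You invoke niceness for the pure-$\sigma_Z$ clock case and not for the data case, which is backwards. A $\sigma_Z$ on a clock qubit only contributes a sign $b_{\timeconfig}\in\{\pm1\}$ to each branch; it never produces an operator like $H\sigma_Z H\sigma_Z$ on the \emph{data} register, so your appeal to Fact~\ref{prop:HS_real} there does not make sense. The paper handles this case without niceness: if $P_{clock}$ has a $\sigma_Z$ on $\sC_{p,j}$ but identity on $\sC_{q,j}$ for some $(p,q)$ interacting at time $j$, then for $\timeconfig\to_{p,q}\timeconfig'$ one gets $b_{\timeconfig}b_{\timeconfig'}=-1$ while $\ip{\psi_{\timeconfig'}}{\psi_{\timeconfig}}=1$, so the propagation term $H_j[p,q]$ fires. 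If no such mismatch exists, $P_{clock}$ is forced to be a product of the $S_{clock}[rect(p,j)]$ and hence lies in $\cS$.

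\textbf{The data-Pauli case is not handled for $\cD$.} For $P_{data}\neq I$ you only give an argument inside $M$ (apply $W^\dagger$, measure ancilla data qubits). This does not produce a projector $\Pi\in\cD$ acting on $10$ code qubits, so Item~3 is unproved in this case. This is exactly where niceness is essential in the paper: for the qubit $p$ on which $P_{data}$ acts as $\sigma\in\{\sigma_X,\sigma_Y,\sigma_Z\}$, pick the layer $t$ where $U_t[p,q]=H\otimes I$ (or $S\otimes I$ if $\sigma=\sigma_Y$). Then the propagation check at that gate asks whether $\bra{\psi_{\timeconfig}}H\sigma^\dagger H\sigma\ket{\psi_{\timeconfig}}=b_{\timeconfig}b_{\timeconfig'}\in\{\pm1\}$, which is impossible by Fact~\ref{prop:HS_real}. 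This gives a local $\Pi\in\cD_0$ with energy $\geq 1/D^2$.

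\textbf{Your $M$ argument also has a logical error.} The dichotomy ``either $C_0^\dagger P_{\sS}C_0$ has an $X/Y$ on an ancilla, or $P$ stabilizes $\cC$'' is false: if $C_0^\dagger P_{\sS}C_0$ is a Brown--Fawzi logical operator (or has only $Z$'s on ancillas), it maps $\ket{\xi}\ket{0^{n-k}}$ to $\ket{\xi'}\ket{0^{n-k}}$, yet $P_{\sS}$ still fails to stabilize the spacetime code because $U(\timeconfig)^\dagger P_{\sS}U(\timeconfig)$ is not constant across $\timeconfig$. The paper's local argument above is what actually shows every nonidentity $P_{data}$ is detected, and hence that $\cS$ is the full Pauli stabilizer.

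Finally, the paper's $M$ is much simpler than your $W^\dagger$-plus-phase-estimation scheme: since each qubit participates in $\polylog(N)$ terms, the amplified projectors in $\cD$ can be partitioned into $\polylog(N)$ layers acting on disjoint qubits and measured sequentially. For amplification itself, the paper uses Marriott--Watrous in-place amplification (Lemma~\ref{lem:mw}), which is the clean way to turn each $\Omega(1/\polylog N)$ detection probability into $1-2^{-\polylog(N)}$ with $\polylog(N)$ ancillas.
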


\begin{proof}

We first define a set of projectors $\cD_0$ that \emph{weakly} detect errors, in the sense that for every Pauli channel $\cE$, for every spacetime codeword $\ket{\psi}$, there is a projector $\Pi \in \cD_0$ that has expectation value at least $(1 - \alpha)/\polylog(N)$ on $\cE(\psi)$. We will then boost the set $\cD_0$ into the desired set $\cD$ that detects errors with high probability, using QMA-amplification techniques.

\paragraph{A weak set of detector projections} The weak detection set $\cD_0$ will simply be the set of local terms of the spacetime circuit Hamiltonian defining the spacetime code. We first show that for each member $P$ of the Pauli group $\cP_N$ that is not a Pauli stabilizer in $\cS$, there exists a projector $\Pi \in \cD_0$ such that
\begin{equation}
  \Tr \Paren{ \Pi \, P \psi P^\dagger } \geq \frac{1}{\polylog N}.
\end{equation}

Fix a $P \in \cP_N \setminus \cS$, and fix a spacetime codeword $\ket{\psi} \in \cC$, which we can write as
  \begin{equation}
    \ket{\psi} = \frac{1}{\sqrt{|\cT|}} \sum_{\timeconfig \in \cT} \ket{\timeconfig} \otimes \ket{\psi_\timeconfig}.
  \end{equation}
We divide our analysis into several cases. Write $P = P_{flag} \otimes P_{clock} \otimes P_{data}$, where $P_{flag}$ acts on the $\{ \sF_p \}$ registers, $P_{clock}$ acts on the $\{ \sC_{p,j} \}$ registers and $P_{data}$ acts on the $\{ \sS_p \}$ registers. 

\begin{description}

\item[Case 1.] Suppose that $P_{clock}$ has a tensor factor that is either $\sigma_X$ or $\sigma_Y = -i \sigma_Z \sigma_X$. In other words, there exists a data qubit $p \in \{1,\ldots,n\}$ and an associated clock qubit $j \in \{1,\ldots,X\}$ such that the tensor factor of $P$ corresponding to the register $\sC_{p,j}$ (i.e. the part of $P$ acting on the $j$'th clock qubit of $p$'s clock) is one of $\{\sigma_X,\sigma_Y\}$. We can write $P = P_Z P_X$ where $P_Z$ consists of only $\sigma_Z$ and identity factors and $P_X$ consists only $\sigma_X$ and identity factors. By assumption, $P_X$ has at least one $\sigma_X$ acting on a clock qubit.

  Let $\rho = \Tr_{\comp{\sC_p}}(P\psi P^\dagger)$ denote\footnote{Here and throughout the paper, we use the notation $\comp{\sR}$ to refer to all registers excluding $\sR$.} the reduced density matrix of $P\ket{\psi}$ on the clock register $\sC_p$. Notice that $\rho = P_p \Tr_{\comp{\sC_p}}(\psi) P_p^\dagger$ where $P_p$ is the restriction of $P$ to the qubits of the $\sC_p$ register. We now appeal to the following Lemma, which we prove in the Appendix as Lemma \ref{lem:circularDensity}:
  
  \begin{lemma}
  \label{lem:uniform_clock}
    The marginal distribution of the clock register $\sC_p$ of any data qubit $p$ in a spacetime codeword is uniform over the $X+1 = D/2$ states
    \begin{equation}
      \ket{0^X},\ket{10^{X-1}},\ldots,\ket{1^X}.
    \end{equation}
  \end{lemma}
  
  Lemma~\ref{lem:uniform_clock} implies that $\Tr_{\comp{\sC_p}}(\psi)  = \frac{1}{X+1} \sum_{t=0}^X \ketbra{1^t 0^{X-t}}{1^t 0^{X-t}}$. Since this is a convex combination over standard basis states, we have that
  \begin{equation}
    \rho = P_X \Paren{\frac{1}{X+1}\sum_{t=0}^X \ketbra{1^t 0^{X-t}}{1^t 0^{X-t}} } P_X^\dagger.
  \end{equation}
  It is easy to see that for all $P_X \neq I$, we have that there is at least one $0 \leq t \leq X$ such that $P_X \ket{1^t 0^{X-t}}$ is not a valid clock state -- that is, there is a location $j \in \{1,\ldots,X\}$ such that 
  \begin{equation}
    \Tr \Paren{ \ketbra{01}{01}_{\sC_{p,j} \sC_{p,j+1}} \, P_X \ketbra{1^t 0^{X-t}}{1^t 0^{X-t}} P_X^\dagger} = 1.
  \end{equation}
  Notice that the projector $\Pi_j = \ketbra{01}{01}_{\sC_{p,j}\sC_{p,j+1}}$ is precisely one of the terms in the spacetime Hamiltonian (see~\eqref{eq:clock}). Thus we have that
  \begin{equation}
    \Tr(\Pi_j \, P \psi P^\dagger) = \Tr( \Pi_j \, \rho) \geq \frac{2}{D} = \Omega \Paren{\frac{1}{\log^5 N}}.
  \end{equation} 
  
\item[Case 2.] Suppose that $P_{clock}$ only has $\sigma_Z$ or identity factors and $P_{flag}$ has a tensor factor that is either $\sigma_X$ or $\sigma_Y = -i\sigma_Z \sigma_X$. In other words, there exists a data qubit $p \in \{1,\ldots,n\}$ such that the associated tensor factor of $P$ corresponding to the register $\sF_p$ is one of $\{\sigma_X,\sigma_Y\}$. We can write $P_{flag} \otimes P_{clock} = P_Z P_X$ where $P_Z$ consists of (up to multiplication by $\{1,-1,i,-i\}$) only $\sigma_Z$ and identity factors and $P_X$ consists only $\sigma_X$ and identity factors. By assumption, $P_X$ has at least one $\sigma_X$ acting on a flag qubit.

  \begin{description}
    
  \item[Case 2.1.] We first consider a subcase that $P_X$ includes as a factor the operator
  \begin{equation}
    T_{flag} = \bigotimes_p \sigma_X(\sF_p).
  \end{equation}
  In other words, $P_X$ flips every flag qubit. This maps every valid time configuration $\timeconfig = (t_1,\ldots,t_n)$ to a ``mirror'' time configuration $\comp{\timeconfig} = (\comp{t}_1,\ldots,\comp{t}_n)$ where $\comp{t}_j = D-1 - t_j$ according to the mapping described in~\eqref{eq:time_register}. Mirror time configurations are also valid time configurations (i.e. they satisfy the causality constraints of the spacetime Hamiltonian). %
  
  We argue that these mirror time configurations, combined with the state of the data qubits, cannot satisfy the propagation constraints of the spacetime Hamiltonian. To see this, suppose that the circuit $C$ had the following subcircuit $J$ appended to both the beginning and end of the circuit $C$. The subcircuit $J$ consists of two bitonic block architectures $\cB_\ell$, wherein each block all the gates are identity gates except for the last layer, which is populated with $\sigma_X \otimes \sigma_X$ gates acting on each neighboring pair of qubits. Thus the subcircuit $J$ is equivalent to the identity circuit because the two layers of $\sigma_X$ gates cancel each other out. Appending $J$ to the beginning and end of the circuit $C$ yields a circuit with a small increase in depth, and it can be checked that this does not qualitatively affect the analysis of the spacetime Hamiltonian. Thus we will assume that our circuit $C$ has this structure.
  
  The circuit $C$ acts on $n$ qubits, $n - k$ of which are ancilla qubits that are initialized in all the all zeroes state. Let $p$ denote an ancilla qubit. Let $\timeconfig = (t_1,\ldots,t_n)$ be any valid time configuration such that $t_p = \ell-1$. Since this time is before the first row of $\sigma_X$ gates in the circuit $C$, qubit $p$ in $\ket{\psi_\timeconfig}$ is in the state $\ket{0}$. The $\sigma_X$ gates get applied in the transition from time $\ell - 1$ to $\ell$, so qubit $p$ in $\ket{\psi_\timeconfig'}$ is in the state $\ket{1}$, where $\timeconfig'$ is the time configuration obtained from $\timeconfig$ by applying the two-qubit $\sigma_X \otimes \sigma_X$ gate to qubit $p$ and its neighbor. 
  
  Now consider the mirror time configurations $\comp{\timeconfig} = (\comp{t}_1,\ldots,\comp{t}_n)$, and $\comp{\timeconfig}' = (\comp{t}_1',\ldots,\comp{t}_n')$. We have that $\comp{t}_p = D - 1 - t_p = D - \ell - 1$ and $\comp{t}_p' = D - 1 - t_p' = D - \ell - 2$. The gate on qubit $p$ corresponding between times $D - \ell - 2$ and $D - \ell - 1$ is an identity gate (because we're assuming that the circuit $C$ has the subcircuit $J$ at the end). %
  
  Let $\Pi = H_t[p,q]$ for $t = D - \ell - 1$. This projector acts as the identity on the $\sS$ register. Observe that
  \begin{align}
    P \ket{\psi} &= \frac{1}{\sqrt{|\cT|}} \sum_{\timeconfig \in \cT} P_Z \ket{\comp{\timeconfig}} \otimes P_{data} \ket{\psi_{\timeconfig}} \\
    &= \frac{1}{\sqrt{|\cT|}} \sum_{\timeconfig \in \cT} \ket{\comp{\timeconfig}} \otimes b_\timeconfig P_{data} \ket{\psi_{\timeconfig}}
  \end{align}
  for some $b_\timeconfig \in \{\pm 1\}$. 
  
  In what follows, we use the notation $\timeconfig[p,q]$ to denote the pair $(t_p,t_q)$, and use $\timeconfig \to_{p,q} \timeconfig'$ to indicate that the time configuration $\timeconfig'$ is $\timeconfig$ updated by the gate $U_t[p,q]$. We now calculate the expectation 
  \begin{align}
    &\bra{\psi} P^\dagger \Pi P \ket{\psi} \\
    &= \frac{1}{|\cT|} \sum_{\substack{\timeconfig \in \cT : \\ \comp{\timeconfig}[p,q] = (t,t) \\ \comp{\timeconfig} \to_{p,q} \comp{\timeconfig}'}} 
    \Paren{ b_\timeconfig \bra{\psi_{\timeconfig}} + b_{\timeconfig'} \bra{\psi_{\timeconfig'}} } P_{data}^\dagger \Pi P_{data}  \Paren{ b_\timeconfig \ket{\psi_{\timeconfig}} + b_{\timeconfig'} \ket{\psi_{\timeconfig'}} } \\
    &= \frac{1}{|\cT|} \sum_{\substack{\timeconfig \in \cT : \\ \comp{\timeconfig}[p,q] = (t,t) \\ \comp{\timeconfig} \to_{p,q} \comp{\timeconfig}'}}
    \Paren{1 - b_\timeconfig b_\timeconfig' \mathrm{Re} \, \ip{\psi_{\timeconfig}}{\psi_{\timeconfig'}}  }.
  \end{align}
  Observe that when $\comp{\timeconfig}[p,q] = (t,t)$, we have that $\timeconfig[p,q] = (D - 1 -t,D-1-t) = (\ell,\ell)$, and $\timeconfig'[p,q] = (\ell-1,\ell-1)$. But from the reasoning above, we have that $\ket{\psi_\timeconfig}$ and $\ket{\psi_{\timeconfig'}}$ are orthogonal, because the state of qubit $p$ of the two vectors are orthogonal. Therefore 
  \begin{equation}
    \bra{\psi} P^\dagger \Pi P \ket{\psi} = \sum_{\substack{\timeconfig \in \cT : \\ \comp{\timeconfig}[p,q] = (t,t) \\ \comp{\timeconfig} \to_{p,q} \comp{\timeconfig}'}} \frac{1}{|\cT|} .
  \end{equation}
  This is equal to the probability that a uniformly random time configuration $\timeconfig$ is such that $\comp{\timeconfig}[p,q] = (t,t)$. By Lemma~\ref{lem:uniform_clock}, this is at least $1/D^2 = 1/\polylog(N)$.

    \item[Case 2.2.] The second subcase is that there is at least one flag qubit $\sF_p$ such that $P_X$ is identity on it. We follow a similar line of reasoning as in Case 1.
        
    Let $(p,q)$ be a pair of data qubits such that $P_X$ acts as the identity on $\sF_p$ but has a $\sigma_X$ acting on $\sF_q$. Let $t^* = \lceil X/2 \rceil$. Let $\timeconfig = (t_1,\ldots,t_n)$ be any time configuration where $t_p = t_q = t^*$. 
    
  Let $\ket{\timeconfig'} = P_X \ket{\timeconfig}$. Then it must be that $\timeconfig' \notin \cT$. In other words, it is not a valid time configuration. This is because if $\timeconfig' = (t_1',\ldots,t_n')$, then $t_p' = t_p = t^*$, yet $t_q' = 2X + 1 - t_q > X + 1$. In particular, $f_q(\timeconfig') = 1$ and $f_p(\timeconfig') = 0$, which violates the causality constraints on the set of time configurations. In other words, the ``membrane'' described by $\timeconfig'$ is broken between qubits $p$ and $q$. Let $\Pi$ be the component of $H_{causal}[p,q]$ verifying $t_p = t^*$ from the spacetime Hamiltonian (see~\eqref{eq:causality_term}). Then we have that $\bra{\timeconfig'} \Pi \ket{\timeconfig'} = 1$. 
  
  Let $\rho = \Tr_{\comp{\sT}}(P \psi P^\dagger)$ denote the reduced density matrix of $P\ket{\psi}$ on the time configuration register $\sT$. Notice that $\rho = P^{time} \Tr_{\comp{\sT}}(\psi) (P^{time})^\dagger$ where $P^{time}$ is the restriction of $P$ to the time configuration register. Since
  \begin{equation}
    \Tr_{\comp{\sT}}(\psi) = \frac{1}{|\cT|} \sum_{\timeconfig} \ketbra{\timeconfig}{\timeconfig}
  \end{equation}
  is a convex combination of classical states, and the $P_Z$ operator leaves classical states invariant, we have that $\rho = P_X \Tr_{\comp{\sT}}(\psi) P_X^\dagger$. 
  
  From a similar argument to that in Case 1, we obtain that the probability of sampling a time configuration $\timeconfig = (t_1,\ldots,t_n)$ such that $t_p = t_q = t^*$ is $1/D^2$. Thus
  \begin{equation}
    \Tr(\Pi P \psi P^\dagger) = \Tr(\Pi \rho) \geq \frac{1}{D^2}.
  \end{equation}

  \end{description}
  
  \item[Case 3.] Now suppose that $P_{flag} \otimes P_{clock}$ only has $\sigma_Z$ or identity factors. This means that for all $\timeconfig \in \cT$, we have $P_{flag} \otimes P_{clock} \ket{\timeconfig} = b_\timeconfig \ket{\timeconfig}$, where $b_\timeconfig \in \{\pm 1\}$. Thus we can write $P \ket{\psi}$ as
  \begin{equation}
    P \ket{\psi} = \frac{1}{\sqrt{|\cT|}} \sum_{\timeconfig \in \cT} \ket{\timeconfig} \otimes b_\timeconfig P_{data} \ket{\psi_\timeconfig}.
  \end{equation}

  \begin{description}
  
  \item[Case 3.1.] First, suppose that $P_{data} \neq I$. Let $p \in \{1,\ldots,n\}$ be a data qubit such that $P_{data}$ is some non-identity Pauli matrix $\sigma \in \{ \sigma_X ,\sigma_Y, \sigma_Z\}$ on $\sS_p$. If $\sigma = \sigma_Y$, let $t$ denote a layer and $q \neq p$ denote a qubit such that $U_t[p,q]$ in the Brown-Fawzi circuit is $S \otimes I$, with $S$ acting on $p$; otherwise, let $t$ and $q$ be such that $U_t[p,q] = H \otimes I$. Such $t,q$ exist by Proposition~\ref{prop:nice}. Without loss of generality suppose that $\sigma = \sigma_X$.

  Consider the projector $\Pi = H_t[p,q]$ in the spacetime circuit Hamiltonian, which is one of the projectors in the set $\cD_0$. Let $\timeconfig = (t_1,\ldots,t_n)$ be a time configuration such that $t_p = t_q = t$. Let $\timeconfig' = (t_1',\ldots,t_n')$ be the same as $\timeconfig$ except $t_p' = t_q' = t+1$ (i.e. it is the configuration after gate $U_t[p,q]$ is applied). Thus $\ket{\psi_{\timeconfig'}} = H \ket{\psi_\timeconfig}$. Then notice that
  \begin{align}
    &\Paren{ b_{\timeconfig} \bra{\timeconfig} \otimes \bra{\psi_\timeconfig}+ b_{\timeconfig'} \bra{\timeconfig'} \otimes \bra{\psi_{\timeconfig'}}  } P_{data}^\dagger \Pi P_{data} \Paren{ b_{\timeconfig} \ket{\timeconfig} \otimes \ket{\psi_\timeconfig}  + b_{\timeconfig'} \ket{\timeconfig'} \otimes \ket{\psi_{\timeconfig'}} } \label{eq:t_H}\\
    &= \Paren{ b_{\timeconfig} \bra{\timeconfig} \otimes \bra{\psi_\timeconfig} + b_{\timeconfig'} \bra{\timeconfig'} \otimes \bra{\psi_{\timeconfig}} H} (I \otimes \sigma^\dagger ) \Pi (I \otimes \sigma) \Paren{ b_{\timeconfig} \ket{\timeconfig} \otimes \ket{\psi_\timeconfig}  + b_{\timeconfig'} \ket{\timeconfig'} \otimes  H \ket{\psi_{\timeconfig}} }
  \end{align}
  This expectation vanishes if and only if
  \begin{align}
    \bra{\psi_\timeconfig} H \sigma^\dagger H \sigma \ket{\psi_\timeconfig} = b_{\timeconfig} b_{\timeconfig'}.
  \end{align}
  However, Proposition~\ref{prop:HS_real} implies that $\bra{\psi_\timeconfig}H \sigma^\dagger H \sigma \ket{\psi_\timeconfig}$ is either 0 or purely imaginary. This implies that~\eqref{eq:t_H} does not vanish, and furthermore, it is exactly equal to $1$.
  
  Thus we can evaluate the expectation of $\Pi = H_t[p,q]$ with respect to $P \ket{\psi}$. The expectation $\bra{\psi}P^\dagger \Pi P\ket{\psi}$ is equal to
  \begin{equation}
  \begin{aligned}
&\frac{1}{|\cT|} \sum_{\substack{\timeconfig,\timeconfig' \\ \timeconfig[p,q] = (t,t) \\ \timeconfig \to_{p,q} \timeconfig'}} \Paren{ b_{\timeconfig} \bra{\timeconfig} \otimes \bra{\psi_\timeconfig} + b_{\timeconfig'} \bra{\timeconfig'} \otimes \bra{\psi_{\timeconfig'}} }  P_{data}^\dagger \Pi P_{data} \Paren{ b_{\timeconfig} \ket{\timeconfig} \otimes \ket{\psi_\timeconfig}  + b_{\timeconfig'} \ket{\timeconfig'} \otimes \ket{\psi_{\timeconfig'}} } \\
&= \sum_{\timeconfig: \timeconfig[p,q] = (t,t)} \frac{1}{|\cT|}.
  \end{aligned}
  \end{equation}
  This is equal to the probability that a uniformly random time configuration $\timeconfig$ is such that $\timeconfig[p,q] = (t,t)$. By Lemma~\ref{lem:uniform_clock}, this is at least $1/D^2 = 1/\polylog(N)$.  The cases $\sigma = \sigma_Z$ and $\sigma = \sigma_Y$ can be treated as described above by replacing $H$ with $S$.\\
  
  \item[Case 3.2.] Next, we handle the case of $P_{data} = I$. Since $P \notin \cS$, we have that $P_{flag} \neq S_{flag}$.

  \begin{description}

  \item[Case 3.2.1.] First suppose that $P_{clock} \neq I$. 

  \begin{description}

  \item[Case 3.2.1.1.]
  Suppose there exists a pair of data qubits $(p,q)$ and an index $1 \leq j \leq X$ such that 
  \begin{enumerate}
    \item $p$ and $q$ are neighboring qubits in the circuit $C$ at time at a time $t$ such that $j = t$ or $j = 2X + 1 - t$.
    \item $P_{clock}$ has a $\sigma_Z$ factor acting on $\sC_{p,j}$ but has an identity factor acting on $\sC_{q,j}$. 
  \end{enumerate}
  With probability at least $1/D^2$ over a uniformly random time configuration $\timeconfig$ such that $\timeconfig[p,q] = (j,j)$, we have that $b_{\timeconfig} b_{\timeconfig'} = -1$, where $\timeconfig \to_{p,q} \timeconfig'$. Consider the projector $\Pi = H_j[p,q]$ in $\cD_0$. In order for $\bra{\psi}P^\dagger \Pi P \ket{\psi}$ to vanish, we would need that $\ip{\psi_{\timeconfig'}}{\psi_\timeconfig} = -1$ for all such $\timeconfig$ and $\timeconfig'$, which cannot hold. Therefore $\bra{\psi} P^\dagger \Pi P \ket{\psi} \geq 1/\polylog(N)$.

  \item[Case 3.2.1.2.]
  Now assume that for all pairs of data qubits $(p,q)$ and an index $1 \leq j \leq X$ such that
  \begin{enumerate}
    \item $p$ and $q$ are neighboring qubits in the circuit $C$ at time $t$ for $j = t$ or $j = 2X + 1 - t$, then
    \item $P_{clock}$ has a $\sigma_Z$ factor acting on $\sC_{p,j}$ if and only if $P_{clock}$ has a $\sigma_Z$ factor acting on $\sC_{q,j}$. 
  \end{enumerate}

  We argue that if there is a $\sigma_Z$ factor on $\sC_{p,j}$ then there is a $\sigma_Z$ factor on $\sC_{q,j}$ for all $q$ in $rect(p,j)$ (defined earlier). This is because this is precisely the equivalence class of qubits that are neighbors in times involving qubits in the $j$th layer. 

  Therefore, $P_{clock}$ is a product of $S_{clock}[rect(p,j)]$ for some subset of rectangles meaning $P_{clock} \in \cS$ and we can equivalently consider $P \cdot P_{clock}$ as the logical Pauli error.

  \end{description}

  \item[Case 3.2.2.] Finally, suppose that $P_{clock} = I$ but $P_{flag} \neq I$. Since $P_{flag} \neq S_{flag}$, there exists two data qubits $(p,q)$ such that $P_{flag}$ has a $\sigma_Z$ factor acting on $\sF_{p}$ but has an identity factor acting on $\sF_q$. With probability at least $1/D^2$ over a uniformly random time configuration $\timeconfig$ such that $\timeconfig[p,q] = (X,X)$, we have that $b_{\timeconfig} b_{\timeconfig'} = -1$, where $\timeconfig \to_{p,q} \timeconfig'$. This is because it is the transition from time $t = X$ to $t = X+1$ that the flag qubit on qubits $p$ and $q$ switch from $\ket{0}$ to $\ket{1}$. Consider the projector $\Pi = H_X[p,q]$ in $\cD_0$. In order for $\bra{\psi}P^\dagger \Pi P \ket{\psi}$ to vanish, we would need that $\ip{\psi_{\timeconfig'}}{\psi_\timeconfig} = -1$ for all such $\timeconfig$ and $\timeconfig'$, which cannot hold. Therefore $\bra{\psi} P^\dagger \Pi P \ket{\psi} \geq 1/\polylog(N)$.

  \end{description}

  \end{description}
\end{description}

  \paragraph{Boosting the success probability} We now boost our detection set $\cD_0$ to a stronger set of projectors $\cD$ that can detect errors with very high probability. We leverage the following result of Marriott and Watrous~\cite{marriott2005quantum}:
  
  \begin{lemma}[In-place amplification~\cite{marriott2005quantum}]
  \label{lem:mw}
  Let $\delta > 0 $. Let $A$ be a circuit on $r$ qubits along with $s$ ancilla bits. Then there exists a circuit $A'$ on $r$ qubits and $s' = s + \mathcal{O}(\delta^{-3})$ ancillas, that has size at most $\mathcal{O}(\delta^{-3})$ times the size of $A$, such that the following holds: for all $r$-qubit states $\ket{\varphi}$, if $A$ accepts $\ket{\varphi} \ket{0^s}$ with probability $0$, then $A'$ accepts $\ket{\varphi} \ket{0^{s'}}$ with probability $0$. Otherwise, if $A$ accepts $\ket{\varphi} \ket{0^s}$ with probability at least $\delta$, then $A'$ accepts $\ket{\varphi} \ket{0^{s'}}$ with probability at least $1 - 2^{-1/\delta}$.  
  \end{lemma}
  Here, we define the acceptance probability of the circuit to be the probability that the first qubit measures $\ket{1}$.
  
  For each projector $\Pi \in \cD_0$, we create a ``boosted'' projector $\Pi' \in \cD$ in the following way: let $A$ denote a circuit that performs the projective measurement $\{ \Pi, I - \Pi \}$ and records the outcome in an ancilla qubit. The circuit $A$ acts on $9 + 1$ qubits, and has size $\mathcal{O}(1)$. Let $A'$ be the amplified circuit given by Lemma~\ref{lem:mw} for $\delta = 1/\polylog(N)$. The circuit $A'$ acts on $9$ qubits and $s' = \mathcal{O}(\delta^{-3})$ ancillas. Define the projector $\Pi' = A' (\ketbra{1}{1} \otimes I) (A')^\dagger$ where $\ketbra{1}{1}$ denotes the projection onto the first qubit being in the state $\ket{1}$. 

  Then we have that, for all spacetime codewords $\ket{\psi}$, for all Pauli errors $P \in \cP_N$,
  \begin{enumerate}
    \item If $\Pi P \ket{\psi} = 0$, then $\Pi' (P \ket{\psi} \ket{0^{s'}}) = 0$, and
    \item If $\Tr( \Pi P \psi P^\dagger) \geq 1/\polylog(N)$, then $\Tr( \Pi' (P \psi P^\dagger) \otimes \ketbra{0^{s'}}{0^{s'}}) \geq 1 - 2^{-\polylog(N)}$.
  \end{enumerate} 
  
  Thus we have established that for all non-identity $P \in \cP_N$, there exists a projector $\Pi \in \cD$ such that $\Tr(\Pi \, P \psi P^\dagger) \geq 1 - 2^{-\polylog(N)}$. For general Pauli channels $\cE$, we have that
  \begin{equation}
    \Tr \Paren{ \Pi \, \cE(\psi) } = \sum_P c_P \Paren{ \Pi \, P \psi P^\dagger} \geq \sum_{P \neq I} c_P (1 - 2^{-\polylog(N)}) = (1 - c_I)(1 - 2^{-\polylog(N)}).
  \end{equation}
  This concludes the first part of the Theorem.
  
  We now establish the ``Furthermore'' part of the Theorem. Since the spacetime Hamiltonian is spatially local in $\polylog(N)$ dimensions, and each qubit participates in at most $\polylog(N)$ terms, this implies that the projectors in $\cD$ can be divided into $K = \polylog(N)$ layers $B_1,\ldots,B_K$ such that the projectors in any set $B_j$ act on disjoint sets of qubits. %
  
  The measurement $M$ will consist of measuring the layers $B_1,B_2,\ldots,B_K$ in sequence, and accepting if any of the projectors in the layers accept. Since each projector can be implemented using a size $\polylog(N)$ circuit, each layer measurement can be implemented using a depth $\polylog(N)$ circuit, so, therefore, $M$ can be implemented using a depth $\polylog(N)$ circuit. 
  
  Let $P$ denote a non-identity Pauli operator in $\cP_N$, and let $\ket{\psi}$ be a spacetime codeword. Let $B_j$ be the first layer that contains a projector $\Pi \in \cD$ projector that accepts $P \ket{\psi}$ with positive probability. Then the probability that measuring $M$ rejects on the state $P \ket{\psi}$ is at most the probability that measuring $\Pi$ rejects $P \ket{\psi}$, which is $2^{-\polylog(N)}$. We do not have to worry about the projectors in earlier layers, because by definition they reject $P \ket{\psi}$ with certainty, and leave the state unchanged.
  
  We can extend this argument to a general Pauli channel $\cE$ in the same way as before, and this completes the proof of the Theorem.
\end{proof}

\section{Alternate constructions and spatial locality}
\subsection{Good approximate QLDPC from weighted FK Hamiltonians}
\label{sec:globalFK}
In this section, we describe another closely related construction of approximate LDPC codes, which is based on using the standard Feynman-Kitaev construction with a global clock as well as recently-introduced variants that increase the overlap of the history state with the beginning and end of the computation~\cite{Bausch2018analysislimitations,caha2018clocks}.  The primary advantage of this version of the construction is the significantly simpler analysis of the spectral gap, even in the presence of nonuniform weight distributions on the time steps of the computation.  The main disadvantage for this version of the construction is that the increase in energy caused by local errors is significantly reduced in some cases (thereby making them more difficult to detect).  In the global clock construction, there are local errors with expected energy scaling like $1/T$ where $T$ is the (polynomial) size of the computation, instead of errors having energy $1/D$ in the spacetime construction where $D$ is the (polylogarithmic) depth.  This can be seen as a fulfillment of the intuition that the spacetime circuit-to-Hamiltonian is more robust than its global clock counterpart.  Finally, due to the simplification in the analysis for the global clock construction, we can achieve a provably optimal tradeoff between the approximate error $\varepsilon$ of the code and the soundness $s$, using a result that was previously established in~\cite{Bausch2018analysislimitations}.  

\paragraph{Result}For any $\varepsilon(N) > 0$ there exists an $[[N,k,d,\varepsilon,\ell,s]]$ approximate LDPC code with $k = \Omega(N/\log^5 N)$, $d = \Omega(N/\log^5 N)$, $\ell = 5$, $s = \polylog(N)$ .  The spectral gap of the code Hamiltonian is 
\begin{equation}
\Delta_H = \Omega \left(\frac{\varepsilon(N)}{N^3 \polylog(N)} \right).
\end{equation}
The encoding circuits analyzed by Brown and Fawzi with depth $D = \mathcal{O}(\log^3 n)$ have size $T = \mathcal{O}(n \log^2 n)$.  To ensure that only a polylogarithmic number of Hamiltonian terms act on each physical qubit we consider the same sequence of random Clifford gates interspersed with bitonic sorting circuits of Section \ref{subsection:bitonicblock},
\begin{equation}
  \prod_{t=1}^D \bigotimes_{(p,q) \in L_t} U_t[p,q]
\end{equation}
but now the local gates are each applied individually in sequence.  Note that in this section we do not reverse the application of the gates, and the time register is not periodic.  For each layer $L_t$ we choose an ordering $(p_1,q_1) ... (p_{n/2},q_{n/2})$ for the pairs of qubits interacting within that layer, and we re-index this sequence of $T = n D / 2$ gates as simply $U_T...U_1$,
\begin{equation}
   U_T...U_1 = U_t[q_{n/2},p_{n/2}]...U_2[p_1,q_1] U_1[p_{n/2},q_{n/2}] ... U_1[p_1,q_1]
\end{equation} 
The code space, which will be the ground space of the code Hamiltonian, is
\begin{equation}
\mathcal{C}  = \left \{  \sum_{t=0}^{T + n} \sqrt{\pi_t} \ket{t}_C \otimes U_{t-n}...U_1 \ket{\psi, 0,...,0}_S : \ket{\psi} \in \mathds{C}^{n-k} \right \}  \label{eq:globalCodeSpace}
\end{equation}
where by convention we define $U_{t - n} ... U_1 = \mathds{1}$ for $t < n$, and the distribution $\pi$ is defined by
\begin{equation}
\pi_t = \begin{cases}
\frac{\varepsilon}{T + n}\quad ,  \quad 0 \leq t < T + n\\
1- \varepsilon \quad , \quad t = T + n\\
\end{cases}.\label{eq:weights}
\end{equation} 
Instead of the standard $H_{in}$ of the form,
\begin{equation}
\ket{0}\bra{0}_C \otimes \mathds{1}_{S_{1...n-k}} \otimes \left(\sum_{r = n - k}^{n } \ket{1}\bra{1}_{S_r} \right).
\end{equation}
we use a "staggered" version of the input check,
\begin{equation}
H_{\textrm{in}} = \sum_{r = n-k}^n  \ket{r - n + k}\bra{r - n + k}_C \otimes I_{S_{1...n-k}} \otimes \ket{1}\bra{1}_{S_R}.
\end{equation}
The point of the staggered input check is to avoid having a nonconstant number of terms acting on the clock bits that represent $t = 0$.  

The propagation Hamiltonian for this nonuniform distribution over time steps is based on the method used in~\cite{Bausch2018analysislimitations}.  One first considers the Markov chain with Metropolis transition probabilities (see~\cite{levin2017markov} for a general background on Markov chains) from $t$ to $t-1$, $t+1$ that is reversible with respect to $\pi$.  For $0 < t < T - n$ we have
\begin{equation}
   P_{t,t+1} = \frac{1}{4} \min\left\{1,\frac{\pi_{t+1}}{\pi_t}\right \} \; , \;  P_{t,t-1}  = \frac{1}{4}\min\left\{1,\frac{\pi_{t-1}}{\pi_t}\right \} \; , \;  P_{t,t} = 1 -  P_{t,t+1} -  P_{t,t-1}\label{eq:MetropolisTransitionProbabilities}
\end{equation}
and also
\begin{equation}
P_{0,t} = \frac{1}{2}  \min\left\{1,\frac{\pi_{1}}{\pi_0}\right \} \quad , \quad P_{0,0} =1 - P_{0,1}
\end{equation}
and
\begin{equation}
P_{T + n, T + n -1} = \frac{1}{2}  \min\left\{1,\frac{\pi_{T + n -1 }}{\pi_{T + n}}\right \} \quad , \quad P_{T + n, T+n} =1 - P_{T+n,T + n - 1}.
\end{equation}
The transition probabilities satisfy $\pi_t P_{t,t'} = \pi_{t'} P_{t',t}$ for all $0 \leq t,t' \leq T + n$, and so $\sum_{t = 0}^{T + n} \pi_t P_{t,t'} = \pi_t'$.  Therefore the propagation Hamiltonian defined by $H_{\textrm{prop}} = \sum_{t = 0}^{T + n} H_{\textrm{prop}}(t)$ with
\begin{align}
H_{\textrm{prop}}(t) &= \frac{1}{2}\bigl( P_{t,t}\ket{t}\bra{t}_C\otimes \mathds{1}_S + P_{t-1,t-1}\ket{t}\bra{t}_C \otimes \mathds{1}_S \\ &-  \pi_t^{1/2} \pi_{t-1}^{-1/2} P_{t,t-1}\ket{t}\bra{t-1}_C\otimes U_{t - n} -  \pi_{t-1}^{1/2} \pi_{t}^{-1/2} P_{t-1,t}\ket{t-1}\bra{t}_C\otimes U_{t - n}^\dagger \bigr),
\end{align}
is such that $H_{\textrm{in}} + H_{\textrm{prop}}$ has the ground space $\mathcal{C}$ in \eqref{eq:globalCodeSpace} as claimed. 

The locality of the checks $\ell = 5$ follows from the fact that when the clock register is implemented with qubits as in \eqref{eq:clock} the local terms in $H$ are at most $5$-local, and this is unaffected by the modified coefficients in the propagation terms.  The error bound of $\varepsilon$ for the code follows from the same argument used in Section \ref{sec:construction} together with the fact that the distribution $\pi$ assigns a probability of $1 - \varepsilon$ to the final time step of the computation.  

To obtain the bound $s = \polylog(N)$ on the number of check terms acting on each physical qubits, we first consider the clock bits.  For $n + 2 \leq t \leq T + n -2$ there are 5 terms acting on clock bit $t$, 
\begin{equation}
H_{\textrm{prop}}(t-2) , H_{\textrm{prop}}(t-1) ,  H_{\textrm{prop}}(t) , H_{\textrm{prop}}(t+1) , H_{\textrm{prop}}(t+2)
\end{equation}
and for $t = n -1,T+n - 1,T + n$ the number of propagation terms is even fewer.   For $ 0 \leq t \leq n$, each clock bit participates in one term from $H_{\textrm{in}}$ and at most 5 terms from $H_{\textrm{prop}}$.  Finally, the number of nontrivial gates acting on each system qubit is at most $D = \textrm{polylog}(N)$. 

The spectral gap of $H_{\textrm{prop}}$ is the same as the spectral gap of the Markov chain $P$ described above, which can be lower bounded by Cheeger's inequality.  Since $1 - o(1)$ of the weight in the stationary distribution is concentrated on the final time step $t = T + n$, the subset $S = \{0,...,T + n -1\}$ has the minimum conductance which is
\begin{equation}
\Phi = \frac{1}{\pi(S)} \sum_{t \in S , t \notin S} \pi_t P_{t,t'} =  \frac{1}{\epsilon}\left(\frac{\epsilon}{T + n}\right) P_{T + n -1, T+n} = \frac{1}{4(T + n)}
\end{equation}
and since $T = n \polylog(n)$ we have
\begin{equation}
\Delta_{H_{\textrm{prop}}} = \Omega \left(\frac{1}{n^2 \textrm{polylog}(n)} \right).
\end{equation}
To go from $\Delta_{H_{\textrm{prop}}}$ to $\Delta_H$ we apply the same argument as in Section \ref{sec:analysis}, and use the geometrical lemma to obtain
\begin{equation}
\Delta_H = \Omega \left(\frac{\varepsilon}{n^3 \textrm{polylog}(n)} \right).
\end{equation}
Finally, we note that because of the dual importance of the spectral gap and the overlap with the endpoint of the computation, which respectively determine the soundness of the code and the infidelity of recovery, the optimality of the distribution $\pi$ in \eqref{eq:weights} follows from Theorem 8 in~\cite{Bausch2018analysislimitations}.
\begin{theorem}[\cite{Bausch2018analysislimitations}]
Let $|\psi\rangle$ be the ground state of a Hamiltonian $H$ with eigenvalues $E \defeq E_0 \leq E_1 \leq \ldots \leq E_T$.  If $H$ is tridiagonal in the basis $\{|0\rangle,\ldots,|T\rangle\}$,
\begin{equation}
H \defeq \sum_{t = 0}^T a_t |t\rangle \langle t | + \sum_{t =0}^{T-1} \left ( b_t|t +1\rangle \langle t | + b^*_t|t\rangle \langle t + 1| \right),
\end{equation}
with $|a_t|, |b_t| \leq 1$ for $t = 0,\ldots,T$ then the product $\Delta_{H} \cdot \min \{ |\psi|^2_0,  |\psi_T|^2\} $ of the spectral gap $\Delta_{H} = E_1 - E$ and the minimum endpoint overlap is $\mathcal{O}(T^{-2})$.  
\end{theorem}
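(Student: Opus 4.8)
The plan is to prove this as a standard fact about tridiagonal (Jacobi) Hamiltonians via a single well-chosen variational test vector. First I would dispose of the trivial cases: if $E_1 = E_0$ then $\Delta_H = 0$ and there is nothing to prove, and if $\min\{|\psi_0|^2,|\psi_T|^2\}=0$ the product vanishes, so assume both are positive. Next I would argue that all off-diagonal entries $b_t$ are nonzero: if some $b_t=0$, then $H$ splits as a direct sum over $\{0,\ldots,t\}$ and $\{t+1,\ldots,T\}$, and either the two blocks have equal ground energy (forcing $\Delta_H=0$) or the ground state is supported entirely in one block (forcing $|\psi_0|^2=0$ or $|\psi_T|^2=0$); in both cases the claim holds trivially. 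Having reduced to the irreducible case, conjugate $H$ by a diagonal unitary $V=\mathrm{diag}(v_0,\ldots,v_T)$ with $v_0=1$ and $v_{t+1}=-v_t b_t/|b_t|$; this leaves the spectrum and the endpoint overlaps $|\psi_0|^2,|\psi_T|^2$ unchanged and makes every off-diagonal entry equal to $-|b_t|<0$. Then $cI-H$ is entrywise nonnegative and irreducible for large $c$, so Perron–Frobenius yields a unique ground state $\ket{\psi}=\sum_t \psi_t\ket{t}$ with $\psi_t>0$ and $\sum_t \psi_t^2 = 1$.

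The core step is the Dirichlet-form identity. For any real vector $(g_0,\ldots,g_T)$ set $\ket{\phi}\defeq \sum_t g_t\psi_t\ket{t}$. Using only $H\ket{\psi}=E_0\ket{\psi}$ one obtains, by a direct computation (reindexing the diagonal contribution via the eigenvalue relation $(a_t-E_0)\psi_t = |b_{t-1}|\psi_{t-1}+|b_t|\psi_{t+1}$, with the conventions $b_{-1}=b_T=0$),
\begin{equation}
\iprod{\phi|(H-E_0)|\phi} = \sum_{t=0}^{T-1} |b_t|\,\psi_t\psi_{t+1}\,(g_{t+1}-g_t)^2, \qquad \iprod{\phi|\phi} = \sum_{t=0}^{T}\psi_t^2\, g_t^2 .
\end{equation}
By Courant–Fischer, if $\iprod{\psi|\phi}=\sum_t g_t\psi_t^2 = 0$ then $\Delta_H = E_1-E_0 \le \iprod{\phi|(H-E_0)|\phi}/\iprod{\phi|\phi}$.

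Then I would apply this with the linear test function $g_t\defeq t-c$, where $c\defeq \sum_t t\,\psi_t^2 \in [0,T]$ is the mean of $t$ under the probability measure $\mu_t\defeq \psi_t^2$; this makes $\ket{\phi}$ orthogonal to $\ket{\psi}$. The numerator collapses to $\sum_{t} |b_t|\,\psi_t\psi_{t+1} \le \sum_t \psi_t\psi_{t+1} \le \sum_t \tfrac12(\psi_t^2+\psi_{t+1}^2) \le 1$, using $|b_t|\le 1$, AM–GM, and normalization. The denominator is exactly $\Var_\mu(t)$, and since $c\in[0,T]$ at least one of the points $t=0,t=T$ lies at distance $\ge T/2$ from $c$, so $\Var_\mu(t) \ge \max\{\mu_0 c^2,\ \mu_T(T-c)^2\} \ge \tfrac{T^2}{4}\min\{\mu_0,\mu_T\} > 0$. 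Combining, $\Delta_H \le 4\big/\big(T^2\min\{|\psi_0|^2,|\psi_T|^2\}\big)$, i.e. the product is $\mathcal{O}(T^{-2})$ with absolute constant $4$ (note only $|b_t|\le 1$ was actually used).

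The main obstacle is not any one calculation but keeping the reductions clean: confirming that the block-decomposition cases genuinely reduce to the trivial ones, verifying the Dirichlet-form identity with care about boundary terms, and checking that the gauge transformation preserves exactly the quantities $|\psi_0|^2$ and $|\psi_T|^2$ rather than merely the spectrum. Everything after that — the choice $g_t=t-c$ and the variance lower bound — is short and elementary.
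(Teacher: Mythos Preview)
The paper does not supply its own proof of this statement; it simply quotes the result as Theorem~8 of \cite{Bausch2018analysislimitations} to certify optimality of the weight distribution $\pi$, so there is no in-paper argument to compare against.

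Your argument is correct and self-contained. The reductions are clean: the block-decomposition case genuinely forces either $\Delta_H=0$ or vanishing of an endpoint amplitude; the diagonal gauge preserves both the spectrum and each $|\psi_t|^2$; and after gauging, Perron--Frobenius applied to $cI-H$ gives a strictly positive ground state (nondegeneracy having been assumed). The Dirichlet-form identity is the standard ``ground-state transform'' for a Jacobi matrix and your derivation via $(a_t-E_0)\psi_t = |b_{t-1}|\psi_{t-1}+|b_t|\psi_{t+1}$ with $b_{-1}=b_T=0$ checks out, yielding exactly $\sum_t |b_t|\psi_t\psi_{t+1}(g_{t+1}-g_t)^2$. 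The choice $g_t=t-c$ kills the overlap with $\ket{\psi}$, collapses the numerator to $\sum_t |b_t|\psi_t\psi_{t+1}\le 1$, and the variance lower bound $\Var_\mu(t)\ge (T^2/4)\min\{\mu_0,\mu_T\}$ follows since one endpoint is at distance $\ge T/2$ from the mean. Your observation that only $|b_t|\le 1$ is used (the bound on $|a_t|$ is irrelevant) is also correct.
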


\subsection{Spatial locality of the Hamiltonian}
\label{sec:spatialLocality}
In this section, we demonstrate that the code Hamiltonian is indeed $\polylog(n)$-spatially local. We also provide a sketch of how to make the construction $\mathcal{O}(\log n)$-spatially local at the cost of increasing the locality of the Hamiltonian from 9 to 15.

We give a technical definition for spatial locality that fits the previous descriptions given in other works \cite{bravyi2009no,flammia2017limits}.

\newcommand{\emb}{\mathrm{emb}}

\begin{definition}
A code defined by a local Hamiltonian $H = \sum_i H_i$ is $d$-spatially local if there exists an embedding map $\emb: Q \rightarrow \mathbb{R}^d$, where $Q$ is the set of qubits, satisfying the following conditions:
\begin{enumerate}
  \item For all $q_1 \neq q_2 \in Q$, $\norm{\emb(q_1) - \emb(q_2)}_2 \geq 1$.
  \item Let $Q_i \subseteq Q$ be the set of qubits acted on non-trivially by Hamiltonian $H_i$. There exists a constant $c > 0$ such that for all $q_1, q_2 \in Q_i$,
  \begin{equation}
    \norm{\emb(q_1) - \emb(q_2)}_2 \leq c.
  \end{equation}
\end{enumerate}
\end{definition}

We propose such an embedding for $d = \mathcal{O}(\log^5 n)$. First, consider the interaction graph of the qubits in a bitonic block architecture $\cB_\ell$ for $\ell = \log n$ (there exists an edge between two qubits if they share a gate). We note the following lemma:

\begin{lemma}
The incidence graph of the bitonic block architecture $\cB_\ell$ is equivalent to the $\ell$-dimensional hypercube.
\end{lemma}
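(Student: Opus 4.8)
The plan is to prove this by induction on $\ell$, realizing the isomorphism through the standard recursive decomposition of the hypercube as a Cartesian product of an edge with a lower-dimensional cube. First I would fix the candidate isomorphism explicitly: label qubit $i \in \{1,\dots,2^\ell\}$ by the $\ell$-bit binary string $\phi_\ell(i)$ of $i-1$, written with the most significant bit first. The claim to be verified is that $\phi_\ell$ is a graph isomorphism from the incidence graph $G(\mathcal{B}_\ell)$ — whose vertices are the qubits and whose edges are the pairs of qubits sharing a gate in some layer of $\mathcal{B}_\ell$ — onto the $\ell$-dimensional hypercube $Q_\ell$ on $\{0,1\}^\ell$, in which two strings are adjacent iff they differ in exactly one coordinate.

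The base case $\ell=1$ is immediate: $\mathcal{B}_1$ consists of a single gate on qubits $1,2$, which $\phi_1$ sends to the strings $0,1$, and $\{0,1\}$ is precisely the single edge of $Q_1$. For the inductive step I would unpack Definition~\ref{def:bitonicblock}. Its first layer $\mathcal{L}_1$ consists of the gates $\{i,\,i+2^{\ell-1}\}$ for $i=1,\dots,2^{\ell-1}$; under $\phi_\ell$, qubit $i$ (with $i-1<2^{\ell-1}$) has leading bit $0$ while qubit $i+2^{\ell-1}$ has leading bit $1$ with the remaining $\ell-1$ bits unchanged, so $\mathcal{L}_1$ contributes exactly the edges $\{0w,\,1w\}$ over all $w\in\{0,1\}^{\ell-1}$, i.e.\ all edges of $Q_\ell$ in the most significant coordinate. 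The remaining $\ell-1$ layers form $\mathcal{B}_{\ell-1}^{\otimes 2}$, one copy on $\{1,\dots,2^{\ell-1}\}$ and the other on $\{2^{\ell-1}+1,\dots,2^\ell\}$; since $\phi_\ell$ restricts on the first half to $w\mapsto 0w$ composed with $\phi_{\ell-1}$ and on the second half to $w\mapsto 1w$ composed with $\phi_{\ell-1}$, the induction hypothesis identifies these two sub-blocks with two vertex-disjoint copies of $Q_{\ell-1}$, acting on the last $\ell-1$ coordinates, over the vertices with leading bit $0$ and leading bit $1$ respectively. Since no layer after $\mathcal{L}_1$ connects a qubit in the first half to one in the second, and the $\mathcal{L}_1$ edges cross between halves while all sub-block edges stay within a half, the edge set of $G(\mathcal{B}_\ell)$ is \emph{exactly} the disjoint union of these three families — which is precisely the recursive description of $Q_\ell$. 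This closes the induction.

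The only place a subtlety could hide is the bookkeeping in the inductive step: confirming that $\phi_\ell$ really does restrict to the "prepend $0$" and "prepend $1$" relabelings on the two halves, and that $G(\mathcal{B}_\ell)$'s edge set equals — not merely contains — the union of the $\mathcal{L}_1$ edges and the two sub-block edge sets. Both facts follow directly from the recursive structure in Definition~\ref{def:bitonicblock}, so I do not anticipate any genuine obstacle; the statement is essentially a formalization of the well-known fact that a bitonic merge network has butterfly (hence hypercube) connectivity. It may be worth recording, once the isomorphism is established, the immediate corollary that the longest-range gate in $\mathcal{B}_\ell$ connects qubits at hypercube distance $1$, which is what the subsequent spatial-locality embedding into $\mathbb{R}^{\ell}$ will rely on.
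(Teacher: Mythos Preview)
Your proposal is correct and follows essentially the same approach as the paper: induction on $\ell$, with the base case $\ell=1$ trivial and the inductive step observing that layer $\cL_1$ connects matching vertices in two copies of $\cB_{\ell-1}$, which by induction are $(\ell-1)$-cubes. The paper's proof is a one-line version of your argument (without the explicit binary labeling $\phi_\ell$), so your write-up is simply a more detailed execution of the same idea.
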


\begin{proof}
The result is easy to see for $\ell = 1$. For $\ell > 1$, notice that layer $\cL_1$ connects matching vertices in two bitonic blocks $\cB_{\ell - 1}$ (Corollary \ref{cor:subbitonic}), which by induction yields a $\ell$-dimensional hypercube.
\end{proof}
Therefore, there is an encoding $h: [n] \rightarrow \mathbb{R}^\ell$ such that if qubits $i$ and $j$ interact then $\norm{h(1) - h(2)}_2 = 1$. Let $e_j$ denote the $j$-th standard basis vector.

\begin{theorem}
The code defined in this paper is $\mathcal{O}(\log^5 n)$-spatially local.
\end{theorem}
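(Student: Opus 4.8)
The plan is to exhibit an explicit embedding $\emb : Q \to \mathbb{R}^{d}$ with $d = \log n + \Theta(D) + O(1) = \mathcal{O}(\log^5 n)$, where $D = \mathcal{O}(\log^5 n)$ is the depth of the circuit $C_2$ of Section~\ref{sec:construction} and each local clock register $\sC_i$ holds $\Theta(D)$ qubits. The first $\log n$ coordinates will lay the data registers $\sS_1,\dots,\sS_n$ out on the Boolean hypercube exactly as the uniformized circuit prescribes; the remaining $\Theta(D)$ coordinates will be ``time'' coordinates, used to stack all of the clock and flag qubits of each data qubit $i$ directly above its hypercube vertex --- the key trick being to spend \emph{one fresh coordinate per clock position}.

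First I would record the hypercube layout. By Section~\ref{subsec:uniformizing}, using the lemma above (incidence graph of $\cB_\ell$ equals the $\ell$-cube) together with the recursive structure of bitonic blocks (Definition~\ref{def:bitonicblock}, Corollary~\ref{cor:subbitonic}), there is a single fixed labeling $h : [n] \to \{0,1\}^{\log n}\subset\mathbb{R}^{\log n}$ such that $\|h(p)-h(q)\|_2 = 1$ whenever $p$ and $q$ share a gate anywhere in $C_2$, equivalently whenever $q \in C_p$. Writing $e_m$ for the $m$-th standard basis vector and indexing the qubits of $\sC_i$ as $\sC_{i,1},\dots,\sC_{i,X}$, I would set $\emb(\sS_i) = (h(i),0,\dots,0)$, $\emb(\sC_{i,j}) = (h(i),e_{\log n + j})$, and $\emb(\sF_i) = (h(i),e_{\log n + X+1})$, so that $d = \log n + X + 1 = \mathcal{O}(\log^5 n)$. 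Condition~(1) of the definition is then immediate: distinct data qubits are at hypercube distance $\ge 1$; two distinct clock/flag qubits of the same $i$ differ by a pair of orthogonal unit vectors, hence are at distance $\sqrt 2$; $\sS_i$ is at distance $1$ from each of its clock/flag qubits; and any two qubits sitting over distinct hypercube vertices are at distance $\ge 1$.

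The remaining work is condition~(2): bounded diameter of every term of $H = H_{clock}+H_{init}+H_{prop}+H_{causal}$. Each term of $H_{clock}$ (see~\eqref{eq:clock}) touches two consecutive clock qubits of one data qubit (diameter $\sqrt 2$), and each term of $H_{init}$ (see~\eqref{eq:init_term}) touches one clock qubit and one data qubit of a single $i$ (diameter $1$). A propagation term $H_t[p,q]$ (see~\eqref{eq:prop_term2}) touches $\sS_p,\sS_q,\sF_p,\sF_q$ and $O(1)$ clock qubits of $p$ and of $q$; since $p$ and $q$ share a gate we have $\|h(p)-h(q)\|_2 = 1$, every qubit of the term lies in the column above $h(p)$ or above $h(q)$, and a direct check bounds the diameter by $\sqrt 3$. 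The one genuinely delicate case --- and the step I expect to be the main obstacle --- is $H_{causal}$: a term of~\eqref{eq:causality_term} relates clock qubits of $p$ near a time $t_p$ with clock qubits $\sC_{q,t^{(j)}},\sC_{q,t^{(j+1)}}$ of $q$ whose time \emph{indices} $t^{(j)},t^{(j+1)}$ may be $\Theta(D)$ apart, so a single linear time axis would place them at distance $\Theta(D)$ and destroy locality. This is precisely why the embedding uses one coordinate per clock position: under $\emb$ any two clock qubits of a fixed data qubit are at distance exactly $\sqrt 2$ regardless of their indices, so --- again using $\|h(p)-h(q)\|_2 = 1$ --- each causal term (indeed even the coarse sum $\sum_{t_p}A_{t_p,t_p}[p]\otimes B_{t^{(j)},t^{(j+1)}}[q]$, which spans the whole time range) still has diameter $\le \sqrt 3$. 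Taking $c = 2$ then verifies~(2). The only other point I would be careful about is re-confirming that $C_2$ admits a single hypercube labeling valid simultaneously for \emph{all} of its layers, which I would extract directly from the recursive definition of $\cB_\ell$ and the incidence-graph lemma. Combining everything, $d = \mathcal{O}(\log^5 n)$ and the code is $\mathcal{O}(\log^5 n)$-spatially local.
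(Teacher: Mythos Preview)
Your proposal is correct and essentially identical to the paper's proof: the paper uses the same hypercube embedding $h$ for the data qubits, one extra coordinate for the flag, and one fresh coordinate $e_j$ per clock position $j$ (i.e.\ $\emb(\sS_i)=(h(i),0,0^X)$, $\emb(\sF_i)=(h(i),1,0^X)$, $\emb(\sC_{i,j})=(h(i),0,e_j)$), and then checks term-by-term that each of $H_{clock},H_{init},H_{prop},H_{causal}$ has diameter at most $\sqrt{3}$. Your emphasis on why the ``one coordinate per clock position'' trick is exactly what makes $H_{causal}$ local is the right key observation.
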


\begin{proof}
We provide the explicit embedding map and prove it satisfies the definition. Our embedding map can be seen as
\begin{equation}
\emb: Q \rightarrow \mathbb{R}^\ell \times \mathbb{R}^1 \times \mathbb{R}^X
\end{equation}
defined by
\begin{align}
\textbf{Data registers}  &\qquad \emb(\sS_i) = (h(i), 0, 0^X), \\
\textbf{Flag registers}  &\qquad \emb(\sF_i) = (h(i), 1, 0^X), \\
\textbf{Clock registers} &\qquad \emb(\sC_{i,j}) = (h(i), 0, e_j).
\end{align}
Note that $\ell + 1 + X = \log n + 1 + \mathcal{O}(\log^5 n) = \mathcal{O}(\log ^5 n)$.

Every coordinate of every qubit is either $0$ or $1$ and clearly the qubits are distinct. Therefore, the minimal distance between them is indeed 1. We now verify that each of the Hamiltonian terms act on qubits that are only a constant distance of $\sqrt{3}$ apart.
\begin{description}
\item[$H_{clock}$ terms.] All $H_{clock}$ terms are projections of the form $\Pi_{\sC_{i,j},\sC_{i,j+1}}^{(01)}$. Any pair of clock qubits $\sC_{i,j}$ and $\sC_{i,j'}$ are distance $\sqrt{2}$ in the $\ell_2$ norm.

\item[$H_{init}$ terms.] All $H_{init}$ terms are projections of the form $\Pi_{\sC_{i,0}}^{(1)} \otimes \Pi_{\sS_i}^{(1)}$. We note that for any $i$, any clock qubit $\sC_{i,j}$ is distance $\sqrt{2}$ in the $\ell_2$ norm from $\sS_i$.

\item[$H_{prop}$ terms.] All $H_{prop}$ terms are interactions between the clock, flag, and data qubits for qubits $p$ and $q$ that share a gate $U_t[p,q]$ in the circuit. We note that $h(p)$ and $h(q)$ have Hamming distance 1 (i.e. differ in only one coordinate). The specific set of qubits involved are $\sS_p, \sS_q, \sF_p, \sF_q$, as well as six clock qubits, three $\sC_{p,\cdot}$ and three $\sC_{q, \cdot}$ (the exact collection depends on $t$ according to (\ref{eq:cases_for_h_prop})). It is not difficult to see that any two of the embeddings of these qubits differ in at most 3 coordinates, hence a distance of $\sqrt{3}$ in the $\ell_2$ norm.

\item[$H_{causal}$ terms.] A term in $H_{causal}$ compares the clock of a qubit $p$ to an adjacent qubit $q$. It will involve the flag qubit $\sF_p$ and up to two clock qubits $\sC_{p,a}, \sC_{p,a+1}$ (again, $a$ depends on $t$; see (\ref{eq:cases_for_h_prop})). In addition, it checks the flag qubit $\sF_q$ and up to two clock qubits $\sC_{q,b}, \sC_{q,c}$ (here $b$ and $c$ depend on $t$; see the case-wise definition of $H_{causal}$). Likewise, it is not difficult to see that any two of these embeddings of these qubits differ in at most 3 coordinates, hence a distance of $\sqrt{3}$ in the $\ell_2$ norm.
\end{description}
\end{proof}

\subsection{Alternate construction}

We now present an alternate construction which improves the spatial locality of the Hamiltonian. We only provide a sketch of the construction as the majority of the analysis is similar to that presented in the main sections of the paper. In particular, we will demonstrate that a different representation of the time register can be used to make this code $\mathcal{O}(\log n)$-spatially local at the cost of worsening the code to being 50-local.

Instead of encoding the time register using a flag and a domain wall, we will encode it using the multi-dimensional clock method used in \cite{nirkhe_et_al:LIPIcs:2018:9095}. At a high level, we express time in its unique representation in base $\bar{D} = \lceil \sqrt[6]{D} \rceil + 1$ and represent each coordinate of the representation using a flag and domain wall.

More specifically, additionally let $\bar{X} = \frac{\bar{D} - 2}{2}$. Then $\bar{D} = \mathcal{O}(\log n)$ for our construction. For any number $0 \leq j \leq D$, let $a_0, \ldots, a_5$ be the unique numbers $\in \{0, \ldots, \bar{D} - 1\}$ such that
\begin{equation}
j = a_0 + a_1 \bar{D} + \ldots + a_5 \bar{D}^5.
\end{equation}
We then express $\ket{j}_{\sT_i}$ as
\begin{equation}
\ket{j}_{\sT_i} = \ket{a_0}_{\sT_i^{(0)}} \otimes \ket{a_5}_{\sT_i^{(5)}}
\end{equation}
where $\ket{a_k}_{\sT_i^(k)}$ is an encoding with a flag and domain wall of times between $\{0, \ldots, \overline{D} - 1\}$ as described in the main section of the paper. $\sT_i^{(k)}$ consists of a flag register $F_i^{(k)}$ and $\{ C_{i,j}^{(k)}\}_{j = 0}^{\bar X}$.

This adjustment will require the Hamiltonian terms to act on 6 times as many time registers as before; hence the 50-locality. The encoding to demonstrate $\mathcal{O}(\log n)$-spatial locality is

\begin{align}
\textbf{Data registers}  &\qquad \emb \left(\sS_i \right) = (h(i), 0^6, 0^{6\bar{X}}), \\
\textbf{Flag registers}  &\qquad \emb\left(\sF_i^{(k)}\right) = (h(i), e_k, 0^{6\bar{X}}), \\
\textbf{Clock registers} &\qquad \emb\left(\sC_{i,j}\right) = (h(i), e_k, \e_k \otimes e_j).
\end{align}

\section*{Acknowledgments}
We thank Winton Brown, Aram Harrow, and Umesh Vazirani for helpful discussions.
Author TB acknowleges support from NSERC through a PGSD award.
Author CN is supported by ARO Grant W911NF-12-1-0541 and NSF Grant CCF-1410022.
Part of this work was completed while authors EC, CN, and HY were visitors at the Simon's Institute 2018 summer cluster \emph{Challenges in Quantum Computation}.

\bibliography{references}
\bibliographystyle{alpha}

\appendix

\section{Partially applied configurations of a bitonic sorting circuit}
\label{appendix:bitonic}

In this Appendix, we provide the mathematical foundations required for analyzing the spectral gap of the Hamiltonian and generating the encoding circuit. We explore the properties of the bitonic block \cite{Batcher:1968:SNA:1468075.1468121} (see Definition \ref{def:bitonicblock}) and prove results about the space of \emph{valid configurations} of partial computations of a bitonic block (see Definition \ref{def:partial-config}).

\subsection{Configurations and width}

We study the structure and combinatorics of the valid configurations (see Definition \ref{def:partial-config}).
One can think of a valid partial configuration as being represented visually on the architecture as a string which partitions the architecture into two halves: gates that have and have not been applied. Depending on the configuration of the circuit in question, there is a maximum \emph{width}, a number of layers, that such a string can have.

\begin{definition}
The \emph{width} of a partial configuration $\tau$ is 
\begin{equation}
w(\tau) = \max(\tau) - \min(\tau).
\end{equation}
\end{definition}

\begin{lemma}
  For bitonic block $\cB_\ell$,
  \begin{equation}
    w(\tau)< \ell
  \end{equation}
  for any valid partial configuration $\tau$.
\label{lem:width}
\end{lemma}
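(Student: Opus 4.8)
The plan is to prove the bound by induction on the rank $\ell$, using the recursive definition of $\cB_\ell$: its first layer $\cL_1$ joins wire $i$ to wire $i+2^{\ell-1}$, and the remaining layers $\cL_2,\dots,\cL_\ell$ are two disjoint copies of $\cB_{\ell-1}$ placed on the halves $H_1=\{1,\dots,2^{\ell-1}\}$ and $H_2=\{2^{\ell-1}+1,\dots,2^\ell\}$. First I would record the form of Definition~\ref{def:partial-config} that is convenient here: since every wire of $\cB_\ell$ meets exactly one gate in each of its $\ell$ layers, a configuration $\tau\in\{0,\dots,\ell\}^{2^\ell}$ is valid precisely when, for every gate at depth $d$ on wires $(a,b)$, one has $\tau_a\ge d\iff\tau_b\ge d$ (a gate is applied exactly when both endpoints have advanced past it, and causality then reduces to this equivalence). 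The base case $\ell=1$ is immediate: the only valid configurations of $\cB_1$ are $(0,0)$ and $(1,1)$, both of width $0<1$.

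For the inductive step I would first dispatch the easy case: if $\min_i\tau_i\ge 1$, then since $\tau_i\le\ell$ always (the block has depth $\ell$), we get $w(\tau)=\max_i\tau_i-\min_i\tau_i\le\ell-1<\ell$ with no further work. Before the remaining case I would prove a small duality lemma, needed because the two copies of $\cB_{\ell-1}$ inside $\cB_\ell$ are executed in reversed layer order, so I actually require the width bound for the reversed block $\cB_{\ell-1}^{\mathrm{rev}}$. But $\tau$ is a valid configuration of $\cB_m^{\mathrm{rev}}$ iff $m\mathbf{1}-\tau$ is a valid configuration of $\cB_m$ — a one-line check using the reformulation, since reversing sends depth $d$ to $m+1-d$ — and complementation preserves $\max-\min$; hence the inductive hypothesis for $\cB_{\ell-1}$ yields the same bound for $\cB_{\ell-1}^{\mathrm{rev}}$.

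The substantive case is $\min_i\tau_i=0$, where it suffices to show $\max_i\tau_i\le\ell-1$. Suppose not; since $\max_i\tau_i\le\ell$ always, some wire $p$ has $\tau_p=\ell$, and without loss of generality $p\in H_1$. Restrict and shift: set $\tau'_i=\max(\tau_i-1,0)$ for $i\in H_1$. A short case split on whether $\tau_i=0$ shows that for a gate of $\cL_k$ ($2\le k\le\ell$) on $(a,b)\subseteq H_1$ one has $\tau'_a\ge k-1\iff\tau_a\ge k$ (and likewise for $b$), so validity of $\tau$ forces validity of $\tau'$ as a configuration of the depth-$(\ell-1)$ sub-block $\cB_{\ell-1}^{\mathrm{rev}}$ on $H_1$; moreover $\tau'_p=\ell-1$. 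The inductive hypothesis together with the duality lemma give $w(\tau')\le\ell-2$, hence $\min_{i\in H_1}\tau'_i\ge\tau'_p-(\ell-2)=1$, so $\tau_i=\tau'_i+1\ge 2$ for every $i\in H_1$. Finally I would push this across to $H_2$ via the first layer: the $\cL_1$-gate on $(i,i+2^{\ell-1})$ gives $\tau_i\ge 1\iff\tau_{i+2^{\ell-1}}\ge 1$, so every wire of $H_2$ also has $\tau\ge 1$; thus $\min_i\tau_i\ge 1$, contradicting the case hypothesis. Therefore $\max_i\tau_i\le\ell-1$, completing the induction.

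The main obstacle I anticipate is the bookkeeping around the half-decomposition rather than any deep difficulty: one must notice that the recursive sub-blocks appear in reversed layer order (handled by the complementation duality) and that restricting a configuration to a half requires the truncation $\max(\tau_i-1,0)$, so it has to be checked that this truncated vector is still a genuine valid sub-configuration before the inductive hypothesis can be applied. The one genuinely global step — ruling out a wire at the maximal time $\ell$ while another wire sits at $0$ — is precisely where the first layer $\cL_1$, the only coupling between the two halves, is used.
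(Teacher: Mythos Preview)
Your proof is correct and takes a genuinely different route from the paper's. The paper argues by a direct light-cone computation: if a gate $g$ in the last layer $\cL_\ell$ is applied, then its past light cone contains $2^{j}$ gates of layer $\cL_{\ell-j}$ for each $j$, and in particular all $2^{\ell-1}$ gates of $\cL_1$; hence $\min_i\tau_i\ge 1$, contradicting width $\ell$. Your argument is instead by induction on $\ell$, exploiting the recursive decomposition into two $\cB_{\ell-1}$ sub-blocks on the halves together with the first layer $\cL_1$ that couples them.

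The paper's proof is shorter and isolates the specific combinatorial fact (the past light cone of any last-layer gate doubles at each layer until it fills $\cL_1$) that is reused immediately afterward, e.g.\ in Corollary~\ref{cor:dictotomyofactivation} and in the counting arguments of Section~\ref{sec:countingBitonicBlock}. Your inductive approach is more general-purpose and would adapt to other recursively defined architectures lacking this exact doubling property, at the cost of more bookkeeping (the restriction-and-shift $\tau'_i=\max(\tau_i-1,0)$ and its validity check). One minor point: under the paper's conventions, layers $\cL_2,\ldots,\cL_\ell$ restricted to a half already form a copy of $\cB_{\ell-1}$ in \emph{forward} order (layer $\cL_k$ of $\cB_\ell$ corresponds to layer $\cL_{k-1}$ of $\cB_{\ell-1}$), not reversed. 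So your duality lemma, while correct, is not actually needed, and your argument simplifies accordingly.
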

\begin{proof}
  Suppose that we have a configuration of width $\ell$, then at least one gate in the final layer $\cL_\ell$ must be applied (and corresponding qubits are at time $t = \ell$), and at least one gate in the first layer $\cL_1$ must not have been applied (so that its corresponding qubits are at time $t=0$). Let $g$ be a gate in $\cL_\ell$ that has been applied. 

  Consider the light cone $\Lambda$ of the gate $g$; there are two gates of $\cL_{\ell - 1}$ in $\Lambda$. Precisely, these are the two gates of the $\cB_2$ block which connect the $\cB_1$ block containing $g$ to its neighboring $\cB_1$ block (recall Definition \ref{def:bitonicblock}). Likewise, there are $2^j$ gates of layer $\cL_{\ell - j}$ in $\Lambda$ as they are the $2^{j-1}$ gates of a $\cB_j$ block connecting the block $\cB_{j-1}$ to its neighboring $\cB_{j-1}$ block.

  Carrying this until the first layer, there are $2^{\ell-1}$ gates of $\Lambda$ in $\cL_1$. Since all of $\Lambda$ must be applied, every gate of $\cL_1$ is applied. Therefore, the assumption of a configuration of width $\ell$ is false.

\end{proof}

The proof of Lemma \ref{lem:width} illustrates an interesting and important property of bitonic blocks; the light cone of any gate in the architecture doubles in size each layer. Additionally,

\begin{corollary}
Any valid configuration of a bitonic block $\cB_\ell$ must satisfy at least one of the following:
\begin{enumerate}
\item Every gate in layer $\cL_1$ is activated.
\item Every gate in layer $\cL_\ell$ is \emph{not} activated.
\end{enumerate}
\label{cor:dictotomyofactivation}
\end{corollary}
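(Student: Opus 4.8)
The plan is to read the dichotomy off directly from the light-cone structure already exploited in the proof of Lemma~\ref{lem:width}; indeed this corollary is essentially the contrapositive of the key step there, phrased for a single gate rather than for a full width-$\ell$ configuration, so I would isolate that step rather than redo the combinatorics. Concretely, I would argue by contradiction. Let $\tau$ be a valid configuration of $\cB_\ell$, assume the first alternative fails, so some gate in the first layer $\cL_1$ is not activated, and assume toward a contradiction that the second alternative also fails, so some gate $g$ in the last layer $\cL_\ell$ \emph{is} activated. Since $\tau$ is valid (Definition~\ref{def:partial-config}), every gate $g'$ with $g' \to g$ in the architecture DAG, i.e.\ every gate in the past causal cone $\Lambda$ of $g$, must also be activated.

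Next I would invoke the light-cone-doubling property of bitonic blocks established inside the proof of Lemma~\ref{lem:width}: by the recursive decomposition of $\cB_\ell$ into two copies of $\cB_{\ell-1}$ glued together by $\cL_1$ (Definition~\ref{def:bitonicblock}), $\Lambda$ contains exactly $2$ gates of $\cL_{\ell-1}$, $4$ gates of the next layer down, and in general $2^{j}$ gates of $\cL_{\ell-j}$, so that $\Lambda$ meets $\cL_1$ in $2^{\ell-1}$ gates. Since $\cL_1$ has exactly $2^{\ell-1}$ gates in total, $\Lambda$ therefore contains \emph{all} of $\cL_1$. Hence every gate of $\cL_1$ is activated, contradicting the assumption that the first alternative fails. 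Consequently no gate of $\cL_\ell$ can be activated, which is precisely the second alternative.

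I do not expect a genuine obstacle here; the only point needing care is the bookkeeping of layer indices that confirms the past cone of a single $\cL_\ell$ gate saturates all of $\cL_1$, and this is exactly the calculation carried out in the proof of Lemma~\ref{lem:width}. If a fully self-contained argument were preferred, the cleanest alternative is a short induction on $\ell$: for $\ell = 1$ the statement is trivial since $\cL_1 = \cL_\ell$; for $\ell > 1$, an activated gate in the last layer of $\cB_\ell$ lies in the last layer of one of the two $\cB_{\ell-1}$ sub-blocks, so by the induction hypothesis every gate in that sub-block's first layer is activated, and since each of those gates has as its DAG-predecessor on each qubit line a gate of the outer layer $\cL_1$, and since a $\cB_{\ell-1}$ sub-block on either half of the qubits touches every gate of $\cL_1$, all of $\cL_1$ is activated.
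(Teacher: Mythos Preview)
Your proposal is correct and follows precisely the approach the paper intends: the paper gives no explicit proof of this corollary, but presents it as a direct consequence of the light-cone doubling argument from the proof of Lemma~\ref{lem:width}, which is exactly what your main contradiction argument does. Your optional inductive alternative is also valid and self-contained.
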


\subsection{Permutations and the splitting property}\label{sec:permutationSplitting}

In this subsection, we demonstrate some important combinatorial properties of bitonic blocks.

\begin{fact}
  The number of valid configurations of any architecture is invariant under permutation of the qubit labels.
\end{fact}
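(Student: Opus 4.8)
The plan is to prove this by exhibiting an explicit label‑preserving bijection between the valid configurations of an architecture $\cA$ and those of any relabeling of its wires, and then checking that this bijection preserves validity. This is essentially a functoriality statement for the construction ``architecture $\mapsto$ set of valid configurations.''

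First I would fix notation. Given an architecture $\cA$ on $n$ qubits of depth $D$ (Definition~\ref{def:architecture}) and a permutation $\pi \in S_n$, let $\pi(\cA)$ denote the architecture obtained by relabeling the wire carrying label $i$ by $\pi(i)$ for every $i$: the underlying directed acyclic graph is unchanged, each gate keeps its depth, and only the labels on the edges are permuted. In particular, a gate $g$ of $\cA$ at depth $d_g$ acting on the pair $\{i,j\}$ corresponds to a gate $g^\pi$ of $\pi(\cA)$ at the same depth $d_g$ acting on $\{\pi(i),\pi(j)\}$, and $g' \to g$ in the DAG of $\cA$ if and only if $(g')^\pi \to g^\pi$ in the DAG of $\pi(\cA)$; that is, $g \mapsto g^\pi$ is an isomorphism of DAGs that preserves the depth function.

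Next I would define the candidate bijection $\Phi \colon \Z_{D+1}^n \to \Z_{D+1}^n$ by $(\Phi \tau)_{\pi(i)} = \tau_i$ for all $i$, i.e.\ $\Phi\tau = \tau \circ \pi^{-1}$, which is manifestly a bijection of $\Z_{D+1}^n$ with inverse $\tau' \mapsto \tau' \circ \pi$. The only content is to check that $\Phi$ carries valid configurations of $\cA$ onto valid configurations of $\pi(\cA)$, in both directions. For this I would observe that a gate $g$ on qubits $\{i,j\}$ is ``applied'' by $\tau$ in the sense of Definition~\ref{def:partial-config} (that is, $d_g \le \tau_i$ and $d_g \le \tau_j$) if and only if $g^\pi$ is applied by $\Phi\tau$, since $(\Phi\tau)_{\pi(i)} = \tau_i$ and $(\Phi\tau)_{\pi(j)} = \tau_j$. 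Hence the set of gates marked applied by $\tau$ in $\cA$ is carried, under $g \mapsto g^\pi$, exactly onto the set of gates marked applied by $\Phi\tau$ in $\pi(\cA)$; and because $g \mapsto g^\pi$ is a DAG isomorphism, one of these sets is causally downward‑closed iff the other is. Thus $\tau$ is valid for $\cA$ iff $\Phi\tau$ is valid for $\pi(\cA)$, so the restriction of $\Phi$ is a bijection between the two sets of valid configurations and their cardinalities coincide.

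I do not expect a serious obstacle here; the statement is a relabeling observation. The only point needing care is bookkeeping the direction of $\pi$ and stating precisely that relabeling the wires of $\cA$ yields a genuine isomorphism of DAGs that preserves each gate's depth and the causal order $\to$ — once that is in place, the map $\Phi = (\cdot)\circ\pi^{-1}$ immediately does the rest. (This Fact is exactly what will be invoked in Section~\ref{sec:permutationSplitting} to conclude $|\Omega_r| = |\Omega_{r'}|$, via the explicit permutation there relating the architectures inside the two blocks.)
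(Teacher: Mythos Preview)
Your proof is correct; the paper states this as a Fact without proof, treating it as an immediate relabeling observation. Your explicit bijection $\Phi\tau = \tau\circ\pi^{-1}$ together with the DAG isomorphism $g\mapsto g^\pi$ is exactly the natural argument, and there is nothing to add.
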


\begin{definition}
Two architectures $\cA_1$ and $\cA_2$ both acting on $n$ qubits are called \emph{isomorphic} (denoted $\cA_1 \simeq \cA_2$) if one can be obtained from the other by some permutation of the qubit labels.
\end{definition}

Therefore, isomorphic architectures have the same number of valid configurations.

\begin{fact}
\label{fact:penultimatelayers}
For a bitonic block $\cB_\ell$, the first $\ell - 1$ layers, $\{\cL_1, \cL_2, \ldots, \cL_{\ell - 1}\}$, can be viewed as $\cB_{\ell-1}^{\otimes 2}$ where the first smaller block acts on odd indexed qubits and the second on even indexed qubits.
\end{fact}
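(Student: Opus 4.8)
The plan is to first put the architecture $\cB_\ell$ into a convenient normal form and then read off the claim by restricting to odd- and even-indexed qubits. Label the $2^\ell$ qubits by $1,\dots,2^\ell$, and identify qubit $q$ with the binary string $q-1=\sum_{j=0}^{\ell-1}b_j2^j$. For $0\le j\le \ell-1$ let $\mathrm{flip}_j$ denote the perfect matching on the $2^\ell$ qubits that pairs each string with the string obtained by toggling bit $j$. I would first prove, by induction on $\ell$, the \emph{normal form}: the layers of $\cB_\ell$, read in order of increasing depth $\cL_1,\dots,\cL_\ell$, satisfy $\cL_m=\mathrm{flip}_{\ell-m}$. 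The base case $\ell=1$ is immediate, since $\cB_1$ is the single gate on qubits $1,2$, which differ in bit $0$. For the inductive step, the recursive definition of $\cB_\ell$ says $\cL_1$ connects qubit $i$ to qubit $i+2^{\ell-1}$, i.e. toggles bit $\ell-1$, so $\cL_1=\mathrm{flip}_{\ell-1}$; and the remaining layers form $\cB_{\ell-1}^{\otimes 2}$ on the two halves $\{b_{\ell-1}=0\}$ and $\{b_{\ell-1}=1\}$. By the inductive hypothesis the $m$-th layer of each half-block is $\mathrm{flip}_{(\ell-1)-m}$ restricted to that half, and since for $m\ge 1$ the matching $\mathrm{flip}_{\ell-1-m}$ never touches bit $\ell-1$, the two half-layers glue into $\mathrm{flip}_{\ell-1-m}$ on all $2^\ell$ qubits; hence $\cL_{m+1}=\mathrm{flip}_{\ell-(m+1)}$, completing the induction.

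Given the normal form, the first $\ell-1$ layers $\cL_1,\dots,\cL_{\ell-1}$ are exactly $\mathrm{flip}_{\ell-1},\mathrm{flip}_{\ell-2},\dots,\mathrm{flip}_1$ --- every matching \emph{except} the final one $\cL_\ell=\mathrm{flip}_0$. Each $\mathrm{flip}_j$ with $j\ge 1$ fixes bit $0$, so it maps the set of odd-indexed qubits (those with $b_0=0$) to itself and likewise the even-indexed qubits. Therefore the restriction of layers $\cL_1,\dots,\cL_{\ell-1}$ to the odd-indexed qubits is a well-defined layered architecture on $2^{\ell-1}$ qubits, disjoint from its restriction to the even-indexed qubits, and the two restrictions together recover all of $\cL_1,\dots,\cL_{\ell-1}$.

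To identify each restriction with $\cB_{\ell-1}$, I would re-index the odd qubits $1,3,\dots,2^\ell-1$ as $1,2,\dots,2^{\ell-1}$ via $2k-1\mapsto k$ (equivalently, delete bit $0$ from $q-1$). Under this relabeling $\mathrm{flip}_j$ restricted to the odd qubits becomes $\mathrm{flip}_{j-1}$ on the $2^{\ell-1}$ relabeled qubits, so the layer sequence $\mathrm{flip}_{\ell-1},\dots,\mathrm{flip}_1$ becomes $\mathrm{flip}_{\ell-2},\dots,\mathrm{flip}_0$, which by the normal form applied at rank $\ell-1$ is precisely $\cB_{\ell-1}$. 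The analogous relabeling $2k\mapsto k$ handles the even-indexed qubits. Hence the first $\ell-1$ layers of $\cB_\ell$ equal $\cB_{\ell-1}$ on the odd qubits tensored with $\cB_{\ell-1}$ on the even qubits, i.e. $\cB_{\ell-1}^{\otimes 2}$, as claimed (the statement being understood for $\ell\ge 2$; for $\ell=1$ it is vacuous).

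The only real obstacle is bookkeeping: being careful with the paper's (reverse-listed) naming convention for the recursively defined layers so that ``first $\ell-1$ layers'' genuinely means depths $1$ through $\ell-1$, and tracking the two qubit relabelings so that the half-block recursion of $\cB_{\ell-1}$ lines up with the odd/even splitting rather than the natural top/bottom splitting. Once the $\mathrm{flip}_j$ normal form is in hand these steps are routine; alternatively, a direct induction mirroring the recursive definition of $\cB_\ell$ --- peeling off $\cL_1$, invoking the hypothesis on the two rank-$(\ell-1)$ sub-blocks, and regrouping the resulting sub-sub-blocks by parity --- yields the same conclusion without explicitly isolating the normal form.
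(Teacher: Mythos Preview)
Your proof is correct. The paper's own argument is the terse direct induction you sketch at the end as an alternative: check $\ell=2$ by hand, then for $\ell>2$ peel off $\cL_1$ (which pairs $i$ with $i+2^{\ell-1}$ and hence preserves parity), apply the inductive hypothesis to each of the two rank-$(\ell-1)$ sub-blocks sitting in layers $\cL_2,\dots,\cL_{\ell-1}$, and observe that $\cL_1$ supplies exactly the first layer of $\cB_{\ell-1}$ on each parity class.

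Your main route is genuinely different and more structural: you first isolate the explicit normal form $\cL_m=\mathrm{flip}_{\ell-m}$, identifying the bitonic block with the sequence of hypercube bit-flips in decreasing bit order, and then the claim drops out because omitting the last layer omits only $\mathrm{flip}_0$, the unique layer that mixes parities. This buys a cleaner global picture (and in fact recovers the paper's later observation that the interaction graph of $\cB_\ell$ is the $\ell$-cube), at the cost of one extra induction. The paper's version is shorter but leaves more to the reader; either is fine here.
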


\begin{proof}
It is easy to see for $\ell = 2$. For $\ell > 2$, by induction, we know the odd indexed qubits in layers $\{\cL_2, \ldots, \cL_{\ell - 1}\}$ form $\cB_{\ell - 2}^{\otimes 2}$. Recall that $\cL_1$ contains gates between qubit $i$ and $i + 2^{\ell - 1}$ for $i \leq 2^{\ell - 1}$. This produces a block $\cB_{\ell - 1}$ on the odd indexed qubits; a similar argument holds for even indexed qubits.
\end{proof}

\begin{lemma}
A single layer left cyclic shift of the layers of a bitonic block $\mathcal{B}_\ell$ is isomorphic to the bitonic block $\mathcal{B}_\ell$. The isomorphism is described by the permutation $\pi_\ell$:
\begin{equation}
\pi_\ell(i) \defeq \begin{cases} 2i - 1 & \textrm{if } i \leq 2^{\ell-1} \\ 2i - 2^\ell & \textrm{if } i > 2^{\ell-1}. \end{cases}
\end{equation}
A single layer right cycle shift isomorphism is described by the permutation $\pi_\ell^{-1}$:
\begin{equation}
\pi_\ell^{-1}(i) \defeq \begin{cases} \frac{i+1}{2} & \textrm{if } i \textrm{ odd} \\
\frac{i + 2^\ell}{2} & \textrm{if } i \textrm{ even}.\end{cases}
\end{equation}
\label{lem:permutation}
\end{lemma}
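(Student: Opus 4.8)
The plan is to verify directly that the permutation $\pi_\ell$ maps the layer structure of $\cB_\ell$ to the cyclically shifted structure, working from the recursive definition of the bitonic block. First I would set up the bookkeeping: a single layer left cyclic shift of $\cB_\ell = (\cL_1,\cL_2,\ldots,\cL_\ell)$ is the architecture with layers $(\cL_2,\cL_3,\ldots,\cL_\ell,\cL_1)$. I want to exhibit a qubit relabeling $\pi_\ell$ so that, under this relabeling, $(\cL_2,\ldots,\cL_\ell,\cL_1)$ becomes exactly $\cB_\ell$ again. By Fact~\ref{fact:penultimatelayers}, the first $\ell-1$ layers of the shifted architecture, namely $\cL_2,\ldots,\cL_\ell$, already form $\cB_{\ell-1}^{\otimes 2}$ acting on the odd-indexed qubits (one copy) and the even-indexed qubits (the other copy) of $\cB_\ell$. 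The permutation $\pi_\ell$ is precisely the map that sends $\{1,\ldots,2^{\ell-1}\} \to \{1,3,5,\ldots,2^\ell-1\}$ (odd indices) and $\{2^{\ell-1}+1,\ldots,2^\ell\} \to \{2,4,\ldots,2^\ell\}$ (even indices), in order: one checks $i \mapsto 2i-1$ for $i\le 2^{\ell-1}$ lands on the odd indices, and $i \mapsto 2i - 2^\ell$ for $i>2^{\ell-1}$ lands on the even indices, and that this is a bijection of $\{1,\ldots,2^\ell\}$.

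The key steps, in order, are: (1) confirm $\pi_\ell$ is a bijection and that $\pi_\ell^{-1}$ is as stated (a short case computation: if $j$ is odd then $j = 2i-1$ forces $i = (j+1)/2 \le 2^{\ell-1}$; if $j$ is even then $j = 2i - 2^\ell$ forces $i = (j+2^\ell)/2 > 2^{\ell-1}$). (2) Show that after relabeling by $\pi_\ell$, the layers $\cL_2,\ldots,\cL_\ell$ of $\cB_\ell$ become the layers $\cL_1,\ldots,\cL_{\ell-1}$ of $\cB_\ell$: this is exactly Fact~\ref{fact:penultimatelayers} combined with the recursive structure (the two $\cB_{\ell-1}$ blocks on odd/even qubits get straightened into the two $\cB_{\ell-1}$ blocks on $\{1,\ldots,2^{\ell-1}\}$ and $\{2^{\ell-1}+1,\ldots,2^\ell\}$ that constitute layers $\cL_2,\ldots,\cL_\ell$ of $\cB_\ell$ per Definition~\ref{def:bitonicblock}). (3) Show that the remaining layer $\cL_1$ of the original block, which consists of gates $(i, i+2^{\ell-1})$ for $i \le 2^{\ell-1}$, becomes under $\pi_\ell$ the last layer $\cL_\ell$ of $\cB_\ell$: we have $\pi_\ell(i) = 2i-1$ and $\pi_\ell(i+2^{\ell-1}) = 2(i+2^{\ell-1}) - 2^\ell = 2i$, so the gate $(i,i+2^{\ell-1})$ maps to the gate on consecutive qubits $(2i-1, 2i)$; ranging over $i = 1,\ldots,2^{\ell-1}$ this is exactly the final layer of the bitonic block, which (by unrolling the recursion down to $\cB_1$ blocks) consists of gates on each adjacent pair $(2i-1,2i)$. (4) Conclude the right-shift statement either by the symmetric argument or simply by observing that the right cyclic shift is the inverse operation, so the isomorphism is witnessed by $\pi_\ell^{-1}$; a base case check at $\ell = 1$ (where $\cB_1$ is a single layer and the shift is trivial, $\pi_1 = \mathrm{id}$) completes an induction if one prefers to phrase steps (2)–(3) inductively rather than by direct appeal to Fact~\ref{fact:penultimatelayers}.

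I expect the main obstacle to be purely notational rather than conceptual: carefully matching up the recursive indexing of Definition~\ref{def:bitonicblock} (where for $\ell>1$ the block $\cB_\ell$ has first layer connecting $i$ to $i+2^{\ell-1}$ and the \emph{remaining} layers $\cL_\ell,\ldots,\cL_2$ forming $\cB_{\ell-1}^{\otimes 2}$ — note the reversed layer ordering in that definition) with the "odd/even" decomposition of Fact~\ref{fact:penultimatelayers}, and making sure the claimed isomorphism respects the \emph{layer ordering} and not merely the underlying interaction graph. One subtlety worth flagging explicitly: an isomorphism of architectures (Definition preceding Fact~\ref{fact:penultimatelayers}) must preserve the layered temporal structure, so step (2) must verify not just that the gate sets match up but that layer $\cL_{j+1}$ of the original maps to layer $\cL_j$ of $\cB_\ell$ for each $j$; this follows because $\pi_\ell$ is applied uniformly to all layers at once, and Fact~\ref{fact:penultimatelayers} already gives the layer-by-layer correspondence for the interior layers. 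Since the statement is essentially a structural identity, no estimates or inequalities are involved, and the proof reduces to the bijection check and the straightforward unrolling of the recursion.
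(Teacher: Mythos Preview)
Your proposal is correct and follows essentially the same approach as the paper's proof: use Definition~\ref{def:bitonicblock} to see that the last $\ell-1$ layers $\cL_2,\ldots,\cL_\ell$ form $\cB_{\ell-1}^{\otimes 2}$ on the first-half/second-half qubits, use Fact~\ref{fact:penultimatelayers} to see that the first $\ell-1$ layers $\cL_1,\ldots,\cL_{\ell-1}$ form $\cB_{\ell-1}^{\otimes 2}$ on odd/even qubits, observe that $\pi_\ell$ sends first-half to odd and second-half to even, and then check by hand that $\pi_\ell$ sends $\cL_1$ to $\cL_\ell$. One minor slip to fix when you write it out: in your summary paragraph you stated that $\cL_2,\ldots,\cL_\ell$ already act on odd/even qubits, but it is the other way around (they act on first-half/second-half before applying $\pi_\ell$, and become odd/even after); your steps (2) and (3) have the direction right, so this is exactly the notational bookkeeping you anticipated.
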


Figure \ref{fig:colorfulPermutation} illustrates the above permutations for $\ell=3$.

\begin{proof}
By Fact \ref{fact:penultimatelayers}, the first $\ell - 1$ layers of a bitonic block $\cB_\ell$ form $\cB_{\ell-1}^{\otimes 2}$ where one is the collection of odd indexed rows and other the collection of even indexed rows. By the definition, the last $\ell-1$ layers form $\cB_{\ell-1}^{\otimes 2}$ where one is the collection of the first $2^{k-1}$ rows and the other the collection of $2^{\ell-1}$ rows. Therefore, any permutation for a single layer left cyclic shift must permute these $\cB_{\ell-1}$ blocks onto each other. It then is easy to check that the permutation $\pi$ will also send the first layer to the last layer. The single layer right cycle shift is just the inverse permutation, $\pi^{-1}$.
\end{proof}

\begin{figure}[h]
\begin{center}
\includegraphics[scale=.4]{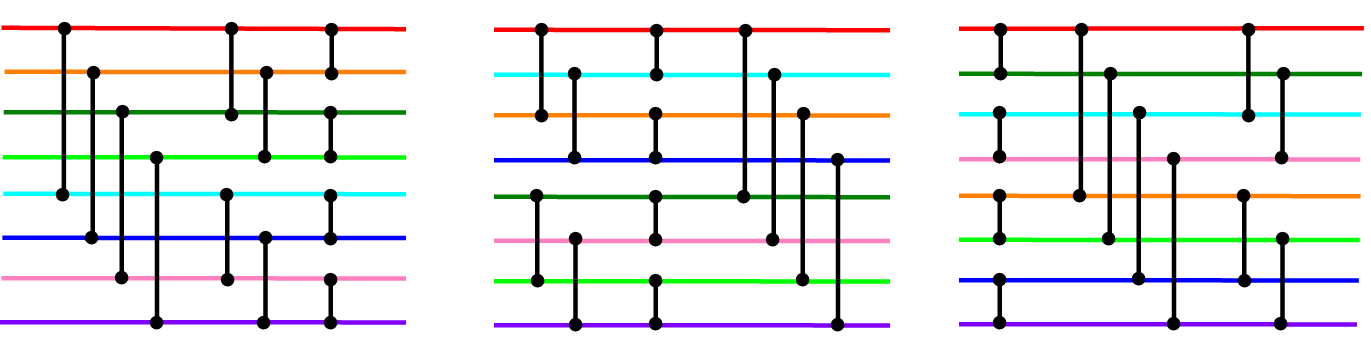}
\end{center}
\caption{A demonstration of Lemma \ref{lem:permutation}. The colored wires represent the permutation mapping each of the shifted bitonic blocks $\cB_3$ back to the original bitonic block. 
\label{fig:colorfulPermutation}}
\end{figure}

\begin{corollary}
A $j$ layer left (or right) cyclic shift of the layers of a bitonic block $\cB_\ell$ is isomorphic to a bitonic block $\cB_\ell$. The permutation describing the isomorphism is $\pi_\ell^j$ (or $\pi_\ell^{-j}$).
\end{corollary}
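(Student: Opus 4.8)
The plan is to prove the corollary by a short induction on $j$, using Lemma~\ref{lem:permutation} as the base case. The one structural observation to record first is that cyclically shifting the layers of an architecture commutes with relabeling its qubit wires: a wire permutation does not move any gate out of the layer it occupies, so if $\sigma \cdot \cA$ denotes the architecture obtained from $\cA$ by relabeling wires according to $\sigma$ and $S(\cA)$ denotes the single-layer left cyclic shift of the layers of $\cA$, then $S(\sigma \cdot \cA) = \sigma \cdot S(\cA)$. I would also note that $\sigma \cdot (\rho \cdot \cA) = (\sigma\rho) \cdot \cA$ once a composition convention for wire relabelings is fixed; this is the only place a convention matters, and everything below is insensitive to which one is chosen so long as it is used consistently.

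With these in hand, restate Lemma~\ref{lem:permutation} as the identity $S(\cB_\ell) = \pi_\ell^{-1} \cdot \cB_\ell$ (the left shift of $\cB_\ell$ becomes $\cB_\ell$ after relabeling by $\pi_\ell$). Now induct: assuming $S^{j}(\cB_\ell) = \pi_\ell^{-j} \cdot \cB_\ell$, apply $S$ and use the two facts above to get $S^{j+1}(\cB_\ell) = S(\pi_\ell^{-j} \cdot \cB_\ell) = \pi_\ell^{-j} \cdot S(\cB_\ell) = \pi_\ell^{-j} \cdot (\pi_\ell^{-1} \cdot \cB_\ell) = \pi_\ell^{-(j+1)} \cdot \cB_\ell$, which is exactly the statement that a $(j+1)$-layer left cyclic shift of $\cB_\ell$ is isomorphic to $\cB_\ell$ via the permutation $\pi_\ell^{j+1}$. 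The right-shift claim follows by the identical argument with $S^{-1}$ in place of $S$ and base case $S^{-1}(\cB_\ell) = \pi_\ell \cdot \cB_\ell$ from the second half of Lemma~\ref{lem:permutation}, yielding $S^{-j}(\cB_\ell) = \pi_\ell^{j} \cdot \cB_\ell$, i.e. isomorphism via $\pi_\ell^{-j}$.

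There is no real obstacle here; the only thing to be careful about is the internal consistency of the wire-relabeling conventions. A useful sanity check is that since $\cB_\ell$ has exactly $\ell$ layers, $S^\ell$ is the identity operation, so the construction forces $\pi_\ell^\ell = \mathrm{id}$ — which can be verified directly from the formula for $\pi_\ell$ (for instance, for $\ell = 3$, $\pi_3$ fixes $1$ and $8$ and acts on $\{2,\dots,7\}$ as two disjoint $3$-cycles). Consequently the exponent $j$ in the corollary is only meaningful modulo $\ell$, consistently with the cyclic nature of the shift, and one may take $0 \le j < \ell$ without loss of generality.
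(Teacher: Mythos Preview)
Your proposal is correct and is essentially the argument the paper has in mind: the corollary is stated without proof as an immediate consequence of Lemma~\ref{lem:permutation}, and your induction on $j$ (using that wire relabeling commutes with layer shifts, so single-shift isomorphisms compose) is exactly how one spells out that immediacy. The sanity check $\pi_\ell^\ell = \mathrm{id}$ is a nice addition.
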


This yields the following important corollary.

\begin{corollary}
Let $\cA_\ell^j$ be a sub-architecture formed by taking any distinct $j$ layers (in any order). Then,
\begin{equation}
\cA_\ell^j \simeq \bigotimes_{i = 1}^{2^{k-j}} \cB_j.
\end{equation}
\label{cor:subbitonic}
\end{corollary}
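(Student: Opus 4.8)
The plan is to derive the corollary from one structural fact about bitonic blocks: under the natural identification of the $2^\ell$ wires of $\cB_\ell$ with the strings in $\{0,1\}^\ell$, the $\ell$ layers $\cL_1,\dots,\cL_\ell$ are exactly the $\ell$ coordinate-flip perfect matchings of the $\ell$-dimensional hypercube (occurring in a specific order). I would first prove this by induction on $\ell$ directly from Definition~\ref{def:bitonicblock}. The base case $\cB_1$ is a single gate, i.e. the flip of its one coordinate. For $\ell>1$, the first layer $\cL_1$ connects $i$ with $i+2^{\ell-1}$, which is precisely the flip of the most significant coordinate; and the remaining $\ell-1$ layers form $\cB_{\ell-1}^{\otimes 2}$ on the two halves $\{x_\ell=0\}$ and $\{x_\ell=1\}$, so by the inductive hypothesis each of them is the flip of one of the coordinates $x_1,\dots,x_{\ell-1}$, applied uniformly over both halves. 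These $\ell$ matchings are all distinct and exhaust the coordinate directions.

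Granting this, the corollary becomes elementary hypercube combinatorics. Selecting any $j$ distinct layers, in any order, is the same as selecting a set $C$ of $j$ coordinate directions together with an ordering $(c_1,\dots,c_j)$ of $C$; the sub-architecture $\cA_\ell^j$ is then the graph on $\{0,1\}^\ell$ whose layers are the flips of $c_1,\dots,c_j$, in that order. Its connected components are exactly the $2^{\ell-j}$ fibers obtained by fixing the coordinates outside $C$, and each such fiber is a $j$-dimensional hypercube whose $j$ flip-layers run through the directions $c_1,\dots,c_j$. Choosing a permutation $\rho$ of the $\ell$ coordinate axes that carries $C$ onto the axis set of $\cB_j$ and sends the tuple $(c_1,\dots,c_j)$ to the canonical layer order of $\cB_j$ (which is the structural fact above with $\ell$ replaced by $j$) gives a legitimate relabeling of the wires under which every fiber becomes a copy of $\cB_j$, the inactive coordinates merely indexing the $2^{\ell-j}$ copies; hence $\cA_\ell^j\simeq\bigotimes_{i=1}^{2^{\ell-j}}\cB_j$. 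This treatment is uniform in the ordering of the chosen layers and in particular covers $j=\ell$, where it reduces to the statement that permuting all the layers of $\cB_\ell$ again produces an architecture isomorphic to $\cB_\ell$. A more self-contained alternative, closer in spirit to the other appendix lemmas, is to induct on $\ell$: when $j<\ell$, use the preceding cyclic-shift Corollary and the permutations $\pi_\ell^{j}$ to rotate an unchosen layer to the boundary, then strip it off via Fact~\ref{fact:penultimatelayers} or Definition~\ref{def:bitonicblock} to land inside $\cB_{\ell-1}^{\otimes 2}$ and apply the hypothesis; but this route needs an extra argument for $j=\ell$, where no layer is free to strip.

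The only genuine work should be bookkeeping rather than any one idea. Proving the structural fact cleanly requires care with the bitonic layer-naming convention, in which the layer subscripts do not follow temporal order and in which the two natural ways to cut $\cB_\ell$ into $\cB_{\ell-1}^{\otimes 2}$ split along different coordinates (the most significant bit when $\cL_1$ is removed, as in Definition~\ref{def:bitonicblock}; the least significant bit in Fact~\ref{fact:penultimatelayers}); one should check that the recursive wire-relabelings compose into a single global permutation. Confirming that a permutation of coordinate axes of $\{0,1\}^\ell$ is an admissible architecture isomorphism in the sense of Definition~\ref{def:architecture}, and that the canonical layer order of $\cB_j$ can always be matched by such a permutation, is then routine.
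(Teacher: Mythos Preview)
Your proposal is correct, and your primary (hypercube) route is genuinely different from the paper's. The paper's proof is exactly your ``alternative'': it iteratively rotates an excluded layer to position~$\cL_1$ via the cyclic-shift corollary, strips it off so that the remaining $\ell-1$ layers become $\cB_{\ell-1}^{\otimes 2}$, and repeats. That argument is three lines but, as you already noticed, it only obviously handles the $j$ chosen layers in their inherited relative order (up to cyclic shifts), and says nothing when $j=\ell$ with a nontrivial reordering; the ``in any order'' clause in the statement is not fully justified there.

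Your approach front-loads the structural fact that the layers of $\cB_\ell$ are precisely the $\ell$ coordinate-flip matchings of $\{0,1\}^\ell$ (a fact the paper actually proves later, in the spatial-locality section, as the statement that the interaction graph of $\cB_\ell$ is the $\ell$-hypercube). Once that is in hand, a single permutation of coordinate axes simultaneously identifies any ordered $j$-subset of layers with the canonical layer order of $\cB_j$ on each $j$-dimensional fiber, so the isomorphism and the ``in any order'' clause fall out at once and uniformly, including $j=\ell$. The cost is the explicit bookkeeping you flag about matching the bitonic layer indexing to coordinate directions; that is routine but should be written carefully, since Definition~\ref{def:bitonicblock} and Fact~\ref{fact:penultimatelayers} split $\cB_\ell$ along different bits.
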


\begin{proof}
Consider any excluded layer. By the previous corollary, we can assume it to be the first layer. Thus, the remaining layers decompose into the tensor product of smaller bitonic blocks. We can repeat for each excluded layer.
\end{proof}

\subsection{Counting configurations}

\subsubsection{Configurations of a bitonic block}\label{sec:countingBitonicBlock}

We now recursively count the number of valid configurations of a block $\cB_\ell$. This will be useful in the encoding circuit and the spectral gap analysis.

\begin{theorem}\label{thm:countingBitonicBlock}
  Let $a_\ell$ be the total number of valid partial configurations of $\mathcal{B}_\ell$. This number is described by the recurrence relation\footnote{This recurrence relation does not have a known solution. It is, however, known to scale as
\begin{eqnarray}
  a_\ell \sim \frac{\omega^{2^\ell}}{\phi}
\end{eqnarray}
where $\phi$ is the golden ratio, and $\omega=1.8445\ldots$, a number with no known form.}
  \begin{eqnarray}
    a_\ell \defeq 2a_{\ell-1}^2-a_{\ell-2}^4,
  \end{eqnarray}
  with initial conditions $a_1=2$, $a_2=7$.
\label{thm:numbconfigs}
\end{theorem}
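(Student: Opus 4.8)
The plan is to set up a joint recursion that tracks not only the total number $a_\ell$ of valid configurations of $\cB_\ell$ but also a refined count that records the ``boundary state'' needed to glue smaller blocks together. Recall from Definition~\ref{def:bitonicblock} that $\cB_\ell$ consists of a first layer $\cL_1$ of $2^{\ell-1}$ gates connecting qubit $i$ to qubit $i+2^{\ell-1}$, followed by two disjoint copies of $\cB_{\ell-1}$ acting on the lower and upper halves. By Corollary~\ref{cor:dictotomyofactivation}, any valid configuration of $\cB_\ell$ falls into one of two (overlapping) cases: either \emph{every} gate of $\cL_1$ is activated, or \emph{every} gate of $\cL_\ell$ is not activated. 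By Corollary~\ref{cor:subbitonic}, deleting the first layer $\cL_1$ leaves an architecture isomorphic to $\cB_{\ell-1}^{\otimes 2}$, and deleting the last layer $\cL_\ell$ also leaves an architecture isomorphic to $\cB_{\ell-1}^{\otimes 2}$ (the first $\ell-1$ layers, by Fact~\ref{fact:penultimatelayers}). So the first step is: count configurations in which all of $\cL_1$ is applied (these are in bijection with configurations of $\cB_{\ell-1}^{\otimes 2}$ on the ``remaining'' architecture, giving $a_{\ell-1}^2$), count configurations in which none of $\cL_\ell$ is applied (again $a_{\ell-1}^2$ by the symmetric argument), and then subtract off the configurations counted twice, i.e.\ those in which all of $\cL_1$ is applied \emph{and} none of $\cL_\ell$ is applied.

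The crux, then, is to show that the number of configurations satisfying both ``$\cL_1$ fully applied'' and ``$\cL_\ell$ fully unapplied'' equals $a_{\ell-2}^4$. Here is where the light-cone doubling property (used already in the proof of Lemma~\ref{lem:width}) does the work: once $\cL_1$ is fully applied and $\cL_\ell$ is fully unapplied, the configuration is pinned to live strictly inside the middle layers. Concretely, removing both $\cL_1$ and $\cL_\ell$ leaves (by two applications of Corollary~\ref{cor:subbitonic}, deleting two distinct layers) an architecture isomorphic to $\cB_{\ell-2}^{\otimes 4}$. I would argue that the constraints ``$\cL_1$ all-applied'' and ``$\cL_\ell$ all-unapplied'' are exactly the boundary conditions that make a valid configuration of this inner $\cB_{\ell-2}^{\otimes 4}$ extend \emph{uniquely} to a valid configuration of $\cB_\ell$ with those boundary conditions — and conversely every valid configuration of $\cB_\ell$ of that type restricts to a valid configuration of the inner architecture. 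That bijection gives the $a_{\ell-2}^4$ term and hence $a_\ell = 2a_{\ell-1}^2 - a_{\ell-2}^4$. Finally I would verify the base cases by direct enumeration: $\cB_1$ is a single gate, with the $2$ configurations ``applied'' and ``not applied,'' so $a_1 = 2$; and $\cB_2$ has $7$ valid configurations, which one checks by listing the partial orders respecting causal dependence of its three-layer, $n=4$ architecture (equivalently, by the recursion with a suitable convention, or by brute force over the $\le 3^4$ candidate vectors in $Z_3^4$).

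The main obstacle I anticipate is making the inclusion–exclusion rigorous at the level of configuration \emph{vectors} rather than informal ``strings,'' since the sets ``$\cL_1$ fully applied'' and ``$\cL_\ell$ fully unapplied'' are defined by causal-closure conditions and I must be careful that (i) a configuration with $\cL_1$ fully applied genuinely corresponds bijectively to an unconstrained configuration of the post-$\cL_1$ architecture (one must check that causal closure in $\cB_\ell$ restricted to those gates is exactly causal closure in $\cB_{\ell-1}^{\otimes 2}$, with no spurious cross-constraints through $\cL_1$), and symmetrically for the last-layer case, and (ii) the intersection really does decouple into four independent $\cB_{\ell-2}$ factors with \emph{no} residual coupling — this is precisely the content of the doubling-light-cone structure, and it is the step that needs the cleanest statement. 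Once the combinatorial bijections are nailed down, the recurrence is immediate; the asymptotic estimate $a_\ell \sim \omega^{2^\ell}/\phi$ with $\phi$ the golden ratio is a known consequence of analyzing the recursion (see~\cite{lagarias2002counting}) and I would simply cite it rather than reprove it, since it is relegated to the footnote and not part of the formal claim.
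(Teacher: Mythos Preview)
Your proposal is correct and follows essentially the same approach as the paper's proof: apply the dichotomy of Corollary~\ref{cor:dictotomyofactivation}, use Corollary~\ref{cor:subbitonic} to count each case as $a_{\ell-1}^2$, and identify the double-counted intersection with configurations of the middle $\ell-2$ layers $\simeq \cB_{\ell-2}^{\otimes 4}$, giving the $a_{\ell-2}^4$ correction. The paper's argument is terser but structurally identical; your anticipated obstacles about the bijections being exact are precisely what Corollary~\ref{cor:subbitonic} is designed to dispatch.
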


\begin{proof}
The initial cases can be counted by hand. For $\ell > 2$, by Corollary \ref{cor:dictotomyofactivation}, we know that the first layer is entirely activated or the last layer is entirely not activated. Corollary \ref{cor:subbitonic}, tells us that, in either case, the remaining layers are isomorphic to $\cB_{\ell - 1}^{\otimes 2}$. Therefore, aside from double-counting between the two cases, there are $2a_{\ell - 1}^2$ valid configurations. The set of double counted configurations are all configurations that lie entirely in the middle $\ell - 2$ layers. Again we apply Corollary \ref{cor:subbitonic}, to argue that this set of layers is isomorphic to $\cB_{\ell - 2}^{\otimes 4}$, and therefore, has $a_{\ell - 2}^4$ valid configurations.
\end{proof}

\begin{corollary}
The total number of valid partial configurations of $\cB_\ell$ with some gate in layer $\cL_1$ not activated is $a_\ell - a_{\ell - 1}^2$.
\label{cor:gatenotactivated}
\end{corollary}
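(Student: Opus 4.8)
The plan is to argue by complementary counting: since $a_\ell$ is the total number of valid partial configurations of $\cB_\ell$, it suffices to count the valid configurations in which \emph{every} gate of the first layer $\cL_1$ is activated, and then subtract. Call a valid configuration $\tau \in \Z_{\ell+1}^n$ \emph{$\cL_1$-saturated} if every gate of $\cL_1$ has been applied, i.e.\ $\tau_i \geq 1$ for all $i$ (equivalently, $\min(\tau) \geq 1$). Let $b_\ell$ denote the number of $\cL_1$-saturated valid configurations; we will show $b_\ell = a_{\ell-1}^2$, from which the claim $a_\ell - a_{\ell-1}^2$ follows immediately.

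First I would observe that $\cL_1$-saturated configurations of $\cB_\ell$ are in bijection with valid configurations of the sub-architecture $\cA$ obtained by deleting layer $\cL_1$ (so $\cA$ consists of the layers $\cL_2,\ldots,\cL_\ell$), via the shift $\tau \mapsto \tau - (1,\ldots,1)$. The content to check here is that this shift really does land in (and surjects onto) the valid configurations of $\cA$: a gate $g$ of $\cA$ at depth $d_g$ is applied in $\tau$ iff it is applied in $\tau-(1,\ldots,1)$ at depth $d_g-1$, and the causal predecessors of $g$ that lie in $\cA$ are handled exactly as before; the only causal predecessors of $g$ that lie in $\cL_1$ are automatically satisfied because $\tau$ is $\cL_1$-saturated. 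Hence the causality condition for $\tau$ as a configuration of $\cB_\ell$ restricts precisely to the causality condition for $\tau-(1,\ldots,1)$ as a configuration of $\cA$, giving the bijection and therefore $b_\ell = (\text{number of valid configurations of } \cA)$.

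Next I would apply Corollary~\ref{cor:subbitonic} with $j = \ell-1$: the sub-architecture $\cA$ is formed from $\ell-1$ distinct layers of $\cB_\ell$, so $\cA \simeq \bigotimes_{i=1}^{2^{\ell-(\ell-1)}} \cB_{\ell-1} = \cB_{\ell-1}^{\otimes 2}$. Since isomorphic architectures have the same number of valid configurations, and since a valid configuration of a tensor product of architectures on disjoint qubit sets is exactly a pair of independent valid configurations of the two factors, the number of valid configurations of $\cA$ equals $a_{\ell-1}\cdot a_{\ell-1} = a_{\ell-1}^2$. Combining with the previous paragraph, $b_\ell = a_{\ell-1}^2$, and the number of valid configurations with some gate of $\cL_1$ \emph{not} activated is $a_\ell - b_\ell = a_\ell - a_{\ell-1}^2$, as desired.

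I do not expect a genuine obstacle here; the one point requiring a little care is the verification in the second paragraph that fully activating $\cL_1$ decouples it from the remaining causal constraints, so that the $\cL_1$-saturated configurations are governed solely by the architecture $\cB_{\ell-1}^{\otimes 2}$ on the later layers — everything else is bookkeeping via Corollary~\ref{cor:subbitonic} and the multiplicativity of configuration counts over tensor products.
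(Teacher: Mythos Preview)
Your proposal is correct and follows essentially the same approach as the paper: complementary counting, observing that the $\cL_1$-saturated configurations are in bijection with valid configurations of the remaining $\ell-1$ layers, and invoking Corollary~\ref{cor:subbitonic} to identify those layers with $\cB_{\ell-1}^{\otimes 2}$ so that the count is $a_{\ell-1}^2$. The paper's proof is a two-sentence version of exactly this argument, whereas you have spelled out the shift bijection and the decoupling of causal constraints more explicitly.
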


\begin{proof}
We need to ignore the valid partial configurations which have the entire $\cL_1$ layer activated. The last $\ell - 1$ layers are isomorphic to $\cB_{\ell - 1}^{\otimes 2}$ by Corollary \ref{cor:subbitonic}.
\end{proof}

\subsubsection{Configurations of products of bitonic blocks}
\label{subsec:countingconfigsofproducts}

Consider an architecture composed of $m$ consecutive copies of a bitonic block of rank $\ell$.
\begin{definition}[Linear product of bitonic blocks]
\begin{equation}
\cB_\ell^{\times m} \defeq \prod_{i = 1}^m \cB_\ell.
\end{equation}
\label{def:linearproduct}
\end{definition}

\begin{theorem}
The total number of configurations of $\cB_\ell^{\times m}$ is
\begin{equation}
a_\ell^{\times m} \defeq \left(\left(m-1\right)\ell+1\right)a_\ell-(m-1)\ell a_{\ell-1}^2.
\end{equation}
\label{thm:configcountprod}
\end{theorem}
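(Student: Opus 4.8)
The plan is to prove the formula by a direct counting argument that exploits the structural results already established for a single bitonic block, most importantly Corollary~\ref{cor:dictotomyofactivation} (the activation dichotomy) and Corollary~\ref{cor:subbitonic} (removing layers splits a block into a tensor product of smaller blocks). First I would set up the following observation: a valid configuration $\tau$ of $\cB_\ell^{\times m}$ has width $w(\tau) < \ell$ by Lemma~\ref{lem:width} applied within each copy (more precisely, the causal cone argument still doubles layer-by-layer, so the width bound survives concatenation). Consequently the ``active front'' of any valid configuration is contained in at most two consecutive copies of $\cB_\ell$: there is a well-defined index $i \in \{1,\dots,m\}$ such that all gates in copies $1,\dots,i-1$ are fully activated, no gate in copies $i+1,\dots,m$ is activated, and the front lies inside copy $i$ together with possibly the first few layers of copy $i+1$ or equivalently the last few layers of copy $i$. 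The cleanest way to make this precise is to say: the configuration is determined by which copy contains the ``boundary'' and a valid configuration of the relevant window.

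Next I would count by cases on where the front sits. Let $f_\ell \defeq a_\ell - a_{\ell-1}^2$ denote, following Corollary~\ref{cor:gatenotactivated}, the number of valid configurations of $\cB_\ell$ in which at least one gate of the first layer $\cL_1$ is \emph{not} activated; by the activation dichotomy and the reflection symmetry of the block, this also equals the number with at least one gate of $\cL_\ell$ \emph{activated}. The idea is to attribute each configuration of $\cB_\ell^{\times m}$ to the unique ``leftmost incomplete layer.'' There are $(m-1)\ell + 1$ candidate positions for this leftmost incomplete layer (the $\ell$ layers in each of copies $1,\dots,m-1$, plus the ``no incomplete layer'' case where everything is activated, which contributes $1$), but one must be careful not to double count: a configuration whose front straddles the boundary between copy $i$ and copy $i+1$ gets counted once. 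Concretely, for each choice of boundary layer we get a window of $\ell$ consecutive layers (wrapping across a copy boundary), which by Corollary~\ref{cor:subbitonic} is isomorphic to a single bitonic block $\cB_\ell$, and the configurations with that exact leftmost incomplete layer are in bijection with the $f_\ell$ configurations of $\cB_\ell$ that do not fully activate layer $\cL_1$ but with the constraint tracked correctly. Summing $f_\ell$ over the $(m-1)\ell$ nontrivial boundary positions and adding the single all-activated configuration and the $a_\ell - (m-1)\ell f_\ell$ correction gives
\begin{equation}
a_\ell^{\times m} = a_\ell + (m-1)\ell\bigl(a_\ell - a_{\ell-1}^2\bigr) = \bigl((m-1)\ell + 1\bigr)a_\ell - (m-1)\ell\, a_{\ell-1}^2,
\end{equation}
which is exactly the claimed formula; I would double-check the base case $m=1$ (giving $a_\ell$) and $m=2$ (giving $(\ell+1)a_\ell - \ell a_{\ell-1}^2$) by the direct dichotomy argument as a sanity check, and it would be cleaner to phrase the whole thing as an induction on $m$: assuming the formula for $m-1$, a configuration of $\cB_\ell^{\times m}$ either activates no gate in the final copy (contributing $a_\ell^{\times(m-1)}$ configurations, since the first $m-1$ copies form $\cB_\ell^{\times(m-1)}$) or activates at least one gate of the final copy's first layer, and in the latter case the width bound forces the first $m-2$ copies to be fully activated while copies $m-1$ and $m$ jointly carry a configuration; Corollary~\ref{cor:subbitonic} identifies the relevant $\ell$-layer windows with copies of $\cB_\ell$, yielding an additional $\ell(a_\ell - a_{\ell-1}^2)$ configurations, and the two cases partition the set so $a_\ell^{\times m} = a_\ell^{\times(m-1)} + \ell(a_\ell - a_{\ell-1}^2)$, which telescopes to the stated closed form.

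The main obstacle I anticipate is getting the double-counting bookkeeping exactly right: when the active front straddles a copy boundary, the ``leftmost incomplete layer'' and the window of $\ell$ layers it determines must be defined so that each valid configuration is counted precisely once, and one must verify that every such window is genuinely isomorphic to $\cB_\ell$ (not merely to a product of smaller blocks) — this is where Corollary~\ref{cor:subbitonic} together with the cyclic-shift isomorphisms of Lemma~\ref{lem:permutation} does the work, since $\ell$ consecutive layers of $\cB_\ell^{\times m}$ that wrap across one boundary are, after the appropriate relabeling, a single bitonic block. A secondary subtlety is justifying that the width bound of Lemma~\ref{lem:width} extends from a single block to the concatenation $\cB_\ell^{\times m}$: the causal-cone-doubling argument shows that if any gate of layer $\cL_\ell$ of copy $i$ is activated then all of copy $i$'s first layer (and hence, propagating leftward, everything to the left) is activated, which is exactly the statement needed to localize the front to two consecutive copies. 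Once these two points are nailed down, the arithmetic is the routine telescoping above.
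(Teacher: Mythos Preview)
Your approach is essentially the same as the paper's: both arguments localize each valid configuration to a unique window of $\ell$ consecutive layers (isomorphic to $\cB_\ell$ via the cyclic-shift isomorphisms), count $a_\ell - a_{\ell-1}^2$ configurations per window except the last by requiring that at least one gate of the window's first layer is unactivated, and sum over the $(m-1)\ell + 1$ windows. Your inductive reformulation $a_\ell^{\times m} = a_\ell^{\times(m-1)} + \ell(a_\ell - a_{\ell-1}^2)$ is a minor repackaging of the same count; the only thing to tighten in your write-up is the sentence about ``the single all-activated configuration and the $a_\ell - (m-1)\ell f_\ell$ correction,'' which is garbled --- the clean statement is simply that the last window contributes $a_\ell$ with no restriction, and the remaining $(m-1)\ell$ windows each contribute $f_\ell$.
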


\begin{proof}
Notice, that \emph{any} $\ell$ consecutive layers -- henceforth called a window -- of $\cB_\ell^{\times m}$ is isomorphic to a bitonic block $\cB_\ell$. Then by Lemma \ref{lem:width}, we know that any valid configuration is contained within a window. We can, therefore, count the number of valid configurations by considering the first window it appears in. For every window except the last, all configurations corresponding to the window must have some gate in the first layer not activated; otherwise, they would correspond to a later window. By Corollary \ref{cor:gatenotactivated}, there are $a_\ell - a_{\ell - 1}^2$ configurations for every window except the last. For the last, there are no restrictions, so there are $a_{\ell}$ configurations. It is easy to see that there are $(m-1)\ell + 1$ windows. Then,
\begin{equation}
\begin{aligned}
a_\ell^{\times m} = (m-1)\ell (a_\ell - a_{\ell - 1}^2) + a_\ell = \left(\left(m-1\right)\ell+1\right)a_\ell-(m-1)\ell a_{\ell-1}^2.
\end{aligned}
\end{equation}
\end{proof}

\begin{definition}
Let $\cB_\ell^{\leftrightarrow m}$ be the circular architecture defined by taking $m$ copies of the bitonic block and wrapping it around the cylinder.
\label{def:circularwrapping}
\end{definition}

\begin{theorem}
The total number of configurations of $\cB_\ell^{\leftrightarrow m}$ is
\begin{equation}
a_\ell^{\leftrightarrow m} \defeq (a_\ell - a_{\ell - 1}^2) m \ell.
\end{equation}
\label{thm:countcircular}
\end{theorem}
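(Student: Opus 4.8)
The statement to prove is \pref{thm:countcircular}: the number of valid configurations of the circular architecture $\cB_\ell^{\leftrightarrow m}$ equals $(a_\ell - a_{\ell-1}^2)m\ell$.

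\textbf{Plan.} The strategy mirrors the proof of \pref{thm:configcountprod} for the linear product, but corrects for the absence of boundary effects. First I would observe that any $\ell$ consecutive layers of $\cB_\ell^{\leftrightarrow m}$ — a ``window'' — form a sub-architecture isomorphic to the bitonic block $\cB_\ell$; this is exactly \pref{cor:subbitonic} applied on the cylinder, since choosing $\ell$ consecutive layers out of the $m\ell$ total is choosing $\ell$ distinct layers. By \pref{lem:width}, every valid configuration of $\cB_\ell^{\leftrightarrow m}$ has width strictly less than $\ell$, hence is entirely contained in at least one window. So I can count configurations by a ``charging'' argument: assign each valid configuration to a canonical window. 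In the linear case one charges to the \emph{first} window containing it; on the cylinder there is no ``first'' window, but I would instead charge each configuration to the unique window $W$ such that $\min_i \tau_i$ lands in the first layer of $W$ (equivalently, the window whose leftmost layer is the minimum-time layer of the configuration). Because the width is $< \ell$, the configuration fits inside that window, and this assignment is well-defined and unique.

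\textbf{Key step — counting per window.} Within a fixed window $W \simeq \cB_\ell$, the configurations charged to $W$ are exactly those valid configurations of $\cB_\ell$ in which at least one gate of the \emph{first} layer $\cL_1$ is \emph{not} activated (if every gate of $\cL_1$ were activated, the minimum time would be $\ge 1$ and the configuration would be charged to the previous window instead). By \pref{cor:gatenotactivated}, the number of valid configurations of $\cB_\ell$ with some gate of $\cL_1$ not activated is $a_\ell - a_{\ell-1}^2$. There are exactly $m\ell$ windows (one for each layer, taken as the leftmost layer of the window, cyclically). Since the charging map is a bijection between valid configurations and (configuration-in-window) pairs where the window is the canonical one, and each of the $m\ell$ windows contributes $a_\ell - a_{\ell-1}^2$, we get $a_\ell^{\leftrightarrow m} = (a_\ell - a_{\ell-1}^2)m\ell$.

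\textbf{Expected main obstacle.} The routine part is the arithmetic; the subtle part is making the ``canonical window'' definition airtight on the cylinder, in particular verifying that the charging map is both well-defined (every configuration gets exactly one window) and surjective onto exactly the configurations of $\cB_\ell$ with a non-activated first-layer gate. The potential pitfall is a configuration that wraps around the seam where the $m$-th copy meets the first — but since width $< \ell$ and the architecture is genuinely periodic (every window is isomorphic to $\cB_\ell$ regardless of whether it straddles a copy boundary), the seam plays no special role, and the argument goes through uniformly. I would also double-check the degenerate small cases ($m=1$, or $\ell$ small) against the formula to make sure no off-by-one in the window count has crept in; for $m=1$ the formula gives $(a_\ell - a_{\ell-1}^2)\ell$, which should be re-derived directly from the ``each of $\ell$ cyclic windows contributes $a_\ell - a_{\ell-1}^2$'' reasoning as a sanity check.
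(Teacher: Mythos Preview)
Your proposal is correct and follows essentially the same approach as the paper's proof: both charge each configuration to a canonical window, observe that each of the $m\ell$ cyclic windows contributes exactly $a_\ell - a_{\ell-1}^2$ configurations via \pref{cor:gatenotactivated}, and multiply. The paper's proof is terser (it simply says ``identify each configuration with the first window containing it'' and notes there are $m\ell$ windows on the cylinder), whereas you make the canonical-window choice explicit by anchoring at the minimum-time layer; this extra care is warranted since ``first'' is not a priori meaningful on a circle, but the underlying argument is the same.
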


\begin{proof}
We can consider a similar argument as that of Theorem \ref{thm:configcountprod}. In this case, there are $\ell m$ windows as windows can wrap around the circular architecture. Since we identify each configuration with the first window containing it, every window must have some gate in the first layer not activated. Each layer of the architecture can be the start of a window since the architecture is circular. Therefore, there are $m \ell$ windows, completing the proof.
\end{proof}

We now provide the proof of Lemma \ref{lem:overlap} which was omitted from the main article.

\begin{proof}[Proof of Lemma \ref{lem:overlap}]
Let $\cT_{comp}$ be the set of valid configurations for whom all clocks were past $D_1 \ell^2$. Since all the gates in the circuit past time $D_1 \ell^2$ are identity gates, $\ket{\psi_\timeconfig}$ is constant. The subcircuit of identity gates has a depth of $(3/\eps - 1)D_1 \ell^2$ depth. By Lemma \ref{lem:width}, we know that any valid configuration has a width of at most $\ell$ and by the counting argument of Lemma \ref{lem:circularDensity} and Theorem \ref{thm:countcircular}, we know that we can get a lower bound on $\abs{\cT_{comp}}/\abs{\cT}$ by counting the fraction of windows purely contained in the subcircuit of identity gates. To avoid any configurations that cross outside the region of identity gates, we will ignore the first and last $D_1 \ell^2$ gates. Then the fraction of windows purely contained is at least
\begin{align}
\frac{\left(\frac{3}{\eps} - 1\right)D_1 \ell^2 - 2 D_1 \ell^2}{\frac{3 D_1 \ell^2}{\eps}} = 1 - \eps.
\end{align}
\end{proof}

\subsubsection{Configurations overlapping the initial state}

\begin{lemma}
Let $i$ be a fixed qubit. Then the number of valid configurations of $\cB_{\ell}^{\times m}$ such that the clock of qubit $i$ is at 0 is $\prod_{j = 1}^{\ell - 1} a_j$.
\end{lemma}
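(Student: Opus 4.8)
The plan is to reduce the count for $\cB_\ell^{\times m}$ to the count for a single bitonic block of smaller rank and then set up a short recursion in $\ell$, driven by the decomposition $\cB_m = \cL_1 \cdot \cB_{m-1}^{\otimes 2}$ and its ``penultimate'' form (Fact~\ref{fact:penultimatelayers}). Throughout, I read Definition~\ref{def:partial-config} with the down-set convention used implicitly in the proof of Lemma~\ref{lem:width}: $\tau$ records the number of applied gate-layers on each wire, so $\tau_a \geq 1$ forces the (unique) first gate on wire $a$, hence also the clock of its partner, to be at least $1$. Fix the qubit $i$ and a valid configuration $\tau$ of $\cB_\ell^{\times m}$ with $\tau_i = 0$. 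As in the proof of Theorem~\ref{thm:configcountprod}, Lemma~\ref{lem:width} together with Corollary~\ref{cor:subbitonic} implies $w(\tau) < \ell$; since $\tau_i = 0$ this forces $\tau_j \leq \ell - 1$ for every $j$, so no gate outside the first $\ell - 1$ layers of the first block is applied. Hence valid configurations of $\cB_\ell^{\times m}$ with $\tau_i = 0$ are in bijection with valid configurations of the sub-architecture formed by the first $\ell - 1$ layers of $\cB_\ell$, still with the $i$-th clock at $0$. By Fact~\ref{fact:penultimatelayers} that sub-architecture is $\cB_{\ell-1}^{\otimes 2}$, one factor on the odd-indexed and one on the even-indexed qubits, and $i$ lies in exactly one factor. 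The factor not containing $i$ is unconstrained, contributing $a_{\ell-1}$ configurations; the factor containing $i$ contributes the number $c_{\ell-1}$ of valid configurations of $\cB_{\ell-1}$ with one prescribed clock equal to $0$. So the desired count equals $a_{\ell-1}\cdot c_{\ell-1}$, and it remains to show $c_m = \prod_{j=1}^{m-1} a_j$ for all $m$.

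For this I would prove the recursion $c_m = a_{m-1}\cdot c_{m-1}$ for $m \geq 2$ with base case $c_1 = 1$ (the argument below works for an arbitrary choice of the fixed qubit, so $c_m$ is well-defined). For $m = 1$: $\cB_1$ is a single gate on two qubits, and under the down-set semantics the only valid configuration with one clock at $0$ is the all-zeros one, so $c_1 = 1 = \prod_{j=1}^{0} a_j$ (empty product). For $m \geq 2$: if the fixed clock is $0$ then the layer-$\cL_1$ gate on that wire is not applied, so not every gate of $\cL_1$ is applied, and Corollary~\ref{cor:dictotomyofactivation} forces every gate of $\cL_m$ to be unapplied; consequently $\tau_j \leq m - 1$ for all $j$, the configuration lives entirely in the first $m - 1$ layers, and Fact~\ref{fact:penultimatelayers} identifies those layers with $\cB_{m-1}^{\otimes 2}$. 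The same split as above gives $c_m = a_{m-1}\cdot c_{m-1}$. Unrolling, $c_m = \prod_{j=1}^{m-1} a_j$, and substituting into the previous paragraph gives the claimed count $a_{\ell-1}\, c_{\ell-1} = a_{\ell-1}\prod_{j=1}^{\ell-2} a_j = \prod_{j=1}^{\ell-1} a_j$. (The case $\ell = 1$ of the Lemma is immediate: $\tau_i = 0$ forces the all-zeros configuration, giving $1 = \prod_{j=1}^{0} a_j$.)

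The one place that needs care, and which I expect to be the main obstacle, is making the two ``confinement'' steps airtight: showing that a valid configuration with some clock at $0$ cannot reach the last layer of its enclosing bitonic block. This rests squarely on the width bound of Lemma~\ref{lem:width} (and its extension to products, exactly as invoked in the proofs of Theorem~\ref{thm:configcountprod} and Lemma~\ref{lem:overlap}) combined with the activation dichotomy of Corollary~\ref{cor:dictotomyofactivation}, and on reading Definition~\ref{def:partial-config} so that a clock value at least $1$ propagates to the partner of the first gate on that wire. Everything else is a straightforward bookkeeping induction over the recursive structure of bitonic blocks.
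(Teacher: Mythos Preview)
Your proof is correct and follows essentially the same approach as the paper's: reduce to the first $\ell-1$ layers of the first block via the width bound and the activation dichotomy, split those layers as $\cB_{\ell-1}^{\otimes 2}$, and recurse on the factor containing $i$. The paper's version compresses the recursion into a single sentence (``recursively seen to have $\prod_{j=1}^{\ell-2} a_j$''), whereas you spell out $c_m = a_{m-1}\,c_{m-1}$ explicitly with the base case, which is a cleaner presentation of the same idea.
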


\begin{proof}
We only need consider the first block $\cB_\ell$ of $\cB_\ell^{\times m}$ due to Lemma \ref{lem:width}. By Corollary \ref{cor:dictotomyofactivation}, we know that no gate in the last layer of $\cB_\ell$ is activated. Therefore, we only need to consider the first $\ell - 1$ gates which are isomorphic to $\cB_{\ell - 1}^{\otimes 2}$ (Corollary \ref{cor:subbitonic}). The block $\cB_{\ell - 1}$ corresponding to the set of qubits of which $i$ is not a member has $a_{\ell - 1}$ valid configurations. The set containing $i$ can be recursively seen to have $\prod_{j = 1}^{\ell - 2} a_j$ valid configurations.
\end{proof}

\begin{lemma}\label{lem:circularDensity}
Let $i$ be a fixed qubit. Then the number of valid configurations of $\cB_\ell^{\leftrightarrow m}$ such that the clock of qubit $i$ is at 0 is $(a_\ell - a_{\ell - 1}^2)$.
\end{lemma}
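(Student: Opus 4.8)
The plan is to show that a valid configuration of $\cB_\ell^{\leftrightarrow m}$ with $\tau_i=0$ occupies a single $\ell$-layer ``window'' of the circular architecture, and then to count such configurations by transporting them to an ordinary bitonic block $\cB_\ell$. First I would fix a valid configuration $\tau$ with $\tau_i=0$. By Lemma~\ref{lem:width} (applied to windows of $\ell$ consecutive layers, each of which is isomorphic to $\cB_\ell$, exactly as in the proof of Theorem~\ref{thm:countcircular}) we have $w(\tau)<\ell$, so after lifting the clocks to an arc of fewer than $\ell$ consecutive time steps of $\mathbb{Z}_{m\ell}$ — an arc which necessarily contains $0$ since $\tau_i=0$ — the minimum value $r:=\min_j\tau_j$ lies in $\{-(\ell-1),\dots,0\}$ and is attained. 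Assign $\tau$ to the window $W_r$ consisting of layers $\cL_{r+1},\dots,\cL_{r+\ell}$; then $\tau$ is supported inside $W_r$. Using the cyclic-shift isomorphisms of Lemma~\ref{lem:permutation} (equivalently Corollary~\ref{cor:subbitonic}), $W_r$ is isomorphic to $\cB_\ell$ via a relabeling of qubits that respects the order of layers, hence carries the first layer of $W_r$ onto $\cL_1$. Translating times by $-r$ turns $\tau$ into a valid configuration $\sigma$ of $\cB_\ell$ such that (i) $\min_q\sigma_q=0$ — equivalently, some gate of $\cL_1$ is unactivated, since a $\cL_1$-gate on $\{q,q'\}$ is unactivated exactly when $\sigma_q=0$ or $\sigma_{q'}=0$ — and (ii) the qubit $p_r$ that maps to $i$ satisfies $\sigma_{p_r}=-r$. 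Conversely every such $\sigma$ comes from a unique $\tau$: place $\sigma$ in $W_r$; outside the window every gate is either below all clocks (hence applied) or above all clocks (hence unapplied), so the causal constraints hold. Thus the configurations with $\tau_i=0$ and minimum $r$ are in bijection with $\{\sigma\text{ valid for }\cB_\ell:\ \exists\text{ unactivated }\cL_1\text{-gate},\ \sigma_{p_r}=-r\}$.

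Summing over $r\in\{-(\ell-1),\dots,0\}$ and writing $t'=-r\in\{0,\dots,\ell-1\}$, the quantity we want equals $\sum_{t'=0}^{\ell-1}\#\{\sigma\text{ valid for }\cB_\ell:\ \exists\text{ unactivated }\cL_1\text{-gate},\ \sigma_{p}=t'\}$, where the qubit $p=p_{-t'}$ generally depends on $t'$. To eliminate this dependence I would invoke the symmetry of the bitonic block: the group of coordinatewise bit-flips on $\{0,1\}^\ell$ acts on the qubits of $\cB_\ell$ (identifying qubit $q$ with $q-1\in\{0,1\}^\ell$) by layer-preserving architecture automorphisms — an easy induction, since flipping the top bit fixes $\cL_1$ and merely swaps the two $\cB_{\ell-1}$ halves that make up $\cL_2,\dots,\cL_\ell$, while flipping a lower bit acts inside each half — and this action is transitive on qubits. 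Because these are architecture automorphisms, the count $\#\{\sigma:\ \exists\text{ unactivated }\cL_1\text{-gate},\ \sigma_p=t'\}$ is independent of $p$ (invariance of the number of valid configurations under relabeling of qubits), so I may replace $p$ by a fixed qubit. Finally, any valid $\sigma$ with an unactivated $\cL_1$-gate has $\min_q\sigma_q=0$ and, by Lemma~\ref{lem:width}, $w(\sigma)<\ell$, whence $\sigma_q\le\ell-1$ for all $q$; therefore these configurations are partitioned by the value $\sigma_p\in\{0,\dots,\ell-1\}$, giving $\sum_{t'=0}^{\ell-1}\#\{\sigma:\ \exists\text{ unactivated }\cL_1\text{-gate},\ \sigma_p=t'\}=\#\{\sigma\text{ valid for }\cB_\ell:\ \exists\text{ unactivated }\cL_1\text{-gate}\}=a_\ell-a_{\ell-1}^2$ by Corollary~\ref{cor:gatenotactivated}. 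This is the claimed value, independent of $m$ and $i$.

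The main obstacle I anticipate is the geometric bookkeeping of the first paragraph: checking carefully that ``anchor $\tau$ at its minimum clock value'' yields a genuine bijection when the window $W_r$ wraps across the seam of the circular architecture — in particular that such a window really is isomorphic to $\cB_\ell$ as a \emph{layered} architecture (so that ``first layer'' maps to $\cL_1$), and that the part of a configuration lying outside $W_r$ always extends causally around the circle. Both points reduce to the cyclic-shift isomorphisms of Lemma~\ref{lem:permutation} together with the width bound, which are already in hand. The only ingredient not already recorded is the (elementary) bit-flip automorphism group of $\cB_\ell$ used in the second paragraph.
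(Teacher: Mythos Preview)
Your proposal is correct. It is, however, considerably more explicit than the paper's own proof, which is a single sentence: ``By symmetry (Corollary~\ref{cor:subbitonic}), this corresponds to one of the $m\ell$ windows described in Theorem~\ref{thm:countcircular}.'' The paper's argument is essentially that the marginal of $\tau_i$ is uniform over $\{0,\dots,m\ell-1\}$, so the desired count is $a_\ell^{\leftrightarrow m}/(m\ell)=a_\ell-a_{\ell-1}^2$; the symmetry that makes this uniformity hold is asserted rather than exhibited. Your route is genuinely different in structure: you anchor each configuration at its minimum clock value, transport it through the window isomorphism to a configuration of a single $\cB_\ell$ with an unactivated $\cL_1$-gate, and then sum over the $\ell$ possible window positions, invoking the bit-flip qubit-transitivity of $\cB_\ell$ to collapse the sum to the count in Corollary~\ref{cor:gatenotactivated}. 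This buys you a self-contained bijective proof that does not need the total count from Theorem~\ref{thm:countcircular}, and it makes explicit the layer-preserving automorphism group that the paper's one-liner leaves implicit (and which is indeed needed, since the cyclic-shift isomorphisms of Lemma~\ref{lem:permutation} alone permute the qubit label and so do not directly give $N_{i,t}=N_{i,t'}$). The paper's approach is shorter; yours is more transparent.
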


\begin{proof}
By symmetry (Corollary \ref{cor:subbitonic}), this corresponds to one of the $m\ell$ windows described in Theorem \ref{thm:countcircular}.
\end{proof}

\subsection{Isomorphism with dyadic tilings}

\subsubsection{Dyadic Tilings}

The numbers $a_{\ell}$ count the number of valid partial circuit configurations of $\mathcal{B}_{\ell}$, but they also happen to enumerate a different combinatorial structure: the number of dyadic tilings of the unit square of rank $\ell$ \cite{cannon_et_al:LIPIcs:2017:7583}. To facilitate the analysis of the gap of our code Hamiltonian, we will describe an explicit isomorphism between the two sets and the Markov chains defined on them.

\begin{definition}[Dyadic tiling]
A dyadic tiling of rank $\ell$ is a tiling of the unit square by $2^{\ell}$ \emph{equal-area} dyadic rectangles, which are rectangles of the form $[a 2^{-s}, (a+1)2^{-s}]\times [b 2^{-t},(b+1) 2^{-t}]$, where $a,b,s,t$ are nonnegative integers for some positive integer $\ell$.
\end{definition}

Figure \ref{fig:dyadicex} shows some examples for $\ell = 4$:
\begin{figure}[h]
\begin{center}
\includegraphics[scale=.35]{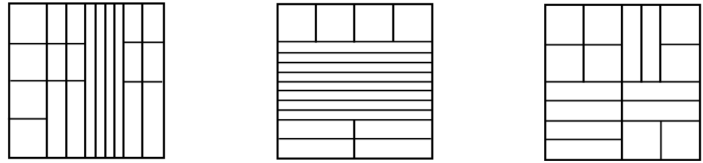}
\caption{Examples of rank 4 dyadic tilings. \label{fig:dyadicex}}
\end{center}
\end{figure}

Each tiling of rank $\ell$ can be described recursively: beginning from the unit square, draw a line that is either a horizontal or vertical bisector. This divides the square into two rectangles of equal area.%
Then, choose two (not necessarily distinct) dyadic tilings of rank $\ell-1$ and scale them to overlay with the two rectangles.

\begin{definition}
Let $\cT_{\ell}$ be the set of dyadic tilings of rank $\ell$.
\end{definition}

There is a natural Markov chain on $\cT_{\ell}$ called the \emph{edge-flip} Markov chain \cite{randomdyadictilingsoftheunitsquare}. Given a dyadic tiling, there is a distinguished set of edges in the tiling which can be removed and replaced by their perpendicular bisector to obtain another valid dyadic tiling of the same size. So, the transitions between states of the Markov chain are described by choosing one of the \emph{flippable} edges uniformly at random and flipping it, obtaining a new tiling.

We can formally define the edge-flip Markov chain as follows:

\begin{definition}[Edge-flip Markov chain \cite{randomdyadictilingsoftheunitsquare,cannon_et_al:LIPIcs:2017:7583}]

The edge-flip Markov Chain, $\mathcal{M}_{\ell}$ on state space $\cT_{\ell}$ is defined with the following transition rule. Starting from state $m_i \in \cT_{\ell}$:
\begin{itemize}
  \item Choose a rectangle in the tiling $m_i$ uniformly at random.
  \item Choose an edge $e$ of the four edges of the rectangle (left, right, top, or bottom) uniformly at random.
  \item If $e$ can be flipped to produce a new dyadic tiling in $\cT_{\ell}$, then flip the edge and let $m_{i+1}$ be the resulting tiling.
  \item If $e$ cannot be flipped to produce a new dyadic tiling in $\cT_{\ell}$, then choose a new edge at random and return to the previous step.
\end{itemize}

\end{definition}

There exists an isomorphism between the valid configurations of $\mathcal{B}_{\ell}$ and the set of dyadic tilings of rank $\ell$.%
That is, adding or removing a single gate from a partial configuration of $\mathcal{B}_{\ell}$ corresponds directly to a unique valid edge flip of the corresponding dyadic tiling (and vice-versa).

We will describe this isomorphism in a way that will build a visual intuition for it. First, we identify the empty configuration of $\cB_{\ell}$ on $2^{\ell}$ qubits with the tiling consisting entirely of horizontal cuts, $t_0$ s(the tiling consisting of $2^{\ell}$ horizontal rectangles stacked on top of each other). 

The tiling $t_0$ can be described by $2^\ell - 1$ horizontal cuts. We now describe these horizontal cuts as the disjoint union of smaller components we call $c$-segments. The following recursive procedure describes the segments:
\begin{itemize}
  \item Begin with an empty square and initialize a counter $c=\ell$.
  \item Make a horizontal cut through the square and subdivide this cut into $2^{c-1}$ equal length segments, each a $c$-segment.
  \item If $c > 1$, decrement $c\to c-1$ and repeat the previous step for \emph{each} of the two empty rectangles (using the same $c$ for each of the two) produced by the cut made in the last previous step. Otherwise, stop.
\end{itemize}

In the resulting representation of $t_0$, there are always $2^{\ell-1}$ $c$-segments for $c\in \{1,...,\ell\}$. The set of $2^{\ell-1}$ $c$-segments are distributed evenly across $2^c$ horizontal cuts of $t_0$ in groups of $2^{\ell-c-1}$. We call the resulting representation of $t_0$ a \emph{dressed} tiling. We give an example of $t_0$ for $\ell=3$ in Figure \ref{fig:vdi} below, with the number labels for the $c$-segments $c=1,2,3$ replaced with the colors blue, green, and red, respectively, for visual clarity:

\begin{figure}[h]
\begin{center}
\includegraphics[scale=.35]{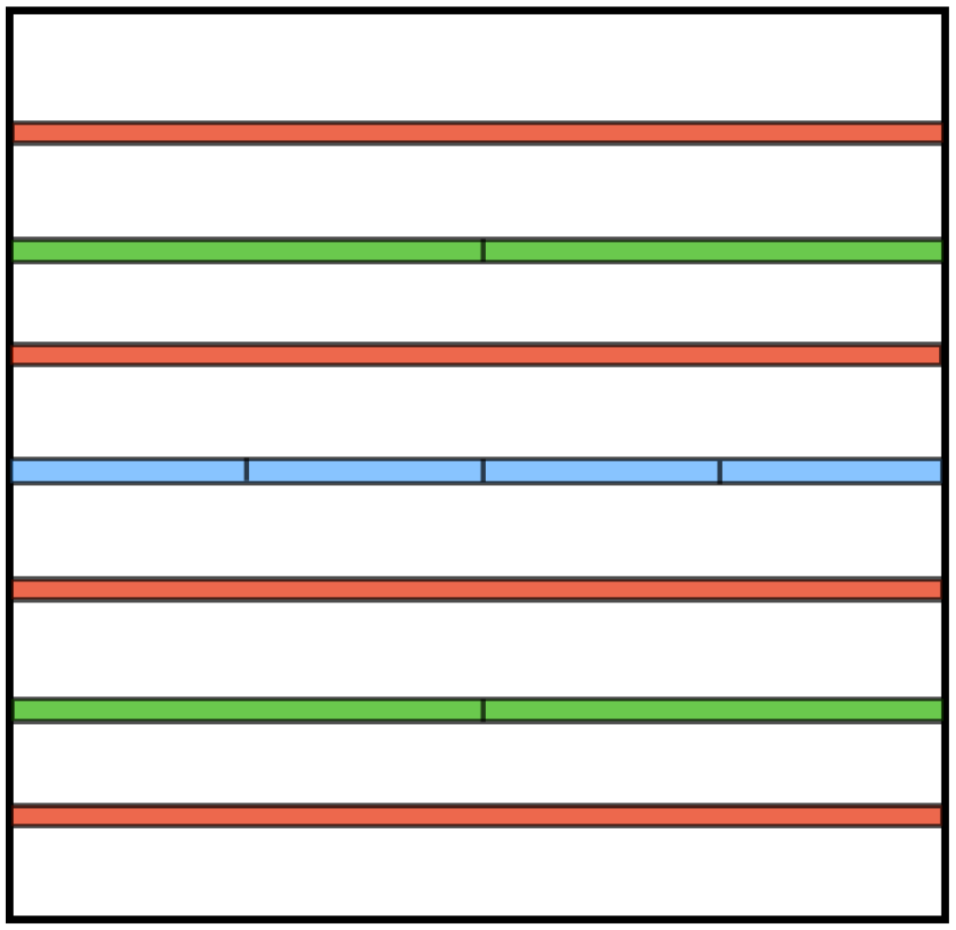}
\end{center}
\caption{The dressed tiling $t_0$ for $\ell=3$}
\label{fig:vdi}
\end{figure}

To understand the edge-flip Markov chain, we need to understand how to determine which edges can be flipped (flipping that edge will take you from one dyadic tiling to another), and which can not.

\begin{fact}[\cite{cannon_et_al:LIPIcs:2017:7583}]
  A \emph{flippable} edge of a tiling $t\in \cT_{\ell}$ is a \emph{$c$-segment} which, when considered alone, forms an entire edge of \emph{both} dyadic rectangles it borders.
\end{fact}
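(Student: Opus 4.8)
The statement to be proven characterizes flippable edges of a dyadic tiling, so the plan is to establish a clean structural criterion and verify it against the recursive definition of dyadic tilings.

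\medskip

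\noindent\textbf{Setup and strategy.} Recall that a dyadic tiling of rank $\ell$ is built from the unit square by a horizontal or vertical bisection followed by recursively placing rank-$(\ell-1)$ tilings on each half. An edge in such a tiling is a maximal straight segment separating tiles; we have decomposed the axis-parallel cuts of the base tiling $t_0$ into $c$-segments, where a $c$-segment is a segment of length $2^{-(\ell-c)}$ (the natural length scale appearing at recursion depth $\ell - c$). The goal is to show that an edge can be ``flipped'' --- removed and replaced by its perpendicular bisector, yielding another valid rank-$\ell$ dyadic tiling --- exactly when it is a single $c$-segment that constitutes a \emph{complete} edge of each of the two dyadic rectangles it borders. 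First I would make precise what ``flipping'' means: given a segment $e$ that is a full edge of both adjacent tiles $R_1, R_2$, the union $R_1 \cup R_2$ is itself a rectangle $R$, and the perpendicular bisector of $e$ cuts $R$ into two rectangles $R_1', R_2'$; the flip replaces $R_1, R_2$ by $R_1', R_2'$ in the tiling.

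\medskip

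\noindent\textbf{Key steps.} First, I would show the forward direction: if $e$ is a flippable edge then $e$ is a single $c$-segment forming a complete edge of both bordering rectangles. The essential observation is that in a rank-$\ell$ dyadic tiling every tile has area $2^{-\ell}$, so if the flip is to produce another valid dyadic tiling, the two rectangles $R_1', R_2'$ obtained after flipping must each have area $2^{-\ell}$ and be dyadic. This forces $R = R_1 \cup R_2$ to be a dyadic rectangle of area $2^{-(\ell-1)}$, which in turn forces $R_1$ and $R_2$ to be the two dyadic halves of $R$ obtained by bisecting along the direction perpendicular to $e$; in particular $e$ is exactly one edge of $R_1$ and one edge of $R_2$ (not a union of several tile-edges), and its length is the shorter side of $R$, which by the area and dyadic-rectangle constraints is precisely one of the allowed $c$-segment lengths. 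For the reverse direction, suppose $e$ is a $c$-segment that is a complete edge of both $R_1$ and $R_2$. Since $R_1, R_2$ are dyadic of area $2^{-\ell}$ and share the full edge $e$, their union $R$ is a dyadic rectangle of area $2^{-(\ell-1)}$ (this uses that two equal-area dyadic rectangles meeting along a common full edge tile a dyadic rectangle — a short computation with the coordinate form $[a2^{-s},(a+1)2^{-s}]\times[b2^{-t},(b+1)2^{-t}]$). Then the perpendicular bisector of $e$ splits $R$ into two dyadic rectangles of area $2^{-\ell}$, and replacing $R_1,R_2$ by these two in the tiling leaves the rest of the tiling untouched and yields a valid rank-$\ell$ dyadic tiling, so $e$ is flippable.

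\medskip

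\noindent\textbf{Main obstacle and remaining work.} The main subtlety --- and the step I expect to require the most care --- is the forward direction's claim that a flippable edge cannot be a union of several smaller tile-edges: one must rule out ``staircase''-type configurations where removing a jagged boundary and inserting a single perpendicular segment might seem to work. Here the rigidity comes from insisting that the result again be a tiling by $2^{\ell}$ equal-area dyadic rectangles; I would argue that the two new regions created by the flip are uniquely determined as the two sides of the perpendicular bisector within the affected region, and that equal area plus the dyadic constraint pins down both their dimensions and their placement, leaving no room for a multi-segment edge. Since this characterization is quoted directly from \cite{cannon_et_al:LIPIcs:2017:7583}, an acceptable and efficient route is simply to invoke that reference for the precise definition of flippability and then verify that the $c$-segment description of $t_0$ (and, by the recursive structure, of an arbitrary tiling) matches it; the verification reduces to the coordinate computation above applied within each recursive block, which is routine once the base case $\ell = 1$ is checked by hand.
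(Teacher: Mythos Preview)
The paper does not give its own proof of this statement: it is stated as a Fact imported from the cited reference \cite{cannon_et_al:LIPIcs:2017:7583} and used without further argument. Your proposal therefore goes well beyond what the paper does; the paper simply invokes the citation and moves on, exactly as you suggest in your final paragraph.

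That said, your sketch is sound and identifies the right mechanism. One simplification: in the paper's definition of the edge-flip chain, a move first selects a rectangle and then one of its four edges, so the candidate $e$ is by construction already a complete edge of one tile $R_1$. The only nontrivial content in the forward direction is that $e$ must also be a complete edge of the tile on the other side, and your area-count argument handles this cleanly (two tiles destroyed, two created, all of area $2^{-\ell}$, so the affected region is a single dyadic rectangle of area $2^{-(\ell-1)}$). The ``staircase'' obstruction you flag is therefore already ruled out by the way the move is defined, and the remaining verification is indeed the routine coordinate computation you describe.
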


This fact tells us that starting from $t_0$, and flipping flippable $c$-segments, we obtain the edge-flip Markov chain.

Next, we establish a bijection between the gates of $\cB_{\ell}$ with the $c$-segments of the all horizontal tiling $t_0\in\cT_{\ell}$ in the following way: the set of $2^{\ell-1}$ $c$-segments corresponds to the set of $2^{\ell-1}$ gates in the $c$th layer of $\cB_{\ell}$. Consider the following procedure for assigning each particular gate to a particular $c$-segment:
\begin{itemize}
  \item Assign each $\ell$-segment of $t_0$, from left to right, to the gates in layer $\cL_{\ell}$, from top to bottom.
  \item For each gate in $\mathcal{L}_{\ell}$, identify the two gates in its past light-cone in $\mathcal{L}_{\ell-1}$ with the two nearest $\ell-1$-segments sitting above and below the $\ell$-segment in question.
  \item Continue this procedure recursively for the $\ell-1$-segments all the way down to the $1$-segments.
\end{itemize}

Figure \ref{fig:niso} illustrates this bijection for $\ell=3$.

\begin{figure}[h]
\begin{center}
\includegraphics[scale=.35]{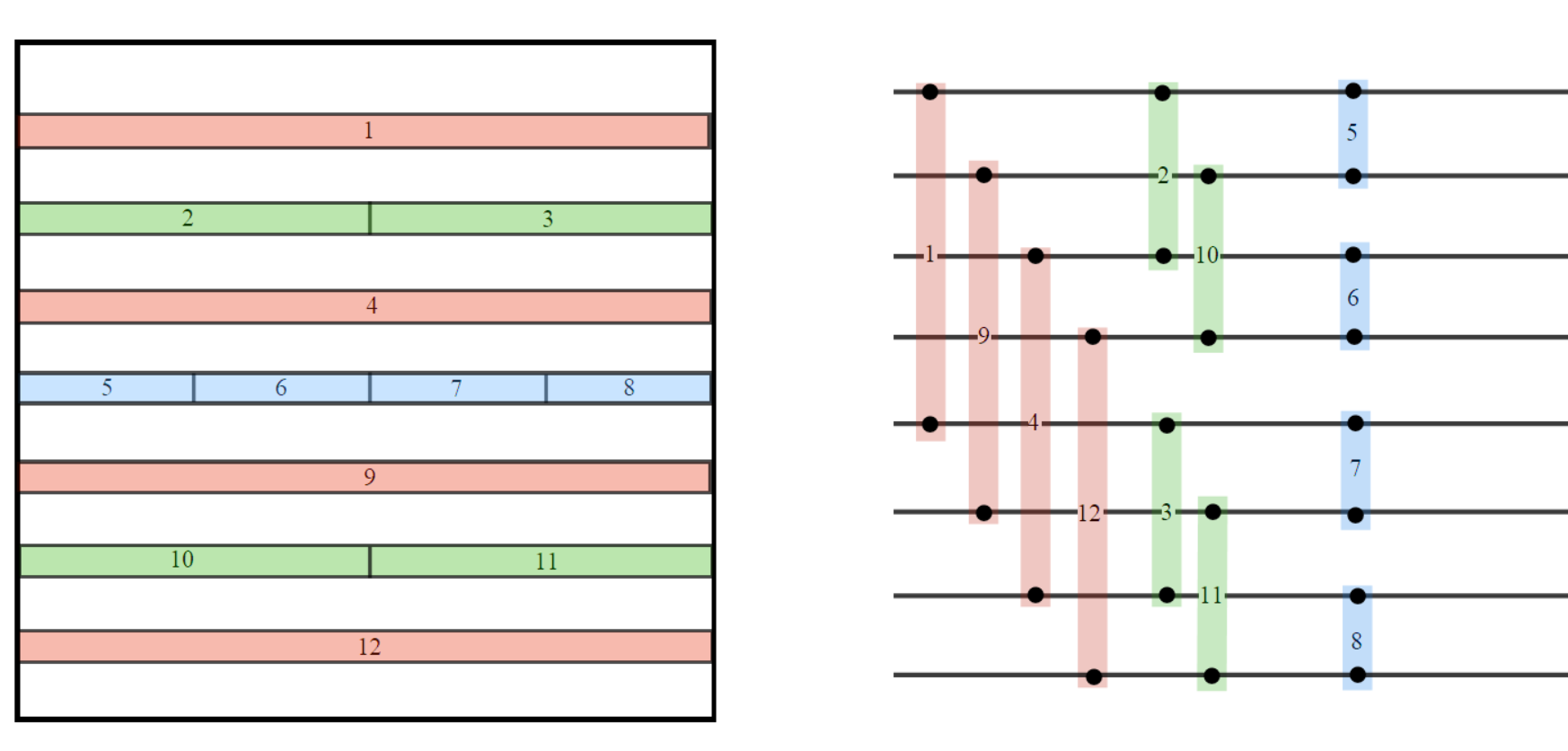}
\end{center}
\caption{This example for $\ell=3$ illustrates the correspondence between $c$-segments and their associated gates in $t_0$}
\label{fig:niso}
\end{figure}

Now, consider the following bijection between the partial configurations of $\cB_{\ell}$ and dyadic tilings $\cT_{\ell}$: For a tiling $t\in \cT_{\ell}$, identify it with the circuit in which the only gates applied are those for which the corresponding $c$-segments of the tiling are vertical. To see that the edge flip and circuit Markov chains are isomorphic, we need only see that the valid edge flips of a tiling correspond directly to the possible gate activations and deactivations of the corresponding partial configuration of $\cB_{\ell}$.

But this is clear by the gate to $c$-segment identification procedure described above: for a $c$-segment to be flippable from horizontal to vertical (vertical to horizontal), it must comprise an entire edge of both dyadic rectangles that it borders. This only happens once the two nearest $c-1$-segments ($c+1$-segments) -- the one directly above and the one directly below (directly left and directly right of) -- have been flipped to be vertical (horizontal). By the bijection procedure outlined above, those two $c-1$ ($c+1$) segments correspond to the gates that directly precede the gate corresponding the $c$-segment in its past (future) light-cone. Figure \ref{fig:flipz}, below, illustrates an example with $\ell=3$.

\begin{figure}[h]
\begin{center}
\includegraphics[scale=.35]{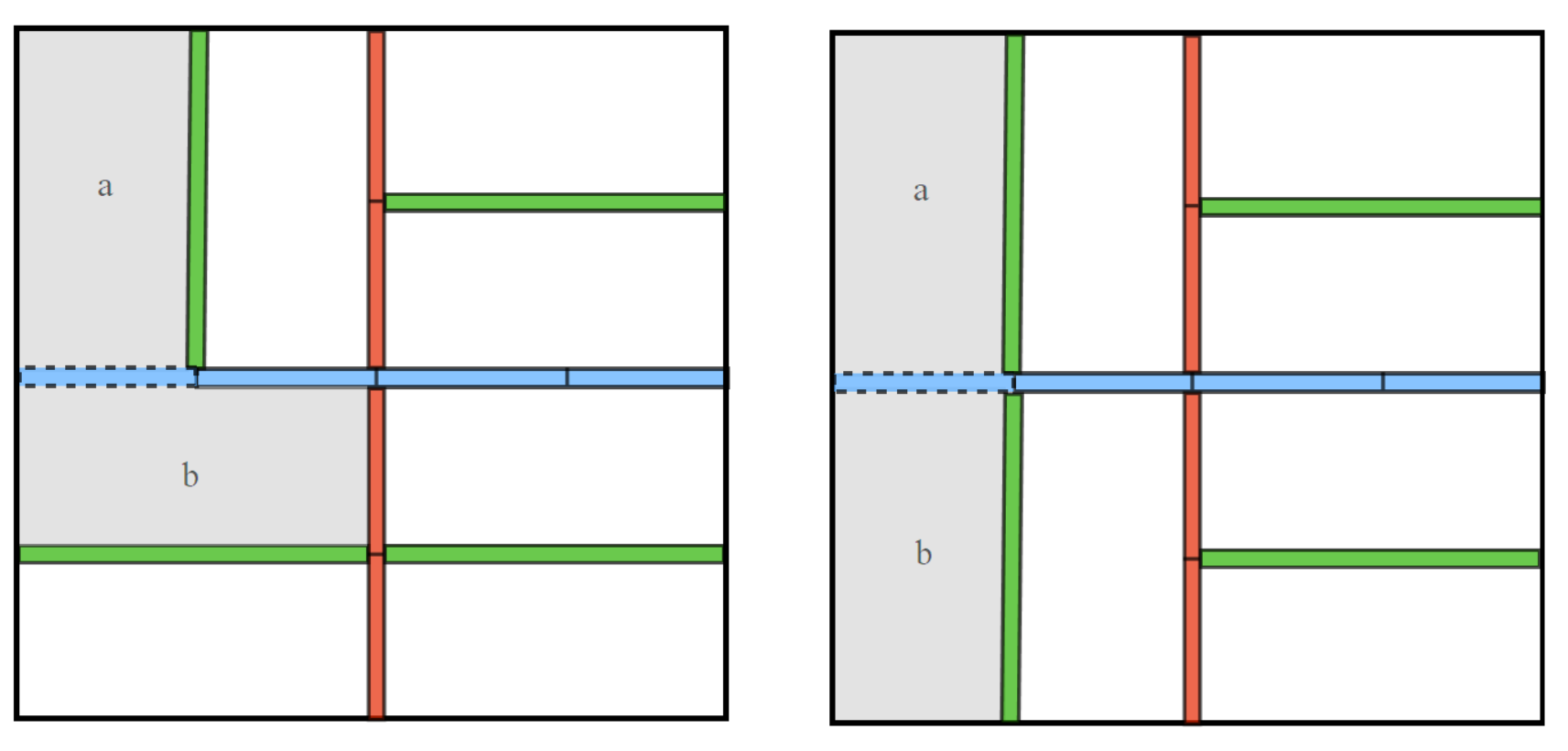}
\end{center}
\caption{Consider the left-most 3-segment in both of these dyadic tilings. In the left tiling, we see that it can not be flipped from horizontal to vertical because only one of the $2$-segments directly above and below it has been flipped to vertical. Consequently, this 3-segment is a complete edge of the dyadic rectangle a but not rectangle b. In the right tiling we see that this is remedied by flipping the remaining nearest 2-segment.}
\label{fig:flipz}
\end{figure}

\subsubsection{$HV$-Trees}\label{sec:bitonicIsomorphism}

In order to give a more complete understanding of the isomorphism between valid partial configurations of $\cB_\ell$ and $\cT_\ell$, we introduce an alternate representation of dyadic tilings called $HV$-trees \cite{randomdyadictilingsoftheunitsquare}.

Janson, Randall, and Spencer showed that the recursive description of a tiling gives an easy isomorphism to a graph called a $HV$-tree \cite{randomdyadictilingsoftheunitsquare}. Consider a complete binary tree of depth $\ell$ where each vertex is labeled either $H$ or $V$. There is a clear mapping from such trees to dyadic tilings: starting with the unit square and the root of the tree, draw a horizontal ($H$-cut) bisector or vertical ($V$-cut) bisector depending on the label of the root. Then recursively draw $H$- or $V$-cuts by the labels of the children on the two generated rectangles.

This mapping isn't a bijection, however. It is easy to see that
the following two $HV$-trees produce the same dyadic tiling. 
\begin{center}
\includegraphics[scale =.5]{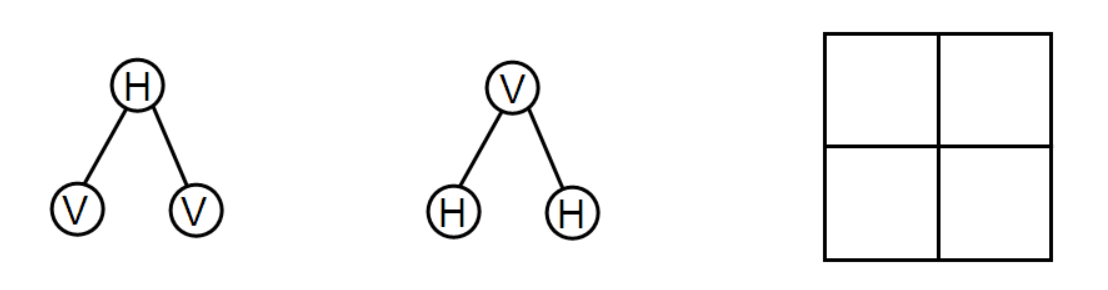}
\end{center} \cite{randomdyadictilingsoftheunitsquare} noticed that collisions only occur when there is a cross in the dyadic tiling, either a $H$-cut followed by 2 $V$-cut children or a $V$-cut followed by 2 $H$-cut children. By disallowing any $H$-vertex to have both children be $V$-vertexes, we obtain an isomorphism.

\begin{definition}[$HV$-trees]
An $HV$-tree of depth $\ell$ is a complete binary tree of depth $\ell$ with each vertex labeled either $H$ or $V$ and the restriction that no $H$-vertex has both children labeled $V$.
\end{definition}

\begin{theorem}[\cite{randomdyadictilingsoftheunitsquare}]
There is an isomorphism between $\cT_{\ell}$, the set of dyadic tilings of rank $\ell$, and the set of $HV$-trees of depth $\ell$.
\label{thm:dyadichv}
\end{theorem}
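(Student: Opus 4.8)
The plan is to exhibit mutually inverse maps between the set of $HV$-trees of depth $\ell$ and the set $\cT_\ell$ of rank-$\ell$ dyadic tilings, both defined by recursion on $\ell$, and to verify by induction that their compositions are the identity; this is essentially the argument of Janson, Randall, and Spencer~\cite{randomdyadictilingsoftheunitsquare}. In the forward direction one defines $\Phi$ from $HV$-trees to tilings as already sketched above the statement: cut the unit square by the horizontal or vertical bisector dictated by the root label, and overlay on the two resulting rectangles the rescaled tilings obtained recursively from the two subtrees. That $\Phi(\mathcal{H})$ is a genuine element of $\cT_\ell$ follows by an easy induction, since rescaling one coordinate of a dyadic rectangle by a factor of $2$ produces a dyadic rectangle and halves the relevant denominator so that all areas come out to $2^{-\ell}$. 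The real content is the reverse map and the verification that the two maps are inverse, and the $HV$-constraint (no $H$-vertex has two $V$-children) is exactly what makes the reverse map well-defined.

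For the reverse direction the key lemma is a structural fact: for every $T \in \cT_\ell$ with $\ell \geq 1$, at least one of the two midlines $\{x = 1/2\}$ and $\{y = 1/2\}$ is a union of tile edges of $T$ (``clean''). I would prove this by noting that a tile can contain $x = 1/2$ in its interior only if it has full width $1$ --- because its $x$-extent is a dyadic interval and $1/2$ is the only non-integer-indexed point it could straddle --- and symmetrically for $y = 1/2$ and full height; a full-width tile ($1 \times 2^{-\ell}$) and a full-height tile ($2^{-\ell} \times 1$) cannot coexist, since their interiors would intersect in a square of positive area. One then defines $\Psi : \cT_\ell \to \{HV\text{-trees}\}$ by cutting along the vertical midline if it is clean (root label $V$) and otherwise along the horizontal midline (root label $H$, which is then clean), restricting and rescaling to obtain two rank-$(\ell-1)$ tilings, and recursing. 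To see the output obeys the constraint: whenever $\Psi$ assigns $H$ at a node, the corresponding (rescaled) sub-square carries a full-width tile; such a tile cannot straddle the horizontal midline (its $y$-extent is a dyadic interval of length $2^{-\ell}$, which for $\ell \geq 1$ cannot have $1/2$ in its interior), so it lies entirely in one half, that half has a full-width tile, and hence $\Psi$ assigns $H$ at that child.

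It remains to check $\Psi \circ \Phi = \mathrm{id}$ and $\Phi \circ \Psi = \mathrm{id}$, both by induction on $\ell$, and the only delicate point is matching the root labels. If $\mathcal{H}$ has root $V$, the first cut at $x = 1/2$ survives as a union of tile edges of $\Phi(\mathcal{H})$ (later cuts only add edges strictly inside the two halves), so $\Psi$ reads back $V$ and recovers the two subtrees by the inductive hypothesis. If $\mathcal{H}$ has root $H$, follow a descending path of $H$-labelled nodes --- which the constraint guarantees exists all the way to the bottom --- to produce a nested chain of rectangles $1 \times 2^{-j}$ terminating in two full-width tiles of $\Phi(\mathcal{H})$; hence the vertical midline of $\Phi(\mathcal{H})$ is not clean, $\Psi$ reads back $H$, and induction finishes. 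The reverse composition is symmetric: $\Phi$ re-inserts exactly the cut that $\Psi$ extracted and reassembles $T$ from the halves reconstructed by induction, cleanness of the chosen midline guaranteeing those halves are themselves rank-$(\ell-1)$ dyadic tilings. I expect the main obstacle to be precisely this label-matching step together with the well-definedness of $\Psi$: one has to rule out the ``plus-sign'' ambiguity in which an $H$-cut followed by two $V$-cuts and a $V$-cut followed by two $H$-cuts yield the same tiling, and the full-width/full-height dichotomy above is what singles out the canonical representative and shows the constraint removes exactly this redundancy.
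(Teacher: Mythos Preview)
Your proposal is correct and follows essentially the same approach as the paper (and the original reference~\cite{randomdyadictilingsoftheunitsquare}): define $\Phi$ recursively from the root label, define $\Psi$ by preferring the $V$-cut when the vertical midline is clean and otherwise taking the $H$-cut, and observe that this $V$-preference is exactly what enforces the $HV$-constraint. The only difference is that you supply a self-contained proof of the structural lemma that at least one midline is always clean (via the full-width/full-height tile dichotomy) and spell out the inverse verification, whereas the paper simply cites this fact and leaves the bijection check implicit.
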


\begin{proof}
We previously described the mapping from $HV$-trees to dyadic tilings. For the other direction, look at the unit square. If there is a $V$-cut, label the root $V$ and proceed recursively. Otherwise, there must exist a $H$-cut (Theorem 1.1 of \cite{randomdyadictilingsoftheunitsquare}) label the root $H$ and proceed recursively. Note that by choosing a $V$-cut if both cuts exist ensures that the generated tree satisfies the $HV$-condition.
\end{proof}

We now prove the isomorphism between valid configurations of the bitonic block $\cB_{\ell}$ and $\cT_{\ell}$; it suffices to show the isomorphism between bitonic blocks and $HV$-trees. Recall Corollary \ref{cor:dictotomyofactivation}: Any valid configuration of $\cB_{\ell}$ must have every gate in $\cL_1$ activated or every gate in layer $\cL_{\ell}$ is not activated. Call valid configurations satisfying the first property $v$-configurations and call configurations satisfying the second property $h$-configurations.

Notice that given a $v$-configuration, we can recursively specify it by describing the configuration of the last $\ell - 1$ layers of $\cB_{\ell}$, which are isomorphic to $\cB_{\ell - 1}^{\otimes 2}$ (Corollary \ref{cor:subbitonic}). Similarly, given a $h$-configuration, we can recursively specify it by describing the configuration of the first $\ell - 1$ layers of $\cB_\ell$. A configuration that is both a $v$-configuration and $h$-configuration, can be recursively specified by the configuration of the middle $\ell - 2$ layers which are isomorphic to $\cB_{\ell - 2}^{\otimes 4}$.

\begin{theorem}
There is an isomorphism between the set of valid configurations of $\cB_{\ell}$ and the set of $HV$-trees of depth $\ell$. With Theorem \ref{thm:dyadichv}, this proves an isomorphism between the set of valid configurations of $\cB_{\ell}$ and $\cT_{\ell}$.
\label{thm:isomorphism}
\end{theorem}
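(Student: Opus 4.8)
Theorem~\ref{thm:isomorphism}: there is an isomorphism between the set of valid partial configurations of $\cB_\ell$ and the set of $HV$-trees of depth $\ell$ (and hence, via Theorem~\ref{thm:dyadichv}, with $\cT_\ell$).

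The plan is to build the bijection recursively in $\ell$, mirroring the recursive structure of all three objects. The base case $\ell=1$ is immediate: $\cB_1$ has a single gate, which is either applied or not; an $HV$-tree of depth $1$ is a single node labeled $H$ or $V$ (the $HV$-restriction is vacuous since a depth-$1$ tree has no grandchildren to constrain). Match ``gate applied'' with the label $V$ and ``gate not applied'' with the label $H$. For the inductive step I would exploit Corollary~\ref{cor:dictotomyofactivation}: every valid configuration of $\cB_\ell$ is a $v$-configuration (all of $\cL_1$ activated), an $h$-configuration (none of $\cL_\ell$ activated), or both. On the tree side, every $HV$-tree of depth $\ell$ has a root labeled $V$, or has a root labeled $H$, or — well, the root is exactly one label, so I want a finer trichotomy that matches the configuration picture: a tree with root $V$; a tree with root $H$ whose two children are \emph{not} both $V$ (equivalently, at least one child is $H$); and the overlap is empty. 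To make the counts line up I instead split as: (i) root $= V$; (ii) root $= H$; and handle the subtlety that a $v$-configuration which is also an $h$-configuration must be sent somewhere consistent. The clean way: declare the bijection $\Phi_\ell$ by the rule — if the configuration activates all of $\cL_1$ (a $v$-configuration), set the root to $V$ and recurse on the last $\ell-1$ layers, which by Corollary~\ref{cor:subbitonic} form $\cB_{\ell-1}^{\otimes 2}$, applying $\Phi_{\ell-1}$ to each of the two independent halves to label the two subtrees; otherwise (some gate of $\cL_1$ is not activated, forcing by Corollary~\ref{cor:dictotomyofactivation} that no gate of $\cL_\ell$ is activated — an $h$-configuration that is not a $v$-configuration), set the root to $H$ and recurse on the first $\ell-1$ layers, again isomorphic to $\cB_{\ell-1}^{\otimes 2}$.

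The key steps, in order: (1) verify the base case as above; (2) check the recursion is well-defined — in the $v$-case the last $\ell-1$ layers carry a valid configuration of $\cB_{\ell-1}^{\otimes 2}$ and in the $h$-case the first $\ell-1$ layers do, using Corollary~\ref{cor:subbitonic} and the observation that causal validity of the sub-configuration follows from causal validity of the whole; (3) check that the image really is an $HV$-tree, i.e.\ the restriction ``no $H$-vertex has both children $V$'' is respected — this is where the content sits. A node gets label $H$ precisely when the corresponding sub-block is an $h$-configuration that is \emph{not} a $v$-configuration, meaning at least one gate of its first layer $\cL_1$ is unapplied; that first layer, restricted to the two child sub-blocks $\cB_{\ell-1}^{\otimes 2}$, consists of the first layers of both children, so ``some gate of this $\cL_1$ unapplied'' means at least one child sub-block fails to have \emph{its} first layer fully activated, i.e.\ at least one child is an $h$-configuration-but-not-$v$, i.e.\ at least one child has label $H$. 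Hence both children cannot be $V$. (4) Construct the inverse: given an $HV$-tree, read the root; if $V$, activate all of $\cL_1$ and recurse into the two halves of $\cB_{\ell-1}^{\otimes 2}$ placed on the last $\ell-1$ layers; if $H$, leave $\cL_\ell$ entirely unactivated and recurse into the first $\ell-1$ layers. Check inductively that the result is a valid configuration (causal dependence is satisfied because each sub-block is valid and the dichotomy of Corollary~\ref{cor:dictotomyofactivation} guarantees the interface layer is handled consistently) and that $\Phi_\ell$ composed with it is the identity on both sides. (5) Conclude by composing with the isomorphism of Theorem~\ref{thm:dyadichv}.

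The main obstacle I anticipate is step (3) together with the bookkeeping at the interface between the recursion and the ``both a $v$- and $h$-configuration'' overlap: one must be careful that the \emph{choice} made for an overlap configuration (always call it a $v$-configuration, i.e.\ root $V$) is exactly the choice that matches the $HV$-convention (prefer $V$ when both a horizontal and vertical cut are available), so that the $HV$-restriction is neither violated nor over-enforced. The counting sanity check is a useful guard: $v$-configurations number $a_{\ell-1}^2$, so do $h$-configurations, their overlap is $a_{\ell-2}^4$, giving $2a_{\ell-1}^2 - a_{\ell-2}^4 = a_\ell$ by Theorem~\ref{thm:countingBitonicBlock}; on the tree side, trees with root $V$ number $a_{\ell-1}^2$ (any pair of depth-$(\ell-1)$ $HV$-trees), and trees with root $H$ and not-both-children-$V$ number $a_{\ell-1}^2 - (\text{both children } V) = a_{\ell-1}^2 - a_{\ell-2}^4$ — wait, the count of depth-$(\ell-1)$ $HV$-trees is $a_{\ell-1}$, so ``both children $V$'' should be counted as pairs of depth-$(\ell-2)$-$HV$-tree-rooted structures, giving $a_{\ell-2}^4$ after expanding one more level — so the two families sum to $a_{\ell-1}^2 + a_{\ell-1}^2 - a_{\ell-2}^4 = a_\ell$, confirming the bijection is consistent with the recursions and pinpointing that the $H$-root-with-a-$V$-child family is the one placed in correspondence with the strict $h$-configurations. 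Assembling these pieces gives the theorem.
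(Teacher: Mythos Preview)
Your proposal is correct and follows essentially the same approach as the paper: both build the bijection recursively by matching the $v$-/$h$-configuration dichotomy of Corollary~\ref{cor:dictotomyofactivation} with the $V$/$H$ label at the root, recursing into the two $\cB_{\ell-1}$ sub-blocks on the last or first $\ell-1$ layers respectively, and resolving the overlap by always preferring $V$ to match the $HV$-tree convention. Your write-up is in fact more detailed than the paper's---the paper dispatches the $HV$-restriction in one sentence (``by checking if a configuration is a $v$-configuration before checking if it is an $h$-configuration, we ensure the $HV$-tree property''), whereas your step~(3) unpacks the structural reason, namely that $\cL_1$ of $\cB_\ell$ splits exactly as the first layers of the two odd/even $\cB_{\ell-1}$ sub-blocks (which is implicit in Fact~\ref{fact:penultimatelayers}).
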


\begin{proof}
Given a $HV$-tree, if the root is a $V$-vertex, we activate all the gates in $\cL_1$. We proceed recursively using the two children of the root to describe the configuration on the last $\ell - 1$ layers with each child describing the configuration on one of the blocks $\cB_{\ell - 1}$. Likewise, if the root is a $H$-vertex, we set all the gates in $\cL_{\ell}$ as not activated and proceed recursively on the first $\ell - 1$ layers.

Given a configuration, we know it must be a $v$-configuration or $h$-configuration. If it is a $v$-configuration, we set the root as a $V$-vertex and build the tree recursively with the blocks $\cB_{\ell - 1}$ in the last $\ell - 1$ layers describing the children sub-trees. Likewise, if it is a $h$-configuration, we set the root as a $H$-vertex and build the tree recursively with the blocks $\cB_{\ell - 1}$ in the first $\ell - 1$ layers describing the children sub-trees.

Notice, that by checking if a configuration is a $v$-configuration before checking if it is a $h$-configuration, we ensure the $HV$-tree property.
\end{proof}

\subsection{Indexing configurations}

Inspired by the uniform sampling algorithm for dyadic tilings of Janson, Randall, and Spencer \cite{randomdyadictilingsoftheunitsquare}, we adapt, using the isomorphisms in Theorem \ref{thm:isomorphism}, them to generate an indexing algorithm for valid configurations of a bitonic block.

\begin{theorem}
There exists an isomorphism between $[a_\ell] = \{1, 2, \ldots, a_\ell\}$ and the set of valid configurations of bitonic block $\cB_\ell$. Furthermore, both maps are efficiently calculable.
\end{theorem}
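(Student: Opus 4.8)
The plan is to unwind the recursion $a_\ell = 2a_{\ell-1}^2 - a_{\ell-2}^4$ of Theorem~\ref{thm:countingBitonicBlock} into a standard rank/unrank (mixed-radix) indexing scheme, driven by the $v$/$h$ decomposition of valid configurations used in the proof of Theorem~\ref{thm:isomorphism}. First I would precompute the table $a_1,\dots,a_\ell$ from the recurrence. Since $\ell=\log n$ and $a_\ell=\Theta(\omega^{2^\ell})=\poly(n)$, every $a_j$ is an $O(n)$-bit integer and the whole table is produced in $\poly(n)$ time; all subsequent arithmetic is on integers of this size.

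Recall (Corollary~\ref{cor:dictotomyofactivation}) that every valid configuration of $\cB_\ell$ is a \emph{$v$-configuration} (every gate of the first layer $\cL_1$ activated) or an \emph{$h$-configuration} (no gate of the last layer $\cL_\ell$ activated), or both, and that which case holds is decided in $O(1)$ time by inspecting a single layer. By Corollary~\ref{cor:subbitonic}, deleting $\cL_1$ exhibits a $v$-configuration as an ordered pair of valid configurations of $\cB_{\ell-1}$, deleting $\cL_\ell$ exhibits an $h$-configuration as such a pair, and an $h$-configuration is simultaneously a $v$-configuration precisely when both members of its associated pair are $v$-configurations of $\cB_{\ell-1}$ (of which there are $a_{\ell-2}^2$, again by Corollary~\ref{cor:subbitonic}). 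I would therefore split $[a_\ell]$ into two consecutive intervals: the first $a_{\ell-1}^2$ indices enumerate all $v$-configurations, by identifying $[a_{\ell-1}^2]$ with $[a_{\ell-1}]\times[a_{\ell-1}]$ in the obvious mixed-radix way and recursing on each coordinate; the remaining $a_{\ell-1}^2-a_{\ell-2}^4$ indices enumerate the $h$-configurations that are \emph{not} $v$-configurations. The rank map is the evident inverse: classify $\tau$ as $v$ or as "$h$ but not $v$", recover its associated pair, recursively rank the two coordinates, and place the result in the appropriate interval. The recursion has depth $O(\log n)$ and each level costs $\poly(n)$, so both maps run in $\poly(n)$ time, and bijectivity is immediate from the interval sizes, which match the inclusion--exclusion count $a_\ell = a_{\ell-1}^2 + (a_{\ell-1}^2-a_{\ell-2}^4)$.

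The one genuinely delicate step — and the main obstacle — is the second interval, where I need an explicit, efficiently invertible bijection between $[a_{\ell-1}^2-a_{\ell-2}^4]$ and the set of ordered pairs $(\sigma_1,\sigma_2)$ of valid $\cB_{\ell-1}$-configurations that are \emph{not} both $v$-configurations of $\cB_{\ell-1}$, arranged so that this interval is disjoint from the $v$-configuration interval. Writing $b\defeq a_{\ell-2}^2$ for the number of $v$-configurations of $\cB_{\ell-1}$, the target set has size $a_{\ell-1}^2-b^2=(a_{\ell-1}-b)\,a_{\ell-1}+b\,(a_{\ell-1}-b)$, which suggests enumerating first the pairs whose first coordinate is not a $v$-configuration (second coordinate free) and then the pairs whose first coordinate is a $v$-configuration but whose second is not. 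To make these blocks mixed-radix--addressable one reindexes the valid configurations of $\cB_{\ell-1}$ so that its $b$ $v$-configurations occupy the initial segment $[b]\subseteq[a_{\ell-1}]$; but that reindexing is itself recursive, since a $v$-configuration of $\cB_{\ell-1}$ is a pair of $\cB_{\ell-2}$-configurations whose rank within $[b]$ is computed by the same mixed-radix trick one level down. Carrying this "$v$-configurations first" bookkeeping consistently through all $\ell$ levels, in both the rank and unrank directions, is the crux; everything else is routine.

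Alternatively, and somewhat more cleanly, one may pass through the $HV$-tree model: by Theorem~\ref{thm:isomorphism} together with Theorem~\ref{thm:dyadichv}, valid configurations of $\cB_\ell$ are in bijection with depth-$\ell$ $HV$-trees having no $H$-vertex with two $V$-children, and the uniform-sampling recursion of Janson--Randall--Spencer~\cite{randomdyadictilingsoftheunitsquare} already organizes these trees as "root $V$, children unconstrained" together with "root $H$, children not both $V$-rooted". Converting that sampler into a rank/unrank pair is precisely the mixed-radix construction above, with the complement-of-$v$ subproblem reappearing verbatim as "children not both $V$-rooted". Either route yields the claimed efficiently computable isomorphism $[a_\ell]\leftrightarrow\{\text{valid configurations of }\cB_\ell\}$, and assembling these block-level bijections (indexing a configuration of a product architecture by the first window that contains it, cf. Lemma~\ref{lem:width} and Theorem~\ref{thm:countcircular}, and then by its rank within that window) is what furnishes the enumeration bijection needed in Lemma~\ref{lem:buildtau}.
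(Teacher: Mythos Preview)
Your proposal is correct and takes essentially the same approach as the paper: split $[a_\ell]$ into a $v$-interval of size $a_{\ell-1}^2$ handled by mixed-radix pairing, and an $h$-not-$v$ interval handled by excluding the ``both children are $v$'' case and recursing. The only cosmetic difference is that the paper partitions the $h$-not-$v$ interval into three blocks (both children $h$-not-$v$; left $h$-not-$v$ and right $v$; left $v$ and right $h$-not-$v$) whereas you use two, and the paper explicitly flags the qubit-label permutations from Lemma~\ref{lem:permutation} that arise when passing to the $\cB_{\ell-1}$ sub-blocks, which you should also mention since the recursive calls are on permuted copies of $\cB_{\ell-1}$.
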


\begin{proof}
Theorem \ref{thm:numbconfigs}, tells us that these sets have the same magnitude. Corollary \ref{cor:gatenotactivated}, tells us that the number of $v$-configurations is $a_{\ell - 1}^2$ and the the number of $h$-configurations which are not $v$-configurations is $a_\ell - a_{\ell - 1}^2$.

Divide the set $[a_\ell]$ into $S_V =[a_{\ell-1}^2]$ and $S_H = a_{\ell-1}^2 + [a_\ell - a_{\ell - 1}^2]$. Given an index $i \in [a_\ell]$, we use these disjoint sets to decide whether the configuration is a $v$-configuration or $h$-configuration. If $i \in S_V$, then we set $\cL_1$ as activated and we express $i$ uniquely as $i_L a_{\ell - 1} + i_R$ and then recursively, using $i_L$ and $i_R$ as indices, decide the configurations on the two bitonic blocks $\cB_{\ell - 1}$ generating the last $\ell - 1$ layers. The case of $i \in S_R$ is a bit more subtle. Since we are choosing a $h$-configuration, we know its children cannot both be $v$-configurations. Therefore, we divide $S_R$ into 3 parts, corresponding to the bitonic blocks on the first $\ell - 1$ layers being both $h$-configurations, or one being a $h$-configuration and the other a $v$-configuration. It is not difficult to check that there are $(a_{\ell-1} - a_{\ell-2}^2)^2$ configurations with both children being $h$-configurations and $a_{\ell-2}^2(a_{\ell-1} - a_{\ell-2})^2$ for the other two cases. Now, we can proceed recursively.

For the other direction, given a valid configuration, we decide if it is a $v$-configuration or $h$-configuration. If a $v$-configuration, we recursively decide the index within $S_V$ and output it. Otherwise, we recursively decide the index within $S_R$, add $a_{\ell - 1}^2$ and output it.

We note that in both directions, the smaller bitonic blocks will involve permuted indexes. However, since Lemma \ref{lem:permutation} is efficient, this is not an issue.
\end{proof}

A similar proof holds for $\cB_{\ell}^{\times m}$ and $\cB_{\ell}^{\leftrightarrow m}$.

\begin{theorem}
There exists an isomorphism between $[a_\ell^{\times m}]$ and the set of valid configurations of architecture $\cB_{\ell}^{\times m}$. Likewise, there exists an isomorphism between $[a_\ell^{\leftrightarrow m}]$ and the set of valid configurations of circular architecture $\cB_{\ell}^{\leftrightarrow m}$. Furthermore, both maps are efficiently calculable.
\label{thm:efficientindex}
\end{theorem}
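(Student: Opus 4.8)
The plan is to mirror the indexing construction for a single bitonic block (the preceding theorem), using the window decompositions that underlie Theorem~\ref{thm:configcountprod} and Theorem~\ref{thm:countcircular} as an outer layer and the single-block indexing as the inner routine. Recall that any $\ell$ consecutive layers of $\cB_\ell^{\times m}$ form an architecture isomorphic to $\cB_\ell$, that by Lemma~\ref{lem:width} every valid configuration $\tau$ is confined to such a window, and that there are $(m-1)\ell+1$ windows, indexed by their starting layer $w$. I would assign to $\tau$ its \emph{first} window, and then first establish the bijective refinement of the count in Theorem~\ref{thm:configcountprod}: a configuration has first-window index $w<(m-1)\ell+1$ exactly when, regarded as a $\cB_\ell$-configuration on window $w$, \emph{some gate in the first layer is unactivated} — equivalently, in the notation of the single-block indexing proof, its $\cB_\ell$-index lies in $S_H$, the top $a_\ell-a_{\ell-1}^2$ indices — whereas the final window carries all $a_\ell$ configurations with no constraint. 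The delicate point here is fixing the off-by-one in ``first window'' and proving this is the \emph{exact} condition separating configurations new at window $w$ from those inherited from $w-1$; Corollary~\ref{cor:dictotomyofactivation} (the light-cone doubling) is what rules out ambiguous assignments.

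Granting that, the map $\tau \longmapsto (w,\ \text{index of }\tau\text{ restricted to window }w)$ is a bijection onto the set of admissible pairs, which I would flatten onto $[a_\ell^{\times m}]$ by laying the windows out consecutively: window $w<(m-1)\ell+1$ occupies the block $(w-1)(a_\ell-a_{\ell-1}^2)+[\,a_\ell-a_{\ell-1}^2\,]$ and the last window occupies the remaining $[a_\ell]$, which sums to $a_\ell^{\times m}$ by Theorem~\ref{thm:configcountprod} and Corollary~\ref{cor:gatenotactivated}. To evaluate the forward map on an index $i$: extract $w$ and a residual index $j$ by division with remainder by $a_\ell-a_{\ell-1}^2$ (capping at the last window); feed $j+a_{\ell-1}^2$ (resp.\ $j$ for the last window) to the single-block indexing to recover the $\cB_\ell$-configuration; transport it to window $w$ via the explicit wire permutation of Lemma~\ref{lem:permutation} (a power of $\pi_\ell$) that realizes the isomorphism of that window with $\cB_\ell$; and set every clock outside the window's interior to its forced boundary value. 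The inverse is the evident reversal. Every ingredient is polynomial time — the counts $a_\ell,a_{\ell-1},a_\ell^{\times m}$ are $O(n)$-bit integers (since $a_\ell=\Theta(\omega^{2^\ell})$ with $\ell=\log n$) computable from the recurrence of Theorem~\ref{thm:numbconfigs} in $\poly(n)$ steps, the integer arithmetic is polynomial, the single-block indexing is efficient by the preceding theorem, and $\pi_\ell^{\,\cdot}$ is efficient by Lemma~\ref{lem:permutation} — so both directions are efficiently calculable.

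For the circular architecture $\cB_\ell^{\leftrightarrow m}$ the argument is identical except that windows may straddle the seam at time $0$, there are $m\ell$ of them, and circularity removes the ``last'' window: by the reasoning of Theorem~\ref{thm:countcircular} every window now forces some gate in its first layer to be unactivated, so the configuration set partitions into $m\ell$ equal pieces of size $a_\ell-a_{\ell-1}^2$, matching $a_\ell^{\leftrightarrow m}$. One designates the first window cyclically (least starting layer among windows containing $\tau$, reading layers $1,\dots,m\ell$ around the cylinder) and indexes exactly as above, with $i$ decoding to a window $w\in[m\ell]$ and residual $j\in[a_\ell-a_{\ell-1}^2]$, then $j+a_{\ell-1}^2$ fed to the single-block indexing followed by the appropriate cyclic wire permutation. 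I expect the main obstacle throughout to be step one — turning the counting arguments of Theorems~\ref{thm:configcountprod} and \ref{thm:countcircular} into a clean bijection by nailing the first-window convention — together with the bookkeeping needed, in the circular case, to confirm that wrap-around configurations are assigned to a single window with no double counting across the seam.
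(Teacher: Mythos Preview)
Your proposal is correct and takes essentially the same approach as the paper: partition by first window, use the single-block indexing on each window with the constraint that non-last windows land in $S_H$ (equivalently, are $h$-configurations), and drop that constraint for the last window (or for all windows in the circular case). The paper's own proof is a two-sentence sketch of precisely this idea, so your write-up is considerably more detailed than what the paper provides, including the explicit flattening arithmetic, the transport via powers of $\pi_\ell$, and the efficiency accounting.
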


\begin{proof}
The proof is nearly identical except the initial partition of $[a_\ell^{\times m}]$ is based on the window that the configuration lies in. We note configurations of all windows except the last must be $h$-configurations. For the case of circular architecture, all windows are $h$-configurations.
\end{proof}

\end{document}